\documentclass[
 10pt, aps, prx, reprint,
 amsfonts, amsmath, amssymb, 
 superscriptaddress, nofootinbib,
 floatfix
]{revtex4-2}

\usepackage[T1]{fontenc}
\usepackage[utf8]{inputenc}
\usepackage{amsthm}
\usepackage{braket}
\usepackage{enumerate}
\usepackage{xcolor}
\usepackage{graphicx}
\usepackage{enumitem}
\usepackage{thmtools}
\usepackage{thm-restate}
\usepackage{booktabs}
\usepackage{makecell}
\usepackage{microtype}
\usepackage[ruled,vlined,linesnumbered]{algorithm2e}
\usepackage{mathtools}
\usepackage{multirow}
\usepackage[hypertexnames=false]{hyperref}
\usepackage[capitalise,nameinlink]{cleveref}

\newcommand{\prep}{\textup{\textsc{Prep}}}
\newcommand{\be}{\textup{\textsc{Be}}}
\newcommand{\select}{\textup{\textsc{Sel}}}
\newcommand{\ravg}{\bar\rho}
\renewcommand{\vec}[1]{\boldsymbol{#1}}

\DeclareMathOperator{\polylog}{polylog}
\DeclareMathOperator{\tr}{tr}

\newcommand{\ketbra}[2]{\lvert#1\rangle\!\langle#2\rvert}
\newcommand{\bketbra}[2]{\big\lvert#1\big\rangle\!\big\langle#2\big\rvert}

\renewcommand{\vec}[1]{\boldsymbol{#1}}

\makeatletter
\newcommand{\apptocfile}{atoc}
\let\apptoc@orig@appendix\appendix
\renewcommand{\appendix}{%
  \apptoc@orig@appendix
  \let\apptoc@orig@addtocontents\addtocontents
  \long\def\addtocontents##1##2{%
    \def\apptoc@ext{##1}%
    \def\apptoc@toc{toc}%
    \ifx\apptoc@ext\apptoc@toc
      \apptoc@orig@addtocontents{\apptocfile}{##2}%
    \else
      \apptoc@orig@addtocontents{##1}{##2}%
    \fi
  }%
}
\newcommand{\appendixtableofcontents}{%
  \begingroup
    \setcounter{tocdepth}{3}%
    \phantomsection
    \let\addcontentsline\@gobblethree
    \section*{Contents}%
    \pdfbookmark[1]{Appendices}{apxcontents}%
    \@starttoc{\apptocfile}%
  \endgroup
}
\renewcommand\paragraph{\@startsection{paragraph}{4}{\z@}
{1ex \@plus1ex \@minus.2ex}
{-1em}
{\normalfont\normalsize\bfseries}}
\newcommand{\titlename}{\textsl{\@title}}
\makeatother

\declaretheorem{theorem}
\declaretheorem[sibling=theorem]{proposition}
\declaretheorem[sibling=theorem]{lemma}

\declaretheorem[numbered=no, name={Main Result}]{result}
\declaretheorem[style=remark, parent=theorem]{remark}
\declaretheorem[style=definition]{definition}
\declaretheorem[style=definition, sibling=definition]{assumption}

\declaretheorem[numbered=no, name=Problem]{problem}
\declaretheoremstyle[headindent=-0.5em,headfont=\bfseries,bodyfont=\normalfont,notefont=\bfseries,notebraces={}{},headpunct={.}, postheadspace=0.5em]{manualdefinition}
\declaretheorem[style=manualdefinition,numbered=no,name={}]{manualdef}

\IncMargin{1.5em}

\graphicspath{{./figures/}}

\crefname{section}{Sec.}{Secs.}

\mathchardef\mhyphen="2D

\definecolor{red4}{HTML}{B82727}
\definecolor{blue4}{HTML}{4C4CD9}
\definecolor{purple4}{HTML}{B3256C}

\allowdisplaybreaks
\hypersetup{
    colorlinks,
    citecolor=blue4,
    linkcolor=red4,
    urlcolor=purple4,
    breaklinks=true,
}

\newcommand\bbDelta{
    \Delta\kern-0.8em\scalebox{0.75}{$\Delta$}\kern0.2em}

\NewDocumentEnvironment{algoequation}{b}
  {
    \noindent
    \makebox[\hsize]{
      \raisebox{-1\height}{$\displaystyle #1$}
    }
  }
  {}

\newcommand{\OxMath}{\affiliation{Mathematical Institute, University of Oxford, Woodstock Road, Oxford OX2 6GG, United Kingdom}}
\newcommand{\WignerRCP}{\affiliation{HUN-REN Wigner Research Centre for Physics, Konkoly–Thege Mikl\'os \'ut 29-33, Budapest, H-1525, Hungary\looseness=-1}}
\newcommand{\ELTE}{\affiliation{Faculty of Informatics, E\"otv\"os Lor\'and University, P\'azm\'any P\'eter s\'et\'any 1/C, Budapest, H-1117, Hungary\looseness=-1}}
\newcommand{\QMT}{\affiliation{Quantum Motion, 9 Sterling Way, London N7 9HJ, United Kingdom}}

\begin{document}
\title{Eigenstate Preparation Through Near-Optimal Eigenprobability Filtering}

\author{Po-Wei Huang}
\email{po-wei.huang@maths.ox.ac.uk}
\OxMath
\QMT
\author{Bence Bak\'o}
\email{bako.bence@wigner.hu}
\WignerRCP
\ELTE
\OxMath
\author{B\'alint Koczor}
\email{balint.koczor@maths.ox.ac.uk}
\OxMath
\QMT
\date{\today}
\begin{abstract}
Quantum simulation is expected to be a main application of quantum computers with realistic utility in quantum chemistry, materials science and beyond. However, preparing excited or general eigenstates is a central challenge, particularly when the desired eigenvalue is not known in advance, or when the overlap with the initial state is insufficient. We introduce the \emph{Dominant Eigenstate Filtering via Eigenprobability Amplification and Thresholding} (DEFEAT) algorithm that identifies and filters the eigenstate with the largest overlap with the supplied initial state. Our key observation is that we do not need prior knowledge of the target eigenvalue, as we construct efficient twirling superoperators that map initial states to eigenprobability density operators $\rho$, diagonal in the Hamiltonian eigenbasis and encoding its spectral weights, alongside its block-encoding implementation. Our crucial innovation is the quadratic amplification of probabilities via the factorisation $\rho = \rho_{\rm sqrt}^\dagger \rho_{\rm sqrt}$, analogous to the recently introduced sum-of-squares spectral amplification (SOSSA), which we use here to amplify the separation between dominant and subdominant components. Thresholding then yields the dominant-eigenstate projector. Compared with conventional phase estimation, DEFEAT improves the dependence on the overlap with the initial state while requiring substantially fewer ancillary qubits. We prove that the query complexity of the filtering step is optimal up to logarithmic factors and establish a complementary lower bound for eigenstate preparation under purified query access to $\rho$. We validate in numerical simulations that the convergence rate of DEFEAT matches our theoretical results. Our results provide a general eigenvalue-agnostic primitive for dominant eigenstate filtering and preparation, and for estimating properties of dominant eigenstates.
\end{abstract}

\maketitle

\section{Introduction}
\label{secIntro}

For some structured problems, quantum computers promise exponential speedups over the best known classical methods. However, identifying quantum algorithms that combine substantial quantum speedups with practical or commercial relevance remains one of the central challenges in the field~\cite{babbush2026grand,zimboras2025myths}. Quantum simulation---and, in particular, the accurate preparation and characterisation of eigenstates of a target Hamiltonian---is the most natural and promising application of quantum computers. These promise substantial impact both in fundamental research, for example in many-body physics~\citep{ayral2023quantum}, and in application areas including pharmaceutical and chemical engineering~\citep{reiher2017elucidating,mcardle2020quantum} and materials science~\citep{bauer2020quantum}.

\begin{figure*}
	\includegraphics[width=0.89\textwidth]{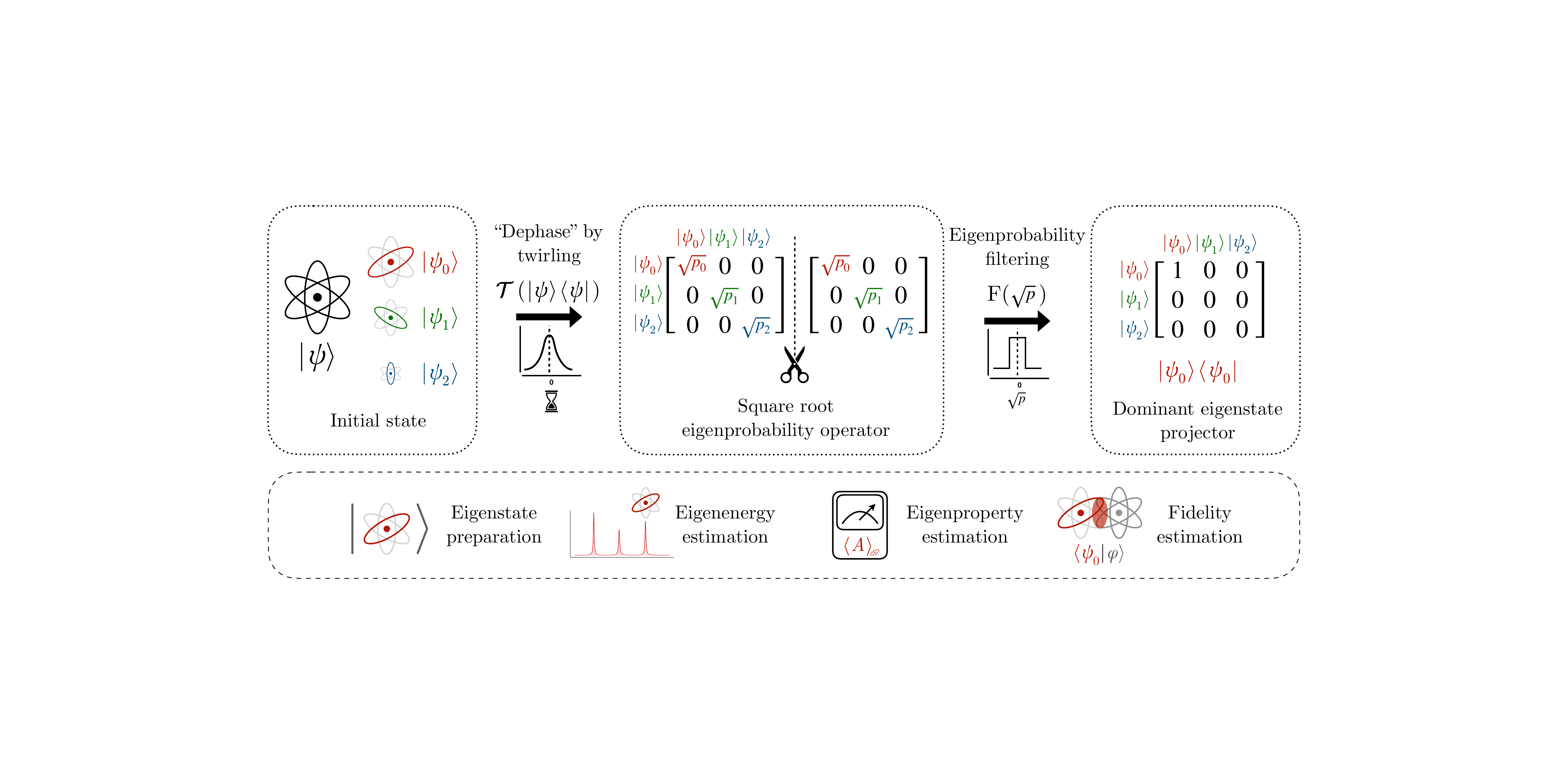}
	\caption{Overview of the \emph{Dominant Eigenstate Filtering via Eigenprobability Amplification and Thresholding} (DEFEAT) algorithm. (left) We assume access to an initial state; e.g., this is typically obtained via classical heuristics, such as DMRG~\citep{chan2011density,sharma2012spin} for quantum chemistry problems like FeMoco~\citep{reiher2017elucidating}. DEFEAT constructs a twirling superoperator to map the initial state to an eigenprobability operator $\rho$, which is diagonal in the Hamiltonian eigenbasis and encodes the spectral weights of the input state. We construct practical, efficient twirling superoperators via Gaussian-weighted averages of time-evolved states. (middle) DEFEAT exploits the factorisation $\rho = \rho_{\rm sqrt}^\dagger \rho_{\rm sqrt}$ to construct a block-encoding of the square-root of the eigenprobability operator, which amplifies the probability gap between the dominant and second-dominant eigenstate -- this is analogous to the sum-of-squares spectral amplification (SOSSA) technique of Refs.~\cite{low2025fast,king2026quantum} but with the crucial difference that SOSSA is applied to the state $\rho$ rather than to the Hamiltonian. (right) Polynomial thresholding of the amplified eigenprobability operator produces a projector onto the dominant eigenstate, enabling practical applications (bottom), such as efficient eigenstate preparation,  eigenenergy estimation, eigenproperty estimation, fidelity estimation and beyond. }
	\label{figMain}
\end{figure*}

Beyond ground states, the accurate simulation of excited states is crucial for many applications. A broad range of classical heuristics has been developed, which we briefly summarise in \cref{appLitRevClass}. These techniques underpin the interpretation and prediction of spectroscopic observables, photochemical and non-adiabatic processes, as well as charge and energy transport~\citep{dreuw2005single,krylov2008equation,casida2012progress,curchod2018ab}.
Indeed, substantial literature has emerged on quantum-computing heuristics for probing excited states, including adiabatic state-preparation schemes~\citep{aspuruguzik2005simulated,burton2025excited,lutz2026adiabatic,hwang2025preparing}, variational~\citep{higgott2019variational,jones2019variational,gocho2023excited,nakanishi2019subspace,parrish2019quantum,caditazi2024folded,boyd2022training,larose2019variational,cerezo2022variational}, or subspace and Krylov-based techniques~\citep{mcclean2017hybrid,parrish2019quantumb,motta2019determining,stair2020multireference,yoshioka2025krylov,cortes2022quantum,oumarou2025molecular}, which we briefly review in \cref{appLitRevQuant}. However, these heuristics generally do not provide rigorous performance guarantees
while the fundamental complexity of preparing excited eigenstates remains an important open problem.

Most work on quantum algorithms with provable guarantees has so far focused on ground- and low-energy-state preparation~\citep{poulin2009preparing,ge2019faster,lin2020nearoptimal,dong2022ground-state}, as well as ground-energy estimation~\citep{dong2022ground-state,lin2022heisenberg,wang2023quantum,ding2023even,babbush2018encoding}. The difficulty of preparing excited states, or more generally arbitrary eigenstates, stems from the fact that existing approaches typically require \emph{a priori} spectral information to identify and isolate eigenstates at specific energies.
This is in contrast to searching for the ground state, where the variational principle provides
a natural formulation in terms of energy minimisation. Preparing general eigenstates therefore typically requires either prior knowledge of the target eigenenergy---which can be used to resolve and isolate the desired state from the rest of the Hamiltonian spectrum~\citep{lin2020optimal,patil2026efficient}---or a guiding initial state with substantial overlap with the target eigenstate.
In the latter case, standard phase estimation~\citep{kitaev1995quantum,cleve1998quantum,nielsen2010quantum} would suffice to prepare the desired excited state with high probability~\citep{abrams1999quantum} in the first place.

However, as we show in \cref{appQPE}, textbook phase estimation is suboptimal when the guiding state's overlap with the target eigenstate is not sufficiently close to 1, or when the relevant eigenenergy is not known \emph{a priori}. Moreover, the fundamental complexity of preparing arbitrary excited states has been an open question. Efficient methods for preparing arbitrary eigenstates are therefore important both for understanding the intrinsic complexity of this task and for expanding the practical scope of quantum simulation beyond ground-state physics.

Here we develop the \emph{Dominant Eigenstate Filtering via Eigenprobability Amplification and Thresholding} (DEFEAT)
approach.
The key observation we make is that we do not
require precise knowledge of the Hamiltonian spectrum, as
we use twirling superoperators to map an initial state 
onto an eigenprobability matrix $\rho$ that is diagonal in the Hamiltonian's eigenbasis, which 
we illustrate on \cref{figMain} (left).
As $\rho$ encodes the initial state's eigenprobabilities in its diagonal,
we can use it to prepare eigenstates via a technique we term \emph{eigenprobability filtering}
that we illustrate on the \cref{figMain} (right).

A key novelty of our work is the use of sum-of-squares spectral amplification (SOSSA) techniques~\citep{low2025fast,king2026quantum}, which were introduced to quadratically amplify small \emph{energy} gaps in Hamiltonians and have been used in state-of-the-art resource estimates~\citep{somma2013spectral,low2025fast}. Here, we adapt SOSSA to a fundamentally different application: the amplification of eigenvalues of a density matrix, which we
illustrate in \cref{figMain} (middle).
We prove that the use of SOSSA improves the runtime scaling with respect to
the relevant eigenstate probability gap---SOSSA is particularly useful for probability distributions with small separation between the two largest probabilities, especially when the leading probability is not large in absolute terms.
Beyond amplification, SOSSA yields an additional constant-factor improvement as the block-encoding circuit for our amplified density matrix is shallower than the corresponding circuit without amplification.

Finally, our main result is the end-to-end algorithm DEFEAT, which takes an input state
and a Hamiltonian, and outputs
the block-encoded projector onto the eigenstate of the Hamiltonian that has the largest overlap with the initial state. 
As we state in \cref{thmFilter}, DEFEAT requires $\widetilde{\mathcal{O}} ( \Lambda^{-1} \Delta^{-1} )$ total evolution time,
$\widetilde{\mathcal{O}}(\Lambda^{-1})$
queries to the state preparation unitary $U_\psi$, and $\mathcal{O}(\log(\Delta^{-1}) + \log\log(\Lambda^{-1}))$ number of ancilla qubits.
Here, $\Lambda\le \sqrt{p_0}-\sqrt{p_1}$ denotes a lower bound for the sum-of-squares--spectral--amplified probability gap, and $\Delta$ denotes a lower bound for the minimum spectral gap between the Hamiltonian eigenstates supported over the input state.

The rest of the paper is organised as follows: \cref{secOverview} provides an overview of our methods; \cref{secEPO} describes our constructions of twirling superoperators and density operators diagonal in the Hamiltonian eigenbasis; \cref{secDefeat} introduces our filtering algorithm and details several practically important applications; \cref{secOpt} discusses optimality and query lower bounds; finally, \cref{secNum} demonstrates some algorithmic components numerically, and \cref{secDisc} concludes the work. 

\section{Overview of main results}
\label{secOverview}

An initial state $\ket{\psi}$ can be expanded into the eigenvectors
$\{\ket{\psi_k}\}_{k=0}^{N-1}$ of a problem Hamiltonian $H$
as
\begin{equation}\label{eq:initial_state}
	\ket{\psi} = \sum_{k \in \mathcal S} c_k \ket{\psi_k}.
\end{equation}
Quantum simulation algorithms typically initialise a quantum register in a
classical approximation of one of the eigenstates of $H$, and $\ket{\psi}$ is
typically only supported on a subset of all eigenvectors $\mathcal S \subseteq \{0,\dots,N-1\}$. We also define $n := \lceil \log_2 N\rceil$.

While known ground state preparation algorithms~\citep{lin2020nearoptimal,thibodeau2023nearly} prepare the block-encoded projector of 
the eigenstate $\ket{\psi_g}$ which has the least energy, 
in the present work we generalise this to the preparation of the eigenstate projector with the 
absolute largest $|c_k|$ coefficient above. Denoting spectral weights as $p_k:=|c_k|^2$, we can relabel the eigenstates in the support such that
\begin{equation}
    p_0>p_1\geq p_2\geq\cdots\geq p_{|\mathcal S|-1}>0.
    \label{eq:probs}
\end{equation}
Thus, $\ket{\psi_0}$ is the unique eigenstate having the largest overlap with the input state. Our objective is to isolate this eigenstate without assuming prior knowledge of its eigenenergy $E_0$, which we term \emph{dominant eigenstate filtering}.

\begin{problem}[Dominant eigenstate filtering]
    Assume access to a state preparation unitary $U_\psi$ that prepares the initial state $\ket{\psi} =: U_\psi \ket{0} $ defined in \cref{eq:initial_state}. The task is to construct a block-encoding of the dominant-eigenstate projector $\mathcal{P}$, or equivalently, the reflector $\mathcal{R}$, defined as
    \begin{equation*}
        \mathcal{P} := \ketbra{\psi_0}{\psi_0},
        \quad \text{and} \quad
        \mathcal{R} := \mathbb{I} - 2\ketbra{\psi_0}{\psi_0}.
    \end{equation*}
\end{problem}

The Hamiltonian, which spans the eigenspectrum through its eigenbasis and eigenenergies, is the centrepiece for modelling quantum systems, underpinning tasks such as Hamiltonian simulation~\citep{childs2012hamiltonian,low2019hamiltonian,gilyen2019quantum,low2025fast}, phase estimation~\citep{kitaev1995quantum,cleve1998quantum,nielsen2010quantum}, ground state preparation~\citep{poulin2009preparing,ge2019faster,lin2020nearoptimal,dong2022ground-state}, and ground state energy estimation~\citep{dong2022ground-state,lin2022heisenberg,wang2023quantum,ding2023even}.
However, its direct use is less suited for targeting general or excited eigenstates; unlike the ground state, which is uniquely identified by the global energy minimum, preparing excited states is substantially more involved.

For this reason, our approach is to first prepare a density matrix that we term the \emph{eigenprobability operator}: this operator
is diagonal with respect to the Hamiltonian's eigenbasis, and it encodes in its diagonal entries the probability
distribution $p_k$ that we defined above in \cref{eq:probs}. 

\begin{definition}[Eigenprobability operator]
    Given the eigenbasis $\{\ket{\psi_k}\}_{k=0}^{N-1}$ of the problem Hamiltonian $H$ and the initial state $\ket{\psi}$ in \cref{eq:initial_state} with probabilities $p_k = |c_k|^2$, we define the associated \emph{eigenprobability operator} as the dephased initial state 
      \begin{equation*}
        \rho := \mathcal{T} (\ketbra{\psi}{\psi}) = \sum_{k \in \mathcal S} p_k \ketbra{\psi_k}{\psi_k},
    \end{equation*}
    where the dephasing channel is defined as a twirling superoperator
    \begin{equation*}
    	\mathcal{T} (\,\cdot\,) :=  \sum_{k=0}^{N-1} \mathcal{P}_k  (\,\cdot\,) \mathcal{P}_k, 
    	\quad \text{with} \quad \mathcal{P}_k:= \ketbra{\psi_k}{\psi_k}.
    \end{equation*}
    \label{defEO}
\end{definition}
This operator is analogous, in the nondegenerate case, to the so-called diagonal ensemble in nonequilibrium quantum dynamics, where long-time--averaged states can be used to retain the initial probability of the eigenspectrum while eliminating cross-energy coherences~\citep{cakan2021approximating,bako2026exponential}.
In \cref{secRho}, we further detail two approximate implementations of the above twirling superoperators in the nondegenerate case: (a) a weighted average of discrete time evolution processes under the problem Hamiltonian $H$, and (b) a Chebyshev polynomial expansion. We also prove that both techniques achieve exponentially fast convergence in the approximation error.

We note that the above eigenprobability operator can be used directly as a drop-in replacement
for the role of the Hamiltonian in ground state filtering.
However, a key observation in this work is that 
the eigenprobability operator has a favourable structure that
we can exploit for developing a more efficient algorithm for dominant eigenstate preparation.

Specifically, the eigenprobability operator is both positive semi-definite and self-adjoint,
and therefore it admits a square-root factorisation in terms of a square-root eigenprobability operator
as $\rho = \rho_{\rm sqrt}^\dagger \rho_{\rm sqrt}$. We introduce a basis-dependent square-root representation as follows.
\begin{definition}[Square-root eigenprobability operator]
    Given the eigenbasis $\{\ket{\psi_k}\}_{k=0}^{N-1}$ of the problem Hamiltonian and the initial state $\ket{\psi}$ in \cref{eq:initial_state},
    the \emph{square-root eigenprobability operator} associated with $\ket{\psi}$ is defined as
    \begin{equation*}
        \rho_{\rm sqrt} := \sum_{k \in \mathcal{S}} \sqrt{p_k}\ketbra{\phi_k}{\psi_k},
    \end{equation*}
    where $\{\ket{\phi_k}\}$ is an orthonormal set and $\rho_{\rm sqrt} \in \mathbb{C}^{M\times N}$ has rank $\lvert\mathcal{S}\rvert$ with  $N = \mathrm{dim}(H)$ and $M \geq \lvert\mathcal{S}\rvert $.
    \label{defSqrt}
\end{definition}
The above factorisation was recently used for quadratically amplifying small energy gaps in Hamiltonians 
when estimating ground state properties via the SOSSA approach~\citep{somma2013spectral,low2025fast,king2026quantum,dutkiewicz2026spectral}. However, the approach requires a potentially
very expensive classical pre-computation whereby the Hamiltonian is factorised into a sum of squares of operators, due to the need to lift eigenenergies by a lower bound of the ground state energy, which is often computed via expensive SDPs.

In contrast, we apply SOSSA to 
density matrices, rather than to Hamiltonians, and our twirled eigenprobability operators $\rho$ can be naturally expressed as a sum of positive semi-definite
operators $\rho_j = B_j^\dagger B_j$, such that we obtain the  weighted sum-of-squares
factorisation as
 \begin{equation}\label{eq:sossa}
    \rho = \sum_{j} w_j \rho_j= \sum_{j} w_j B_j^\dagger B_j.
\end{equation}
Here, the operators $B_j$ are not necessarily square matrices, while $w_j > 0$ are non-negative weights of the decomposed operators.

Indeed, as we detail in \cref{secRhoSqrt}, this representation allows us to 
implement a block-encoding of the square-root eigenprobability operator in \cref{defSqrt}
as
\begin{equation}
    \rho_{\mathrm{sqrt}} = \sum_j \sqrt{w_j}\ket{j}\otimes B_j,
\end{equation}
via the operators $\{B_j\}$
 -- and we prove that this implementation only uses one query
to the initial-state preparation unitary $U_{\psi}$ or its inverse.
We also remark that \cref{defSqrt} defines $\rho_{\rm sqrt}$ in terms of its singular value decomposition (SVD), and indeed the singular values of $\rho_{\rm sqrt}$ are square roots, $\{\sqrt{p_k}\}_{k\in\mathcal{S}}$.

In \cref{secEigenFilt} we utilise quantum singular value transformation (QSVT)~\citep{gilyen2019quantum} to apply an even filtering function that filters out the non-dominant singular values, resulting in an approximation of the density matrix/projector,
$\mathcal P = \ketbra{\psi_0}{\psi_0}$
or the corresponding reflector operator
\begin{equation}
    \mathcal R = - \ketbra{\psi_0}{\psi_0} + \sum_{k\in\mathcal{S}\setminus\{0\}}\ketbra{\psi_k}{\psi_k} = \mathbb{I} - 2\ketbra{\psi_0}{\psi_0}.
\end{equation}

Following the high-level description above, we also briefly summarise assumptions required for our approach.

\begin{assumption}
We assume access to the following.

\begin{enumerate}[label=(\alph*),ref=\thedefinition(\alph*), after=\vspace{-\baselineskip}]
    \item A state preparation unitary $U_\psi$ that prepares the initial state $\ket{\psi}$
    in \cref{eq:initial_state}. 
    \label{assumptState}

    \item Controlled forward and inverse access to the time-evolution operator $U_\tau:=e^{-iH\tau}$, or forward and inverse access to the block-encoding unitary of $H$ with scaling factor $\alpha \ge \|H\|$, which we denote as $\be[\frac{H}{\alpha}]$.
    \label{assumptHam}

    \item A lower bound on the minimum spectral gap within the support $\Delta \le \min_{\substack{(j\neq k) \in\mathcal S}} |E_j-E_k|$.
    \label{assumptGap}

    \item A lower bound on the amplitude gap $\Lambda \le \sqrt{p_0}-\sqrt{p_1}.$
    \label{assumptAmpGap}

    \item Knowledge of a threshold value $\mu$ satisfying
    \begin{gather*}
        \sqrt{p_0}-\mu,\ \mu-\sqrt{p_1}\in\Omega(\Lambda).
    \end{gather*}
    \label{assumptThresh}
\end{enumerate}
\label{assumptDEFEAT}
\end{assumption}

In particular, \hyperref[assumptHam]{Assumptions~\ref*{assumptHam}} and \ref{assumptGap} are used in the construction of the twirling superoperator, and \hyperref[assumptAmpGap]{Assumptions~\ref*{assumptAmpGap}} and \ref{assumptThresh} are used in the thresholding step via QSVT. In the remaining parts of the paper, we assume that the lower bounds in \hyperref[assumptGap]{Assumptions~\ref*{assumptGap}} and \ref{assumptAmpGap} are tight, and use  $\sqrt{p_0}-\sqrt{p_1}$ to replace the use of the actual lower bound to better emphasize the complexity relations with $p_0 - p_1$, but note that in cases where knowledge of the lower bound is not tight, runtime complexities would depend on the lower bounds instead.

\begin{table*}
    \centering
    \begin{tabular}{l@{\hspace{1em}}l@{\hspace{1em}}c@{\hspace{1em}}c}
    \toprule
    Resource & Method & Eigenenergy estimation & Eigenstate preparation \\
    \midrule
    \multirow{3}{*}[-3ex]{Evolution time}
    & \makecell[l]{Textbook QPE\\(\cref{appQPE})}
    & $\displaystyle\widetilde{\mathcal{O}}\left(\frac{\|H\|}{\Gamma^3\varepsilon}\log\frac1\delta\right)$
    & $\displaystyle\widetilde{\mathcal{O}}\left(\frac{\|H\|}{\Gamma^3\sqrt{p_0}\Delta\varepsilon}\log\frac1\delta\right)$ \\
    & \makecell[l]{GaussQPE + MAE\\(\cref{appQPEMAE})}
    & $\displaystyle\widetilde{\mathcal{O}}\left(\frac{\|H\|}{\Gamma\varepsilon}\log\frac1\delta\right)$
    & $\displaystyle\widetilde{\mathcal{O}}\left(\frac{\|H\|}{\Gamma\Delta}\log\frac1{\varepsilon}\log\frac1\delta\right)$ \\
    & \makecell[l]{DEFEAT\\(\cref{secDefeat})}
    & $\displaystyle\widetilde{\mathcal{O}}\left(\left(\frac{\|H\|}{\Gamma\Delta}+\frac1\varepsilon\right)\log\frac1\delta\right)$
    & $\displaystyle\widetilde{\mathcal{O}}\left(\frac{\|H\|}{\Gamma\Delta}\log^2\frac1\varepsilon\right)$ \\
    \midrule
    \multirow{3}{*}[-3ex]{Queries to $U_\psi$}
    & Textbook QPE
    & $\displaystyle\widetilde{\mathcal{O}}\left(\frac{1}{\Gamma^2}\log\frac1\delta\right)$
    & $\displaystyle\widetilde{\mathcal{O}}\left(\frac{1}{\Gamma^2}\log\frac1\delta\right)$ \\
    & GaussQPE + MAE
    & $\displaystyle\widetilde{\mathcal{O}}\left(\frac1\Gamma\log\frac1\delta\right)$
    & $\displaystyle\widetilde{\mathcal{O}}\left(\frac1\Gamma\log\frac1\delta\right)$ \\
    &
    DEFEAT
    & $\displaystyle\widetilde{\mathcal{O}}\left(\frac1\Gamma\log\frac{1}{\delta\varepsilon}\right)$
    & $\displaystyle\widetilde{\mathcal{O}}\left(\frac1\Gamma\log\frac1\varepsilon\right)$ \\
    \midrule
    \multirow{3}{*}[-3ex]{Ancilla qubits} & Textbook QPE
    & $\displaystyle\mathcal{O}\left(\log\frac{\|H\|}{
    \Gamma\varepsilon}\right)$
    & $\displaystyle\mathcal{O}\left(\log\frac{\|H\|}{
    \Gamma\Delta\varepsilon}\right)$ \\
    & GaussQPE + MAE
    & $\displaystyle\widetilde{\mathcal{O}}\left(\frac1\Gamma+\log\frac{\|H\|}{\varepsilon}\right)$
    & $\displaystyle\widetilde{\mathcal{O}}\left(\frac1\Gamma+\log\frac{\|H\|}\Delta+\log\log\frac1{\varepsilon}\right)$ \\
    & DEFEAT
    & $\displaystyle\mathcal{O}\left(\log\frac{\|H\|}{\Delta}+\log\log\frac1\Gamma\right)$
    & $\displaystyle\mathcal{O}\left(\log\frac{\|H\|}{\Delta}+\log\log\frac1{\Gamma\varepsilon}\right)$ \\
    \bottomrule
    \end{tabular}
    \caption{\textit{Comparison to phase-estimation--based algorithms.} We compare DEFEAT for dominant eigenvalue estimation and eigenstate preparation against textbook phase estimation~\citep{abrams1999quantum}, detailed in \cref{appQPE}, and an advanced variant combining Gaussian phase estimation~\citep{rendon2024improved,rendon2023lowdepth,chen2025quantum} with multidimensional amplitude estimation~\citep{vanapeldoorn2021quantum}, detailed in \cref{appQPEMAE}. Here, we set $\Gamma \in \mathcal{O}(p_0 - p_1)$. DEFEAT improves runtime, queries to the state-preparation unitary $U_\psi$, and ancilla count over textbook phase estimation, while substantially reducing qubits relative to the advanced variant. Its additional $\log(\varepsilon^{-1})$ factor in $U_\psi$ queries for eigenenergy estimation can be traded for a $\log(\varepsilon^{-1})$ ancilla overhead by replacing statistical with QFT-based phase estimation.}
    \label{tabComp}
\end{table*}

Our main technical result is then the DEFEAT approach, which constructs the block encoding of the above projection operator and has the following algorithmic complexity.
 
\begin{result}
    Given the assumptions in \cref{assumptDEFEAT}, DEFEAT enables the preparation of the block-encoded density matrix of the dominant eigenstate $\ketbra{\psi_0}{\psi_0}$ with
    \begin{equation*}
        \widetilde{\mathcal{O}}\left(\frac{1}{\sqrt{p_0}-\sqrt{p_1}}\right).
    \end{equation*}
    queries to $U_\psi$ and its inverse.
\end{result}
As we detail in \cref{secDefeat}, DEFEAT then enables various practically important applications through the further use of amplitude amplification~\citep{brassard2002quantum}, which we also illustrate in \cref{figMain}. We further provide matching lower bounds up to logarithmic factors in \cref{secOpt}.

To showcase the performance of our algorithm, we compare it on the tasks of eigenstate preparation and eigenenergy estimation---as shown in \cref{tabComp}---to conventional phase estimation techniques, which we detail in \cref{appQPE}. Further, we propose a variant that utilises multidimensional amplitude estimation~\citep{vanapeldoorn2021quantum} to achieve a further speedup, which we derive in \cref{appQPEMAE}, and which may be of independent interest.\footnote{This algorithm also provides a quadratic speedup over the probability dependency in QMEGS~\citep{ding2024quantum} for the quantum multiple eigenvalue estimation problem~\citep{somma2019quantum}. We additionally detail these results in \cref{appQEEP}.} Notice that for these two applications, the runtime depends on $\Gamma \le p_0-p_1$ rather than on $\Lambda \le \sqrt{p_0}-\sqrt{p_1}$ due to additional amplitude amplification costs that scale with $(\sqrt{p_0})^{-1}$.

\section{Twirling superoperators and eigenprobability operators}
\label{secEPO}

\subsection{The twirling superoperator}
\label{secRho}

In this section, we detail explicit constructions of the eigenprobability operator by first developing approximations to the twirling superoperator $\mathcal{T}$ that we apply to the initial state as defined in \cref{defEO}. We consider approximations of the form of a (truncated) infinite sum of eigenvalue transforms $f(H)$ of
$H$ as
\begin{equation}
	\widetilde{\mathcal{T}} (\,\cdot\,) := \sum_{j} w_j f_j(H) [\,\cdot\,] f_j(H)^\dagger,
    \label{eq:twirl_approx}
\end{equation}
where $w_j$ are real weights. Expanding in the eigenbasis of $H$, the above equation is satisfied via the induced kernel
\begin{equation}
    \widetilde{K}(\lambda_k,\lambda_\ell) := \sum_j w_j f_j(\lambda_k) f_j^*(\lambda_\ell),
    \label{eq:kernel}
\end{equation}
where $f_j: \mathbb{R} \mapsto \mathbb{C}$ are complex functions of the eigenenergies $\lambda_k$.
Since the ideal twirling superoperator projects onto the diagonal in the eigenbasis of $H$, the corresponding ideal kernel is
\begin{equation}
    K(\lambda_k,\lambda_\ell)=\delta_{k\ell}.
\end{equation}
Thus, the problem reduces to constructing a family of spectral transformations $\{f_j\}$ and weights $\{w_j\}$ such that the induced kernel approximates the Kronecker delta above.  More realistically, we obtain approximations that evaluate to $0$, up to an exponentially small error, whenever $\vert{}\lambda_k - \lambda_l\vert{} \geq \Delta$.

Below we present two distinct twirling constructions. Although both induce dephasing of the input state, they operate under different mechanisms: the first one uses time evolution directly, while the second exploits efficient polynomial transforms in qubitisation.

We first consider \emph{time twirling}, where the primitive is the time-evolution operator $U_t = e^{-iHt}$. This approach is natural whenever Hamiltonian evolution is directly available, and also when efficient Hamiltonian simulation can be realised. In that setting, a broad range of simulation techniques are available, including product formulas~\citep{trotter1959product,suzuki1976generalized,suzuki1991general,lloyd1996universal}, randomised product formulas~\citep{campbell2019random,kiumi2025te}, and variational techniques~\citep{li2017efficient,yuan2019theory}. The resulting twirl suppresses off-diagonal terms through weighted averaging of evolution time.

We then consider \emph{Chebyshev twirling}, which instead relies on polynomial transformations of the Hamiltonian. When a block-encoding of the Hamiltonian is provided rather than the time-evolution operator, this method of twirling may be more natural, as qubitisation provides direct access to Chebyshev polynomials of the spectrum~\citep{low2019hamiltonian}. Further, compared to a time-evolution-based construction using QSVT, pure qubitisation may require less preprocessing as it avoids computation of phase angle sequences needed for simulating long time dynamics.

\subsubsection{Time twirling}
\label{secTimeTwirl}
We first consider applying time evolution which implements the function $f_t(\lambda)=e^{i\lambda t}$ in \cref{eq:twirl_approx}. We detail in \cref{appInt} that a continuous kernel can be constructed based on a continuous Gaussian filter that approximates a time averaging process~\citep{cakan2021approximating,bako2026exponential}. We also detail that this continuous time averaging process can be approximated in terms of a finite quadrature, which then enables the approximation of the twirling superoperator via finite time evolutions. The most natural finite approach uses the Gauss--Hermite quadrature; however, its scalability is constrained in practical scenarios~\cite{trefethen2022exactness}.

Alternatively, we can truncate the infinite integral over time to the finite interval $[-T, T]$, enabling Gauss--Legendre, Clenshaw--Curtis, or Riemann quadratures. In \cref{appTime}, we show that in physically relevant systems, i.e., where $\|H\|\geq1$, and $\Delta \ll 1$, the number of required quadrature points scales more efficiently for the simple Riemann sum (using a uniform step size $\tau \in \mathcal{O}(\|H\|^{-1})$) than the Gauss--Hermite quadrature. Our approach then results in the following twirling superoperator. 

\begin{definition}[Discrete time twirling]
    An approximation of the twirling superoperator in \cref{defEO} can be obtained by choosing $f_j(H)=e^{-ij H\tau}$ for some Hamiltonian that is non-degenerate over the support $\mathcal{S}$ as
    \begin{equation*}
        \widetilde{\mathcal{T}}_{\mathrm{time}}(\,\cdot\,) := \sum_{j=-M/2}^{M/2-1}w_j\, e^{-ijH\tau} [\,\cdot\,] e^{ijH\tau},
    \end{equation*}
     where the weights are given by the truncated discrete Gaussian
    $w_j = \tfrac{\tau}{\sqrt{2\pi}\sigma\mathcal Z} \exp(-\frac{(j\tau)^2}{2\sigma^2})$
    with $\sum_{j=-M/2}^{M/2-1} w_j=1$. The induced kernel is therefore
    \begin{equation*}   
        \widetilde K_{\rm time}(\lambda_k,\lambda_\ell) = \sum_{j=-M/2}^{M/2-1} w_j e^{-ij(\lambda_k - \lambda_\ell)\tau}.
    \end{equation*}
    \label{defTime}
\end{definition}

We bound the approximation error with respect to the ideal twirling superoperator in the following proposition.
\begin{restatable}[Time-twirling approximation]{proposition}{thmTimeTwirl}
Suppose the supported spectral gap satisfies $|\lambda_k-\lambda_\ell|\ge\Delta$ for all $k \ne \ell \in \mathcal{S}$.
Then achieving bounded error $\left\lVert\mathcal T(\ketbra{\psi}{\psi}) - \widetilde{\mathcal T}_{\rm time}(\ketbra{\psi}{\psi})\right\rVert \le \varepsilon$ can be achieved using the following number of quadrature points:
\begin{equation*}
    M \in \mathcal{O}\left( \frac{\|H\|}{\Delta}\log\frac1\varepsilon\right).
\end{equation*}
\label{lemma:riemann_twirl}
\end{restatable}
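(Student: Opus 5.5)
Write $\omega:=\lambda_k-\lambda_\ell$. The difference between the two superoperators acts entrywise in the eigenbasis, so
\[
\mathcal T(\ketbra{\psi}{\psi})-\widetilde{\mathcal T}_{\rm time}(\ketbra{\psi}{\psi})=\sum_{k\neq\ell\in\mathcal S}c_kc_\ell^*\,\widetilde K_{\rm time}(\lambda_k,\lambda_\ell)\ketbra{\psi_k}{\psi_\ell}.
\]
The diagonal terms cancel because $\sum_j w_j=1$. This operator is the Hadamard (Schur) product of the rank-one matrix $cc^\dagger$ with the off-diagonal kernel matrix. Its operator norm is therefore at most $\sup_{|\omega|\ge\Delta}|\widetilde K_{\rm time}(\omega)|$ times a factor we can control. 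Concretely, I would bound the Frobenius norm by $\sum_{k\ne\ell}|c_k|^2|c_\ell|^2\,|\widetilde K|^2\le\max_{k\neq\ell}|\widetilde K|^2$, using $\sum_k|c_k|^2=1$. So it suffices to show
\[
\sup_{|\omega|\ge\Delta,\ |\omega|\le2\|H\|}\big|\widetilde K_{\rm time}(\omega)\big|\le\varepsilon .
\]

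The kernel is a discretised and truncated Gaussian Fourier integral. I will split its error into three pieces.

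\textbf{Continuous kernel.} The untruncated continuous average gives $\int \tfrac{1}{\sqrt{2\pi}\sigma}e^{-t^2/2\sigma^2}e^{-i\omega t}\,dt=e^{-\sigma^2\omega^2/2}$. This is at most $\varepsilon/3$ once $\sigma\ge\Delta^{-1}\sqrt{2\log(3/\varepsilon)}$, which I would quote from \cref{appInt}.

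\textbf{Truncation to $[-T,T]$ with $T=M\tau/2$.} The discarded Gaussian tail mass, together with the renormalisation by $\mathcal Z$ (which changes the kernel by at most twice the tail mass), is at most $\varepsilon/3$ provided $T\ge\sigma\sqrt{2\log(6/\varepsilon)}$. Hence $T\in\mathcal O(\Delta^{-1}\log(1/\varepsilon))$.

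\textbf{Discretisation.} This step needs the most care, and I would handle it with Poisson summation. The full Riemann sum $\sum_{j\in\mathbb Z}\tfrac{\tau}{\sqrt{2\pi}\sigma}e^{-(j\tau)^2/2\sigma^2}e^{-ij\omega\tau}$ equals $\sum_{m\in\mathbb Z}e^{-\sigma^2(\omega+2\pi m/\tau)^2/2}$. The aliased terms $m\neq0$ are small as long as $|\omega|\le\pi/\tau$ keeps $|\omega+2\pi m/\tau|\ge\pi/\tau$. Since $|\omega|\le 2\|H\|$, the choice $\tau=\pi/(4\|H\|)$ gives an aliasing error of order $e^{-\sigma^2\pi^2/(2\tau^2)}$, which is negligible (below $\varepsilon/3$) because $\sigma/\tau\gtrsim\|H\|/\Delta\ge1$. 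Note that aliasing is exactly where the $\tau\in\mathcal O(\|H\|^{-1})$ requirement comes from. The same Poisson argument also shows that $\mathcal Z=1+$(exponentially small), so the normalisation is harmless.

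Combining the three pieces, the total error is at most $\varepsilon$, and $M=2T/\tau\in\mathcal O\!\big(\|H\|\,\Delta^{-1}\log(1/\varepsilon)\big)$. The hardest step is bookkeeping the interaction between truncation and renormalisation against the aliasing sum. I would keep the three error contributions separate via the triangle inequality, bound each on $\Delta\le|\omega|\le2\|H\|$, and take logarithmic constants generously.
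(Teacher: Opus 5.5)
Your proposal is correct and is essentially the paper's argument. The paper combines the continuous-Gaussian off-diagonal bound $e^{-\Delta^2\sigma^2/2}$ (a Frobenius-norm estimate, \cref{lemma:ravg}) with the Riemann-sum analysis of \cref{lemma:riemann}, which splits the quadrature error into truncation (Gaussian tail, $T\sim\sigma\sqrt{\log(1/\varepsilon)}$), discretisation (Poisson summation, $\tau\in\mathcal{O}(\|H\|^{-1})$) and normalisation, and takes $M=2T/\tau$. The only difference is that you reduce everything to the scalar kernel on $\Delta\le|\omega|\le 2\|H\|$ before splitting the errors, rather than bounding $\|\rho-\bar\rho\|$ and $\|\bar\rho-\widetilde\rho_M\|$ separately at the operator level. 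This lets you fix $\tau=\pi/(4\|H\|)$ outright; note that the aliasing estimate then relies on the $\sqrt{\log(1/\varepsilon)}$ factor in $\sigma$ together with $\Delta\le 2\|H\|$, not merely on $\|H\|/\Delta\ge 1$.
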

We defer our proof to \cref{appTime}, which proves that time twirling suppresses the off-diagonal
terms of the input state in the Hamiltonian eigenbasis.

\subsubsection{Chebyshev twirling} 
When provided with a block-encoding unitary $\be[\frac{H}{\alpha}]$ of the Hamiltonian $H$, we can naturally generate Chebyshev polynomials of $H/\alpha$ using qubitisation, where $\alpha \ge \|H\|$ is a scaling factor~\citep{low2019hamiltonian,gilyen2019quantum}. This motivates us to investigate twirling by the Chebyshev polynomial as follows.
\begin{definition}[Chebyshev twirling]
    An approximation of the twirling superoperator in \cref{defEO} is obtained by choosing  $f_j(H)=T_j(\tfrac{H}{\alpha})$ for some Hamiltonian that is non-degenerate over the support $\mathcal{S}$ as
    \begin{equation*}   \widetilde{\mathcal{T}}_{\mathrm{cheb}}(\,\cdot\,) := \sum_{j=0}^{M-1}2w_j\, T_j(\tfrac{H}{\alpha}) [\,\cdot\,] T_j(\tfrac{H}{\alpha}),
    \end{equation*}
     where the weights are given by the truncated folded discrete Gaussian
    \begin{equation*}
        w_j=\begin{cases}
        \dfrac{1}{\mathcal{Z}}, & j=0,\\[1ex]
        \dfrac{2}{\mathcal{Z}}\exp\left(-\dfrac{j^2}{2\sigma^2}\right), & 1\le j\le M-1,
        \end{cases}
    \end{equation*}
    with $\sum_{j=0}^{M-1} w_j = 1$. The induced kernel is therefore
    \begin{equation*}
        \widetilde{K}_{\mathrm{cheb}}(\lambda_k,\lambda_\ell) := \sum_{j=0}^{M-1} 2w_j T_j(\tfrac{\lambda_k}{\alpha})T_j(\tfrac{\lambda_\ell}{\alpha}).
    \end{equation*}
    \label{defCheb}
\end{definition}

We bound the approximation error in this approach via the following lemma.
\begin{restatable}[Chebyshev-twirling approximation]{lemma}{lemCheb}
Suppose the supported spectral gap satisfies $|\lambda_k-\lambda_\ell|\ge\Delta$ for all $k \ne \ell \in \mathcal{S}$ and the scaling factor $\alpha \ge \min\left\{2\|H\|, \|H\|(1 + \mathcal O(\Delta^2/\|H\|^2))\right\}$. Then achieving bounded error  $\left\lVert\mathcal T(\ketbra{\psi}{\psi}) - \widetilde{\mathcal T}_{\rm cheb}(\ketbra{\psi}{\psi})\right\rVert \le \varepsilon$
can be achieved using the following number of quadrature points:
\begin{equation*}
    M \in \mathcal{O}\left( \frac{\alpha}{\Delta}\log\frac1\varepsilon\right).
\end{equation*}

\label{lemCheb}
\end{restatable}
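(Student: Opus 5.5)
We reduce the operator-norm error to a bound on the kernel and then control the kernel through a change of variables to angles. Write $x_k=\lambda_k/\alpha=\cos\theta_k$ with $\theta_k\in[0,\pi]$. Then
\[
\mathcal T(\ketbra{\psi}{\psi})-\widetilde{\mathcal T}_{\rm cheb}(\ketbra{\psi}{\psi})=\sum_{k,\ell\in\mathcal S}c_kc_\ell^*\bigl(\delta_{k\ell}-\widetilde K_{\rm cheb}(\lambda_k,\lambda_\ell)\bigr)\ketbra{\psi_k}{\psi_\ell}.
\]
This is the Hadamard (Schur) product of $\ketbra{\psi}{\psi}$ with the error kernel matrix. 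A Schur-product bound, or simply the Frobenius norm with $\sum|c_k|^2=1$, gives an operator-norm error of at most $\max_{k,\ell}|\delta_{k\ell}-\widetilde K_{\rm cheb}(\lambda_k,\lambda_\ell)|$.

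\textbf{Rewriting the kernel in angles.} Using $2T_j(\cos a)T_j(\cos b)=\cos(j(a-b))+\cos(j(a+b))$, we get
\[
\widetilde K_{\rm cheb}=g_M(\theta_k-\theta_\ell)+g_M(\theta_k+\theta_\ell),\qquad g_M(\phi)=\sum_{j=0}^{M-1}w_j\cos(j\phi).
\]
Because the weights are folded, $g_M(\phi)=\frac1{\mathcal Z}\sum_{|j|<M}e^{-j^2/2\sigma^2}e^{ij\phi}$ is a truncated discrete Gaussian Fourier series. It satisfies $g_M(0)=1$.

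By Poisson summation, the untruncated series is $\frac{\sqrt{2\pi}\sigma}{\mathcal Z}\sum_m e^{-\sigma^2(\phi-2\pi m)^2/2}$, a periodised Gaussian. Hence:
\begin{itemize}
\item $|g(\phi)|\le e^{-\sigma^2 d(\phi)^2/2}(1+o(1))$, where $d(\phi)$ is the distance from $\phi$ to $2\pi\mathbb Z$.
\item Truncating at $M$ costs at most $\sim e^{-M^2/2\sigma^2}$.
\end{itemize}
Choosing $\sigma\sim\sqrt{\log(1/\varepsilon)}/\eta$ and $M\sim\sigma\sqrt{\log(1/\varepsilon)}\sim\eta^{-1}\log(1/\varepsilon)$ therefore makes $|g_M(\phi)|\le\varepsilon$ whenever $d(\phi)\ge\eta$.

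\textbf{Diagonal terms.} For $k=\ell$ the kernel is $1+g_M(2\theta_k)$. We need $2\theta_k$ to stay away from $2\pi\mathbb Z$, i.e.\ $\theta_k$ bounded away from $0$ and $\pi$.

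\textbf{Off-diagonal terms.} For $k\ne\ell$ we need both $|\theta_k-\theta_\ell|\ge\eta$ and $\theta_k+\theta_\ell\in[\eta,2\pi-\eta]$.

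\textbf{Main obstacle: the angular gap $\eta$ in terms of $\Delta$ and $\alpha$.} The map $\arccos$ has derivative $1/\sqrt{1-x^2}\ge1$, so $|\theta_k-\theta_\ell|\ge|x_k-x_\ell|\ge\Delta/\alpha$. This gives $\eta\ge\Delta/\alpha$ for the difference terms.

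The sum terms and the diagonal terms require $\theta_k$ to be at distance $\gtrsim\Delta/\alpha$ from $0$ and $\pi$. This is where the hypothesis on $\alpha$ enters, and we split into its two cases.
\begin{itemize}
\item If $\alpha\ge2\|H\|$, then $|x_k|\le1/2$, so $\theta_k\in[\pi/3,2\pi/3]$. Then $\theta_k+\theta_\ell$ and $2\theta_k$ lie in $[2\pi/3,4\pi/3]$, a constant distance from $2\pi\mathbb Z$.
\item If $\alpha=\|H\|(1+c\Delta^2/\|H\|^2)$, then $1-|x_k|\gtrsim\Delta^2/\|H\|^2$. Since $\theta\approx\sqrt{2(1-x)}$ near the edges, this gives $\theta_k,\pi-\theta_k\gtrsim\Delta/\|H\|\sim\Delta/\alpha$. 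Thus $d(\theta_k+\theta_\ell)\gtrsim\Delta/\alpha$ and $d(2\theta_k)\gtrsim\Delta/\alpha$ as well. This is exactly why the $\Delta^2$ padding appears.
\end{itemize}

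\textbf{Conclusion.} In either case $\eta\in\Omega(\Delta/\alpha)$. Choosing $\sigma$ and $M$ as above, together with the normalisation $\sum_j w_j=1$ (i.e.\ $g_M(0)=1$), gives every kernel entry within $\mathcal O(\varepsilon)$ of $\delta_{k\ell}$. This yields $M\in\mathcal O\bigl(\tfrac{\alpha}{\Delta}\log\tfrac1\varepsilon\bigr)$.
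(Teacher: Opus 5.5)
Your proof is correct. It follows the paper's skeleton: you split $2T_j(x_k)T_j(x_\ell)$ into a difference term and an aliasing sum term, get Gaussian suppression in the angles $\theta_k=\arccos(\lambda_k/\alpha)$, use $|\theta_k-\theta_\ell|\ge\Delta/\alpha$ for the difference term, and use the hypothesis on $\alpha$ (through $\arccos(\|H\|/\alpha)\in\Omega(\Delta/\alpha)$) to keep the angles away from the edges for the aliasing term.

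The one real difference is technical. The paper approximates the folded discrete Gaussian sum by the continuous Gaussian integral and inherits the normalisation, truncation and discretisation errors from its Riemann-sum lemma. You instead apply Poisson summation directly to the integer-step series and obtain a periodised Gaussian in the distance to $2\pi\mathbb{Z}$. This is more self-contained, and it makes explicit two points the paper leaves implicit in its inherited discretisation bound: $\theta_k$ must also stay away from $\pi$, so that $\theta_k+\theta_\ell$ avoids $2\pi$, and the diagonal entries $g_M(2\theta_k)$ must be small.
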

We defer this proof to \cref{appCheb}. This proof is similar to time-twirling, but has an additional requirement used to suppress an additional aliasing term, which appears as Chebyshev polynomials decompose purely into cosinusoids.

\subsection{Block-encoding the eigenprobability operator}

We now detail our construction of the block-encoding of the eigenprobability operator. We note that any density operator can be block-encoded naturally via its purification $\ket{\rho}$ through Lemma 25 of Ref.~\citep{gilyen2019quantum}. However, this is not necessarily compatible with the sum-of-squares decomposition of $\sum_j w_j B_j^\dagger B_j$, which is central to our approach. 

Therefore, we focus on the approximate twirling superoperators detailed in the previous subsection, which are linear combinations of positive semi-definite operators and can therefore be implemented directly via the LCU approach~\citep{childs2012hamiltonian,gilyen2019quantum}, such that the individual components in the sum-of-squares decomposition are $B_j = \bra{\psi}f_j(H)$.
For the rest of this manuscript, we focus primarily on the time-evolution--based twirling approach rather than the Chebyshev-based one, as the former applies to a wider range of settings, while we detail Chebyshev-based implementations in the Appendix. 
Specifically, using the truncated Riemann quadrature-based twirling in \cref{lemma:riemann_twirl}, the block-encoding of the eigenprobability operator can be implemented with complexity summarised in the following proposition.

\begin{restatable}[Block-encoding of $\rho$]{proposition}{propRho}
    Given a constant stepsize $\tau \in \mathcal{O}(\|H\|^{-1})$, we can construct a $(1,a+1,\varepsilon)$-block-encoding of the $n$-qubit diagonal density operator $\rho$ with 
    \begin{equation*}
        M \in\mathcal{O}\left(\frac{\|H\|}{\Delta}\log\frac{1}{\varepsilon}\right)
    \end{equation*}
   queries to $U_\tau$, where $a = \log_2 M$.
    \label{propRho}
\end{restatable}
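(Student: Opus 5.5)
We realise $\widetilde{\mathcal T}_{\rm time}(\ketbra{\psi}{\psi})$ from \cref{defTime} as an LCU of rank-one positive operators and block-encode it in the standard ``$A^\dagger A$'' form. Concretely, write
\begin{equation*}
    \widetilde\rho := \widetilde{\mathcal T}_{\rm time}(\ketbra{\psi}{\psi}) = \sum_{j=-M/2}^{M/2-1} w_j\, e^{-ijH\tau}\ketbra{\psi}{\psi}e^{ijH\tau} = \rho_{\rm sqrt}^\dagger\rho_{\rm sqrt},
    \qquad
    \rho_{\rm sqrt}:=\sum_j \sqrt{w_j}\,\ket{j}\otimes\bra{\psi}e^{ijH\tau}.
\end{equation*}
Here $\rho_{\rm sqrt}$ maps the $n$-qubit system register to an $a=\log_2 M$ qubit index register.

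First, we build a block-encoding of $\rho_{\rm sqrt}$. We use a \prep{} unitary with $\prep\ket{0^a}=\sum_j\sqrt{w_j}\ket{j}$, which is classically specified and costs no queries, and a select operator $\select=\sum_j\ketbra{j}{j}\otimes e^{ijH\tau}$. Since $j$ ranges over $[-M/2,M/2-1]$, we implement $\select$ with binary-controlled powers $U_\tau^{\pm 2^m}$, using the controlled forward and inverse access of \cref{assumptHam}. Implemented naively this costs $M$ queries to $U_\tau$ up to a constant; a cleaner route is to first apply $U_\tau^{M/2}$ and then controlled $U_\tau^{-2^m}$, for a total of $\mathcal O(M)$ queries.

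Next, we sandwich $\select$ between $\prep$ and $U_\psi^\dagger$ on the system. Up to index-register unitaries and a flag qubit, the unitary $W=(\prep^\dagger\otimes U_\psi^\dagger)\select\,(\prep\otimes\mathbb I)$ has the property that projecting the system onto $\ket{0^n}$ on the left (via $U_\psi^\dagger$) yields the operator $\sum_j\sqrt{w_j}\ket{j}\bra{\psi}e^{ijH\tau}$. It is cleaner, however, to follow Lemma 25 of Ref.~\citep{gilyen2019quantum} directly. We prepare the purification
\begin{equation*}
    \ket{\widetilde\rho}=\select\,(\prep\otimes U_\psi)\ket{0^a}\ket{0^n}=\sum_j\sqrt{w_j}\ket{j}\,e^{-ijH\tau}\ket{\psi},
\end{equation*}
whose reduced state on the system is exactly $\widetilde\rho$. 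Lemma 25 then gives a $(1,a+n,0)$-block-encoding of $\widetilde\rho$. That lemma, however, uses $n$ extra ancillas. To reach the stated $a+1$ ancillas, we instead exploit the rank-one structure of each $B_j=\bra{\psi}e^{ijH\tau}$. The product $U^\dagger\,(\ketbra{0}{0}\text{-reflection})\,U$ with $U=\select(\prep\otimes\mathbb I)$, applied with $U_\psi^\dagger$ and a single-qubit flag marking whether the system is in $\ket{0^n}$, produces $\sum_j w_j e^{-ijH\tau}\ketbra{\psi}{\psi}e^{ijH\tau}$ in the block where the index register is $\ket{0^a}$ and the flag is $\ket{0}$.

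In summary, the circuit is $U^\dagger(\mathbb I\otimes U_\psi\,\Pi\,U_\psi^\dagger)U$, where $\Pi$ is the projector onto $\ket{0^n}$, implemented via one flag qubit with a multi-controlled NOT. This gives $a+1$ ancillas and uses $U_\tau$ a total of $2\cdot\mathcal O(M)=\mathcal O(M)$ times.

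Finally, we handle the error. The construction encodes $\widetilde\rho$ exactly with normalisation $1$, because $\sum_j w_j=1$. Hence
\begin{equation*}
    \|\rho-\widetilde\rho\|=\|\mathcal T(\ketbra{\psi}{\psi})-\widetilde{\mathcal T}_{\rm time}(\ketbra{\psi}{\psi})\|\le\varepsilon
\end{equation*}
once $M\in\mathcal O(\tfrac{\|H\|}{\Delta}\log\tfrac1\varepsilon)$, by \cref{lemma:riemann_twirl} with $\tau\in\mathcal O(\|H\|^{-1})$.

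The main obstacle is the bookkeeping of the ancilla count. We must verify that the flag-qubit projection trick correctly yields the $\ketbra{\psi}{\psi}$ sandwich inside the top-left block without needing a system-sized purification register. We also need the signed index range in $\select$ to be implementable within $\mathcal O(M)$ rather than more queries.
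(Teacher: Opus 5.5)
Your proposal is correct and, once the Lemma~25 detour is set aside, it is the paper's construction. Both use $\prep$, multiplexed controlled powers of $U_\tau$ (the paper uses a two's-complement register with $U_\tau^{-2^{m-1}}$ on the sign bit instead of your fixed offset), a one-ancilla block-encoding of $U_\psi\ketbra{0}{0}U_\psi^\dagger$, the inverse evolutions and $\prep^\dagger$, with the error taken from \cref{lemma:riemann_twirl}. The only difference is that the paper block-encodes $\ketbra{0}{0}$ as an LCU of $\mathbb{I}$ and the reflection $2\ketbra{0}{0}-\mathbb{I}$; your flag-qubit multi-controlled NOT also works, but you must flip the flag (or read it out in $\ket{1}$), because the $\ket{0}$-flag block of a bare multi-controlled NOT is $\mathbb{I}-\Pi$ rather than $\Pi$.
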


For a proof, please refer to \cref{appRho}. A schematic depiction of the circuit implementing this block-encoding is shown in \cref{figBErho}. Here, the $\prep$ operation prepares the discretised Gaussian distribution used in the Riemann sum, i.e., 
the weights $w_j$ can be loaded into an ancillary qubit register by preparing the ancilla state $\ket{\psi_{\rm anc}} = \sum_j \sqrt{w_j} \ket{j}$. This can be achieved through standard state preparation techniques, such as  
the QSVT framework, which offers an ancilla-efficient approach, requiring merely three ancillary qubits, alongside a total gate count that scales logarithmically with respect to the target accuracy~\cite{mcardle2026quantum}. 
Furthermore, in practical scenarios, classical tensor-network methods~\cite{iaconis2024quantum} and heuristic variational techniques~\cite{zoufal2019quantum} can be used for finding an efficient Gaussian state preparation circuit.

\begin{figure}
    \centering
    \includegraphics[width=\linewidth]{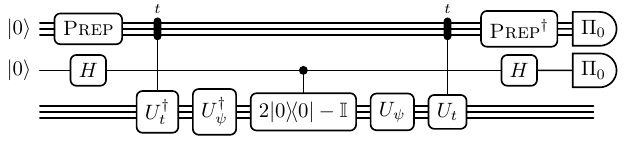}
    \caption{Block-encoding of the eigenprobability operator.}
    \label{figBErho}
\end{figure}

\begin{figure}
    \includegraphics[width=0.9\linewidth]{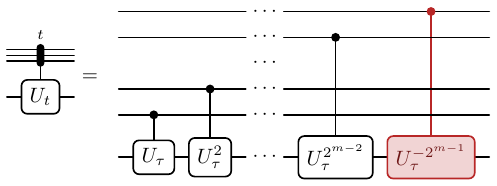}
    \caption{Implementation of multiplexed time evolution unitaries. Note that we can achieve both forward and backward time evolution by applying the highest-order controlled unitary in reverse as $U^{-2^{m-1}}$.}
    \label{figMultiplex}
\end{figure}

The second step in our implementation in \cref{figBErho} 
uses controlled time evolution operators where $U_t$ denotes the multiplexed evolution. 
While this approach has a provably favourable maximal evolution time, naively applying (controlled) time evolution operators $U_{j\tau}$ leads to an exponential number of gates. However, since the Riemann sum is defined using a fixed step-size $\tau \in \mathcal{O}(\|H\|^{-1})$, the corresponding block-encoding can be implemented using powers of $U_\tau:= e^{-iH\tau}$ with a single control on the time-register---similarly to the quantum phase estimation circuit as shown in \cref{figMultiplex}. We provide block-encoding implementations of the eigenprobability operator---generated via Chebyshev twirling for block-encoded Hamiltonians---in \cref{appRhoAlt}.

\subsection{The square-root eigenprobability operator}
\label{secRhoSqrt}
As we introduced in \cref{secOverview}, we use the SOSSA decomposition to obtain a square-root eigenprobability operator whose singular values are square roots of the singular values of $\rho$. Similar to how $\rho$ is approximated by an average of time evolved states $\widetilde{\rho}$, we can approximate $\rho_{\rm sqrt}$ directly by ``splitting'' $\widetilde{\rho}$ such that $\widetilde{\rho} = \widetilde{\rho}_{\rm sqrt}^\dagger \widetilde{\rho}_{\rm sqrt}$ as shown in \cref{figBErhoSqrt}.

Specifically, as the eigenprobability operator is approximated by a finite weighted sum of time-evolved states, we can write
\begin{equation}
    \widetilde{\rho} = \sum_j w_j e^{-iHt_j}U_\psi\ketbra{0}{0}U_\psi^\dagger e^{iHt_j},
\end{equation}
where we can set $B_j = \bra{0}U_\psi^\dagger e^{iHt_j}$. Following this, we can write the square-root eigenprobability operator as
\begin{equation}
    \widetilde{\rho}_{\rm sqrt} = \sum_j \sqrt{w_j} \ket{t_j}_{\mathcal{A}} \otimes \bra{0}_{\mathcal{B}}U_\psi^\dagger e^{iHt_j},
\end{equation}
where the circuit implementation of $\be[\widetilde{\rho}_{\rm sqrt}]$ can be found in \cref{figBErhoSqrt}. We denote the top register corresponding to the ancillary register in LCU as $\mathcal{A}$ and the main register as $\mathcal{B}$. Note that we include the tilde in the block-encoding, as the unitary block-encodes the approximate square-root eigenprobability operator $\widetilde\rho_{\rm sqrt}$ instead of the ideal $\rho_{\rm sqrt}$.
\begin{figure}
    \centering
    \includegraphics[width=\linewidth]{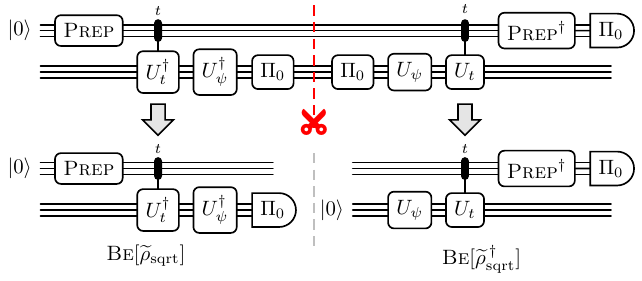}
    \caption{Block-encoding of the square-root eigenprobability operator $\rho_{\rm sqrt}$ (bottom left) and its inverse $\rho_{\rm sqrt}^\dagger$ (bottom right) obtained from splitting the block-encoding of the eigenprobability operator $\rho$ (top).}
    \label{figBErhoSqrt}
\end{figure}

\begin{figure}
    \centering
    \includegraphics[width=\linewidth]{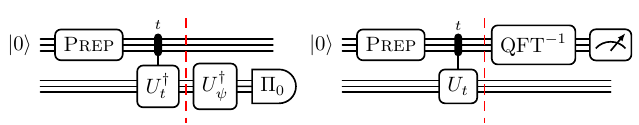}
    \caption{Comparison of the circuit implementation of the square-root eigenprobability operator $\rho_{\rm sqrt}$ (left) and quantum phase estimation (right).}
    \label{figRhoSqrtQPE}
\end{figure}

While the idea of dividing the circuit implementation of $\widetilde{\rho}$ into two halves is sufficient to prove we obtain $\widetilde{\rho}_{\rm sqrt}$,
we aim to provide additional context to facilitate conceptual understanding of how our circuit block-encodes $\widetilde{\rho}_{\rm sqrt}$. For this reason,  we illustrate how this circuit acts on an eigenstate $\ket{\psi_k}$ of the Hamiltonian and juxtapose its action with Gaussian phase estimation~\citep{rendon2024improved,rendon2023lowdepth,chen2025quantum} in \cref{figRhoSqrtQPE}. Phase estimation (in \cref{figRhoSqrtQPE}, right) applies a phase kickback~\citep{cleve1998quantum} to ``kick'' the eigenenergies encoded in the phases onto the ancillary register, and then uses an inverse quantum Fourier transform (${\rm QFT}^{-1}$)~\citep{coppersmith1994approximate} to map the eigenphases to eigenenergies. The circuit implementation of $\be[\widetilde{\rho}_{\rm sqrt}]$ in \cref{figRhoSqrtQPE} (left) is strikingly similar: it uses a Gaussian state as the initial state for the ancilla register, and applies the same phase kickback to the ancillary register, such that the joint quantum state of the main and ancillary registers is
\begin{equation}
    \bigg(\sum_j \sqrt{w_j}e^{iE_kt_j} \ket{t_j}_{\mathcal{A}}\bigg)\otimes\ket{\psi_k}_{\mathcal{B}}.
\end{equation}
Let us denote the state above on the ancilla register as $\ket{\phi_k'}_{\mathcal{A}} = \sum_j \sqrt{w_j}e^{iE_kt_j} \ket{t_j}_{\mathcal{A}}$.
The crucial difference from phase estimation is that
we do not perform readout in the ancilla register \cref{figRhoSqrtQPE} (left) and therefore the inverse
QFT is not used. Instead, the final step in our circuit in \cref{figRhoSqrtQPE} (left)
is a projection onto the initial state $\ket{\psi} = U_\psi\ket{0} = \sum_{k\in\mathcal{S}}c_k\ket{\psi_k}$ on the main register, upon which the remaining ancilla register's state becomes
\begin{equation}
    \left[\mathbb{I}_{\mathcal{A}}\otimes (\bra{0}U_\psi^\dagger)_{\mathcal{B}}\right]\ket{\phi_k'}_{\mathcal{A}}\ket{\psi_k}_{\mathcal{B}} = c_k^*\ket{\phi_k'}_{\mathcal{A}} = \sqrt{p_k}\ket{\phi_k}_{\mathcal{A}},
\end{equation}
where we absorbed the phase of $c_k^{*}$ into $\ket{\phi_k}_{\mathcal{A}}$ as the proportionality holds $c_k^{*} \propto \sqrt{p_k}$ up to a complex phase.  
The above equation therefore establishes the action as $\widetilde{\rho}_{\rm sqrt}\ket{\psi_k}=\sqrt{p_k}\ket{\phi_k}$ -- 
indeed, as  long as $\{\ket{\phi_k}\}_{k\in\mathcal{S}}$ is an orthonormal set, $\widetilde{\rho}_{\rm sqrt}$ is an appropriate implementation of $\rho_{\rm sqrt}$ as it has the identical  singular value decomposition as $\sum_{k\in\mathcal{S}}\sqrt{p_k}\ketbra{\phi_k}{\psi_k}$, as defined in \cref{defSqrt}.

It has been well established in quantum phase estimation that to estimate an eigenenergy to accuracy $\eta$, we require a maximum evolution time of $T \in\mathcal{O}(\eta^{-1})$ (considering only the dependency of the spectral gap). When $\eta \in\mathcal{O}(\Delta)$, the resulting eigenenergy estimates are sufficiently accurate that the corresponding computational basis readout states are separable and hence orthogonal. Analogously, in our setting, we can achieve the requirement of the states $\{\ket{\phi_k}\}_{k\in\mathcal{S}}$ being orthonormal with a maximum evolution time of $\mathcal{O}(\Delta^{-1})$. This recovers the $\mathcal{O}(\Delta^{-1})$ scaling in evolution time required for the implementation of the eigenprobability operator.

\begin{restatable}[Block-encoding of $\rho_{\rm sqrt}$]{proposition}{rhoSqrt}
    We can construct an explicit circuit $\be[\widetilde \rho_{\rm sqrt}]$ that provides an $\varepsilon_{\rm sqrt}$--accurate block-encoding implementation of $\rho_{\rm sqrt}$ with
    \begin{equation*}
         M \in \widetilde{\mathcal{O}}\left(\frac{\|H\|}{\Delta} \log\frac{1}{\varepsilon_{\rm sqrt}}\right)
    \end{equation*}
   queries to $U_\tau$, and requires $n + \log_2 M$ qubits to implement. 
    \label{propRhoSqrt}
\end{restatable}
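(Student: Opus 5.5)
The plan is to split the block-encoding circuit of \cref{propRho} into two halves and take the first half as $\be[\widetilde\rho_{\rm sqrt}]$. Concretely, the circuit acts as follows:
\[
\ket{0}_{\mathcal A}\ket{\cdot}_{\mathcal B}\mapsto(\prep\otimes\mathbb I)\;\mapsto\;\textstyle\sum_j\ketbra{j}{j}\otimes e^{ijH\tau}\;\mapsto\;(\mathbb I\otimes U_\psi^\dagger).
\]
Post-selecting $\bra{0}_{\mathcal B}$ on the main register and leaving the ancilla $\mathcal A$ as the output register gives exactly
\[
\widetilde\rho_{\rm sqrt}=\sum_j\sqrt{w_j}\,\ket{j}_{\mathcal A}\otimes\bra{0}U_\psi^\dagger e^{ijH\tau}.
\]
The resulting $M\times N$ operator therefore maps $\mathcal B$ to $\mathcal A$, with row and column dimensions padded appropriately. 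This uses $n+\log_2 M$ qubits. The multiplexed evolution is realised as in \cref{figMultiplex}, with $M$ total uses of $U_\tau$ counted as in \cref{propRho}.

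Next I need to compare this with the ideal $\rho_{\rm sqrt}$. A direct check gives
\[
\widetilde\rho_{\rm sqrt}^\dagger\widetilde\rho_{\rm sqrt}=\widetilde{\mathcal T}_{\rm time}(\ketbra{\psi}{\psi})=\widetilde\rho
\]
exactly. Since $\rho_{\rm sqrt}$ is only defined up to the choice of the orthonormal set $\{\ket{\phi_k}\}$, I will take as target
\[
\rho_{\rm sqrt}:=\sum_{k}\sqrt{p_k}\ketbra{\phi_k}{\psi_k},
\]
where $\{\ket{\phi_k}\}$ is the orthonormalisation of the states $\ket{\phi_k'}=\sum_j\sqrt{w_j}e^{iE_kj\tau}\ket{j}$. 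One could instead use the polar decomposition of $\widetilde\rho_{\rm sqrt}$, so that $\rho_{\rm sqrt}=V\widetilde\rho^{1/2}$ with the same isometry $V$ as in $\widetilde\rho_{\rm sqrt}=V\widetilde\rho^{1/2}$, but the eigen-expansion is more explicit. The Gram matrix is $G_{k\ell}=\braket{\phi'_k|\phi'_\ell}=\widetilde K_{\rm time}(E_\ell,E_k)$, which is the time-twirling kernel. By the estimate in the proof of \cref{lemma:riemann_twirl}, $|G_{k\ell}-\delta_{k\ell}|\le\eta$ for $k\neq\ell$ once $M\in\mathcal O(\tfrac{\|H\|}{\Delta}\log\tfrac1\eta)$.

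Writing $\widetilde\rho_{\rm sqrt}=\Phi' D$ with $D=\sum_k c_k^*\ketbra{k}{\psi_k}$ and $\Phi'=\sum_k\ketbra{\phi'_k}{k}$, I choose $\Phi=\Phi'G^{-1/2}$, the Löwdin orthonormalisation. Then
\[
\|\widetilde\rho_{\rm sqrt}-\rho_{\rm sqrt}\|\le\|\Phi'-\Phi\|\,\|D\|\le\|G^{1/2}-\mathbb I\|\le\|G-\mathbb I\|.
\]
The last quantity is bounded by the off-diagonal kernel values.

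The main obstacle is that bounding $\|G-\mathbb I\|$ entrywise incurs a factor $|\mathcal S|\le N$ (e.g.\ via Gershgorin), which would seem to force $\eta=\varepsilon_{\rm sqrt}/N$. I will handle this in one of two ways. The first is to weight by the amplitudes: the error operator is $(\Phi'-\Phi)D$, whose norm can be controlled by the sum $\sum_{k\ne\ell}|c_k||c_\ell||\widetilde K|\le\eta$, mirroring how \cref{lemma:riemann_twirl} bounds the error on $\ketbra{\psi}{\psi}$ rather than on each entry. The second, simpler option is to accept $\log(N/\varepsilon_{\rm sqrt})=\mathcal O(n+\log\varepsilon_{\rm sqrt}^{-1})$, which is a logarithmic factor absorbed into the $\widetilde{\mathcal O}$ of the statement. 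Either route yields $M\in\widetilde{\mathcal O}(\tfrac{\|H\|}{\Delta}\log\tfrac1{\varepsilon_{\rm sqrt}})$. A subnormalisation of $1$ holds automatically, since the circuit is unitary and $\sum_j w_j=1$.
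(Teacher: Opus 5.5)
Your construction and core estimate coincide with the paper's first proof. You split the circuit of \cref{propRho} and take the L\"owdin orthonormalisation $\Phi=\Phi'G^{-1/2}$, which is exactly the polar factor the paper obtains as the closest semi-unitary matrix via \cref{lemProcrustes}. You then reduce to $\|G-\mathbb I\|$ with off-diagonal entries bounded by the discretised kernel. The gap is in the final step, where the Gershgorin factor $|\mathcal S|$ has to be handled.

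Your first fix fails, for two reasons.
\begin{itemize}
\item An entrywise bound $|\widetilde K_{k\ell}|\le\eta$ yields $\sum_{k\ne\ell}p_kp_\ell|\widetilde K_{k\ell}|^2\le\eta^2$, which is the Frobenius estimate behind the twirling bound. It does not yield $\sum_{k\ne\ell}|c_k||c_\ell||\widetilde K_{k\ell}|\le\eta$; for flat amplitudes that sum is $(|\mathcal S|-1)\eta$.
\item The error $(\Phi'-\Phi)D=\Phi(G^{1/2}-\mathbb I)D$ is weighted by the amplitudes on one side only, whereas $\widetilde\rho-\rho=D^\dagger(G-\mathbb I)D$ is weighted on both, so there is nothing to mirror. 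For example, take $p_k=1/|\mathcal S|$ and all off-diagonal entries of $G$ equal to $\eta$. By \cref{lemProcrustes}, the best error over \emph{all} valid $\rho_{\rm sqrt}$ is $|\mathcal S|^{-1/2}\bigl(\sqrt{1+(|\mathcal S|-1)\eta}-1\bigr)\approx\sqrt{|\mathcal S|}\,\eta/2$ for $|\mathcal S|\eta\ll1$. So an entrywise kernel bound cannot give an $|\mathcal S|$-free estimate.
\end{itemize}
Your second fix, $|\mathcal S|\le N$, gives $M\in\mathcal O\bigl(\tfrac{\|H\|}{\Delta}(n+\log\tfrac1{\varepsilon_{\rm sqrt}})\bigr)$. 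The factor $n$ is not polylogarithmic in $\|H\|/\Delta$ or $1/\varepsilon_{\rm sqrt}$. Unless you hide factors of $n$ in the $\widetilde{\mathcal O}$, this proves a weaker statement.

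The missing observation, which is how the paper concludes, is a counting bound. The supported eigenvalues are pairwise at least $\Delta$ apart and lie in $[-\|H\|,\|H\|]$, so $(|\mathcal S|-1)\Delta\le2\|H\|$. Hence $\log(|\mathcal S|/\varepsilon_{\rm sqrt})\le\log\bigl(2\|H\|/(\Delta\varepsilon_{\rm sqrt})\bigr)$, which the $\widetilde{\mathcal O}$ genuinely absorbs.

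Alternatively, the polar-decomposition route you set aside removes $|\mathcal S|$ altogether. Write $\widetilde\rho_{\rm sqrt}=V\widetilde\rho^{1/2}$, with $V$ isometric on $\operatorname{span}\{\ket{\psi_k}\}_{k\in\mathcal S}$. This is automatic when $G$ is invertible; otherwise extend $V$ off the range of $\widetilde\rho$. Take the target to be $\rho_{\rm sqrt}:=V\rho^{1/2}$. It must be $\rho^{1/2}$, not $\widetilde\rho^{1/2}$, since $V\widetilde\rho^{1/2}$ just returns $\widetilde\rho_{\rm sqrt}$. Then $\|\widetilde\rho_{\rm sqrt}-\rho_{\rm sqrt}\|\le\|\widetilde\rho^{1/2}-\rho^{1/2}\|\le\|\widetilde\rho-\rho\|^{1/2}$. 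The last inequality holds because $A\le B+\delta\mathbb I$ implies $A^{1/2}\le(B+\delta\mathbb I)^{1/2}\le B^{1/2}+\sqrt\delta\,\mathbb I$, by operator monotonicity of the square root. Applying \cref{lemma:riemann_twirl} at accuracy $\varepsilon_{\rm sqrt}^2$ then gives $M\in\mathcal O(\tfrac{\|H\|}{\Delta}\log\tfrac1{\varepsilon_{\rm sqrt}})$, with no dependence on $|\mathcal S|$ or $n$.
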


We present two proofs of this in \cref{appRhoSqrt}. The first method follows the square-root decomposition and transfers the Gaussian quadrature error bounds from the previous section via the closest semi-unitary problem~\citep{higham2008functions}, and provides a general bound, but with additional explicit dependencies. The second method utilises known results on Gaussian phase estimation~\citep{rendon2023lowdepth,rendon2024improved,chen2025quantum} to provide proofs on the Riemann quadrature. 

Unlike \cref{propRho}, there is an additional logarithmic dependency that occurs in the square-root eigenprobability operator when using the Gaussian state. As the second proof reduces the construction of $\rho_{\rm sqrt}$ to Gaussian phase estimation, we can remove the additional logarithmic dependency $\mathcal{O}(\log(\Delta^{-1}))$ of the spectral gap by replacing Gaussian phase estimation with a coherent phase estimation algorithm that saturates the time complexity lower bound~\citep{mande2026tight}, such as implementations that incorporate median finding~\citep{nagaj2009fast}.

\section{Dominant eigenstate filtering and preparation}
\label{secDefeat}

\begin{algorithm*}[t]
\caption{Dominant Eigenstate Filtering via Eigenprobability Amplification and Thresholding}
\label{algoDefeat}
\Indm
\KwIn{State preparation unitary $U_\psi$ of state $\ket{\psi}$, Time evolution operator $U_\tau = e^{-iH\tau}$, Minimum support spectral gap $\Delta$, Threshold value $\mu$, Amplitude gap $\Lambda$, Accuracy $\varepsilon$ }
\KwOut{$\varepsilon$-close block-encoding of $\mathcal{R} = \mathbb{I} - 2\ketbra{\psi_0}{\psi_0}$}
\Indp
\DontPrintSemicolon
\vspace{1ex}
Prepare two registers: $\mathcal{A}$ of size $m \in \mathcal{O}(\log(\Delta^{-1})+\log\log(\varepsilon^{-1}))$, $\mathcal{B}$ of size $n = \log_2\dim(U_\tau)$, and an additional phase qubit for QSVT.\;
\For{every query to the block-encoding $\widetilde{\rho}_{\rm sqrt}$}{
    \uIf{forward access of unitary $\be[\widetilde{\rho}_{\rm sqrt}]$}{
        Prepare a truncated discrete Gaussian state in register $\mathcal{A}$ with $\sigma \in \Theta\left(\Delta^{-1}\sqrt{\log(\varepsilon^{-1})}\right)$ as follows:
        \begin{algoequation}
            \ket{0}_{\mathcal{A}} \to \sum_{j=-2^{m-1}}^{2^{m-1}-1} \sqrt{\frac{\tau}{\sqrt{2\pi}\sigma\mathcal{Z}}} e^{-\frac{(j\tau)^2}{4\sigma^2}} \ket{j}_{\mathcal{A}}.
        \end{algoequation}\;
        Apply multiplexed time evolution on $\mathcal{B}$, controlled on $\ket{j}_{\mathcal{A}}$ as follows:
        \begin{algoequation}
            \ket{j}_{\mathcal{A}} \ket{\phi}_{\mathcal{B}} \to \ket{j}_{\mathcal{A}} e^{-ijH\tau}\ket{\phi}_{\mathcal{B}}.
        \end{algoequation}\;
        Apply inverse state preparation $U_\psi^\dagger$ on $\mathcal{B}$.\;
        Apply multi-controlled Toffoli on the phase angle qubit controlled by the zero state of $\mathcal{B}$.\;
    }
    \ElseIf{inverse access of unitary  $\be[\widetilde{\rho}_{\rm sqrt}]$}{
        Execute steps 6 to 4 inversely.\;
        Apply multi-controlled Toffoli on the phase-angle qubit controlled on the zero state of $\mathcal{A}$.\;
    }
}
Apply the QSVT for the threshold function $F_{\mu}$ using a phase-angle sequence of length $\mathcal{O}(\Lambda^{-1}\log(\varepsilon^{-1}))$.\;
\Return Block-encoding unitary $\be[\mathcal{\widetilde{R}}]$ with main qubits on register $\mathcal{B}$ and ancilla qubits on register $\mathcal{A}$.
\end{algorithm*}
\begin{figure*}
\includegraphics[width=.75\linewidth]{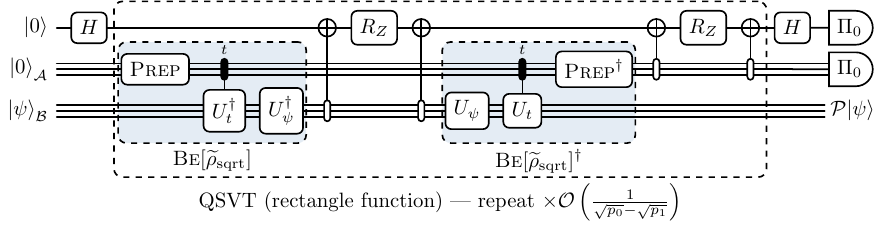}
    \caption{Full circuit implementation of DEFEAT. Note that the repeated circuit contains both $\be[\widetilde{\rho}_{\rm sqrt}]$ and $\be[\widetilde{\rho}_{\rm sqrt}]^\dagger$ as the polynomial expansion of the thresholding function is guaranteed to be even.}
    \label{figDefeat}
\end{figure*}

\subsection{The DEFEAT algorithm}
\label{secEigenFilt}

Building on the block-encoding approach of $\rho_{\rm sqrt}$ that we developed in \cref{secRhoSqrt},
we now detail our full dominant eigenstate filtering algorithm. Given a probability threshold
$\mu\in\mathbb{R}$ that separates the dominant eigenprobability from the second-largest
one as $\sqrt{p_0}\ge\mu>\sqrt{p_1}$, we define the even thresholding function as
\begin{equation}
    F_\mu(x) := \begin{cases}
        -1 & \text{if } |x| \ge \mu,\\
        1 & \text{if } |x| < \mu,
    \end{cases}
    \label{eqFilter}
\end{equation}
which we apply to the singular values of the block-encoding of $\rho_{\mathrm{sqrt}}$ using QSVT. This produces a reflector 
\begin{equation}\label{eq:reflector}
	\mathcal{R}_{\mu} =-\sum_{k: \sqrt{p_k} \ge \mu}\ketbra{\psi_k}{\psi_k} + \sum_{k: \sqrt{p_k} < \mu}\ketbra{\psi_k}{\psi_k},
\end{equation}
which equals $\mathcal{R} = \mathbb{I}-2\ketbra{\psi_0}{\psi_0}$ with a suitable $\mu$ shown above.
Alternatively, the block-encoding of $\ketbra{\psi_0}{\psi_0}$ can be prepared using similar techniques.

This serves as an analogue of the shifted sign function used in spectral thresholding techniques for ground state preparation~\citep{lin2020nearoptimal}. In that setting, the odd sign function is employed, reflecting the fact that the spectrum of the (shifted) Hamiltonian is not symmetric about zero, and one uses the sign function to distinguish between positive and negative eigenvalues relative to a threshold. In contrast, here we require a thresholding operation that performs filtering based only on the magnitude of the singular values.

Since the block-encoding maps eigenstates $\{\ket{\psi_k}\}_{k \in \mathcal{S}}$ to an orthonormal set $\{\ket{\phi_k}\}_{k \in \mathcal{S}}$, and vice versa with its inverse, the implemented polynomial must be even, i.e., composed of only even-degree terms. This ensures that the block-encoding is applied an even number of times, allowing the eigenstates to be mapped back to their original subspace.

Crucially, the construction of $\mathcal{R}_{\mu}$ requires resolving amplitudes near the threshold $\mu$. If we set $\mu \approx \frac{1}{2}(\sqrt{p_0}+\sqrt{p_1})$, the relevant distinguishing scale is set by half of the gap $\frac{1}{2}\sqrt{p_0}-\sqrt{p_1}$. The polynomial approximation of the threshold function must have a transition region narrower than this gap to correctly separate the dominant component. This leads to an overall complexity that scales with $\mathcal{O}\left((\sqrt{p_0}-\sqrt{p_1})^{-1}\right)$. We detail the quantum circuit implementation in \cref{figDefeat} and the algorithm in \cref{algoDefeat}.

We now formalise our results as follows.

\begin{restatable}[DEFEAT]{theorem}{filter}
      Given \cref{assumptDEFEAT}, DEFEAT implements an $\varepsilon$-approximate block-encoding of the projector $\mathcal{P} = \ketbra{\psi_0}{\psi_0}$ (or, equivalently, the reflection $\mathcal{R} = \mathbb{I}-2\ketbra{\psi_0}{\psi_0}$), where $\ket{\psi_0}$ denotes the dominant eigenstate. This requires
    \begin{equation*}
        \widetilde{\mathcal{O}}\left(\frac{\|H\|}{(\sqrt{p_0}-\sqrt{p_1})\Delta}\log^2\frac{1}{\varepsilon}\right),
    \end{equation*}
   queries to $U_\tau$, as well as
    \begin{equation*}
        \mathcal{O}\left(\frac{1}{\sqrt{p_0}-\sqrt{p_1}}\log\frac{1}{\varepsilon}\right)
    \end{equation*}
    queries to $U_\psi$ and 
    \begin{equation*}
        n + \mathcal{O}\left(\log\frac{\|H\|}{\Delta} + \log \log \frac{1}{(\sqrt{p_0}-\sqrt{p_1})\varepsilon}\right)
    \end{equation*}
    qubits.
    \label{thmFilter}
\end{restatable}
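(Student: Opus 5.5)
The proof composes three ingredients: the block-encoding of $\widetilde\rho_{\rm sqrt}$ from \cref{propRhoSqrt}, an even polynomial approximation of the threshold function $F_\mu$ in \cref{eqFilter}, and the QSVT error-propagation bound for perturbed block-encodings.

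\textbf{Step 1: the polynomial.} By \cref{assumptThresh}, the singular values of $\rho_{\rm sqrt}$ satisfy $\sqrt{p_0}-\mu\in\Omega(\Lambda)$ and $\mu-\sqrt{p_k}\in\Omega(\Lambda)$ for all $k\ge1$. I would invoke the standard construction of an even polynomial approximating a shifted sign/rectangle function, obtained as a symmetrised pair of polynomial approximations to $\mathrm{erf}$ (Lemma~25/Corollary~16-type results of Ref.~\citep{gilyen2019quantum} and Ref.~\citep{lin2020nearoptimal}). This gives an even $P$ with $|P|\le1$ on $[-1,1]$ and $|P(x)-F_\mu(x)|\le\varepsilon/2$ outside $(\mu-c\Lambda,\mu+c\Lambda)$, of degree $d\in\mathcal O(\Lambda^{-1}\log\varepsilon^{-1})$.

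\textbf{Step 2: apply QSVT.} Applying QSVT to $\be[\widetilde\rho_{\rm sqrt}]$ uses $d/2$ copies each of $\be[\widetilde\rho_{\rm sqrt}]$ and its inverse. Each copy makes one query to $U_\psi$ or $U_\psi^\dagger$, which gives the stated $\mathcal O(\Lambda^{-1}\log\varepsilon^{-1})$ $U_\psi$ queries. Because $P$ is even, the exact $\rho_{\rm sqrt}$ yields
\[
P^{(SV)}(\rho_{\rm sqrt})=\sum_{k\in\mathcal S}P(\sqrt{p_k})\ketbra{\psi_k}{\psi_k}+P(0)\Big(\mathbb I-\sum_{k\in\mathcal S}\ketbra{\psi_k}{\psi_k}\Big),
\]
acting on the main register. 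This is $\varepsilon/2$-close to $\mathcal R$: singular value $0$ lies below threshold and is mapped to $\approx1$, consistent with $\mathcal R$ acting as identity off $\ket{\psi_0}$. The projector $\mathcal P=(\mathbb I-\mathcal R)/2$ then follows by a one-ancilla LCU with the identity.

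\textbf{Step 3: error budget (main obstacle).} The implemented operator is $\widetilde\rho_{\rm sqrt}$, not $\rho_{\rm sqrt}$. I would use the robustness lemma for QSVT (Lemma~22 of Ref.~\citep{gilyen2019quantum}): $\|P^{(SV)}(A)-P^{(SV)}(\widetilde A)\|\le 4d\sqrt{\|A-\widetilde A\|}$. A linear-in-$d$ bound may also be available, since $\widetilde\rho_{\rm sqrt}$ is close in the sense of having nearly orthonormal $\ket{\phi_k}$ via the closest semi-unitary argument.
\begin{itemize}
\item Under the linear bound, I choose $\varepsilon_{\rm sqrt}=\varepsilon/(4d)$.
\item Under the square-root bound, I need $\varepsilon_{\rm sqrt}\in\mathcal O(\varepsilon^2/d^2)$.
\end{itemize}
In both cases $\log\varepsilon_{\rm sqrt}^{-1}\in\mathcal O(\log\varepsilon^{-1}+\log\Lambda^{-1})$, and the $\log\Lambda^{-1}$ term is absorbed into $\widetilde{\mathcal O}$. \cref{propRhoSqrt} then gives $\widetilde{\mathcal O}(\|H\|\Delta^{-1}\log\varepsilon^{-1})$ queries to $U_\tau$ per block-encoding. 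Multiplying by $d$ yields $\widetilde{\mathcal O}(\|H\|\Lambda^{-1}\Delta^{-1}\log^2\varepsilon^{-1})$.

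\textbf{Step 4: qubit count.}
\begin{itemize}
\item Register $\mathcal B$ uses $n$ qubits.
\item Register $\mathcal A$ uses $\log_2 M=\mathcal O(\log(\|H\|/\Delta)+\log\log\varepsilon_{\rm sqrt}^{-1})=\mathcal O(\log(\|H\|/\Delta)+\log\log(\Lambda\varepsilon)^{-1})$ qubits.
\item The Gaussian state preparation needs $\mathcal O(1)$ ancillas.
\item QSVT needs one phase qubit.
\end{itemize}
This matches the stated qubit count.
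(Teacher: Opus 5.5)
Your proposal is correct. Its skeleton matches the paper's: an even threshold polynomial of degree $\mathcal{O}(\Lambda^{-1}\log\varepsilon^{-1})$, QSVT on the block-encoding of $\widetilde\rho_{\rm sqrt}$, and then the required $\varepsilon_{\rm sqrt}$ fed into \cref{propRhoSqrt}. The genuine difference is how the error in $\widetilde\rho_{\rm sqrt}$ is propagated to the output.

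\textbf{Your route.} You use the black-box QSVT robustness bound $\|P^{(SV)}(A)-P^{(SV)}(\widetilde A)\|\le 4d\sqrt{\|A-\widetilde A\|}$, which forces $\varepsilon_{\rm sqrt}\in\mathcal{O}(\varepsilon^2/d^2)$.

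\textbf{The paper's route.} The paper argues structurally, in two steps.
\begin{enumerate}
\item By Weyl's inequality, once $\varepsilon_{\rm sqrt}\le(\sqrt{p_0}-\sqrt{p_1})/8$, every singular value of $\widetilde\rho_{\rm sqrt}$ still lies in a flat region of the polynomial. QSVT therefore outputs $\mathbb{I}-2\ketbra{\widetilde\psi_0}{\widetilde\psi_0}$ up to the pure approximation error $\xi$, where $\ket{\widetilde\psi_0}$ is the perturbed top right singular vector.
\item Wedin's $\sin\Theta$ theorem then gives $\|\ketbra{\psi_0}{\psi_0}-\ketbra{\widetilde\psi_0}{\widetilde\psi_0}\|\le 8\varepsilon_{\rm sqrt}/(7(\sqrt{p_0}-\sqrt{p_1}))$.
\end{enumerate}
So the paper needs only $\varepsilon_{\rm sqrt}\in\mathcal{O}((\sqrt{p_0}-\sqrt{p_1})\varepsilon)$, with no dependence on the degree $d$.

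\textbf{What each buys.} The cost in \cref{propRhoSqrt} is logarithmic in $\varepsilon_{\rm sqrt}^{-1}$, so both routes give the same asymptotics. Both carry a $\log\Lambda^{-1}$ term that is absorbed into $\widetilde{\mathcal{O}}$. The paper's route saves roughly a factor of two in $\log\varepsilon_{\rm sqrt}^{-1}$, which is a constant factor in evolution time and $\mathcal{O}(1)$ clock qubits. It also makes the conceptual point that the filter's error does not grow with $d$. Your route is a black-box argument that avoids analysing the perturbed SVD altogether.

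\textbf{A small correction.} You suggest a linear-in-$d$ bound may hold because the $\ket{\widetilde\phi_k}$ are nearly orthonormal; that is not the right reason. The linear robustness variant in Ref.~\citep{gilyen2019quantum} requires the operator norms to stay bounded away from $1$, which here essentially means $p_0$ bounded away from $1$. It also carries a prefactor that blows up as those norms approach $1$. Your square-root branch already suffices, so this does not affect the proof.
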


\begin{table*}
    \centering
    \begin{tabular}{l@{\hspace{0.8em}}c@{\hspace{0.8em}}c@{\hspace{0.8em}}c}
    \toprule
     & Evolution time & Queries to $U_\psi$ & Ancilla qubits \\
    \midrule
    \makecell[l]{Eigenenergy\\ estimation\\(\cref{propEigenenergy})}
    & $\displaystyle\widetilde{\mathcal{O}}\left(\frac{\|H\|}{(p_0-p_1)\Delta}+\frac1\varepsilon\right)$
    & $\displaystyle\widetilde{\mathcal{O}}\left(\frac{1}{p_0-p_1}\log\frac{1}{\varepsilon}\right)$
    & $\displaystyle\mathcal{O}\left(\log\frac{\|H\|}{\Delta}+\log\log\frac{1}{p_0-p_1}\right)$ \\
    \midrule
    \makecell[l]{Eigenproperty\\ estimation\\(\cref{propEigenprop})}
    & $\displaystyle\widetilde{\mathcal{O}}\left(\frac{\|H\|}{(\sqrt{p_0}-\sqrt{p_1})\Delta}\left(\frac{1}{\sqrt{p_0}}+\frac{\alpha}{\varepsilon}\right)\right)$
    & $\displaystyle\widetilde{\mathcal{O}}\left(\frac{1}{\sqrt{p_0}-\sqrt{p_1}}\left(\frac{1}{\sqrt{p_0}}+\frac{\alpha}{\varepsilon}\right)\right)$
    & $\displaystyle\mathcal{O}\left(\log \frac{\|H\|}{\Delta} + \log\log\frac{\alpha}{(p_0-p_1)\varepsilon}\right)$\\
    \midrule
    \makecell[l]{Fidelity\\ estimation\\(\cref{propFidelity})}
    & $\displaystyle\widetilde{\mathcal{O}}\left(\frac{\|H\|}{(\sqrt{p_0}-\sqrt{p_1})\Delta\varepsilon}\right)$
    & $\displaystyle\widetilde{\mathcal{O}}\left(\frac{1}{(\sqrt{p_0}-\sqrt{p_1})\varepsilon}\right)$
    & $\displaystyle\mathcal{O}\left(\log\frac{\|H\|}{\Delta} + \log\log\frac{1}{(\sqrt{p_0}-\sqrt{p_1})\varepsilon}\right)$\\
    \bottomrule
    \end{tabular}
    \caption{\emph{Resource costs of further applications of DEFEAT.} For all three tasks, we drop the dependency on the success probability $1-\delta$, which incurs an additional $\mathcal{O}(\log(\delta^{-1}))$ cost for the evolution time and query complexity of $U_\psi$. For eigenproperty estimation, we assume the observable $O$ is accessible as a block-encoding via the unitary $\be[O/\alpha]$,
    	where $\alpha$ is a scaling factor.}
    \label{tabApp}
\end{table*}

Our proof in \cref{appDefeat} is analogous to proofs for (robust) ground state filtering~\citep{lin2020nearoptimal,huang2026fullqubit}, in that we replace the ground state filter from Hamiltonian thresholding with the dominant eigenstate filter from eigenprobability thresholding. 
We also remark that using the eigenprobability operator $\rho$  in DEFEAT, rather than
our square-root operator $\rho_{\rm sqrt}$, would result in a higher overall runtime of the filtering task  that
would scale as
\begin{equation}
    \widetilde{\mathcal{O}}\left(\frac{\|H\|}{(p_0-p_1)\Delta}\log^2\frac{1}{\varepsilon}\right).
\end{equation}
Further, in the case where $\mu$ is not provided as per \hyperref[assumptThresh]{Assumption~\ref*{assumptThresh}}, we can, still obtain the block-encoded projector by searching for trial $\widetilde{\mu}$-s per binary amplitude estimation~\citep{lin2020nearoptimal}, albeit with a increased $(\sqrt{p_0})^{-1}$ runtime dependency. We detail this in \cref{appGetMu}.

\subsection{End-to-end dominant-eigenstate preparation}

Our approach for preparing the dominant eigenstate uses the reflector $\mathcal{R}$ in \cref{eq:reflector}
to reflect the phase associated with the eigenstate $\ket{\psi_0}$ in the initial state $\ket{\psi}$,
which is the basic component that enables amplitude amplification. Preparing the eigenstate
via amplitude amplification therefore requires an additional $\mathcal{O}((\sqrt{p_0})^{-1})$ repetitions of the DEFEAT circuit
as we summarise in the following statement.
\begin{restatable}[Dominant eigenstate preparation]{theorem}{eigenstate}
    Using the dominant eigenstate reflector constructed by DEFEAT (\cref{thmFilter}), we can prepare the dominant eigenstate $\ket{\psi_0}$ to square-root fidelity at least $1 - \varepsilon$ with 
    \begin{equation*}
        \widetilde{\mathcal{O}}\left(\frac{\|H\|}{(p_0-p_1)\Delta}\log^2\frac{1}{\varepsilon}\right),
    \end{equation*}
   queries to $U_\tau$, as well as
    \begin{equation*}
        \widetilde{\mathcal{O}}\left(\frac{1}{p_0-p_1}\log\frac{1}{\varepsilon}\right)
    \end{equation*}
    queries to $U_\psi$ and 
    \begin{equation*}
        n + \mathcal{O}\left(\log\frac{\|H\|}{\Delta} + \log\log\frac{1}{(p_0-p_1)\varepsilon}\right)
    \end{equation*}
    qubits.
    \label{thmEigenstate}
\end{restatable}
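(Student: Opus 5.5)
The plan is to run fixed-point (or standard) amplitude amplification on the initial state $\ket{\psi}=U_\psi\ket{0}$, using the reflector $\widetilde{\mathcal R}$ from \cref{thmFilter} as the marking oracle. The other reflection is $\mathcal{R}_\psi = U_\psi(\mathbb{I}-2\ketbra{0}{0})U_\psi^\dagger$. Since $\ket{\psi}=c_0\ket{\psi_0}+\sum_{k\ge1}c_k\ket{\psi_k}$, the overlap with the target is $|\braket{\psi_0|\psi}|=\sqrt{p_0}$. The pair $\{\mathcal R,\mathcal R_\psi\}$ therefore generates a two-dimensional Grover rotation in the span of $\ket{\psi_0}$ and the normalised orthogonal remainder.

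With exact reflections, $\mathcal{O}(1/\sqrt{p_0})$ iterations rotate $\ket{\psi}$ onto $\ket{\psi_0}$. To avoid needing $p_0$ exactly, I would use a QSVT/fixed-point amplitude amplification polynomial of degree $\mathcal{O}(p_0^{-1/2}\log(1/\varepsilon))$, or standard exponential search. This yields square-root fidelity $1-\varepsilon/2$ when the reflections are ideal.

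Next comes error propagation. $\widetilde{\mathcal R}$ is only an $\varepsilon'$-approximate block-encoding, so I must account for the ancilla register $\mathcal A$ and the phase qubit. The implemented unitary acts on $\ket{0}_{\mathcal A}\ket{\cdot}_{\mathcal B}$, and its projected block is within $\varepsilon'$ of $\mathcal R$. I would argue that replacing each exact reflection by the full block-encoding unitary perturbs the final state by at most $\varepsilon'$ per use. This holds because leakage out of the $\ket{0}_{\mathcal A}$ block has norm $\mathcal{O}(\sqrt{\varepsilon'})$ at worst; I would rather invoke the robust-oblivious form via a uniform-singular-value argument, or simply postselect/uncompute, following the ground-state analogue in \cite{lin2020nearoptimal}.

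Over $\mathcal{O}(p_0^{-1/2}\log(1/\varepsilon))$ uses, choosing $\varepsilon'\in\Theta(\varepsilon\sqrt{p_0}/\log(1/\varepsilon))$ keeps the accumulated error below $\varepsilon/2$. This only changes logarithmic factors in \cref{thmFilter}. I expect this leakage and robustness bookkeeping to be the main technical obstacle. The clean route is to note that the QSVT threshold polynomial is even and real, so the ancilla-$\ket{0}$ block is exactly $F(\widetilde\rho_{\rm sqrt}^\dagger\widetilde\rho_{\rm sqrt})$. The QSVT reflection can then be made a genuine unitary on the system by the standard "real part" or two-copy trick, reducing the problem to a $2\varepsilon'$-close unitary reflection.

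Finally, I would multiply the costs. Each $\widetilde{\mathcal R}$ costs $\widetilde{\mathcal O}\bigl(\|H\|\log^2(1/\varepsilon)/((\sqrt{p_0}-\sqrt{p_1})\Delta)\bigr)$ queries to $U_\tau$ and $\mathcal O\bigl(\log(1/\varepsilon)/(\sqrt{p_0}-\sqrt{p_1})\bigr)$ queries to $U_\psi$, and is applied $\widetilde{\mathcal O}(p_0^{-1/2})$ times. The key identity is
\[
\sqrt{p_0}\,(\sqrt{p_0}-\sqrt{p_1}) \ \ge\ \tfrac12(\sqrt{p_0}+\sqrt{p_1})(\sqrt{p_0}-\sqrt{p_1})=\tfrac12(p_0-p_1),
\]
so the product $p_0^{-1/2}(\sqrt{p_0}-\sqrt{p_1})^{-1}\le 2/(p_0-p_1)$. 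This gives the stated $U_\tau$ and $U_\psi$ counts; the extra $\mathcal{O}(p_0^{-1/2})$ uses of $U_\psi$ in $\mathcal R_\psi$ are dominated. One further $\log(1/\varepsilon)$ from the amplification degree is absorbed in the $\widetilde{\mathcal O}$ together with the $\log(1/\sqrt{p_0})$ factors.

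The qubit count is that of \cref{thmFilter} with $\varepsilon$ replaced by $\varepsilon'$, plus $\mathcal{O}(1)$ qubits for amplification. Since $\log\log\bigl(1/((\sqrt{p_0}-\sqrt{p_1})\varepsilon')\bigr)=\mathcal{O}\bigl(\log\log(1/((p_0-p_1)\varepsilon))\bigr)$, this gives the claimed $n+\mathcal{O}\bigl(\log(\|H\|/\Delta)+\log\log(1/((p_0-p_1)\varepsilon))\bigr)$.
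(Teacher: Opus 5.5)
Your proposal is correct and follows the paper's proof: $\mathcal{O}(1/\sqrt{p_0})$ rounds of amplitude amplification on $U_\psi\ket{0}$ driven by the DEFEAT reflector and $U_\psi(\mathbb{I}-2\ketbra{0}{0})U_\psi^\dagger$, with reflector accuracy of order $\sqrt{p_0}\,\varepsilon$ and the costs of \cref{thmFilter} multiplied through; your inequality $\sqrt{p_0}(\sqrt{p_0}-\sqrt{p_1})\ge\tfrac12(p_0-p_1)$ is the step, left implicit in the paper, that produces the $(p_0-p_1)^{-1}$ scaling. Your robustness paragraph is more careful than the paper's (which ignores leakage) but slightly inconsistent, since an $\varepsilon'$-accurate block-encoding of a unitary can leak $\mathcal{O}(\sqrt{\varepsilon'})$ per call and the ``real part'' trick does not remove this, so either take $\varepsilon'\in\mathcal{O}(p_0\varepsilon^2)$ or amplify the ancilla-flagged branch of the block-encoded projector as in \citet{lin2020nearoptimal}, where errors do not accumulate and $\varepsilon'\in\mathcal{O}(\sqrt{p_0}\,\varepsilon)$ suffices---either way only logarithmic factors change.
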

We defer the proof to \cref{appDesp}.
Let us again highlight that while DEFEAT depends on $(\sqrt{p_0}-\sqrt{p_1})^{-1}$, the presently considered
downstream task of eigenstate preparation picks up a factor of $(\sqrt{p_0})^{-1}$ in runtime due to amplitude amplification
which is the reason why the above complexity scales with $(p_0-p_1)^{-1}$.
However, the multiplicative factor $(\sqrt{p_0})^{-1}$ is not fundamental to eigenstate preparation, and depends on the input state:
In principle, one can apply amplitude amplification to a different initial state $\ket{\phi}$, which has higher overlap than $\ket{\psi}$, but a smaller amplitude gap such that it is not suitable for DEFEAT, but can still reduce the number of amplification steps. Further, in \cref{appDesp}, we derive an alternative construction of the algorithm such that the additional cost is $(\sqrt{\Delta})^{-1}$ instead of $(\sqrt{p_0})^{-1}$ while still using $\ket{\psi}$ as the input state.

In comparison to prior works, the substantial advantage of our approach is that it does not require \emph{a priori}
knowledge of the eigenenergy. For example, phase estimation~\citep{abrams1999quantum} can post-select on eigenstates
of a pre-specified energy, while Hamiltonian filtering~\citep{lin2020optimal} uses prior knowledge of eigenenergies
as the centre of a filtering polynomial that can isolate the relevant eigenstate.
In \cref{appQPE,appQPEMAE}, we show that our approach achieves a runtime complexity comparable to that of Gaussian phase estimation combined with multidimensional amplitude estimation, while drastically reducing the required qubit count.
Forgoing the use of multidimensional amplitude estimation
would then, however, quadratically increase the runtime in terms of the probability gap dependency.
In \cref{appHamThresh}, we also discuss an implementation of eigenstate filtering that does not require the target eigenvalue, but requires the initial overlap to be sufficiently high.

\subsection{Further applications}
We detail some further applications that build on DEFEAT and summarise their resource requirements
 in \cref{tabApp}, while deferring detailed theorem statements and proofs to \cref{appApp}.

\paragraph*{Eigenenergy estimation.} The dominant eigenenergy can be obtained by executing statistical phase estimation (QMEGS~\citep{ding2024quantum} in particular) and replacing the input state $\ket{\psi}$ with the dominant eigenstate prepared in \cref{thmEigenstate} as input. Compared to the standard QMEGS, which requires the target eigenstates to be sufficiently dominant (that is, for a single target eigenstate, to have $p_0 > 0.5$), our method only requires the eigenstate to be dominant ($p_0 > p_1$).

\paragraph*{Eigenproperty estimation.} Expected values as $\langle \psi_0 |O|\psi_0 \rangle$ can be estimated by using the Hadamard test~\citep{cleve1998quantum} and amplitude estimation~\citep{brassard2002quantum}. Apart from the initial input state, which is prepared with \cref{thmEigenstate}, subsequent reflections on the dominant eigenstate can be produced directly via the polynomial function shown in \cref{thmFilter}. Compared to prior work by \citet{bako2026exponential}, our approach provides a quadratic improvement in the dependency on $\varepsilon$, and relaxes their constraint that the initial state has to have high overlap with the dominant eigenstate.

\paragraph*{Fidelity estimation.} The fidelity between the dominant eigenstate $\ket{\psi_0}$ and another pure state $\ket{\varphi}$ can be estimated using amplitude estimation~\citep{brassard2002quantum}. Because we only require the reflector $\mathcal{R}$ in \cref{thmFilter}, in this task we avoid the extra $(\sqrt{p_0})^{-1}$ dependency present in eigenstate preparation and the other applications discussed above.

\section{Query lower bounds and optimality}
\label{secOpt}

\subsection{Optimality of filtering}
We derive quantum query lower bounds for dominant eigenstate filtering. 
Our approach 
is based on a reduction to the classical probability distribution distinction problem of \citet{belovs2019quantum},
where we can consider a scenario in which two probability distributions are encoded as quantum states whose dominant amplitudes occur at different indices.
Then an algorithm for dominant eigenstate preparation can identify these indices by preparing the corresponding dominant eigenstates. Consequently, such an algorithm could be used to distinguish the two given probability distributions. This reduction shows that dominant eigenstate preparation inherits the query complexity lower bound of the probability distribution distinction problem.
\begin{restatable}[Query lower bound for eigenstate preparation]{theorem}{filterOpt}
    Given the quantum state $\ket{\psi}$ in the form of \cref{eq:initial_state} with probabilities ordered as in \cref{eq:probs} prepared by the unitary $U_\psi$, the query complexity of
    preparing the dominant eigenstate $\ket{\psi_0}$ to square-root fidelity at least $\frac{\sqrt{3}}{2}$ without knowledge of the dominant eigenvalue satisfies the following:
    
    When $p_0 \in \Omega(1)$, the number of queries to the unitary $U_{\psi}$ is lower bounded as
    \begin{equation*}
    \Omega\left(\frac{1}{\sqrt{p_0}-\sqrt{p_1}}\right).
    \end{equation*}
    \label{thmFilterOpt}
\end{restatable}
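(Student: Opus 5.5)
We reduce from the quantum probability-distribution distinction problem of \citet{belovs2019quantum}. In that problem, a distribution $q=(q_0,\dots,q_{N-1})$ is given through a unitary $U_q:\ket{0}\mapsto\sum_k\sqrt{q_k}\ket{k}\ket{\xi_k}$, with arbitrary orthonormal-ish junk states $\ket{\xi_k}$. The task is to decide whether $q=P$ or $q=Q$. Belovs showed that any algorithm doing this with bounded error needs $\Omega(1/d_H(P,Q))$ queries, where $d_H(P,Q)^2=\sum_k(\sqrt{P_k}-\sqrt{Q_k})^2$ is the (unnormalised) squared Hellinger distance.

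\textbf{Choice of hard instance.} Fix the Hamiltonian $H=\sum_k E_k\ketbra{k}{k}$, diagonal in the computational basis with distinct energies. Its eigenstates are $\ket{\psi_k}=\ket{k}$ (tensored with the junk register, or we absorb the junk by letting $H$ act trivially on it). The energies are unknown to the algorithm only in the sense that it is never told which $E_k$ is dominant. Let $P$ have $p_0$ at index $0$ and $p_1$ at index $1$, and let $Q$ be the same distribution with these two entries swapped. All remaining entries coincide. Then
\[
d_H(P,Q)^2=2(\sqrt{p_0}-\sqrt{p_1})^2 ,
\]
so $d_H(P,Q)=\sqrt2(\sqrt{p_0}-\sqrt{p_1})$. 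The state $U_q\ket{0}$ has the form of \cref{eq:initial_state}, with probabilities ordered as in \cref{eq:probs}. Its dominant eigenstate is $\ket{0}$ under $P$ and $\ket{1}$ under $Q$.

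\textbf{Reduction.} Suppose an algorithm $\mathcal{A}$ uses $T$ queries to $U_\psi$ and outputs a state $\ket{\tilde\psi}$ with $|\langle\tilde\psi|\psi_0\rangle|\ge\sqrt3/2$. Since $\ket{0}\perp\ket{1}$, the overlap with the wrong eigenstate is then at most $1/2$ in modulus. Measuring $\ket{\tilde\psi}$ in the computational basis of the eigen-register therefore returns the correct index with probability at least $3/4$, and the wrong one with probability at most $1/4$. This gives bounded-error distinction of $P$ from $Q$ using $T$ queries.

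Belovs' bound then yields $T\in\Omega(1/(\sqrt{p_0}-\sqrt{p_1}))$. Access to $U_\tau$ costs no queries to $U_\psi$, since $H$ is fixed and known independently of $q$, so it does not circumvent the bound.

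\textbf{Main obstacle.} The delicate step is matching the exact access model and hypotheses of Belovs' theorem. Three points need checking.

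\emph{The purification form.} The oracle must be of the form Belovs allows; I will put the junk into an extra register on which $H$ acts trivially, so that the eigenstates are degenerate only on a register the filtering does not care about. Alternatively, I will use the version of the bound for "pure-state" oracles $\sum_k\sqrt{q_k}\ket{k}$, which follows because the general purified model only makes the problem harder for the algorithm.

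\emph{The non-asymptotic regime.} Belovs' lower bound is stated in terms of Hellinger distance, and it is tight when the distance is small. The hypothesis $p_0\in\Omega(1)$ enters here: $d_H$ must correspond to the total-variation-type quantity in his statement only up to constants. We must also ensure the remaining mass $1-p_0-p_1$ can be distributed over indices $\ge2$ without violating the strict ordering $p_0>p_1\ge p_2$.

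\emph{Inverse queries.} The bound must count inverse and controlled queries, which Belovs' adversary argument does.
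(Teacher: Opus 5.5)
Your proposal is correct and follows the paper's proof: the same two distributions differing only by a swap of $p_0$ and $p_1$, Belovs' $\Theta(1/\mathcal{D}_H)$ distinguishing bound with Hellinger distance proportional to $\sqrt{p_0}-\sqrt{p_1}$, and a computational-basis measurement of the prepared state that succeeds with probability at least $3/4$. You also make explicit a point the paper leaves implicit: $U_\tau$ costs nothing, because $H$ is fixed across both instances.

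Two remarks in your obstacle list are off, though neither breaks the argument. First, $p_0\in\Omega(1)$ plays no role in the lower bound, and no Hellinger-to-total-variation conversion is needed; the paper uses that hypothesis only to say amplitude amplification takes $\mathcal{O}(1)$ rounds, so the bound applies to the filtering step. Second, a purified-model lower bound does not imply the pure-state one; the implication runs the other way. Your junk-register instance with $H\otimes\mathbb{I}$ avoids needing it.
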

We defer the full proof to \cref{appFilterOpt}. Note that this result is agnostic to the actual algorithm, but since our assumption $p_0 \in \Omega(1)$ implies that the number of amplification steps in the algorithm is constant, the lower bound applies directly to the filtering step of the algorithm. Comparing this lower bound with the runtime of DEFEAT given in \cref{thmFilter}, we can infer that DEFEAT achieves the optimal query complexity for filtering up to logarithmic factors. Furthermore, we detail in \cref{appFilterOpt} that a slight modification of the proof reduction shows that within the block-encoding framework itself, the filtering step is indeed near-optimal.

\subsection{Towards eigenstate preparation lower bounds}

To further elucidate the overhead of $(\sqrt{p_0})^{-1}$ associated with amplitude amplification, we consider the dominant eigenstate preparation task under a weaker model,
whereby we assume direct access to the dephased $\rho$ rather than access to its components as $\ket{\psi}$ and $U_\tau$ or $\be[\tfrac{H}{\alpha}]$. By discarding this additional structure, we can relate dominant eigenstate preparation to quantum purity amplification.

To make this connection clear, we first consider the task of quantum purity amplification~\citep{cirac1999optimal,li2024optimal}. 
In the sample-access model, given multiple copies of an unknown mixed state $\rho$, the goal is to produce a state with high fidelity to its principal eigenvector.
That is, given 
\begin{equation}
    \rho = \sum_{k \in \mathcal S} p_k \ketbra{\psi_k}{\psi_k},  
\end{equation}
we want to produce $\ket{\psi_0}$. In the following, we assume that $\rho$ refers specifically to the eigenprobability operator, rather than to a generic density matrix. \citet[Theorem 32]{grier2025streaming} has established tight sample-complexity bounds for this task, showing that any algorithm achieving fidelity $1-\varepsilon$ for the preparation of $\ket{\psi_0}$ requires
\begin{equation}
\Omega\left(\frac{1-p_0}{(p_0-p_1)^2\varepsilon}\right)
\end{equation}
copies of $\rho$ in the worst case.

To relate these bounds to our end-to-end quantum algorithm for eigenstate preparation, we consider a query-access variant of the problem, where the algorithm is given forward and inverse queries to any black-box state-preparation unitary $V$ that produces the purification of $\rho$. That is, we have
\begin{equation}
    V\ket{0}_A\ket{0}_B= \sum_{k\in \mathcal{S}} \sqrt{p_k} \ket{\psi_k}_{\mathcal{A}}\ket{\phi_k}_{\mathcal{B}},
\end{equation}
where we can recover
\begin{equation}
    \rho = \tr_{\mathcal{B}} \left(V(\ketbra{0}{0}_{\mathcal{A}}\otimes \ketbra{0}{0}_{\mathcal{B}})V^\dagger\right),
\end{equation}
by tracing out the second register. This is known in the literature as \emph{purified quantum query access}~\citep{gilyen2020distributional}. We show the following query complexity result for purity amplification based on recent work by \citet{tang2025conjugate}, which generalises sample-to-query lifting~\citep{wang2025quantum,chen2025list} techniques to state preparation oracles.

\begin{restatable}[Query lower bound for purity amplification]{theorem}{purityOpt}
    An algorithm that can prepare the principal eigenstate $\ket{\rho_0}$ to square-root fidelity at least $\frac{3\sqrt{7}}{8}$ given any state preparation unitary $V$ that provides purified quantum query access to $\rho$ has query complexity 
    \begin{equation*}
        \Omega\left(\frac{\sqrt{1-p_0}}{p_0-p_1}\right).
    \end{equation*}
    \label{thmPurity}
\end{restatable}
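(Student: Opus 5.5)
We lift the sample-complexity lower bound of \citet[Theorem 32]{grier2025streaming} to the purified-query model, using the conjugate-query-to-sample framework of \citet{tang2025conjugate}. The underlying principle is that a $T$-query algorithm with forward and inverse access to a purification oracle $V$ can be simulated, for a suitable hard family, using about $T^2$ copies of $\rho$ (up to constants, or factors depending on the family). Equivalently, query lower bounds are roughly square roots of sample lower bounds. A sample bound of $\Omega\big((1-p_0)/((p_0-p_1)^2\varepsilon)\big)$ at constant $\varepsilon$ should therefore give $\Omega\big(\sqrt{1-p_0}/(p_0-p_1)\big)$ queries, which is the claimed result.

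The argument proceeds in four steps.

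\textbf{Step 1 (hard instance).} We fix the hard family from Grier et al.'s proof. It consists of states $\rho_U = U\rho_{\mathrm{diag}}U^\dagger$ for Haar-random $U$, or a packing of them, with spectrum $p_0>p_1\ge\dots$. For this family, any protocol outputting $\ket{\rho_0}$ at constant fidelity needs $\Omega((1-p_0)/(p_0-p_1)^2)$ copies. We check that the fidelity threshold $3\sqrt7/8$ (fidelity $63/64$) lies in the constant-$\varepsilon$ regime of that theorem.

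\textbf{Step 2 (invariance of the oracle).} For unitarily invariant families, the purification oracle can be taken to be $V_U = (U\otimes \bar U)$ applied to a fixed purification, or more generally $V_U$ composed with an arbitrary unitary on the purifying register. The statement quantifies over \emph{any} $V$ giving purified access, so the algorithm must succeed for every such choice. This is exactly the setting in which Tang et al.'s result applies: the output of a $T$-query algorithm, averaged over the unknown unitary freedom on the purifying register, can be reproduced from $O(T^2)$ samples of $\rho_U$, or at least the success probability is preserved up to constant loss.

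\textbf{Step 3 (lifting).} We invoke the query-to-sample conversion. A query algorithm achieving square-root fidelity $\ge 3\sqrt7/8$ yields a sample algorithm whose fidelity is degraded to some constant still above Grier et al.'s threshold. This is why the constant is $3\sqrt7/8$ rather than $\sqrt3/2$: it accounts for the loss in the conversion. The conversion then gives $T^2 \gtrsim (1-p_0)/(p_0-p_1)^2$.

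\textbf{Step 4.} We take square roots to obtain the bound.

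\textbf{Main obstacle.} The main difficulty is Step 3: verifying that the hypotheses of Tang et al.'s lifting theorem hold for this output-state task and this hard family. Their theorem is stated for state-preparation oracles with the purifying register adversarially chosen, and it must also track the fidelity loss. Our first attempt would be to treat the lifting as a black box that maps a constant-fidelity query algorithm to a constant-fidelity sample algorithm with quadratic overhead. We would then check that the constants are compatible, namely that $63/64$ fidelity is preserved to at least the fidelity required by Grier et al.
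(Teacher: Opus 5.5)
Your proposal is correct and matches the paper's proof. The paper likewise assumes a $Q$-query algorithm that succeeds for every purification $V$, so its output averaged over the distribution in Tang et al.'s Theorem 1.5 stays within trace distance $1/8$ of $\ket{\psi_0}$ (fidelity $63/64$, Fuchs--van de Graaf, and convexity). It simulates that averaged output from $O(Q^2)$ copies to within another $1/8$, which leaves fidelity at least $9/16$, and this contradicts Grier et al.'s Theorem 32 whenever $Q = o\big(\sqrt{1-p_0}/(p_0-p_1)\big)$. Your explicit hard family and the $U\otimes\bar U$ oracle in Steps 1--2 are unnecessary, since the paper uses both cited theorems as black boxes.
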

We show this by a simple proof by contradiction of the sample complexity simulation cost from \citet[Theorem 1.5]{tang2025conjugate}, which is deferred to \cref{appPurity}.

These results suggest that the additional overhead required for eigenstate preparation when compared to eigenstate filtering is not merely an artefact of the particular amplification procedure used in our algorithm. However, this particular lower bound does not apply to our eigenstate preparation algorithm, as our access model is more powerful than the purified quantum query access it assumes.

Indeed, our access model provides both forward and inverse access to the state preparation unitary $U_\psi$ and controlled time evolution of $H$. We can combine the two to construct a phase estimation circuit as follows, 
\begin{equation}
    {\rm QPE}_H (\textsc{Prep}\ket{0}\otimes U_\psi\ket{0} \approx \sum_{k\in \mathcal{S}} \sqrt{p_k} \ket{\widetilde E_k}\ket{\psi_k},
\end{equation}
which provides an approximate purification of the density matrix $\rho$ by tracing out the energy register.

The lower bound in \cref{thmPurity} applies to eigenstate preparation algorithms with access to any black-box purification. On the other hand, the purification above has known structure and is constructed from the separate oracles $U_\psi$ and $U_\tau$ (or $\be[\tfrac{H}{\alpha}]$), which remain available to the algorithm outside the purification circuit. Our access model is therefore strictly more structured than purified quantum query access, and the lower bound in \cref{thmPurity} cannot be transferred directly to DEFEAT.

Regardless, the lower bound for purified quantum query access provides evidence that dominant eigenstate preparation may require an additional cost beyond dominant eigenstate filtering. Establishing a matching end-to-end lower bound in the full access model of this work, including its dependence on $\sqrt{p_0}$, $\sqrt{p_0}-\sqrt{p_1}$, and the support spectral gap $\Delta$, remains an open problem.

\section{Numerics}
\label{secNum}
\begin{figure*}
    \includegraphics[width=\textwidth]{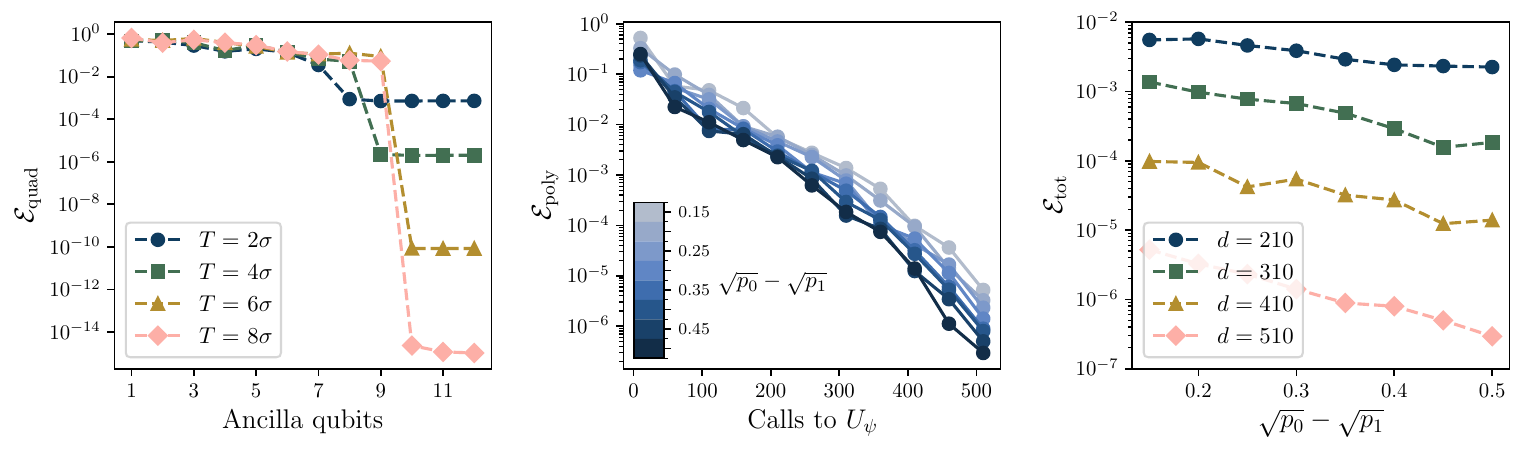}
    \caption{\emph{Simulation results.} (left) The quadrature error associated with the twirling approximation of the square-root eigenprobability operator $\mathcal{E}_{\text{quad}} = \| \rho - \widetilde{\rho}_{\rm{sqrt}}^{\dagger}\widetilde{\rho}_{\rm{sqrt}}\|$, where $\widetilde{\rho}_{\rm{sqrt}}$ is the approximation of the square-root eigenprobability operator with a $\sigma=150$ bandwidth, and varying number of ancillary qubits and cutoff $T$. (middle) The error scaling associated with the polynomial approximation of the threshold function applied to the ideal square-root eigenprobability operator $\mathcal{E}_{\text{poly}} = \| F_{\mu}(\rho_{\rm{sqrt}}) - P_{\mu}^d(\rho_{\rm{sqrt}})\|$ for different amplitude gaps obtained for a fixed $\sqrt{p_1}=0.3$ and varying dominant overlaps $\sqrt{p_0}$. Here $d$ refers to the number of calls to the state preparation $U_{\psi}$. (right) End-to-end error scaling demonstration at different $d$ calls to $U_{\psi}$, where the total error is a combination of both quadrature twirling, and threshold approximation as $\mathcal{E}_{\text{tot}} = \| F_{\mu}(\rho_{\rm{sqrt}}) - P_{\mu}^d(\widetilde{\rho}_{\rm{sqrt}})\|$. Note that here $P_{\mu}^d$ is applied to $\widetilde{\rho}_{\rm{sqrt}}$ rather than to the exact square-root eigenprobability operator.
    }
    \label{figDemos}
\end{figure*}

In this section, we numerically verify our filtering algorithm in a practical benchmarking example and illustrate the scaling of its resource requirements.
 For these experiments, we consider a $5$-qubit 1D random-field Heisenberg model with the Hamiltonian as
\begin{equation}
 H = J \sum_{j=0}^{4} \Vec{S}_j \cdot \Vec{S}_{j+1} + \sum_{j=0}^{4} h_j Z_j,
\end{equation}
where $\Vec{S}_j = \left[ X_j, Y_j, Z_j \right]$ is a Pauli vector on qubit $j$, $J=0.1$ is a coupling constant, $h_j \in [-1, 1]$ is sampled uniformly at random, and we use periodic boundary conditions. We then choose an initial state whose
dominant overlap with the ground state $\sqrt{p_0}$ we vary in the experiments, fix the 
 second most dominant overlap at $\sqrt{p_1} = 0.3$, while keeping the next $10$ energetically lowest-lying eigenstates at an equal, but non-zero overlap.
We then evaluate DEFEAT's performance while varying hyperparameters, such as the Gaussian bandwidth $\sigma$, cutoff $T$, and
the number of quadrature points $M$ (through the number of ancilla qubits).

We first demonstrate how our twirled operators approach the true eigenprobability operator as we increase the temporal cutoff $T$ and the number of ancillary qubits used for the quadrature with a fixed $\sqrt{p_0}=0.6$. The Gaussian bandwidth was chosen as $\sigma=150$ to ensure smooth convergence.
When the number of ancillary qubits used for discretisation is low,
the performance is dominated by the discretisation error, which increases with $T$. This is then suppressed, as increasing the number of ancillary qubits increases
the temporal resolution.

We then focus on the polynomial filtering approach in \cref{figDemos} (middle) and its convergence as
we increase the polynomial degree of the approximation, which requires an increased number of 
calls to $U_\psi$. Specifically, we used an approximation of the filter function in \cref{eqFilter} in terms of
the error function (${\rm erf}$) with $\mu = (\sqrt{p_0} + \sqrt{p_1})/2$. 
For this demonstration, we used the same bandwidth $\sigma=150$, $T = 8\sigma$, $\sqrt{p_0} \in\{ 0.40,\dots, 0.80\}$, and $12$ ancilla qubits.
We quantify the approximation error associated with the QSVT approach via
the distance $\mathcal{E}_{\text{poly}} = \| F_{\mu}(\rho_{\rm{sqrt}}) - P_{\mu}^d(\rho_{\rm{sqrt}})\|$, where $P_{\mu}^d$ is the polynomial approximation of
the ideal threshold function $F_{\mu}$ using $d$ calls to the state-preparation unitary. Crucially, here we are only concerned with the polynomial approximation error; therefore, we apply thresholding to the exact square-root eigenprobability operator.
\cref{figDemos} (middle) clearly demonstrates that the approximation error decreases exponentially with an increasing number of calls to the state preparation unitary. 

Finally, as an end-to-end demonstration of DEFEAT, we combine the two experiments before, and showcase the end-to-end approximation error as $\mathcal{E}_{\text{tot}} = \| F_{\mu}(\rho_{\rm{sqrt}}) - P_{\mu}^d(\widetilde{\rho}_{\rm{sqrt}})\|$. Here we use the same hyperparameters as in the previous demonstration, varying the square-root probability gap $\sqrt{p_0} - \sqrt{p_1}$. \cref{figDemos} (right) features this approximation error for an increasing number of calls to $U_{\psi}$ with $ d \in \{210, 310, 410, 510\}$ and indeed is consistent with
exponential suppression.

\section{Discussion}
\label{secDisc}

While ground-state preparation is a well-established task with optimally efficient quantum algorithms, preparing excited or more general eigenstates with optimal complexity has been an open problem.
We develop DEFEAT, which achieves near-optimal complexity for filtering the dominant eigenstate from a given input state and provides a block encoding
of this eigenstate.

Unlike conventional approaches for ground-state preparation, we do not require prior knowledge of the target eigenvalue of the Hamiltonian. Instead, we apply a twirling superoperator to prepare a density operator whose diagonal consists of the eigenprobabilities of the input state.
Another key innovation is that we apply an SOSSA decomposition to amplify the eigenprobabilities
in this density matrix, thereby mapping it to the square-root eigenprobability operator.
DEFEAT then prepares the block encoding of the relevant eigenstate by applying a threshold function. 
As we demonstrate, this block encoding can be further applied to downstream practical applications such as eigenstate preparation,
eigenenergy estimation, eigenproperty estimation, fidelity estimation, etc.

Furthermore, we solve the open problem of bounding the complexity of dominant eigenstate filtering by proving fundamental lower bounds. With this, we demonstrate that filtering via DEFEAT is optimal up to polylogarithmic factors.
However, we do note that our approach does not distinguish degenerate eigenstates, as it prepares a projector that acts
on the degenerate eigenspace rather than on a specific eigenstate. Isolating individual eigenstates within a
degenerate subspace therefore requires identifying further commuting observables, such as symmetry operators. 

Prior to our work, perhaps the most canonical method for preparing dominant eigenstates was via phase estimation methods, following the early work by \citet{abrams1999quantum}.
Here, one can first identify the target eigenenergy as the most frequent sample produced by phase estimation, and then post-select on this sample to prepare the dominant eigenstate. In \cref{appQPEMAE} we develop substantially improved variants of this approach using multidimensional amplitude estimation~\citep{vanapeldoorn2021quantum}, Gaussian phase estimation~\citep{rendon2024improved,rendon2023lowdepth,chen2025quantum}, and amplitude amplification~\citep{brassard2002quantum} that provide a speedup over na\"ive phase estimation -- however,
this improved phase estimation still requires substantially more ancillary qubits than DEFEAT.

Let us also highlight that our square-root eigenprobability operators introduce a fundamentally new application of spectral gap amplification~\citep{somma2013spectral,low2017hamiltonian,zlokapa2024hamiltonian} and sum-of-squares spectral amplification~\citep{low2025fast,king2026quantum}. Specifically, we demonstrate that these techniques extend beyond gap-amplifiable Hamiltonians and can be successfully applied to general positive semi-definite operators, such as density matrices.
We hope this observation inspires new quantum algorithms beyond Hamiltonian dynamics and provides a novel path to designing algorithms by sample-to-query lifting outside the purified quantum query access model.

Indeed, similar techniques have already been shown to be useful in algorithm design in recent years. To the best of our knowledge, an implementation of the square-root eigenprobability operator for general eigenstates has not been proposed in prior work. However, parallels can be drawn to the vector encoding framework of \citet{guo2024nonlinear} and \citet{rattew2023nonlinear}. Their approach encodes the amplitude of a quantum state in the \emph{computational basis} on the diagonal of a square block-encoding, and has been applied to applications such as continuous state preparation~\citep{rattew2023nonlinear,gonzalezconde2024efficient,ivashkov2026qkan}, the quantum acceleration of machine learning models~\citep{ivashkov2026qkan,guo2024quantum,rattew2026accelerating}, and optimal rare event sampling against the rarity threshold~\citep{guo2026quantum}.
The square-root eigenprobability operator proposed in this work can be viewed, in a sense, as a generalisation of such vector encodings to arbitrary eigenbases defined by a Hamiltonian, beyond the computational basis.

More generally, we note that this operator can also be understood as a weighted walk operator that maps eigenvectors to eigenvalues via phase kickback through multiplexed unitaries. Effectively acting as a weighted-walk analogue of phase estimation, it offers a potentially more suitable alternative for coherent applications~\citep{patel2026optimal} than using a standard phase estimation subroutine.

Finally, we remark that our framework is compatible with SOSSA-based advantages in Hamiltonian simulation. Specifically, when the target Hamiltonian used for time or Chebyshev twirling is gap-amplifiable, the resulting fast quantum simulation algorithms~\citep{low2025fast,king2026quantum} can be directly incorporated into DEFEAT. This leads to a corresponding reduction in the spectral gap dependency of the runtime complexity, complementing the improved probability-to-amplitude gap dependency established in this work.

\paragraph*{Code and data availability.}
    The source code and data for numerics can be found at \url{https://github.com/ox-quant-info/defeat}.

\begin{acknowledgments}
The authors thank Tenzan Araki and Arthur Rattew for discussions. PWH acknowledges support from the Engineering and Physical Sciences Research Council (EPSRC) Doctoral Training Partnership (EP/W524311/1) with a CASE Conversion Studentship in collaboration with Quantum Motion. PWH further acknowledges support from the Ministry of Education, Taiwan, for a Government Scholarship to Study Abroad (GSSA) and St. Catherine's College, University of Oxford, for an Alan Tayler Scholarship. BB would like to thank the support of the Hungarian National Research, Development and Innovation Office (NKFIH) through the KDP-2023 funding scheme (grant number C2245275), and the Quantum Information National Laboratory of Hungary. BK thanks UKRI for the Future Leaders Fellowship Theory to Enable Practical Quantum Advantage (MR/Y015843/1).
The numerical simulation of quantum circuits made use of the Quantum Exact Simulation Toolkit (QuEST)~\citep{jones2019quest} via the QuESTlink~\citep{jones2020questlink} frontend.
 For the purpose of Open Access, the authors have applied a CC BY public copyright license to any Author Accepted Manuscript version arising from this submission. The authors used ChatGPT and Gemini to assist in drafting and editing portions of the manuscript for language, clarity, and style. All scientific content, interpretation, and conclusions were developed and verified by the authors.
\end{acknowledgments}

\bibliography{bibliography}

%apsrev4-2.bst 2019-01-14 (MD) hand-edited version of apsrev4-1.bst
%Control: key (0)
%Control: author (8) initials jnrlst
%Control: editor formatted (1) identically to author
%Control: production of article title (0) allowed
%Control: page (0) single
%Control: year (1) truncated
%Control: production of eprint (0) enabled
\begin{thebibliography}{150}%
\makeatletter
\providecommand \@ifxundefined [1]{%
 \@ifx{#1\undefined}
}%
\providecommand \@ifnum [1]{%
 \ifnum #1\expandafter \@firstoftwo
 \else \expandafter \@secondoftwo
 \fi
}%
\providecommand \@ifx [1]{%
 \ifx #1\expandafter \@firstoftwo
 \else \expandafter \@secondoftwo
 \fi
}%
\providecommand \natexlab [1]{#1}%
\providecommand \enquote  [1]{``#1''}%
\providecommand \bibnamefont  [1]{#1}%
\providecommand \bibfnamefont [1]{#1}%
\providecommand \citenamefont [1]{#1}%
\providecommand \href@noop [0]{\@secondoftwo}%
\providecommand \href [0]{\begingroup \@sanitize@url \@href}%
\providecommand \@href[1]{\@@startlink{#1}\@@href}%
\providecommand \@@href[1]{\endgroup#1\@@endlink}%
\providecommand \@sanitize@url [0]{\catcode `\\12\catcode `\$12\catcode `\&12\catcode `\#12\catcode `\^12\catcode `\_12\catcode `\%12\relax}%
\providecommand \@@startlink[1]{}%
\providecommand \@@endlink[0]{}%
\providecommand \url  [0]{\begingroup\@sanitize@url \@url }%
\providecommand \@url [1]{\endgroup\@href {#1}{\urlprefix }}%
\providecommand \urlprefix  [0]{URL }%
\providecommand \Eprint [0]{\href }%
\providecommand \doibase [0]{https://doi.org/}%
\providecommand \selectlanguage [0]{\@gobble}%
\providecommand \bibinfo  [0]{\@secondoftwo}%
\providecommand \bibfield  [0]{\@secondoftwo}%
\providecommand \translation [1]{[#1]}%
\providecommand \BibitemOpen [0]{}%
\providecommand \bibitemStop [0]{}%
\providecommand \bibitemNoStop [0]{.\EOS\space}%
\providecommand \EOS [0]{\spacefactor3000\relax}%
\providecommand \BibitemShut  [1]{\csname bibitem#1\endcsname}%
\let\auto@bib@innerbib\@empty
%</preamble>
\bibitem [{\citenamefont {Babbush}\ \emph {et~al.}(2026)\citenamefont {Babbush}, \citenamefont {King}, \citenamefont {Boixo}, \citenamefont {Huggins}, \citenamefont {Khattar}, \citenamefont {Low}, \citenamefont {McClean}, \citenamefont {O'Brien},\ and\ \citenamefont {Rubin}}]{babbush2026grand}%
  \BibitemOpen
  \bibfield  {author} {\bibinfo {author} {\bibfnamefont {R.}~\bibnamefont {Babbush}}, \bibinfo {author} {\bibfnamefont {R.}~\bibnamefont {King}}, \bibinfo {author} {\bibfnamefont {S.}~\bibnamefont {Boixo}}, \bibinfo {author} {\bibfnamefont {W.}~\bibnamefont {Huggins}}, \bibinfo {author} {\bibfnamefont {T.}~\bibnamefont {Khattar}}, \bibinfo {author} {\bibfnamefont {G.~H.}\ \bibnamefont {Low}}, \bibinfo {author} {\bibfnamefont {J.~R.}\ \bibnamefont {McClean}}, \bibinfo {author} {\bibfnamefont {T.}~\bibnamefont {O'Brien}},\ and\ \bibinfo {author} {\bibfnamefont {N.~C.}\ \bibnamefont {Rubin}},\ }\bibfield  {title} {\bibinfo {title} {Grand challenge of quantum applications},\ }\href {https://doi.org/10.1103/6r9l-lynr} {\bibfield  {journal} {\bibinfo  {journal} {PRX Quantum}\ }\textbf {\bibinfo {volume} {7}},\ \bibinfo {pages} {020101} (\bibinfo {year} {2026})}\BibitemShut {NoStop}%
\bibitem [{\citenamefont {Zimborás}\ \emph {et~al.}(2025)\citenamefont {Zimborás}, \citenamefont {Koczor}, \citenamefont {Holmes}, \citenamefont {Borrelli}, \citenamefont {Gilyén}, \citenamefont {Huang}, \citenamefont {Cai}, \citenamefont {Acín}, \citenamefont {Aolita}, \citenamefont {Banchi}, \citenamefont {Brandão}, \citenamefont {Cavalcanti}, \citenamefont {Cubitt}, \citenamefont {Filippov}, \citenamefont {García-Pérez}, \citenamefont {Goold}, \citenamefont {Kálmán}, \citenamefont {Kyoseva}, \citenamefont {Rossi}, \citenamefont {Sokolov}, \citenamefont {Tavernelli},\ and\ \citenamefont {Maniscalco}}]{zimboras2025myths}%
  \BibitemOpen
  \bibfield  {author} {\bibinfo {author} {\bibfnamefont {Z.}~\bibnamefont {Zimborás}}, \bibinfo {author} {\bibfnamefont {B.}~\bibnamefont {Koczor}}, \bibinfo {author} {\bibfnamefont {Z.}~\bibnamefont {Holmes}}, \bibinfo {author} {\bibfnamefont {E.-M.}\ \bibnamefont {Borrelli}}, \bibinfo {author} {\bibfnamefont {A.}~\bibnamefont {Gilyén}}, \bibinfo {author} {\bibfnamefont {H.-Y.}\ \bibnamefont {Huang}}, \bibinfo {author} {\bibfnamefont {Z.}~\bibnamefont {Cai}}, \bibinfo {author} {\bibfnamefont {A.}~\bibnamefont {Acín}}, \bibinfo {author} {\bibfnamefont {L.}~\bibnamefont {Aolita}}, \bibinfo {author} {\bibfnamefont {L.}~\bibnamefont {Banchi}}, \bibinfo {author} {\bibfnamefont {F.~G. S.~L.}\ \bibnamefont {Brandão}}, \bibinfo {author} {\bibfnamefont {D.}~\bibnamefont {Cavalcanti}}, \bibinfo {author} {\bibfnamefont {T.}~\bibnamefont {Cubitt}}, \bibinfo {author} {\bibfnamefont {S.~N.}\ \bibnamefont {Filippov}}, \bibinfo {author} {\bibfnamefont {G.}~\bibnamefont {García-Pérez}}, \bibinfo {author} {\bibfnamefont {J.}~\bibnamefont {Goold}}, \bibinfo {author} {\bibfnamefont {O.}~\bibnamefont {Kálmán}}, \bibinfo {author} {\bibfnamefont {E.}~\bibnamefont {Kyoseva}}, \bibinfo {author} {\bibfnamefont {M.~A.~C.}\ \bibnamefont {Rossi}}, \bibinfo {author} {\bibfnamefont {B.}~\bibnamefont {Sokolov}}, \bibinfo {author} {\bibfnamefont {I.}~\bibnamefont {Tavernelli}},\ and\ \bibinfo {author} {\bibfnamefont {S.}~\bibnamefont {Maniscalco}},\ }\href@noop {} {\bibinfo {title} {Myths around quantum computation before full fault tolerance: What no-go theorems rule out and what they don't}} (\bibinfo {year} {2025}),\ \Eprint {https://arxiv.org/abs/2501.05694} {arXiv:2501.05694 [quant-ph]} \BibitemShut {NoStop}%
\bibitem [{\citenamefont {Ayral}\ \emph {et~al.}(2023)\citenamefont {Ayral}, \citenamefont {Besserve}, \citenamefont {Lacroix},\ and\ \citenamefont {Ruiz~Guzman}}]{ayral2023quantum}%
  \BibitemOpen
  \bibfield  {author} {\bibinfo {author} {\bibfnamefont {T.}~\bibnamefont {Ayral}}, \bibinfo {author} {\bibfnamefont {P.}~\bibnamefont {Besserve}}, \bibinfo {author} {\bibfnamefont {D.}~\bibnamefont {Lacroix}},\ and\ \bibinfo {author} {\bibfnamefont {E.~A.}\ \bibnamefont {Ruiz~Guzman}},\ }\bibfield  {title} {\bibinfo {title} {Quantum computing with and for many-body physics},\ }\href {https://doi.org/10.1140/epja/s10050-023-01141-1} {\bibfield  {journal} {\bibinfo  {journal} {Eur. Phys. J. A}\ }\textbf {\bibinfo {volume} {59}},\ \bibinfo {pages} {227} (\bibinfo {year} {2023})}\BibitemShut {NoStop}%
\bibitem [{\citenamefont {Reiher}\ \emph {et~al.}(2017)\citenamefont {Reiher}, \citenamefont {Wiebe}, \citenamefont {Svore}, \citenamefont {Wecker},\ and\ \citenamefont {Troyer}}]{reiher2017elucidating}%
  \BibitemOpen
  \bibfield  {author} {\bibinfo {author} {\bibfnamefont {M.}~\bibnamefont {Reiher}}, \bibinfo {author} {\bibfnamefont {N.}~\bibnamefont {Wiebe}}, \bibinfo {author} {\bibfnamefont {K.~M.}\ \bibnamefont {Svore}}, \bibinfo {author} {\bibfnamefont {D.}~\bibnamefont {Wecker}},\ and\ \bibinfo {author} {\bibfnamefont {M.}~\bibnamefont {Troyer}},\ }\bibfield  {title} {\bibinfo {title} {Elucidating reaction mechanisms on quantum computers},\ }\href {https://doi.org/10.1073/pnas.1619152114} {\bibfield  {journal} {\bibinfo  {journal} {Proc. Natl. Acad. Sci. U.S.A.}\ }\textbf {\bibinfo {volume} {114}},\ \bibinfo {pages} {7555} (\bibinfo {year} {2017})}\BibitemShut {NoStop}%
\bibitem [{\citenamefont {McArdle}\ \emph {et~al.}(2020)\citenamefont {McArdle}, \citenamefont {Endo}, \citenamefont {Aspuru-Guzik}, \citenamefont {Benjamin},\ and\ \citenamefont {Yuan}}]{mcardle2020quantum}%
  \BibitemOpen
  \bibfield  {author} {\bibinfo {author} {\bibfnamefont {S.}~\bibnamefont {McArdle}}, \bibinfo {author} {\bibfnamefont {S.}~\bibnamefont {Endo}}, \bibinfo {author} {\bibfnamefont {A.}~\bibnamefont {Aspuru-Guzik}}, \bibinfo {author} {\bibfnamefont {S.~C.}\ \bibnamefont {Benjamin}},\ and\ \bibinfo {author} {\bibfnamefont {X.}~\bibnamefont {Yuan}},\ }\bibfield  {title} {\bibinfo {title} {Quantum computational chemistry},\ }\href {https://doi.org/10.1103/RevModPhys.92.015003} {\bibfield  {journal} {\bibinfo  {journal} {Rev. Mod. Phys.}\ }\textbf {\bibinfo {volume} {92}},\ \bibinfo {pages} {015003} (\bibinfo {year} {2020})}\BibitemShut {NoStop}%
\bibitem [{\citenamefont {Bauer}\ \emph {et~al.}(2020)\citenamefont {Bauer}, \citenamefont {Bravyi}, \citenamefont {Motta},\ and\ \citenamefont {Chan}}]{bauer2020quantum}%
  \BibitemOpen
  \bibfield  {author} {\bibinfo {author} {\bibfnamefont {B.}~\bibnamefont {Bauer}}, \bibinfo {author} {\bibfnamefont {S.}~\bibnamefont {Bravyi}}, \bibinfo {author} {\bibfnamefont {M.}~\bibnamefont {Motta}},\ and\ \bibinfo {author} {\bibfnamefont {G.~K.-L.}\ \bibnamefont {Chan}},\ }\bibfield  {title} {\bibinfo {title} {Quantum algorithms for quantum chemistry and quantum materials science},\ }\href {https://doi.org/10.1021/acs.chemrev.9b00829} {\bibfield  {journal} {\bibinfo  {journal} {Chem. Rev.}\ }\textbf {\bibinfo {volume} {120}},\ \bibinfo {pages} {12685–12717} (\bibinfo {year} {2020})}\BibitemShut {NoStop}%
\bibitem [{\citenamefont {Chan}\ and\ \citenamefont {Sharma}(2011)}]{chan2011density}%
  \BibitemOpen
  \bibfield  {author} {\bibinfo {author} {\bibfnamefont {G.~K.-L.}\ \bibnamefont {Chan}}\ and\ \bibinfo {author} {\bibfnamefont {S.}~\bibnamefont {Sharma}},\ }\bibfield  {title} {\bibinfo {title} {The density matrix renormalization group in quantum chemistry},\ }\href {https://doi.org/10.1146/annurev-physchem-032210-103338} {\bibfield  {journal} {\bibinfo  {journal} {Annu. Rev. Phys. Chem.}\ }\textbf {\bibinfo {volume} {62}},\ \bibinfo {pages} {465–481} (\bibinfo {year} {2011})}\BibitemShut {NoStop}%
\bibitem [{\citenamefont {Sharma}\ and\ \citenamefont {Chan}(2012)}]{sharma2012spin}%
  \BibitemOpen
  \bibfield  {author} {\bibinfo {author} {\bibfnamefont {S.}~\bibnamefont {Sharma}}\ and\ \bibinfo {author} {\bibfnamefont {G.~K.-L.}\ \bibnamefont {Chan}},\ }\bibfield  {title} {\bibinfo {title} {Spin-adapted density matrix renormalization group algorithms for quantum chemistry},\ }\href {https://doi.org/10.1063/1.3695642} {\bibfield  {journal} {\bibinfo  {journal} {J. Chem. Phys.}\ }\textbf {\bibinfo {volume} {136}},\ \bibinfo {pages} {124121} (\bibinfo {year} {2012})}\BibitemShut {NoStop}%
\bibitem [{\citenamefont {Low}\ \emph {et~al.}(2025)\citenamefont {Low}, \citenamefont {King}, \citenamefont {Berry}, \citenamefont {Han}, \citenamefont {DePrince}, \citenamefont {White}, \citenamefont {Babbush}, \citenamefont {Somma},\ and\ \citenamefont {Rubin}}]{low2025fast}%
  \BibitemOpen
  \bibfield  {author} {\bibinfo {author} {\bibfnamefont {G.~H.}\ \bibnamefont {Low}}, \bibinfo {author} {\bibfnamefont {R.}~\bibnamefont {King}}, \bibinfo {author} {\bibfnamefont {D.~W.}\ \bibnamefont {Berry}}, \bibinfo {author} {\bibfnamefont {Q.}~\bibnamefont {Han}}, \bibinfo {author} {\bibfnamefont {A.~E.}\ \bibnamefont {DePrince}}, \bibinfo {author} {\bibfnamefont {A.~F.}\ \bibnamefont {White}}, \bibinfo {author} {\bibfnamefont {R.}~\bibnamefont {Babbush}}, \bibinfo {author} {\bibfnamefont {R.~D.}\ \bibnamefont {Somma}},\ and\ \bibinfo {author} {\bibfnamefont {N.~C.}\ \bibnamefont {Rubin}},\ }\bibfield  {title} {\bibinfo {title} {Fast quantum simulation of electronic structure by spectral amplification},\ }\href {https://doi.org/10.1103/pb2g-j9cw} {\bibfield  {journal} {\bibinfo  {journal} {Phys. Rev. X}\ }\textbf {\bibinfo {volume} {15}},\ \bibinfo {pages} {041016} (\bibinfo {year} {2025})}\BibitemShut {NoStop}%
\bibitem [{\citenamefont {King}\ \emph {et~al.}(2026)\citenamefont {King}, \citenamefont {Low}, \citenamefont {Babbush}, \citenamefont {Somma},\ and\ \citenamefont {Rubin}}]{king2026quantum}%
  \BibitemOpen
  \bibfield  {author} {\bibinfo {author} {\bibfnamefont {R.}~\bibnamefont {King}}, \bibinfo {author} {\bibfnamefont {G.~H.}\ \bibnamefont {Low}}, \bibinfo {author} {\bibfnamefont {R.}~\bibnamefont {Babbush}}, \bibinfo {author} {\bibfnamefont {R.~D.}\ \bibnamefont {Somma}},\ and\ \bibinfo {author} {\bibfnamefont {N.~C.}\ \bibnamefont {Rubin}},\ }\bibfield  {title} {\bibinfo {title} {Quantum simulation with sum-of-squares spectral amplification},\ }\href {https://doi.org/10.1103/m3fj-m4rm} {\bibfield  {journal} {\bibinfo  {journal} {Phys. Rev. Lett.}\ }\textbf {\bibinfo {volume} {136}},\ \bibinfo {pages} {110601} (\bibinfo {year} {2026})}\BibitemShut {NoStop}%
\bibitem [{\citenamefont {Dreuw}\ and\ \citenamefont {Head-Gordon}(2005)}]{dreuw2005single}%
  \BibitemOpen
  \bibfield  {author} {\bibinfo {author} {\bibfnamefont {A.}~\bibnamefont {Dreuw}}\ and\ \bibinfo {author} {\bibfnamefont {M.}~\bibnamefont {Head-Gordon}},\ }\bibfield  {title} {\bibinfo {title} {Single-reference ab initio methods for the calculation of excited states of large molecules},\ }\href {https://doi.org/10.1021/cr0505627} {\bibfield  {journal} {\bibinfo  {journal} {Chem. Rev.}\ }\textbf {\bibinfo {volume} {105}},\ \bibinfo {pages} {4009–4037} (\bibinfo {year} {2005})}\BibitemShut {NoStop}%
\bibitem [{\citenamefont {Krylov}(2008)}]{krylov2008equation}%
  \BibitemOpen
  \bibfield  {author} {\bibinfo {author} {\bibfnamefont {A.~I.}\ \bibnamefont {Krylov}},\ }\bibfield  {title} {\bibinfo {title} {Equation-of-motion coupled-cluster methods for open-shell and electronically excited species: The hitchhiker's guide to {Fock} space},\ }\href {https://doi.org/10.1146/annurev.physchem.59.032607.093602} {\bibfield  {journal} {\bibinfo  {journal} {Annu. Rev. Phys. Chem.}\ }\textbf {\bibinfo {volume} {59}},\ \bibinfo {pages} {433–462} (\bibinfo {year} {2008})}\BibitemShut {NoStop}%
\bibitem [{\citenamefont {Casida}\ and\ \citenamefont {Huix-Rotllant}(2012)}]{casida2012progress}%
  \BibitemOpen
  \bibfield  {author} {\bibinfo {author} {\bibfnamefont {M.}~\bibnamefont {Casida}}\ and\ \bibinfo {author} {\bibfnamefont {M.}~\bibnamefont {Huix-Rotllant}},\ }\bibfield  {title} {\bibinfo {title} {Progress in time-dependent density-functional theory},\ }\href {https://doi.org/10.1146/annurev-physchem-032511-143803} {\bibfield  {journal} {\bibinfo  {journal} {Annu. Rev. Phys. Chem.}\ }\textbf {\bibinfo {volume} {63}},\ \bibinfo {pages} {287–323} (\bibinfo {year} {2012})}\BibitemShut {NoStop}%
\bibitem [{\citenamefont {Curchod}\ and\ \citenamefont {Mart\'{\i}nez}(2018)}]{curchod2018ab}%
  \BibitemOpen
  \bibfield  {author} {\bibinfo {author} {\bibfnamefont {B.~F.~E.}\ \bibnamefont {Curchod}}\ and\ \bibinfo {author} {\bibfnamefont {T.~J.}\ \bibnamefont {Mart\'{\i}nez}},\ }\bibfield  {title} {\bibinfo {title} {Ab initio nonadiabatic quantum molecular dynamics},\ }\href {https://doi.org/10.1021/acs.chemrev.7b00423} {\bibfield  {journal} {\bibinfo  {journal} {Chem. Rev.}\ }\textbf {\bibinfo {volume} {118}},\ \bibinfo {pages} {3305–3336} (\bibinfo {year} {2018})}\BibitemShut {NoStop}%
\bibitem [{\citenamefont {Aspuru-Guzik}\ \emph {et~al.}(2005)\citenamefont {Aspuru-Guzik}, \citenamefont {Dutoi}, \citenamefont {Love},\ and\ \citenamefont {Head-Gordon}}]{aspuruguzik2005simulated}%
  \BibitemOpen
  \bibfield  {author} {\bibinfo {author} {\bibfnamefont {A.}~\bibnamefont {Aspuru-Guzik}}, \bibinfo {author} {\bibfnamefont {A.~D.}\ \bibnamefont {Dutoi}}, \bibinfo {author} {\bibfnamefont {P.~J.}\ \bibnamefont {Love}},\ and\ \bibinfo {author} {\bibfnamefont {M.}~\bibnamefont {Head-Gordon}},\ }\bibfield  {title} {\bibinfo {title} {Simulated quantum computation of molecular energies},\ }\href {https://doi.org/10.1126/science.1113479} {\bibfield  {journal} {\bibinfo  {journal} {Science}\ }\textbf {\bibinfo {volume} {309}},\ \bibinfo {pages} {1704} (\bibinfo {year} {2005})}\BibitemShut {NoStop}%
\bibitem [{\citenamefont {Burton}\ and\ \citenamefont {Filip}(2025)}]{burton2025excited}%
  \BibitemOpen
  \bibfield  {author} {\bibinfo {author} {\bibfnamefont {H.~G.~A.}\ \bibnamefont {Burton}}\ and\ \bibinfo {author} {\bibfnamefont {M.-A.}\ \bibnamefont {Filip}},\ }\href@noop {} {\bibinfo {title} {Excited state preparation on a quantum computer through adiabatic light-matter coupling}} (\bibinfo {year} {2025}),\ \Eprint {https://arxiv.org/abs/2511.22324} {arXiv:2511.22324 [quant-ph]} \BibitemShut {NoStop}%
\bibitem [{\citenamefont {Lutz}\ \emph {et~al.}(2026)\citenamefont {Lutz}, \citenamefont {Piroli}, \citenamefont {Styliaris},\ and\ \citenamefont {Cirac}}]{lutz2026adiabatic}%
  \BibitemOpen
  \bibfield  {author} {\bibinfo {author} {\bibfnamefont {M.}~\bibnamefont {Lutz}}, \bibinfo {author} {\bibfnamefont {L.}~\bibnamefont {Piroli}}, \bibinfo {author} {\bibfnamefont {G.}~\bibnamefont {Styliaris}},\ and\ \bibinfo {author} {\bibfnamefont {J.~I.}\ \bibnamefont {Cirac}},\ }\bibfield  {title} {\bibinfo {title} {Adiabatic quantum state preparation in integrable models},\ }\href {https://doi.org/10.22331/q-2026-03-18-2032} {\bibfield  {journal} {\bibinfo  {journal} {Quantum}\ }\textbf {\bibinfo {volume} {10}},\ \bibinfo {pages} {2032} (\bibinfo {year} {2026})}\BibitemShut {NoStop}%
\bibitem [{\citenamefont {Hwang}\ and\ \citenamefont {Koczor}(2025)}]{hwang2025preparing}%
  \BibitemOpen
  \bibfield  {author} {\bibinfo {author} {\bibfnamefont {W.}~\bibnamefont {Hwang}}\ and\ \bibinfo {author} {\bibfnamefont {B.}~\bibnamefont {Koczor}},\ }\bibfield  {title} {\bibinfo {title} {Preparing ground and excited states using adiabatic {CoVaR}},\ }\href {https://doi.org/10.1088/1367-2630/adb173} {\bibfield  {journal} {\bibinfo  {journal} {New J. Phys.}\ }\textbf {\bibinfo {volume} {27}},\ \bibinfo {pages} {023025} (\bibinfo {year} {2025})}\BibitemShut {NoStop}%
\bibitem [{\citenamefont {Higgott}\ \emph {et~al.}(2019)\citenamefont {Higgott}, \citenamefont {Wang},\ and\ \citenamefont {Brierley}}]{higgott2019variational}%
  \BibitemOpen
  \bibfield  {author} {\bibinfo {author} {\bibfnamefont {O.}~\bibnamefont {Higgott}}, \bibinfo {author} {\bibfnamefont {D.}~\bibnamefont {Wang}},\ and\ \bibinfo {author} {\bibfnamefont {S.}~\bibnamefont {Brierley}},\ }\bibfield  {title} {\bibinfo {title} {Variational quantum computation of excited states},\ }\href {https://doi.org/10.22331/q-2019-07-01-156} {\bibfield  {journal} {\bibinfo  {journal} {Quantum}\ }\textbf {\bibinfo {volume} {3}},\ \bibinfo {pages} {156} (\bibinfo {year} {2019})}\BibitemShut {NoStop}%
\bibitem [{\citenamefont {Jones}\ \emph {et~al.}(2019{\natexlab{a}})\citenamefont {Jones}, \citenamefont {Endo}, \citenamefont {McArdle}, \citenamefont {Yuan},\ and\ \citenamefont {Benjamin}}]{jones2019variational}%
  \BibitemOpen
  \bibfield  {author} {\bibinfo {author} {\bibfnamefont {T.}~\bibnamefont {Jones}}, \bibinfo {author} {\bibfnamefont {S.}~\bibnamefont {Endo}}, \bibinfo {author} {\bibfnamefont {S.}~\bibnamefont {McArdle}}, \bibinfo {author} {\bibfnamefont {X.}~\bibnamefont {Yuan}},\ and\ \bibinfo {author} {\bibfnamefont {S.~C.}\ \bibnamefont {Benjamin}},\ }\bibfield  {title} {\bibinfo {title} {Variational quantum algorithms for discovering {Hamiltonian} spectra},\ }\href {https://doi.org/10.1103/PhysRevA.99.062304} {\bibfield  {journal} {\bibinfo  {journal} {Phys. Rev. A}\ }\textbf {\bibinfo {volume} {99}},\ \bibinfo {pages} {062304} (\bibinfo {year} {2019}{\natexlab{a}})}\BibitemShut {NoStop}%
\bibitem [{\citenamefont {Gocho}\ \emph {et~al.}(2023)\citenamefont {Gocho}, \citenamefont {Nakamura}, \citenamefont {Kanno}, \citenamefont {Gao}, \citenamefont {Kobayashi}, \citenamefont {Inagaki},\ and\ \citenamefont {Hatanaka}}]{gocho2023excited}%
  \BibitemOpen
  \bibfield  {author} {\bibinfo {author} {\bibfnamefont {S.}~\bibnamefont {Gocho}}, \bibinfo {author} {\bibfnamefont {H.}~\bibnamefont {Nakamura}}, \bibinfo {author} {\bibfnamefont {S.}~\bibnamefont {Kanno}}, \bibinfo {author} {\bibfnamefont {Q.}~\bibnamefont {Gao}}, \bibinfo {author} {\bibfnamefont {T.}~\bibnamefont {Kobayashi}}, \bibinfo {author} {\bibfnamefont {T.}~\bibnamefont {Inagaki}},\ and\ \bibinfo {author} {\bibfnamefont {M.}~\bibnamefont {Hatanaka}},\ }\bibfield  {title} {\bibinfo {title} {Excited state calculations using variational quantum eigensolver with spin-restricted ans\"{a}tze and automatically-adjusted constraints},\ }\href {https://doi.org/10.1038/s41524-023-00965-1} {\bibfield  {journal} {\bibinfo  {journal} {npj Comput. Mater.}\ }\textbf {\bibinfo {volume} {9}},\ \bibinfo {pages} {13} (\bibinfo {year} {2023})}\BibitemShut {NoStop}%
\bibitem [{\citenamefont {Nakanishi}\ \emph {et~al.}(2019)\citenamefont {Nakanishi}, \citenamefont {Mitarai},\ and\ \citenamefont {Fujii}}]{nakanishi2019subspace}%
  \BibitemOpen
  \bibfield  {author} {\bibinfo {author} {\bibfnamefont {K.~M.}\ \bibnamefont {Nakanishi}}, \bibinfo {author} {\bibfnamefont {K.}~\bibnamefont {Mitarai}},\ and\ \bibinfo {author} {\bibfnamefont {K.}~\bibnamefont {Fujii}},\ }\bibfield  {title} {\bibinfo {title} {Subspace-search variational quantum eigensolver for excited states},\ }\href {https://doi.org/10.1103/PhysRevResearch.1.033062} {\bibfield  {journal} {\bibinfo  {journal} {Phys. Rev. Res.}\ }\textbf {\bibinfo {volume} {1}},\ \bibinfo {pages} {033062} (\bibinfo {year} {2019})}\BibitemShut {NoStop}%
\bibitem [{\citenamefont {Parrish}\ \emph {et~al.}(2019)\citenamefont {Parrish}, \citenamefont {Hohenstein}, \citenamefont {McMahon},\ and\ \citenamefont {Mart\'{\i}nez}}]{parrish2019quantum}%
  \BibitemOpen
  \bibfield  {author} {\bibinfo {author} {\bibfnamefont {R.~M.}\ \bibnamefont {Parrish}}, \bibinfo {author} {\bibfnamefont {E.~G.}\ \bibnamefont {Hohenstein}}, \bibinfo {author} {\bibfnamefont {P.~L.}\ \bibnamefont {McMahon}},\ and\ \bibinfo {author} {\bibfnamefont {T.~J.}\ \bibnamefont {Mart\'{\i}nez}},\ }\bibfield  {title} {\bibinfo {title} {Quantum computation of electronic transitions using a variational quantum eigensolver},\ }\href {https://doi.org/10.1103/PhysRevLett.122.230401} {\bibfield  {journal} {\bibinfo  {journal} {Phys. Rev. Lett.}\ }\textbf {\bibinfo {volume} {122}},\ \bibinfo {pages} {230401} (\bibinfo {year} {2019})}\BibitemShut {NoStop}%
\bibitem [{\citenamefont {Cadi~Tazi}\ and\ \citenamefont {Thom}(2024)}]{caditazi2024folded}%
  \BibitemOpen
  \bibfield  {author} {\bibinfo {author} {\bibfnamefont {L.}~\bibnamefont {Cadi~Tazi}}\ and\ \bibinfo {author} {\bibfnamefont {A.~J.~W.}\ \bibnamefont {Thom}},\ }\bibfield  {title} {\bibinfo {title} {Folded spectrum {VQE}: A quantum computing method for the calculation of molecular excited states},\ }\href {https://doi.org/10.1021/acs.jctc.3c01378} {\bibfield  {journal} {\bibinfo  {journal} {J. Chem. Theory Comput.}\ }\textbf {\bibinfo {volume} {20}},\ \bibinfo {pages} {2491–2504} (\bibinfo {year} {2024})}\BibitemShut {NoStop}%
\bibitem [{\citenamefont {Boyd}\ and\ \citenamefont {Koczor}(2022)}]{boyd2022training}%
  \BibitemOpen
  \bibfield  {author} {\bibinfo {author} {\bibfnamefont {G.}~\bibnamefont {Boyd}}\ and\ \bibinfo {author} {\bibfnamefont {B.}~\bibnamefont {Koczor}},\ }\bibfield  {title} {\bibinfo {title} {Training variational quantum circuits with {CoVaR}: Covariance root finding with classical shadows},\ }\href {https://doi.org/10.1103/PhysRevX.12.041022} {\bibfield  {journal} {\bibinfo  {journal} {Phys. Rev. X}\ }\textbf {\bibinfo {volume} {12}},\ \bibinfo {pages} {041022} (\bibinfo {year} {2022})}\BibitemShut {NoStop}%
\bibitem [{\citenamefont {LaRose}\ \emph {et~al.}(2019)\citenamefont {LaRose}, \citenamefont {Tikku}, \citenamefont {O'Neel-Judy}, \citenamefont {Cincio},\ and\ \citenamefont {Coles}}]{larose2019variational}%
  \BibitemOpen
  \bibfield  {author} {\bibinfo {author} {\bibfnamefont {R.}~\bibnamefont {LaRose}}, \bibinfo {author} {\bibfnamefont {A.}~\bibnamefont {Tikku}}, \bibinfo {author} {\bibfnamefont {E.}~\bibnamefont {O'Neel-Judy}}, \bibinfo {author} {\bibfnamefont {L.}~\bibnamefont {Cincio}},\ and\ \bibinfo {author} {\bibfnamefont {P.~J.}\ \bibnamefont {Coles}},\ }\bibfield  {title} {\bibinfo {title} {Variational quantum state diagonalization},\ }\href {https://doi.org/10.1038/s41534-019-0167-6} {\bibfield  {journal} {\bibinfo  {journal} {npj Quantum Inf.}\ }\textbf {\bibinfo {volume} {5}},\ \bibinfo {pages} {57} (\bibinfo {year} {2019})}\BibitemShut {NoStop}%
\bibitem [{\citenamefont {Cerezo}\ \emph {et~al.}(2022)\citenamefont {Cerezo}, \citenamefont {Sharma}, \citenamefont {Arrasmith},\ and\ \citenamefont {Coles}}]{cerezo2022variational}%
  \BibitemOpen
  \bibfield  {author} {\bibinfo {author} {\bibfnamefont {M.}~\bibnamefont {Cerezo}}, \bibinfo {author} {\bibfnamefont {K.}~\bibnamefont {Sharma}}, \bibinfo {author} {\bibfnamefont {A.}~\bibnamefont {Arrasmith}},\ and\ \bibinfo {author} {\bibfnamefont {P.~J.}\ \bibnamefont {Coles}},\ }\bibfield  {title} {\bibinfo {title} {Variational quantum state eigensolver},\ }\href {https://doi.org/10.1038/s41534-022-00611-6} {\bibfield  {journal} {\bibinfo  {journal} {npj Quantum Inf.}\ }\textbf {\bibinfo {volume} {8}},\ \bibinfo {pages} {113} (\bibinfo {year} {2022})}\BibitemShut {NoStop}%
\bibitem [{\citenamefont {McClean}\ \emph {et~al.}(2017)\citenamefont {McClean}, \citenamefont {Kimchi-Schwartz}, \citenamefont {Carter},\ and\ \citenamefont {de~Jong}}]{mcclean2017hybrid}%
  \BibitemOpen
  \bibfield  {author} {\bibinfo {author} {\bibfnamefont {J.~R.}\ \bibnamefont {McClean}}, \bibinfo {author} {\bibfnamefont {M.~E.}\ \bibnamefont {Kimchi-Schwartz}}, \bibinfo {author} {\bibfnamefont {J.}~\bibnamefont {Carter}},\ and\ \bibinfo {author} {\bibfnamefont {W.~A.}\ \bibnamefont {de~Jong}},\ }\bibfield  {title} {\bibinfo {title} {Hybrid quantum-classical hierarchy for mitigation of decoherence and determination of excited states},\ }\href {https://doi.org/10.1103/PhysRevA.95.042308} {\bibfield  {journal} {\bibinfo  {journal} {Phys. Rev. A}\ }\textbf {\bibinfo {volume} {95}},\ \bibinfo {pages} {042308} (\bibinfo {year} {2017})}\BibitemShut {NoStop}%
\bibitem [{\citenamefont {Parrish}\ and\ \citenamefont {McMahon}(2019)}]{parrish2019quantumb}%
  \BibitemOpen
  \bibfield  {author} {\bibinfo {author} {\bibfnamefont {R.~M.}\ \bibnamefont {Parrish}}\ and\ \bibinfo {author} {\bibfnamefont {P.~L.}\ \bibnamefont {McMahon}},\ }\href@noop {} {\bibinfo {title} {Quantum filter diagonalization: Quantum eigendecomposition without full quantum phase estimation}} (\bibinfo {year} {2019}),\ \Eprint {https://arxiv.org/abs/1909.08925} {arXiv:1909.08925 [quant-ph]} \BibitemShut {NoStop}%
\bibitem [{\citenamefont {Motta}\ \emph {et~al.}(2019)\citenamefont {Motta}, \citenamefont {Sun}, \citenamefont {Tan}, \citenamefont {O’Rourke}, \citenamefont {Ye}, \citenamefont {Minnich}, \citenamefont {Brandão},\ and\ \citenamefont {Chan}}]{motta2019determining}%
  \BibitemOpen
  \bibfield  {author} {\bibinfo {author} {\bibfnamefont {M.}~\bibnamefont {Motta}}, \bibinfo {author} {\bibfnamefont {C.}~\bibnamefont {Sun}}, \bibinfo {author} {\bibfnamefont {A.~T.~K.}\ \bibnamefont {Tan}}, \bibinfo {author} {\bibfnamefont {M.~J.}\ \bibnamefont {O’Rourke}}, \bibinfo {author} {\bibfnamefont {E.}~\bibnamefont {Ye}}, \bibinfo {author} {\bibfnamefont {A.~J.}\ \bibnamefont {Minnich}}, \bibinfo {author} {\bibfnamefont {F.~G. S.~L.}\ \bibnamefont {Brandão}},\ and\ \bibinfo {author} {\bibfnamefont {G.~K.-L.}\ \bibnamefont {Chan}},\ }\bibfield  {title} {\bibinfo {title} {Determining eigenstates and thermal states on a quantum computer using quantum imaginary time evolution},\ }\href {https://doi.org/10.1038/s41567-019-0704-4} {\bibfield  {journal} {\bibinfo  {journal} {Nat. Phys.}\ }\textbf {\bibinfo {volume} {16}},\ \bibinfo {pages} {205–210} (\bibinfo {year} {2019})}\BibitemShut {NoStop}%
\bibitem [{\citenamefont {Stair}\ \emph {et~al.}(2020)\citenamefont {Stair}, \citenamefont {Huang},\ and\ \citenamefont {Evangelista}}]{stair2020multireference}%
  \BibitemOpen
  \bibfield  {author} {\bibinfo {author} {\bibfnamefont {N.~H.}\ \bibnamefont {Stair}}, \bibinfo {author} {\bibfnamefont {R.}~\bibnamefont {Huang}},\ and\ \bibinfo {author} {\bibfnamefont {F.~A.}\ \bibnamefont {Evangelista}},\ }\bibfield  {title} {\bibinfo {title} {A multireference quantum {Krylov} algorithm for strongly correlated electrons},\ }\href {https://doi.org/10.1021/acs.jctc.9b01125} {\bibfield  {journal} {\bibinfo  {journal} {J. Chem. Theory Comput.}\ }\textbf {\bibinfo {volume} {16}},\ \bibinfo {pages} {2236–2245} (\bibinfo {year} {2020})}\BibitemShut {NoStop}%
\bibitem [{\citenamefont {Yoshioka}\ \emph {et~al.}(2025)\citenamefont {Yoshioka}, \citenamefont {Amico}, \citenamefont {Kirby}, \citenamefont {Jurcevic}, \citenamefont {Dutt}, \citenamefont {Fuller}, \citenamefont {Garion}, \citenamefont {Haas}, \citenamefont {Hamamura}, \citenamefont {Ivrii}, \citenamefont {Majumdar}, \citenamefont {Minev}, \citenamefont {Motta}, \citenamefont {Pokharel}, \citenamefont {Rivero}, \citenamefont {Sharma}, \citenamefont {Wood}, \citenamefont {Javadi-Abhari},\ and\ \citenamefont {Mezzacapo}}]{yoshioka2025krylov}%
  \BibitemOpen
  \bibfield  {author} {\bibinfo {author} {\bibfnamefont {N.}~\bibnamefont {Yoshioka}}, \bibinfo {author} {\bibfnamefont {M.}~\bibnamefont {Amico}}, \bibinfo {author} {\bibfnamefont {W.}~\bibnamefont {Kirby}}, \bibinfo {author} {\bibfnamefont {P.}~\bibnamefont {Jurcevic}}, \bibinfo {author} {\bibfnamefont {A.}~\bibnamefont {Dutt}}, \bibinfo {author} {\bibfnamefont {B.}~\bibnamefont {Fuller}}, \bibinfo {author} {\bibfnamefont {S.}~\bibnamefont {Garion}}, \bibinfo {author} {\bibfnamefont {H.}~\bibnamefont {Haas}}, \bibinfo {author} {\bibfnamefont {I.}~\bibnamefont {Hamamura}}, \bibinfo {author} {\bibfnamefont {A.}~\bibnamefont {Ivrii}}, \bibinfo {author} {\bibfnamefont {R.}~\bibnamefont {Majumdar}}, \bibinfo {author} {\bibfnamefont {Z.}~\bibnamefont {Minev}}, \bibinfo {author} {\bibfnamefont {M.}~\bibnamefont {Motta}}, \bibinfo {author} {\bibfnamefont {B.}~\bibnamefont {Pokharel}}, \bibinfo {author} {\bibfnamefont {P.}~\bibnamefont {Rivero}}, \bibinfo {author} {\bibfnamefont {K.}~\bibnamefont {Sharma}}, \bibinfo {author} {\bibfnamefont {C.~J.}\ \bibnamefont {Wood}}, \bibinfo {author} {\bibfnamefont {A.}~\bibnamefont {Javadi-Abhari}},\ and\ \bibinfo {author} {\bibfnamefont {A.}~\bibnamefont {Mezzacapo}},\ }\bibfield  {title} {\bibinfo {title} {Krylov diagonalization of large many-body {Hamiltonians} on a quantum processor},\ }\href {https://doi.org/10.1038/s41467-025-59716-z} {\bibfield  {journal} {\bibinfo  {journal} {Nat. Commun.}\ }\textbf {\bibinfo {volume} {16}},\ \bibinfo {pages} {5014} (\bibinfo {year} {2025})}\BibitemShut {NoStop}%
\bibitem [{\citenamefont {Cortes}\ and\ \citenamefont {Gray}(2022)}]{cortes2022quantum}%
  \BibitemOpen
  \bibfield  {author} {\bibinfo {author} {\bibfnamefont {C.~L.}\ \bibnamefont {Cortes}}\ and\ \bibinfo {author} {\bibfnamefont {S.~K.}\ \bibnamefont {Gray}},\ }\bibfield  {title} {\bibinfo {title} {Quantum {Krylov} subspace algorithms for ground- and excited-state energy estimation},\ }\href {https://doi.org/10.1103/PhysRevA.105.022417} {\bibfield  {journal} {\bibinfo  {journal} {Phys. Rev. A}\ }\textbf {\bibinfo {volume} {105}},\ \bibinfo {pages} {022417} (\bibinfo {year} {2022})}\BibitemShut {NoStop}%
\bibitem [{\citenamefont {Oumarou}\ \emph {et~al.}(2025)\citenamefont {Oumarou}, \citenamefont {Ollitrault}, \citenamefont {Cortes}, \citenamefont {Scheurer}, \citenamefont {Parrish},\ and\ \citenamefont {Gogolin}}]{oumarou2025molecular}%
  \BibitemOpen
  \bibfield  {author} {\bibinfo {author} {\bibfnamefont {O.}~\bibnamefont {Oumarou}}, \bibinfo {author} {\bibfnamefont {P.~J.}\ \bibnamefont {Ollitrault}}, \bibinfo {author} {\bibfnamefont {C.~L.}\ \bibnamefont {Cortes}}, \bibinfo {author} {\bibfnamefont {M.}~\bibnamefont {Scheurer}}, \bibinfo {author} {\bibfnamefont {R.~M.}\ \bibnamefont {Parrish}},\ and\ \bibinfo {author} {\bibfnamefont {C.}~\bibnamefont {Gogolin}},\ }\bibfield  {title} {\bibinfo {title} {Molecular properties from quantum {Krylov} subspace diagonalization},\ }\href {https://doi.org/10.1021/acs.jctc.5c00194} {\bibfield  {journal} {\bibinfo  {journal} {J. Chem. Theory Comput.}\ }\textbf {\bibinfo {volume} {21}},\ \bibinfo {pages} {4543–4552} (\bibinfo {year} {2025})}\BibitemShut {NoStop}%
\bibitem [{\citenamefont {Poulin}\ and\ \citenamefont {Wocjan}(2009)}]{poulin2009preparing}%
  \BibitemOpen
  \bibfield  {author} {\bibinfo {author} {\bibfnamefont {D.}~\bibnamefont {Poulin}}\ and\ \bibinfo {author} {\bibfnamefont {P.}~\bibnamefont {Wocjan}},\ }\bibfield  {title} {\bibinfo {title} {Preparing ground states of quantum many-body systems on a quantum computer},\ }\href {https://doi.org/10.1103/PhysRevLett.102.130503} {\bibfield  {journal} {\bibinfo  {journal} {Phys. Rev. Lett.}\ }\textbf {\bibinfo {volume} {102}},\ \bibinfo {pages} {130503} (\bibinfo {year} {2009})}\BibitemShut {NoStop}%
\bibitem [{\citenamefont {Ge}\ \emph {et~al.}(2019)\citenamefont {Ge}, \citenamefont {Tura},\ and\ \citenamefont {Cirac}}]{ge2019faster}%
  \BibitemOpen
  \bibfield  {author} {\bibinfo {author} {\bibfnamefont {Y.}~\bibnamefont {Ge}}, \bibinfo {author} {\bibfnamefont {J.}~\bibnamefont {Tura}},\ and\ \bibinfo {author} {\bibfnamefont {J.~I.}\ \bibnamefont {Cirac}},\ }\bibfield  {title} {\bibinfo {title} {Faster ground state preparation and high-precision ground energy estimation with fewer qubits},\ }\href {https://doi.org/10.1063/1.5027484} {\bibfield  {journal} {\bibinfo  {journal} {J. Math. Phys.}\ }\textbf {\bibinfo {volume} {60}},\ \bibinfo {pages} {022202} (\bibinfo {year} {2019})}\BibitemShut {NoStop}%
\bibitem [{\citenamefont {Lin}\ and\ \citenamefont {Tong}(2020{\natexlab{a}})}]{lin2020nearoptimal}%
  \BibitemOpen
  \bibfield  {author} {\bibinfo {author} {\bibfnamefont {L.}~\bibnamefont {Lin}}\ and\ \bibinfo {author} {\bibfnamefont {Y.}~\bibnamefont {Tong}},\ }\bibfield  {title} {\bibinfo {title} {Near-optimal ground state preparation},\ }\href {https://doi.org/10.22331/q-2020-12-14-372} {\bibfield  {journal} {\bibinfo  {journal} {Quantum}\ }\textbf {\bibinfo {volume} {4}},\ \bibinfo {pages} {372} (\bibinfo {year} {2020}{\natexlab{a}})}\BibitemShut {NoStop}%
\bibitem [{\citenamefont {Dong}\ \emph {et~al.}(2022)\citenamefont {Dong}, \citenamefont {Lin},\ and\ \citenamefont {Tong}}]{dong2022ground-state}%
  \BibitemOpen
  \bibfield  {author} {\bibinfo {author} {\bibfnamefont {Y.}~\bibnamefont {Dong}}, \bibinfo {author} {\bibfnamefont {L.}~\bibnamefont {Lin}},\ and\ \bibinfo {author} {\bibfnamefont {Y.}~\bibnamefont {Tong}},\ }\bibfield  {title} {\bibinfo {title} {Ground-state preparation and energy estimation on early fault-tolerant quantum computers via quantum eigenvalue transformation of unitary matrices},\ }\href {https://doi.org/10.1103/PRXQuantum.3.040305} {\bibfield  {journal} {\bibinfo  {journal} {PRX Quantum}\ }\textbf {\bibinfo {volume} {3}},\ \bibinfo {pages} {040305} (\bibinfo {year} {2022})}\BibitemShut {NoStop}%
\bibitem [{\citenamefont {Lin}\ and\ \citenamefont {Tong}(2022)}]{lin2022heisenberg}%
  \BibitemOpen
  \bibfield  {author} {\bibinfo {author} {\bibfnamefont {L.}~\bibnamefont {Lin}}\ and\ \bibinfo {author} {\bibfnamefont {Y.}~\bibnamefont {Tong}},\ }\bibfield  {title} {\bibinfo {title} {Heisenberg-limited ground-state energy estimation for early fault-tolerant quantum computers},\ }\href {https://doi.org/10.1103/PRXQuantum.3.010318} {\bibfield  {journal} {\bibinfo  {journal} {PRX Quantum}\ }\textbf {\bibinfo {volume} {3}},\ \bibinfo {pages} {010318} (\bibinfo {year} {2022})}\BibitemShut {NoStop}%
\bibitem [{\citenamefont {Wang}\ \emph {et~al.}(2023)\citenamefont {Wang}, \citenamefont {Fran{\c{c}}a}, \citenamefont {Zhang}, \citenamefont {Zhu},\ and\ \citenamefont {Johnson}}]{wang2023quantum}%
  \BibitemOpen
  \bibfield  {author} {\bibinfo {author} {\bibfnamefont {G.}~\bibnamefont {Wang}}, \bibinfo {author} {\bibfnamefont {D.~S.}\ \bibnamefont {Fran{\c{c}}a}}, \bibinfo {author} {\bibfnamefont {R.}~\bibnamefont {Zhang}}, \bibinfo {author} {\bibfnamefont {S.}~\bibnamefont {Zhu}},\ and\ \bibinfo {author} {\bibfnamefont {P.~D.}\ \bibnamefont {Johnson}},\ }\bibfield  {title} {\bibinfo {title} {Quantum algorithm for ground state energy estimation using circuit depth with exponentially improved dependence on precision},\ }\href {https://doi.org/10.22331/q-2023-11-06-1167} {\bibfield  {journal} {\bibinfo  {journal} {Quantum}\ }\textbf {\bibinfo {volume} {7}},\ \bibinfo {pages} {1167} (\bibinfo {year} {2023})}\BibitemShut {NoStop}%
\bibitem [{\citenamefont {Ding}\ and\ \citenamefont {Lin}(2023{\natexlab{a}})}]{ding2023even}%
  \BibitemOpen
  \bibfield  {author} {\bibinfo {author} {\bibfnamefont {Z.}~\bibnamefont {Ding}}\ and\ \bibinfo {author} {\bibfnamefont {L.}~\bibnamefont {Lin}},\ }\bibfield  {title} {\bibinfo {title} {Even shorter quantum circuit for phase estimation on early fault-tolerant quantum computers with applications to ground-state energy estimation},\ }\href {https://link.aps.org/doi/10.1103/PRXQuantum.4.020331} {\bibfield  {journal} {\bibinfo  {journal} {PRX Quantum}\ }\textbf {\bibinfo {volume} {4}},\ \bibinfo {pages} {020331} (\bibinfo {year} {2023}{\natexlab{a}})}\BibitemShut {NoStop}%
\bibitem [{\citenamefont {Babbush}\ \emph {et~al.}(2018)\citenamefont {Babbush}, \citenamefont {Gidney}, \citenamefont {Berry}, \citenamefont {Wiebe}, \citenamefont {McClean}, \citenamefont {Paler}, \citenamefont {Fowler},\ and\ \citenamefont {Neven}}]{babbush2018encoding}%
  \BibitemOpen
  \bibfield  {author} {\bibinfo {author} {\bibfnamefont {R.}~\bibnamefont {Babbush}}, \bibinfo {author} {\bibfnamefont {C.}~\bibnamefont {Gidney}}, \bibinfo {author} {\bibfnamefont {D.~W.}\ \bibnamefont {Berry}}, \bibinfo {author} {\bibfnamefont {N.}~\bibnamefont {Wiebe}}, \bibinfo {author} {\bibfnamefont {J.}~\bibnamefont {McClean}}, \bibinfo {author} {\bibfnamefont {A.}~\bibnamefont {Paler}}, \bibinfo {author} {\bibfnamefont {A.}~\bibnamefont {Fowler}},\ and\ \bibinfo {author} {\bibfnamefont {H.}~\bibnamefont {Neven}},\ }\bibfield  {title} {\bibinfo {title} {Encoding electronic spectra in quantum circuits with linear {T} complexity},\ }\href {https://doi.org/10.1103/PhysRevX.8.041015} {\bibfield  {journal} {\bibinfo  {journal} {Phys. Rev. X}\ }\textbf {\bibinfo {volume} {8}},\ \bibinfo {pages} {041015} (\bibinfo {year} {2018})}\BibitemShut {NoStop}%
\bibitem [{\citenamefont {Lin}\ and\ \citenamefont {Tong}(2020{\natexlab{b}})}]{lin2020optimal}%
  \BibitemOpen
  \bibfield  {author} {\bibinfo {author} {\bibfnamefont {L.}~\bibnamefont {Lin}}\ and\ \bibinfo {author} {\bibfnamefont {Y.}~\bibnamefont {Tong}},\ }\bibfield  {title} {\bibinfo {title} {Optimal polynomial based quantum eigenstate filtering with application to solving quantum linear systems},\ }\href {https://doi.org/10.22331/q-2020-11-11-361} {\bibfield  {journal} {\bibinfo  {journal} {Quantum}\ }\textbf {\bibinfo {volume} {4}},\ \bibinfo {pages} {361} (\bibinfo {year} {2020}{\natexlab{b}})}\BibitemShut {NoStop}%
\bibitem [{\citenamefont {Patil}\ and\ \citenamefont {Glaser}(2026)}]{patil2026efficient}%
  \BibitemOpen
  \bibfield  {author} {\bibinfo {author} {\bibfnamefont {S.}~\bibnamefont {Patil}}\ and\ \bibinfo {author} {\bibfnamefont {N.}~\bibnamefont {Glaser}},\ }\href@noop {} {\bibinfo {title} {Efficient targeting of arbitrary excited states with quantum inverse power iteration through filtering polynomials}} (\bibinfo {year} {2026}),\ \Eprint {https://arxiv.org/abs/2606.28255} {arXiv:2606.28255 [quant-ph]} \BibitemShut {NoStop}%
\bibitem [{\citenamefont {Kitaev}(1995)}]{kitaev1995quantum}%
  \BibitemOpen
  \bibfield  {author} {\bibinfo {author} {\bibfnamefont {A.~Y.}\ \bibnamefont {Kitaev}},\ }\href@noop {} {\bibinfo {title} {Quantum measurements and the abelian stabilizer problem}} (\bibinfo {year} {1995}),\ \Eprint {https://arxiv.org/abs/quant-ph/9511026} {arXiv:quant-ph/9511026 [quant-ph]} \BibitemShut {NoStop}%
\bibitem [{\citenamefont {Cleve}\ \emph {et~al.}(1998)\citenamefont {Cleve}, \citenamefont {Ekert}, \citenamefont {Macchiavello},\ and\ \citenamefont {Mosca}}]{cleve1998quantum}%
  \BibitemOpen
  \bibfield  {author} {\bibinfo {author} {\bibfnamefont {R.}~\bibnamefont {Cleve}}, \bibinfo {author} {\bibfnamefont {A.}~\bibnamefont {Ekert}}, \bibinfo {author} {\bibfnamefont {C.}~\bibnamefont {Macchiavello}},\ and\ \bibinfo {author} {\bibfnamefont {M.}~\bibnamefont {Mosca}},\ }\bibfield  {title} {\bibinfo {title} {Quantum algorithms revisited},\ }\href {https://doi.org/10.1098/rspa.1998.0164} {\bibfield  {journal} {\bibinfo  {journal} {Proc. R. Soc. Lond., A: Math. Phys. Eng. Sci.}\ }\textbf {\bibinfo {volume} {454}},\ \bibinfo {pages} {339–354} (\bibinfo {year} {1998})}\BibitemShut {NoStop}%
\bibitem [{\citenamefont {Nielsen}\ and\ \citenamefont {Chuang}(2010)}]{nielsen2010quantum}%
  \BibitemOpen
  \bibfield  {author} {\bibinfo {author} {\bibfnamefont {M.~A.}\ \bibnamefont {Nielsen}}\ and\ \bibinfo {author} {\bibfnamefont {I.~L.}\ \bibnamefont {Chuang}},\ }\href {https://doi.org/10.1017/cbo9780511976667} {\emph {\bibinfo {title} {Quantum Computation and Quantum Information: 10th Anniversary Edition}}}\ (\bibinfo  {publisher} {Cambridge University Press},\ \bibinfo {year} {2010})\BibitemShut {NoStop}%
\bibitem [{\citenamefont {Abrams}\ and\ \citenamefont {Lloyd}(1999)}]{abrams1999quantum}%
  \BibitemOpen
  \bibfield  {author} {\bibinfo {author} {\bibfnamefont {D.~S.}\ \bibnamefont {Abrams}}\ and\ \bibinfo {author} {\bibfnamefont {S.}~\bibnamefont {Lloyd}},\ }\bibfield  {title} {\bibinfo {title} {Quantum algorithm providing exponential speed increase for finding eigenvalues and eigenvectors},\ }\href {https://doi.org/10.1103/PhysRevLett.83.5162} {\bibfield  {journal} {\bibinfo  {journal} {Phys. Rev. Lett.}\ }\textbf {\bibinfo {volume} {83}},\ \bibinfo {pages} {5162} (\bibinfo {year} {1999})}\BibitemShut {NoStop}%
\bibitem [{\citenamefont {Somma}\ and\ \citenamefont {Boixo}(2013)}]{somma2013spectral}%
  \BibitemOpen
  \bibfield  {author} {\bibinfo {author} {\bibfnamefont {R.~D.}\ \bibnamefont {Somma}}\ and\ \bibinfo {author} {\bibfnamefont {S.}~\bibnamefont {Boixo}},\ }\bibfield  {title} {\bibinfo {title} {Spectral gap amplification},\ }\href {https://doi.org/10.1137/120871997} {\bibfield  {journal} {\bibinfo  {journal} {SIAM J. Comput.}\ }\textbf {\bibinfo {volume} {42}},\ \bibinfo {pages} {593} (\bibinfo {year} {2013})}\BibitemShut {NoStop}%
\bibitem [{\citenamefont {Thibodeau}\ and\ \citenamefont {Clark}(2023)}]{thibodeau2023nearly}%
  \BibitemOpen
  \bibfield  {author} {\bibinfo {author} {\bibfnamefont {M.}~\bibnamefont {Thibodeau}}\ and\ \bibinfo {author} {\bibfnamefont {B.~K.}\ \bibnamefont {Clark}},\ }\bibfield  {title} {\bibinfo {title} {Nearly-frustration-free ground state preparation},\ }\href {https://doi.org/10.22331/q-2023-08-16-1084} {\bibfield  {journal} {\bibinfo  {journal} {Quantum}\ }\textbf {\bibinfo {volume} {7}},\ \bibinfo {pages} {1084} (\bibinfo {year} {2023})}\BibitemShut {NoStop}%
\bibitem [{\citenamefont {Childs}\ and\ \citenamefont {Wiebe}(2012)}]{childs2012hamiltonian}%
  \BibitemOpen
  \bibfield  {author} {\bibinfo {author} {\bibfnamefont {A.~M.}\ \bibnamefont {Childs}}\ and\ \bibinfo {author} {\bibfnamefont {N.}~\bibnamefont {Wiebe}},\ }\bibfield  {title} {\bibinfo {title} {Hamiltonian simulation using linear combinations of unitary operations},\ }\href {https://doi.org/10.26421/QIC12.11-12-1} {\bibfield  {journal} {\bibinfo  {journal} {Quantum Info. Comput.}\ }\textbf {\bibinfo {volume} {12}},\ \bibinfo {pages} {901–924} (\bibinfo {year} {2012})}\BibitemShut {NoStop}%
\bibitem [{\citenamefont {Low}\ and\ \citenamefont {Chuang}(2019)}]{low2019hamiltonian}%
  \BibitemOpen
  \bibfield  {author} {\bibinfo {author} {\bibfnamefont {G.~H.}\ \bibnamefont {Low}}\ and\ \bibinfo {author} {\bibfnamefont {I.~L.}\ \bibnamefont {Chuang}},\ }\bibfield  {title} {\bibinfo {title} {Hamiltonian simulation by qubitization},\ }\href {https://doi.org/10.22331/q-2019-07-12-163} {\bibfield  {journal} {\bibinfo  {journal} {Quantum}\ }\textbf {\bibinfo {volume} {3}},\ \bibinfo {pages} {163} (\bibinfo {year} {2019})}\BibitemShut {NoStop}%
\bibitem [{\citenamefont {Gily\'{e}n}\ \emph {et~al.}(2019)\citenamefont {Gily\'{e}n}, \citenamefont {Su}, \citenamefont {Low},\ and\ \citenamefont {Wiebe}}]{gilyen2019quantum}%
  \BibitemOpen
  \bibfield  {author} {\bibinfo {author} {\bibfnamefont {A.}~\bibnamefont {Gily\'{e}n}}, \bibinfo {author} {\bibfnamefont {Y.}~\bibnamefont {Su}}, \bibinfo {author} {\bibfnamefont {G.~H.}\ \bibnamefont {Low}},\ and\ \bibinfo {author} {\bibfnamefont {N.}~\bibnamefont {Wiebe}},\ }\bibfield  {title} {\bibinfo {title} {Quantum singular value transformation and beyond: Exponential improvements for quantum matrix arithmetics},\ }in\ \href {https://doi.org/10.1145/3313276.3316366} {\emph {\bibinfo {booktitle} {Proceedings of the 51st Annual ACM SIGACT Symposium on Theory of Computing}}},\ \bibinfo {series and number} {STOC 2019}\ (\bibinfo  {publisher} {Association for Computing Machinery},\ \bibinfo {address} {New York, NY, USA},\ \bibinfo {year} {2019})\ pp.\ \bibinfo {pages} {193--204}\BibitemShut {NoStop}%
\bibitem [{\citenamefont {\c{C}akan}\ \emph {et~al.}(2021)\citenamefont {\c{C}akan}, \citenamefont {Cirac},\ and\ \citenamefont {Ba\~nuls}}]{cakan2021approximating}%
  \BibitemOpen
  \bibfield  {author} {\bibinfo {author} {\bibfnamefont {A.}~\bibnamefont {\c{C}akan}}, \bibinfo {author} {\bibfnamefont {J.~I.}\ \bibnamefont {Cirac}},\ and\ \bibinfo {author} {\bibfnamefont {M.~C.}\ \bibnamefont {Ba\~nuls}},\ }\bibfield  {title} {\bibinfo {title} {Approximating the long time average of the density operator: Diagonal ensemble},\ }\href {https://doi.org/10.1103/PhysRevB.103.115113} {\bibfield  {journal} {\bibinfo  {journal} {Phys. Rev. B}\ }\textbf {\bibinfo {volume} {103}},\ \bibinfo {pages} {115113} (\bibinfo {year} {2021})}\BibitemShut {NoStop}%
\bibitem [{\citenamefont {Bak\'o}\ \emph {et~al.}(2026)\citenamefont {Bak\'o}, \citenamefont {Araki},\ and\ \citenamefont {Koczor}}]{bako2026exponential}%
  \BibitemOpen
  \bibfield  {author} {\bibinfo {author} {\bibfnamefont {B.}~\bibnamefont {Bak\'o}}, \bibinfo {author} {\bibfnamefont {T.}~\bibnamefont {Araki}},\ and\ \bibinfo {author} {\bibfnamefont {B.}~\bibnamefont {Koczor}},\ }\bibfield  {title} {\bibinfo {title} {Exponential distillation of dominant eigenproperties},\ }\href {https://doi.org/10.1103/bglh-9snd} {\bibfield  {journal} {\bibinfo  {journal} {PRX Quantum}\ }\textbf {\bibinfo {volume} {7}},\ \bibinfo {pages} {010334} (\bibinfo {year} {2026})}\BibitemShut {NoStop}%
\bibitem [{\citenamefont {Dutkiewicz}\ \emph {et~al.}(2026)\citenamefont {Dutkiewicz}, \citenamefont {White}, \citenamefont {Low}, \citenamefont {III}, \citenamefont {Harrigan}, \citenamefont {Kieferova}, \citenamefont {Babbush}, \citenamefont {Berry},\ and\ \citenamefont {Rubin}}]{dutkiewicz2026spectral}%
  \BibitemOpen
  \bibfield  {author} {\bibinfo {author} {\bibfnamefont {A.}~\bibnamefont {Dutkiewicz}}, \bibinfo {author} {\bibfnamefont {A.~F.}\ \bibnamefont {White}}, \bibinfo {author} {\bibfnamefont {G.~H.}\ \bibnamefont {Low}}, \bibinfo {author} {\bibfnamefont {A.~E.~D.}\ \bibnamefont {III}}, \bibinfo {author} {\bibfnamefont {M.~P.}\ \bibnamefont {Harrigan}}, \bibinfo {author} {\bibfnamefont {M.}~\bibnamefont {Kieferova}}, \bibinfo {author} {\bibfnamefont {R.}~\bibnamefont {Babbush}}, \bibinfo {author} {\bibfnamefont {D.~W.}\ \bibnamefont {Berry}},\ and\ \bibinfo {author} {\bibfnamefont {N.~C.}\ \bibnamefont {Rubin}},\ }\href@noop {} {\bibinfo {title} {Spectral amplification for ground-state energy estimation of electronic structure in first quantization}} (\bibinfo {year} {2026}),\ \Eprint {https://arxiv.org/abs/2607.15358} {arXiv:2607.15358 [quant-ph]} \BibitemShut {NoStop}%
\bibitem [{\citenamefont {Rendon}\ \emph {et~al.}(2024)\citenamefont {Rendon}, \citenamefont {Watkins},\ and\ \citenamefont {Wiebe}}]{rendon2024improved}%
  \BibitemOpen
  \bibfield  {author} {\bibinfo {author} {\bibfnamefont {G.}~\bibnamefont {Rendon}}, \bibinfo {author} {\bibfnamefont {J.}~\bibnamefont {Watkins}},\ and\ \bibinfo {author} {\bibfnamefont {N.}~\bibnamefont {Wiebe}},\ }\bibfield  {title} {\bibinfo {title} {Improved accuracy for {Trotter} simulations using {Chebyshev} interpolation},\ }\href {https://doi.org/10.22331/q-2024-02-26-1266} {\bibfield  {journal} {\bibinfo  {journal} {{Quantum}}\ }\textbf {\bibinfo {volume} {8}},\ \bibinfo {pages} {1266} (\bibinfo {year} {2024})}\BibitemShut {NoStop}%
\bibitem [{\citenamefont {Rendon}\ and\ \citenamefont {Johnson}(2023)}]{rendon2023lowdepth}%
  \BibitemOpen
  \bibfield  {author} {\bibinfo {author} {\bibfnamefont {G.}~\bibnamefont {Rendon}}\ and\ \bibinfo {author} {\bibfnamefont {P.~D.}\ \bibnamefont {Johnson}},\ }\href@noop {} {\bibinfo {title} {Low-depth {Gaussian} state energy estimation}} (\bibinfo {year} {2023}),\ \Eprint {https://arxiv.org/abs/2309.16790} {arXiv:2309.16790 [quant-ph]} \BibitemShut {NoStop}%
\bibitem [{\citenamefont {Chen}\ \emph {et~al.}(2025{\natexlab{a}})\citenamefont {Chen}, \citenamefont {Gilyén},\ and\ \citenamefont {de~Wolf}}]{chen2025quantum}%
  \BibitemOpen
  \bibfield  {author} {\bibinfo {author} {\bibfnamefont {Y.}~\bibnamefont {Chen}}, \bibinfo {author} {\bibfnamefont {A.}~\bibnamefont {Gilyén}},\ and\ \bibinfo {author} {\bibfnamefont {R.}~\bibnamefont {de~Wolf}},\ }\bibfield  {title} {\bibinfo {title} {A quantum speed-up for approximating the top eigenvectors of a matrix},\ }in\ \href {https://doi.org/10.1137/1.9781611978322.29} {\emph {\bibinfo {booktitle} {Proceedings of the 2025 Annual ACM-SIAM Symposium on Discrete Algorithms (SODA)}}}\ (\bibinfo {year} {2025})\ pp.\ \bibinfo {pages} {994--1036}\BibitemShut {NoStop}%
\bibitem [{\citenamefont {van Apeldoorn}(2021)}]{vanapeldoorn2021quantum}%
  \BibitemOpen
  \bibfield  {author} {\bibinfo {author} {\bibfnamefont {J.}~\bibnamefont {van Apeldoorn}},\ }\bibfield  {title} {\bibinfo {title} {Quantum probability oracles \& multidimensional amplitude estimation},\ }in\ \href {https://doi.org/10.4230/LIPIcs.TQC.2021.9} {\emph {\bibinfo {booktitle} {16th Conference on the Theory of Quantum Computation, Communication and Cryptography (TQC 2021)}}},\ \bibinfo {series} {Leibniz International Proceedings in Informatics (LIPIcs)}, Vol.\ \bibinfo {volume} {197},\ \bibinfo {editor} {edited by\ \bibinfo {editor} {\bibfnamefont {M.-H.}\ \bibnamefont {Hsieh}}}\ (\bibinfo  {publisher} {Schloss Dagstuhl -- Leibniz-Zentrum f{\"u}r Informatik},\ \bibinfo {address} {Dagstuhl, Germany},\ \bibinfo {year} {2021})\ pp.\ \bibinfo {pages} {9:1--9:11}\BibitemShut {NoStop}%
\bibitem [{\citenamefont {Brassard}\ \emph {et~al.}(2002)\citenamefont {Brassard}, \citenamefont {H{\o}yer}, \citenamefont {Mosca},\ and\ \citenamefont {Tapp}}]{brassard2002quantum}%
  \BibitemOpen
  \bibfield  {author} {\bibinfo {author} {\bibfnamefont {G.}~\bibnamefont {Brassard}}, \bibinfo {author} {\bibfnamefont {P.}~\bibnamefont {H{\o}yer}}, \bibinfo {author} {\bibfnamefont {M.}~\bibnamefont {Mosca}},\ and\ \bibinfo {author} {\bibfnamefont {A.}~\bibnamefont {Tapp}},\ }\bibfield  {title} {\bibinfo {title} {Quantum amplitude amplification and estimation},\ }in\ \href {https://doi.org/10.1090/conm/305/05215} {\emph {\bibinfo {booktitle} {Quantum computation and information}}},\ \bibinfo {series} {Contemporary Mathematics}, Vol.\ \bibinfo {volume} {305}\ (\bibinfo  {publisher} {American Mathematical Society},\ \bibinfo {address} {Providence, RI, USA},\ \bibinfo {year} {2002})\ pp.\ \bibinfo {pages} {53--74}\BibitemShut {NoStop}%
\bibitem [{\citenamefont {Ding}\ \emph {et~al.}(2024)\citenamefont {Ding}, \citenamefont {Li}, \citenamefont {Lin}, \citenamefont {Ni}, \citenamefont {Ying},\ and\ \citenamefont {Zhang}}]{ding2024quantum}%
  \BibitemOpen
  \bibfield  {author} {\bibinfo {author} {\bibfnamefont {Z.}~\bibnamefont {Ding}}, \bibinfo {author} {\bibfnamefont {H.}~\bibnamefont {Li}}, \bibinfo {author} {\bibfnamefont {L.}~\bibnamefont {Lin}}, \bibinfo {author} {\bibfnamefont {H.}~\bibnamefont {Ni}}, \bibinfo {author} {\bibfnamefont {L.}~\bibnamefont {Ying}},\ and\ \bibinfo {author} {\bibfnamefont {R.}~\bibnamefont {Zhang}},\ }\bibfield  {title} {\bibinfo {title} {Quantum multiple eigenvalue {Gaussian} filtered search: an efficient and versatile quantum phase estimation method},\ }\href {https://doi.org/10.22331/q-2024-10-02-1487} {\bibfield  {journal} {\bibinfo  {journal} {Quantum}\ }\textbf {\bibinfo {volume} {8}},\ \bibinfo {pages} {1487} (\bibinfo {year} {2024})}\BibitemShut {NoStop}%
\bibitem [{\citenamefont {Somma}(2019)}]{somma2019quantum}%
  \BibitemOpen
  \bibfield  {author} {\bibinfo {author} {\bibfnamefont {R.~D.}\ \bibnamefont {Somma}},\ }\bibfield  {title} {\bibinfo {title} {Quantum eigenvalue estimation via time series analysis},\ }\href {https://doi.org/10.1088/1367-2630/ab5c60} {\bibfield  {journal} {\bibinfo  {journal} {New J. Phys.}\ }\textbf {\bibinfo {volume} {21}},\ \bibinfo {pages} {123025} (\bibinfo {year} {2019})}\BibitemShut {NoStop}%
\bibitem [{\citenamefont {Trotter}(1959)}]{trotter1959product}%
  \BibitemOpen
  \bibfield  {author} {\bibinfo {author} {\bibfnamefont {H.~F.}\ \bibnamefont {Trotter}},\ }\bibfield  {title} {\bibinfo {title} {On the product of semi-groups of operators},\ }\href {https://doi.org/10.1090/S0002-9939-1959-0108732-6} {\bibfield  {journal} {\bibinfo  {journal} {Proc. Am. Math. Soc.}\ }\textbf {\bibinfo {volume} {10}},\ \bibinfo {pages} {545–551} (\bibinfo {year} {1959})}\BibitemShut {NoStop}%
\bibitem [{\citenamefont {Suzuki}(1976)}]{suzuki1976generalized}%
  \BibitemOpen
  \bibfield  {author} {\bibinfo {author} {\bibfnamefont {M.}~\bibnamefont {Suzuki}},\ }\bibfield  {title} {\bibinfo {title} {Generalized {Trotter's} formula and systematic approximants of exponential operators and inner derivations with applications to many-body problems},\ }\href {https://doi.org/10.1007/BF01609348} {\bibfield  {journal} {\bibinfo  {journal} {Commun. Math. Phys.}\ }\textbf {\bibinfo {volume} {51}},\ \bibinfo {pages} {183–190} (\bibinfo {year} {1976})}\BibitemShut {NoStop}%
\bibitem [{\citenamefont {Suzuki}(1991)}]{suzuki1991general}%
  \BibitemOpen
  \bibfield  {author} {\bibinfo {author} {\bibfnamefont {M.}~\bibnamefont {Suzuki}},\ }\bibfield  {title} {\bibinfo {title} {General theory of fractal path integrals with applications to many-body theories and statistical physics},\ }\href {https://doi.org/10.1063/1.529425} {\bibfield  {journal} {\bibinfo  {journal} {J. Math. Phys.}\ }\textbf {\bibinfo {volume} {32}},\ \bibinfo {pages} {400–407} (\bibinfo {year} {1991})}\BibitemShut {NoStop}%
\bibitem [{\citenamefont {Lloyd}(1996)}]{lloyd1996universal}%
  \BibitemOpen
  \bibfield  {author} {\bibinfo {author} {\bibfnamefont {S.}~\bibnamefont {Lloyd}},\ }\bibfield  {title} {\bibinfo {title} {Universal quantum simulators},\ }\href {https://doi.org/10.1126/science.273.5278.1073} {\bibfield  {journal} {\bibinfo  {journal} {Science}\ }\textbf {\bibinfo {volume} {273}},\ \bibinfo {pages} {1073–1078} (\bibinfo {year} {1996})}\BibitemShut {NoStop}%
\bibitem [{\citenamefont {Campbell}(2019)}]{campbell2019random}%
  \BibitemOpen
  \bibfield  {author} {\bibinfo {author} {\bibfnamefont {E.}~\bibnamefont {Campbell}},\ }\bibfield  {title} {\bibinfo {title} {Random compiler for fast {Hamiltonian} simulation},\ }\href {https://doi.org/10.1103/PhysRevLett.123.070503} {\bibfield  {journal} {\bibinfo  {journal} {Phys. Rev. Lett.}\ }\textbf {\bibinfo {volume} {123}},\ \bibinfo {pages} {070503} (\bibinfo {year} {2019})}\BibitemShut {NoStop}%
\bibitem [{\citenamefont {Kiumi}\ and\ \citenamefont {Koczor}(2025)}]{kiumi2025te}%
  \BibitemOpen
  \bibfield  {author} {\bibinfo {author} {\bibfnamefont {C.}~\bibnamefont {Kiumi}}\ and\ \bibinfo {author} {\bibfnamefont {B.}~\bibnamefont {Koczor}},\ }\bibfield  {title} {\bibinfo {title} {{TE-PAI}: exact time evolution by sampling random circuits},\ }\href {https://doi.org/10.1088/2058-9565/ae1160} {\bibfield  {journal} {\bibinfo  {journal} {Quantum Sci. Technol.}\ }\textbf {\bibinfo {volume} {10}},\ \bibinfo {pages} {045071} (\bibinfo {year} {2025})}\BibitemShut {NoStop}%
\bibitem [{\citenamefont {Li}\ and\ \citenamefont {Benjamin}(2017)}]{li2017efficient}%
  \BibitemOpen
  \bibfield  {author} {\bibinfo {author} {\bibfnamefont {Y.}~\bibnamefont {Li}}\ and\ \bibinfo {author} {\bibfnamefont {S.~C.}\ \bibnamefont {Benjamin}},\ }\bibfield  {title} {\bibinfo {title} {Efficient variational quantum simulator incorporating active error minimization},\ }\href {https://doi.org/10.1103/PhysRevX.7.021050} {\bibfield  {journal} {\bibinfo  {journal} {Phys. Rev. X}\ }\textbf {\bibinfo {volume} {7}},\ \bibinfo {pages} {021050} (\bibinfo {year} {2017})}\BibitemShut {NoStop}%
\bibitem [{\citenamefont {Yuan}\ \emph {et~al.}(2019)\citenamefont {Yuan}, \citenamefont {Endo}, \citenamefont {Zhao}, \citenamefont {Li},\ and\ \citenamefont {Benjamin}}]{yuan2019theory}%
  \BibitemOpen
  \bibfield  {author} {\bibinfo {author} {\bibfnamefont {X.}~\bibnamefont {Yuan}}, \bibinfo {author} {\bibfnamefont {S.}~\bibnamefont {Endo}}, \bibinfo {author} {\bibfnamefont {Q.}~\bibnamefont {Zhao}}, \bibinfo {author} {\bibfnamefont {Y.}~\bibnamefont {Li}},\ and\ \bibinfo {author} {\bibfnamefont {S.~C.}\ \bibnamefont {Benjamin}},\ }\bibfield  {title} {\bibinfo {title} {Theory of variational quantum simulation},\ }\href {https://doi.org/10.22331/q-2019-10-07-191} {\bibfield  {journal} {\bibinfo  {journal} {{Quantum}}\ }\textbf {\bibinfo {volume} {3}},\ \bibinfo {pages} {191} (\bibinfo {year} {2019})}\BibitemShut {NoStop}%
\bibitem [{\citenamefont {Trefethen}(2022)}]{trefethen2022exactness}%
  \BibitemOpen
  \bibfield  {author} {\bibinfo {author} {\bibfnamefont {L.~N.}\ \bibnamefont {Trefethen}},\ }\bibfield  {title} {\bibinfo {title} {Exactness of quadrature formulas},\ }\href {https://doi.org/10.1137/20M1389522} {\bibfield  {journal} {\bibinfo  {journal} {SIAM Rev.}\ }\textbf {\bibinfo {volume} {64}},\ \bibinfo {pages} {132} (\bibinfo {year} {2022})}\BibitemShut {NoStop}%
\bibitem [{\citenamefont {McArdle}\ \emph {et~al.}(2026)\citenamefont {McArdle}, \citenamefont {Gily\'en},\ and\ \citenamefont {Berta}}]{mcardle2026quantum}%
  \BibitemOpen
  \bibfield  {author} {\bibinfo {author} {\bibfnamefont {S.}~\bibnamefont {McArdle}}, \bibinfo {author} {\bibfnamefont {A.}~\bibnamefont {Gily\'en}},\ and\ \bibinfo {author} {\bibfnamefont {M.}~\bibnamefont {Berta}},\ }\bibfield  {title} {\bibinfo {title} {Quantum state preparation without coherent arithmetic},\ }\href {https://doi.org/10.1103/ntvs-c48s} {\bibfield  {journal} {\bibinfo  {journal} {Phys. Rev. Lett.}\ }\textbf {\bibinfo {volume} {136}},\ \bibinfo {pages} {240603} (\bibinfo {year} {2026})}\BibitemShut {NoStop}%
\bibitem [{\citenamefont {Iaconis}\ \emph {et~al.}(2024)\citenamefont {Iaconis}, \citenamefont {Johri},\ and\ \citenamefont {Zhu}}]{iaconis2024quantum}%
  \BibitemOpen
  \bibfield  {author} {\bibinfo {author} {\bibfnamefont {J.}~\bibnamefont {Iaconis}}, \bibinfo {author} {\bibfnamefont {S.}~\bibnamefont {Johri}},\ and\ \bibinfo {author} {\bibfnamefont {E.~Y.}\ \bibnamefont {Zhu}},\ }\bibfield  {title} {\bibinfo {title} {Quantum state preparation of normal distributions using matrix product states},\ }\href {https://doi.org/10.1038/s41534-024-00805-0} {\bibfield  {journal} {\bibinfo  {journal} {npj Quantum Inf.}\ }\textbf {\bibinfo {volume} {10}},\ \bibinfo {pages} {15} (\bibinfo {year} {2024})}\BibitemShut {NoStop}%
\bibitem [{\citenamefont {Zoufal}\ \emph {et~al.}(2019)\citenamefont {Zoufal}, \citenamefont {Lucchi},\ and\ \citenamefont {Woerner}}]{zoufal2019quantum}%
  \BibitemOpen
  \bibfield  {author} {\bibinfo {author} {\bibfnamefont {C.}~\bibnamefont {Zoufal}}, \bibinfo {author} {\bibfnamefont {A.}~\bibnamefont {Lucchi}},\ and\ \bibinfo {author} {\bibfnamefont {S.}~\bibnamefont {Woerner}},\ }\bibfield  {title} {\bibinfo {title} {Quantum generative adversarial networks for learning and loading random distributions},\ }\href {https://doi.org/10.1038/s41534-019-0223-2} {\bibfield  {journal} {\bibinfo  {journal} {npj Quantum Inf.}\ }\textbf {\bibinfo {volume} {5}},\ \bibinfo {pages} {103} (\bibinfo {year} {2019})}\BibitemShut {NoStop}%
\bibitem [{\citenamefont {Coppersmith}(1994)}]{coppersmith1994approximate}%
  \BibitemOpen
  \bibfield  {author} {\bibinfo {author} {\bibfnamefont {D.}~\bibnamefont {Coppersmith}},\ }\href {https://arxiv.org/abs/quant-ph/0201067} {\emph {\bibinfo {title} {An approximate {Fourier} transform useful in quantum factoring}}},\ \bibinfo {type} {Tech. Rep.}\ (\bibinfo  {institution} {IBM Research Division},\ \bibinfo {year} {1994})\BibitemShut {NoStop}%
\bibitem [{\citenamefont {Higham}(2008)}]{higham2008functions}%
  \BibitemOpen
  \bibfield  {author} {\bibinfo {author} {\bibfnamefont {N.~J.}\ \bibnamefont {Higham}},\ }\href {https://doi.org/10.1137/1.9780898717778} {\emph {\bibinfo {title} {Functions of Matrices: Theory and Computation}}}\ (\bibinfo  {publisher} {Society for Industrial and Applied Mathematics},\ \bibinfo {year} {2008})\BibitemShut {NoStop}%
\bibitem [{\citenamefont {Mande}\ and\ \citenamefont {Wolf}(2026)}]{mande2026tight}%
  \BibitemOpen
  \bibfield  {author} {\bibinfo {author} {\bibfnamefont {N.~S.}\ \bibnamefont {Mande}}\ and\ \bibinfo {author} {\bibfnamefont {R.~d.}\ \bibnamefont {Wolf}},\ }\bibfield  {title} {\bibinfo {title} {Tight bounds for quantum phase estimation and related problems},\ }\href {https://doi.org/10.22331/q-2026-06-15-2140} {\bibfield  {journal} {\bibinfo  {journal} {Quantum}\ }\textbf {\bibinfo {volume} {10}},\ \bibinfo {pages} {2140} (\bibinfo {year} {2026})}\BibitemShut {NoStop}%
\bibitem [{\citenamefont {Nagaj}\ \emph {et~al.}(2009)\citenamefont {Nagaj}, \citenamefont {Wocjan},\ and\ \citenamefont {Zhang}}]{nagaj2009fast}%
  \BibitemOpen
  \bibfield  {author} {\bibinfo {author} {\bibfnamefont {D.}~\bibnamefont {Nagaj}}, \bibinfo {author} {\bibfnamefont {P.}~\bibnamefont {Wocjan}},\ and\ \bibinfo {author} {\bibfnamefont {Y.}~\bibnamefont {Zhang}},\ }\bibfield  {title} {\bibinfo {title} {Fast amplification of {QMA}},\ }\href {https://doi.org/10.26421/qic9.11-12-8} {\bibfield  {journal} {\bibinfo  {journal} {Quantum Inf. Comput.}\ }\textbf {\bibinfo {volume} {9}},\ \bibinfo {pages} {1053–1068} (\bibinfo {year} {2009})}\BibitemShut {NoStop}%
\bibitem [{\citenamefont {Huang}\ \emph {et~al.}(2026)\citenamefont {Huang}, \citenamefont {Boyd}, \citenamefont {Anselmetti}, \citenamefont {Degroote}, \citenamefont {Moll}, \citenamefont {Santagati}, \citenamefont {Streif}, \citenamefont {Ries}, \citenamefont {Marti-Dafcik}, \citenamefont {Jnane}, \citenamefont {Simon}, \citenamefont {Wiebe}, \citenamefont {Bromley},\ and\ \citenamefont {Koczor}}]{huang2026fullqubit}%
  \BibitemOpen
  \bibfield  {author} {\bibinfo {author} {\bibfnamefont {P.-W.}\ \bibnamefont {Huang}}, \bibinfo {author} {\bibfnamefont {G.}~\bibnamefont {Boyd}}, \bibinfo {author} {\bibfnamefont {G.-L.~R.}\ \bibnamefont {Anselmetti}}, \bibinfo {author} {\bibfnamefont {M.}~\bibnamefont {Degroote}}, \bibinfo {author} {\bibfnamefont {N.}~\bibnamefont {Moll}}, \bibinfo {author} {\bibfnamefont {R.}~\bibnamefont {Santagati}}, \bibinfo {author} {\bibfnamefont {M.}~\bibnamefont {Streif}}, \bibinfo {author} {\bibfnamefont {B.}~\bibnamefont {Ries}}, \bibinfo {author} {\bibfnamefont {D.}~\bibnamefont {Marti-Dafcik}}, \bibinfo {author} {\bibfnamefont {H.}~\bibnamefont {Jnane}}, \bibinfo {author} {\bibfnamefont {S.}~\bibnamefont {Simon}}, \bibinfo {author} {\bibfnamefont {N.}~\bibnamefont {Wiebe}}, \bibinfo {author} {\bibfnamefont {T.~R.}\ \bibnamefont {Bromley}},\ and\ \bibinfo {author} {\bibfnamefont {B.}~\bibnamefont {Koczor}},\ }\bibfield  {title} {\bibinfo {title} {Fullqubit alchemist: Quantum algorithm for alchemical free energy calculations},\ }\href {https://doi.org/10.1038/s41534-026-01275-2} {\bibfield  {journal} {\bibinfo  {journal} {npj Quantum Inf.}\ } (\bibinfo {year} {2026})}\BibitemShut {NoStop}%
\bibitem [{\citenamefont {Belovs}(2019)}]{belovs2019quantum}%
  \BibitemOpen
  \bibfield  {author} {\bibinfo {author} {\bibfnamefont {A.}~\bibnamefont {Belovs}},\ }\bibfield  {title} {\bibinfo {title} {Quantum algorithms for classical probability distributions},\ }in\ \href {https://doi.org/10.4230/LIPIcs.ESA.2019.16} {\emph {\bibinfo {booktitle} {27th Annual European Symposium on Algorithms (ESA 2019)}}},\ \bibinfo {series} {Leibniz International Proceedings in Informatics (LIPIcs)}, Vol.\ \bibinfo {volume} {144},\ \bibinfo {editor} {edited by\ \bibinfo {editor} {\bibfnamefont {M.~A.}\ \bibnamefont {Bender}}, \bibinfo {editor} {\bibfnamefont {O.}~\bibnamefont {Svensson}},\ and\ \bibinfo {editor} {\bibfnamefont {G.}~\bibnamefont {Herman}}}\ (\bibinfo  {publisher} {Schloss Dagstuhl -- Leibniz-Zentrum f{\"u}r Informatik},\ \bibinfo {address} {Dagstuhl, Germany},\ \bibinfo {year} {2019})\ pp.\ \bibinfo {pages} {16:1--16:11}\BibitemShut {NoStop}%
\bibitem [{\citenamefont {Cirac}\ \emph {et~al.}(1999)\citenamefont {Cirac}, \citenamefont {Ekert},\ and\ \citenamefont {Macchiavello}}]{cirac1999optimal}%
  \BibitemOpen
  \bibfield  {author} {\bibinfo {author} {\bibfnamefont {J.~I.}\ \bibnamefont {Cirac}}, \bibinfo {author} {\bibfnamefont {A.~K.}\ \bibnamefont {Ekert}},\ and\ \bibinfo {author} {\bibfnamefont {C.}~\bibnamefont {Macchiavello}},\ }\bibfield  {title} {\bibinfo {title} {Optimal purification of single qubits},\ }\href {https://doi.org/10.1103/PhysRevLett.82.4344} {\bibfield  {journal} {\bibinfo  {journal} {Phys. Rev. Lett.}\ }\textbf {\bibinfo {volume} {82}},\ \bibinfo {pages} {4344} (\bibinfo {year} {1999})}\BibitemShut {NoStop}%
\bibitem [{\citenamefont {Li}\ \emph {et~al.}(2024)\citenamefont {Li}, \citenamefont {Fu}, \citenamefont {Isogawa}, \citenamefont {Silva},\ and\ \citenamefont {Chuang}}]{li2024optimal}%
  \BibitemOpen
  \bibfield  {author} {\bibinfo {author} {\bibfnamefont {Z.}~\bibnamefont {Li}}, \bibinfo {author} {\bibfnamefont {H.}~\bibnamefont {Fu}}, \bibinfo {author} {\bibfnamefont {T.}~\bibnamefont {Isogawa}}, \bibinfo {author} {\bibfnamefont {C.}~\bibnamefont {Silva}},\ and\ \bibinfo {author} {\bibfnamefont {I.}~\bibnamefont {Chuang}},\ }\href@noop {} {\bibinfo {title} {Optimal quantum purity amplification}} (\bibinfo {year} {2024}),\ \Eprint {https://arxiv.org/abs/2409.18167} {arXiv:2409.18167 [quant-ph]} \BibitemShut {NoStop}%
\bibitem [{\citenamefont {Grier}\ \emph {et~al.}(2025)\citenamefont {Grier}, \citenamefont {Leung}, \citenamefont {Li}, \citenamefont {Pashayan},\ and\ \citenamefont {Schaeffer}}]{grier2025streaming}%
  \BibitemOpen
  \bibfield  {author} {\bibinfo {author} {\bibfnamefont {D.}~\bibnamefont {Grier}}, \bibinfo {author} {\bibfnamefont {D.}~\bibnamefont {Leung}}, \bibinfo {author} {\bibfnamefont {Z.}~\bibnamefont {Li}}, \bibinfo {author} {\bibfnamefont {H.}~\bibnamefont {Pashayan}},\ and\ \bibinfo {author} {\bibfnamefont {L.}~\bibnamefont {Schaeffer}},\ }\href@noop {} {\bibinfo {title} {Streaming quantum state purification for general mixed states}} (\bibinfo {year} {2025}),\ \Eprint {https://arxiv.org/abs/2503.22644} {arXiv:2503.22644 [quant-ph]} \BibitemShut {NoStop}%
\bibitem [{\citenamefont {Gily\'{e}n}\ and\ \citenamefont {Li}(2020)}]{gilyen2020distributional}%
  \BibitemOpen
  \bibfield  {author} {\bibinfo {author} {\bibfnamefont {A.}~\bibnamefont {Gily\'{e}n}}\ and\ \bibinfo {author} {\bibfnamefont {T.}~\bibnamefont {Li}},\ }\bibfield  {title} {\bibinfo {title} {Distributional property testing in a quantum world},\ }in\ \href {https://doi.org/10.4230/LIPIcs.ITCS.2020.25} {\emph {\bibinfo {booktitle} {11th Innovations in Theoretical Computer Science Conference (ITCS 2020)}}},\ \bibinfo {series} {Leibniz International Proceedings in Informatics (LIPIcs)}, Vol.\ \bibinfo {volume} {151},\ \bibinfo {editor} {edited by\ \bibinfo {editor} {\bibfnamefont {T.}~\bibnamefont {Vidick}}}\ (\bibinfo  {publisher} {Schloss Dagstuhl -- Leibniz-Zentrum f{\"u}r Informatik},\ \bibinfo {address} {Dagstuhl, Germany},\ \bibinfo {year} {2020})\ pp.\ \bibinfo {pages} {25:1--25:19}\BibitemShut {NoStop}%
\bibitem [{\citenamefont {Tang}\ \emph {et~al.}(2025)\citenamefont {Tang}, \citenamefont {Wright},\ and\ \citenamefont {Zhandry}}]{tang2025conjugate}%
  \BibitemOpen
  \bibfield  {author} {\bibinfo {author} {\bibfnamefont {E.}~\bibnamefont {Tang}}, \bibinfo {author} {\bibfnamefont {J.}~\bibnamefont {Wright}},\ and\ \bibinfo {author} {\bibfnamefont {M.}~\bibnamefont {Zhandry}},\ }\href@noop {} {\bibinfo {title} {Conjugate queries can help}} (\bibinfo {year} {2025}),\ \Eprint {https://arxiv.org/abs/2510.07622} {arXiv:2510.07622 [quant-ph]} \BibitemShut {NoStop}%
\bibitem [{\citenamefont {Wang}\ and\ \citenamefont {Zhang}(2025)}]{wang2025quantum}%
  \BibitemOpen
  \bibfield  {author} {\bibinfo {author} {\bibfnamefont {Q.}~\bibnamefont {Wang}}\ and\ \bibinfo {author} {\bibfnamefont {Z.}~\bibnamefont {Zhang}},\ }\bibfield  {title} {\bibinfo {title} {Quantum lower bounds by sample-to-query lifting},\ }\href {https://doi.org/10.1137/24M1638616} {\bibfield  {journal} {\bibinfo  {journal} {SIAM J. Comput.}\ }\textbf {\bibinfo {volume} {54}},\ \bibinfo {pages} {1294} (\bibinfo {year} {2025})}\BibitemShut {NoStop}%
\bibitem [{\citenamefont {Chen}\ \emph {et~al.}(2025{\natexlab{b}})\citenamefont {Chen}, \citenamefont {Wang},\ and\ \citenamefont {Zhang}}]{chen2025list}%
  \BibitemOpen
  \bibfield  {author} {\bibinfo {author} {\bibfnamefont {K.}~\bibnamefont {Chen}}, \bibinfo {author} {\bibfnamefont {Q.}~\bibnamefont {Wang}},\ and\ \bibinfo {author} {\bibfnamefont {Z.}~\bibnamefont {Zhang}},\ }\href@noop {} {\bibinfo {title} {A list of complexity bounds for property testing by quantum sample-to-query lifting}} (\bibinfo {year} {2025}{\natexlab{b}}),\ \Eprint {https://arxiv.org/abs/2512.01971} {arXiv:2512.01971 [quant-ph]} \BibitemShut {NoStop}%
\bibitem [{\citenamefont {Low}\ and\ \citenamefont {Chuang}(2017)}]{low2017hamiltonian}%
  \BibitemOpen
  \bibfield  {author} {\bibinfo {author} {\bibfnamefont {G.~H.}\ \bibnamefont {Low}}\ and\ \bibinfo {author} {\bibfnamefont {I.~L.}\ \bibnamefont {Chuang}},\ }\href@noop {} {\bibinfo {title} {Hamiltonian simulation by uniform spectral amplification}} (\bibinfo {year} {2017}),\ \Eprint {https://arxiv.org/abs/1707.05391} {arXiv:1707.05391 [quant-ph]} \BibitemShut {NoStop}%
\bibitem [{\citenamefont {Zlokapa}\ and\ \citenamefont {Somma}(2024)}]{zlokapa2024hamiltonian}%
  \BibitemOpen
  \bibfield  {author} {\bibinfo {author} {\bibfnamefont {A.}~\bibnamefont {Zlokapa}}\ and\ \bibinfo {author} {\bibfnamefont {R.~D.}\ \bibnamefont {Somma}},\ }\bibfield  {title} {\bibinfo {title} {Hamiltonian simulation for low-energy states with optimal time dependence},\ }\href {https://doi.org/10.22331/q-2024-08-27-1449} {\bibfield  {journal} {\bibinfo  {journal} {{Quantum}}\ }\textbf {\bibinfo {volume} {8}},\ \bibinfo {pages} {1449} (\bibinfo {year} {2024})}\BibitemShut {NoStop}%
\bibitem [{\citenamefont {Guo}\ \emph {et~al.}(2024{\natexlab{a}})\citenamefont {Guo}, \citenamefont {Mitarai},\ and\ \citenamefont {Fujii}}]{guo2024nonlinear}%
  \BibitemOpen
  \bibfield  {author} {\bibinfo {author} {\bibfnamefont {N.}~\bibnamefont {Guo}}, \bibinfo {author} {\bibfnamefont {K.}~\bibnamefont {Mitarai}},\ and\ \bibinfo {author} {\bibfnamefont {K.}~\bibnamefont {Fujii}},\ }\bibfield  {title} {\bibinfo {title} {Nonlinear transformation of complex amplitudes via quantum singular value transformation},\ }\href {https://link.aps.org/doi/10.1103/PhysRevResearch.6.043227} {\bibfield  {journal} {\bibinfo  {journal} {Phys. Rev. Res.}\ }\textbf {\bibinfo {volume} {6}},\ \bibinfo {pages} {043227} (\bibinfo {year} {2024}{\natexlab{a}})}\BibitemShut {NoStop}%
\bibitem [{\citenamefont {Rattew}\ and\ \citenamefont {Rebentrost}(2023)}]{rattew2023nonlinear}%
  \BibitemOpen
  \bibfield  {author} {\bibinfo {author} {\bibfnamefont {A.~G.}\ \bibnamefont {Rattew}}\ and\ \bibinfo {author} {\bibfnamefont {P.}~\bibnamefont {Rebentrost}},\ }\href@noop {} {\bibinfo {title} {Non-linear transformations of quantum amplitudes: Exponential improvement, generalization, and applications}} (\bibinfo {year} {2023}),\ \Eprint {https://arxiv.org/abs/2309.09839} {arXiv:2309.09839 [quant-ph]} \BibitemShut {NoStop}%
\bibitem [{\citenamefont {Gonzalez-Conde}\ \emph {et~al.}(2024)\citenamefont {Gonzalez-Conde}, \citenamefont {Watts}, \citenamefont {Rodriguez-Grasa},\ and\ \citenamefont {Sanz}}]{gonzalezconde2024efficient}%
  \BibitemOpen
  \bibfield  {author} {\bibinfo {author} {\bibfnamefont {J.}~\bibnamefont {Gonzalez-Conde}}, \bibinfo {author} {\bibfnamefont {T.~W.}\ \bibnamefont {Watts}}, \bibinfo {author} {\bibfnamefont {P.}~\bibnamefont {Rodriguez-Grasa}},\ and\ \bibinfo {author} {\bibfnamefont {M.}~\bibnamefont {Sanz}},\ }\bibfield  {title} {\bibinfo {title} {Efficient quantum amplitude encoding of polynomial functions},\ }\href {https://doi.org/10.22331/q-2024-03-21-1297} {\bibfield  {journal} {\bibinfo  {journal} {Quantum}\ }\textbf {\bibinfo {volume} {8}},\ \bibinfo {pages} {1297} (\bibinfo {year} {2024})}\BibitemShut {NoStop}%
\bibitem [{\citenamefont {Ivashkov}\ \emph {et~al.}(2026)\citenamefont {Ivashkov}, \citenamefont {Huang}, \citenamefont {Koor}, \citenamefont {Pira},\ and\ \citenamefont {Rebentrost}}]{ivashkov2026qkan}%
  \BibitemOpen
  \bibfield  {author} {\bibinfo {author} {\bibfnamefont {P.}~\bibnamefont {Ivashkov}}, \bibinfo {author} {\bibfnamefont {P.-W.}\ \bibnamefont {Huang}}, \bibinfo {author} {\bibfnamefont {K.}~\bibnamefont {Koor}}, \bibinfo {author} {\bibfnamefont {L.}~\bibnamefont {Pira}},\ and\ \bibinfo {author} {\bibfnamefont {P.}~\bibnamefont {Rebentrost}},\ }\bibfield  {title} {\bibinfo {title} {{QKAN}: quantum {Kolmogorov-Arnold} networks with applications in machine learning and multivariate state preparation},\ }\href {https://doi.org/10.1038/s41534-026-01202-5} {\bibfield  {journal} {\bibinfo  {journal} {npj Quantum Inf.}\ }\textbf {\bibinfo {volume} {12}},\ \bibinfo {pages} {73} (\bibinfo {year} {2026})}\BibitemShut {NoStop}%
\bibitem [{\citenamefont {Guo}\ \emph {et~al.}(2024{\natexlab{b}})\citenamefont {Guo}, \citenamefont {Yu}, \citenamefont {Choi}, \citenamefont {Han}, \citenamefont {Agrawal}, \citenamefont {Nakaji}, \citenamefont {Aspuru-Guzik},\ and\ \citenamefont {Rebentrost}}]{guo2024quantum}%
  \BibitemOpen
  \bibfield  {author} {\bibinfo {author} {\bibfnamefont {N.}~\bibnamefont {Guo}}, \bibinfo {author} {\bibfnamefont {Z.}~\bibnamefont {Yu}}, \bibinfo {author} {\bibfnamefont {M.}~\bibnamefont {Choi}}, \bibinfo {author} {\bibfnamefont {Y.}~\bibnamefont {Han}}, \bibinfo {author} {\bibfnamefont {A.}~\bibnamefont {Agrawal}}, \bibinfo {author} {\bibfnamefont {K.}~\bibnamefont {Nakaji}}, \bibinfo {author} {\bibfnamefont {A.}~\bibnamefont {Aspuru-Guzik}},\ and\ \bibinfo {author} {\bibfnamefont {P.}~\bibnamefont {Rebentrost}},\ }\href@noop {} {\bibinfo {title} {Quantum transformer: Accelerating model inference via quantum linear algebra}} (\bibinfo {year} {2024}{\natexlab{b}}),\ \Eprint {https://arxiv.org/abs/2402.16714} {arXiv:2402.16714 [quant-ph]} \BibitemShut {NoStop}%
\bibitem [{\citenamefont {Rattew}\ \emph {et~al.}(2026)\citenamefont {Rattew}, \citenamefont {Huang}, \citenamefont {Guo}, \citenamefont {Pira},\ and\ \citenamefont {Rebentrost}}]{rattew2026accelerating}%
  \BibitemOpen
  \bibfield  {author} {\bibinfo {author} {\bibfnamefont {A.~G.}\ \bibnamefont {Rattew}}, \bibinfo {author} {\bibfnamefont {P.-W.}\ \bibnamefont {Huang}}, \bibinfo {author} {\bibfnamefont {N.}~\bibnamefont {Guo}}, \bibinfo {author} {\bibfnamefont {L.}~\bibnamefont {Pira}},\ and\ \bibinfo {author} {\bibfnamefont {P.}~\bibnamefont {Rebentrost}},\ }\bibfield  {title} {\bibinfo {title} {Accelerating inference for multilayer neural networks with quantum computers},\ }in\ \href {https://proceedings.iclr.cc/paper_files/paper/2026/hash/11bcd94ac570b4e6fdca65b4b88b7a59-Abstract-Conference.html} {\emph {\bibinfo {booktitle} {International Conference on Learning Representations}}},\ Vol.\ \bibinfo {volume} {2026},\ \bibinfo {editor} {edited by\ \bibinfo {editor} {\bibfnamefont {C.}~\bibnamefont {Vondrick}}, \bibinfo {editor} {\bibfnamefont {B.}~\bibnamefont {Hariharan}}, \bibinfo {editor} {\bibfnamefont {C.}~\bibnamefont {Raffel}}, \bibinfo {editor} {\bibfnamefont {L.}~\bibnamefont {Pinto}}, \bibinfo {editor} {\bibfnamefont {D.}~\bibnamefont {Yang}},\ and\ \bibinfo {editor} {\bibfnamefont {A.}~\bibnamefont {Faust}}}\ (\bibinfo {year} {2026})\ pp.\ \bibinfo {pages} {10498--10542}\BibitemShut {NoStop}%
\bibitem [{\citenamefont {Guo}\ \emph {et~al.}(2026)\citenamefont {Guo}, \citenamefont {Huang}, \citenamefont {Wang}, \citenamefont {Thompson}, \citenamefont {Rebentrost}, \citenamefont {Gu},\ and\ \citenamefont {Yang}}]{guo2026quantum}%
  \BibitemOpen
  \bibfield  {author} {\bibinfo {author} {\bibfnamefont {N.}~\bibnamefont {Guo}}, \bibinfo {author} {\bibfnamefont {P.-W.}\ \bibnamefont {Huang}}, \bibinfo {author} {\bibfnamefont {Q.}~\bibnamefont {Wang}}, \bibinfo {author} {\bibfnamefont {J.}~\bibnamefont {Thompson}}, \bibinfo {author} {\bibfnamefont {P.}~\bibnamefont {Rebentrost}}, \bibinfo {author} {\bibfnamefont {M.}~\bibnamefont {Gu}},\ and\ \bibinfo {author} {\bibfnamefont {C.}~\bibnamefont {Yang}},\ }\href@noop {} {\bibinfo {title} {Quantum enhanced rare event discovery and sampling}} (\bibinfo {year} {2026}),\ \Eprint {https://arxiv.org/abs/2606.06316} {arXiv:2606.06316 [quant-ph]} \BibitemShut {NoStop}%
\bibitem [{\citenamefont {Patel}\ \emph {et~al.}(2026)\citenamefont {Patel}, \citenamefont {Tan}, \citenamefont {Suba\c{s}\i},\ and\ \citenamefont {Sornborger}}]{patel2026optimal}%
  \BibitemOpen
  \bibfield  {author} {\bibinfo {author} {\bibfnamefont {D.}~\bibnamefont {Patel}}, \bibinfo {author} {\bibfnamefont {S.~J.~S.}\ \bibnamefont {Tan}}, \bibinfo {author} {\bibfnamefont {Y.}~\bibnamefont {Suba\c{s}\i}},\ and\ \bibinfo {author} {\bibfnamefont {A.~T.}\ \bibnamefont {Sornborger}},\ }\bibfield  {title} {\bibinfo {title} {Optimal coherent quantum phase estimation via tapering},\ }\href {https://doi.org/10.1103/l5y6-6zxv} {\bibfield  {journal} {\bibinfo  {journal} {PRX Quantum}\ }\textbf {\bibinfo {volume} {7}},\ \bibinfo {pages} {020302} (\bibinfo {year} {2026})}\BibitemShut {NoStop}%
\bibitem [{\citenamefont {Jones}\ \emph {et~al.}(2019{\natexlab{b}})\citenamefont {Jones}, \citenamefont {Brown}, \citenamefont {Bush},\ and\ \citenamefont {Benjamin}}]{jones2019quest}%
  \BibitemOpen
  \bibfield  {author} {\bibinfo {author} {\bibfnamefont {T.}~\bibnamefont {Jones}}, \bibinfo {author} {\bibfnamefont {A.}~\bibnamefont {Brown}}, \bibinfo {author} {\bibfnamefont {I.}~\bibnamefont {Bush}},\ and\ \bibinfo {author} {\bibfnamefont {S.~C.}\ \bibnamefont {Benjamin}},\ }\bibfield  {title} {\bibinfo {title} {{QuEST} and high performance simulation of quantum computers},\ }\href {https://doi.org/10.1038/s41598-019-47174-9} {\bibfield  {journal} {\bibinfo  {journal} {Sci. Rep.}\ }\textbf {\bibinfo {volume} {9}},\ \bibinfo {pages} {10736} (\bibinfo {year} {2019}{\natexlab{b}})}\BibitemShut {NoStop}%
\bibitem [{\citenamefont {Jones}\ and\ \citenamefont {Benjamin}(2020)}]{jones2020questlink}%
  \BibitemOpen
  \bibfield  {author} {\bibinfo {author} {\bibfnamefont {T.}~\bibnamefont {Jones}}\ and\ \bibinfo {author} {\bibfnamefont {S.}~\bibnamefont {Benjamin}},\ }\bibfield  {title} {\bibinfo {title} {{QuESTlink}—{Mathematica} embiggened by a hardware-optimised quantum emulator},\ }\href {https://doi.org/10.1088/2058-9565/ab8506} {\bibfield  {journal} {\bibinfo  {journal} {Quantum Sci. Technol.}\ }\textbf {\bibinfo {volume} {5}},\ \bibinfo {pages} {034012} (\bibinfo {year} {2020})}\BibitemShut {NoStop}%
\bibitem [{\citenamefont {Schirmer}(1982)}]{schirmer1982beyond}%
  \BibitemOpen
  \bibfield  {author} {\bibinfo {author} {\bibfnamefont {J.}~\bibnamefont {Schirmer}},\ }\bibfield  {title} {\bibinfo {title} {Beyond the random-phase approximation: A new approximation scheme for the polarization propagator},\ }\href {https://doi.org/10.1103/PhysRevA.26.2395} {\bibfield  {journal} {\bibinfo  {journal} {Phys. Rev. A}\ }\textbf {\bibinfo {volume} {26}},\ \bibinfo {pages} {2395} (\bibinfo {year} {1982})}\BibitemShut {NoStop}%
\bibitem [{\citenamefont {Roos}\ \emph {et~al.}(1980)\citenamefont {Roos}, \citenamefont {Taylor},\ and\ \citenamefont {Sigbahn}}]{roos1980complete}%
  \BibitemOpen
  \bibfield  {author} {\bibinfo {author} {\bibfnamefont {B.~O.}\ \bibnamefont {Roos}}, \bibinfo {author} {\bibfnamefont {P.~R.}\ \bibnamefont {Taylor}},\ and\ \bibinfo {author} {\bibfnamefont {P.~E.}\ \bibnamefont {Sigbahn}},\ }\bibfield  {title} {\bibinfo {title} {A complete active space {SCF} method {(CASSCF)} using a density matrix formulated super-{CI} approach},\ }\href {https://doi.org/10.1016/0301-0104(80)80045-0} {\bibfield  {journal} {\bibinfo  {journal} {Chem. Phys.}\ }\textbf {\bibinfo {volume} {48}},\ \bibinfo {pages} {157} (\bibinfo {year} {1980})}\BibitemShut {NoStop}%
\bibitem [{\citenamefont {Andersson}\ \emph {et~al.}(1990)\citenamefont {Andersson}, \citenamefont {Malmqvist}, \citenamefont {Roos}, \citenamefont {Sadlej},\ and\ \citenamefont {Wolinski}}]{andersson1990second}%
  \BibitemOpen
  \bibfield  {author} {\bibinfo {author} {\bibfnamefont {K.}~\bibnamefont {Andersson}}, \bibinfo {author} {\bibfnamefont {P.~A.}\ \bibnamefont {Malmqvist}}, \bibinfo {author} {\bibfnamefont {B.~O.}\ \bibnamefont {Roos}}, \bibinfo {author} {\bibfnamefont {A.~J.}\ \bibnamefont {Sadlej}},\ and\ \bibinfo {author} {\bibfnamefont {K.}~\bibnamefont {Wolinski}},\ }\bibfield  {title} {\bibinfo {title} {Second-order perturbation theory with a casscf reference function},\ }\href {https://doi.org/10.1021/j100377a012} {\bibfield  {journal} {\bibinfo  {journal} {J. Phys. Chem.}\ }\textbf {\bibinfo {volume} {94}},\ \bibinfo {pages} {5483–5488} (\bibinfo {year} {1990})}\BibitemShut {NoStop}%
\bibitem [{\citenamefont {Angeli}\ \emph {et~al.}(2001)\citenamefont {Angeli}, \citenamefont {Cimiraglia}, \citenamefont {Evangelisti}, \citenamefont {Leininger},\ and\ \citenamefont {Malrieu}}]{angeli2001introduction}%
  \BibitemOpen
  \bibfield  {author} {\bibinfo {author} {\bibfnamefont {C.}~\bibnamefont {Angeli}}, \bibinfo {author} {\bibfnamefont {R.}~\bibnamefont {Cimiraglia}}, \bibinfo {author} {\bibfnamefont {S.}~\bibnamefont {Evangelisti}}, \bibinfo {author} {\bibfnamefont {T.}~\bibnamefont {Leininger}},\ and\ \bibinfo {author} {\bibfnamefont {J.-P.}\ \bibnamefont {Malrieu}},\ }\bibfield  {title} {\bibinfo {title} {Introduction of $n$-electron valence states for multireference perturbation theory},\ }\href {https://doi.org/10.1063/1.1361246} {\bibfield  {journal} {\bibinfo  {journal} {J. Chem. Phys.}\ }\textbf {\bibinfo {volume} {114}},\ \bibinfo {pages} {10252–10264} (\bibinfo {year} {2001})}\BibitemShut {NoStop}%
\bibitem [{\citenamefont {Cerezo}\ \emph {et~al.}(2021)\citenamefont {Cerezo}, \citenamefont {Arrasmith}, \citenamefont {Babbush}, \citenamefont {Benjamin}, \citenamefont {Endo}, \citenamefont {Fujii}, \citenamefont {McClean}, \citenamefont {Mitarai}, \citenamefont {Yuan}, \citenamefont {Cincio},\ and\ \citenamefont {Coles}}]{cerezo2021variational}%
  \BibitemOpen
  \bibfield  {author} {\bibinfo {author} {\bibfnamefont {M.}~\bibnamefont {Cerezo}}, \bibinfo {author} {\bibfnamefont {A.}~\bibnamefont {Arrasmith}}, \bibinfo {author} {\bibfnamefont {R.}~\bibnamefont {Babbush}}, \bibinfo {author} {\bibfnamefont {S.~C.}\ \bibnamefont {Benjamin}}, \bibinfo {author} {\bibfnamefont {S.}~\bibnamefont {Endo}}, \bibinfo {author} {\bibfnamefont {K.}~\bibnamefont {Fujii}}, \bibinfo {author} {\bibfnamefont {J.~R.}\ \bibnamefont {McClean}}, \bibinfo {author} {\bibfnamefont {K.}~\bibnamefont {Mitarai}}, \bibinfo {author} {\bibfnamefont {X.}~\bibnamefont {Yuan}}, \bibinfo {author} {\bibfnamefont {L.}~\bibnamefont {Cincio}},\ and\ \bibinfo {author} {\bibfnamefont {P.~J.}\ \bibnamefont {Coles}},\ }\bibfield  {title} {\bibinfo {title} {Variational quantum algorithms},\ }\href {https://doi.org/10.1038/s42254-021-00348-9} {\bibfield  {journal} {\bibinfo  {journal} {Nat. Rev. Phys.}\ }\textbf {\bibinfo {volume} {3}},\ \bibinfo {pages} {625–644} (\bibinfo {year} {2021})}\BibitemShut {NoStop}%
\bibitem [{\citenamefont {Tilly}\ \emph {et~al.}(2022)\citenamefont {Tilly}, \citenamefont {Chen}, \citenamefont {Cao}, \citenamefont {Picozzi}, \citenamefont {Setia}, \citenamefont {Li}, \citenamefont {Grant}, \citenamefont {Wossnig}, \citenamefont {Rungger}, \citenamefont {Booth},\ and\ \citenamefont {Tennyson}}]{tilly2022variational}%
  \BibitemOpen
  \bibfield  {author} {\bibinfo {author} {\bibfnamefont {J.}~\bibnamefont {Tilly}}, \bibinfo {author} {\bibfnamefont {H.}~\bibnamefont {Chen}}, \bibinfo {author} {\bibfnamefont {S.}~\bibnamefont {Cao}}, \bibinfo {author} {\bibfnamefont {D.}~\bibnamefont {Picozzi}}, \bibinfo {author} {\bibfnamefont {K.}~\bibnamefont {Setia}}, \bibinfo {author} {\bibfnamefont {Y.}~\bibnamefont {Li}}, \bibinfo {author} {\bibfnamefont {E.}~\bibnamefont {Grant}}, \bibinfo {author} {\bibfnamefont {L.}~\bibnamefont {Wossnig}}, \bibinfo {author} {\bibfnamefont {I.}~\bibnamefont {Rungger}}, \bibinfo {author} {\bibfnamefont {G.~H.}\ \bibnamefont {Booth}},\ and\ \bibinfo {author} {\bibfnamefont {J.}~\bibnamefont {Tennyson}},\ }\bibfield  {title} {\bibinfo {title} {The variational quantum eigensolver: A review of methods and best practices},\ }\href {https://doi.org/10.1016/j.physrep.2022.08.003} {\bibfield  {journal} {\bibinfo  {journal} {Phys. Rep.}\ }\textbf {\bibinfo {volume} {986}},\ \bibinfo {pages} {1–128} (\bibinfo {year} {2022})}\BibitemShut {NoStop}%
\bibitem [{\citenamefont {Peruzzo}\ \emph {et~al.}(2014)\citenamefont {Peruzzo}, \citenamefont {McClean}, \citenamefont {Shadbolt}, \citenamefont {Yung}, \citenamefont {Zhou}, \citenamefont {Love}, \citenamefont {Aspuru-Guzik},\ and\ \citenamefont {O’Brien}}]{peruzzo2014variational}%
  \BibitemOpen
  \bibfield  {author} {\bibinfo {author} {\bibfnamefont {A.}~\bibnamefont {Peruzzo}}, \bibinfo {author} {\bibfnamefont {J.}~\bibnamefont {McClean}}, \bibinfo {author} {\bibfnamefont {P.}~\bibnamefont {Shadbolt}}, \bibinfo {author} {\bibfnamefont {M.-H.}\ \bibnamefont {Yung}}, \bibinfo {author} {\bibfnamefont {X.-Q.}\ \bibnamefont {Zhou}}, \bibinfo {author} {\bibfnamefont {P.~J.}\ \bibnamefont {Love}}, \bibinfo {author} {\bibfnamefont {A.}~\bibnamefont {Aspuru-Guzik}},\ and\ \bibinfo {author} {\bibfnamefont {J.~L.}\ \bibnamefont {O’Brien}},\ }\bibfield  {title} {\bibinfo {title} {A variational eigenvalue solver on a photonic quantum processor},\ }\href {https://doi.org/10.1038/ncomms5213} {\bibfield  {journal} {\bibinfo  {journal} {Nat. Commun.}\ }\textbf {\bibinfo {volume} {5}},\ \bibinfo {pages} {4213} (\bibinfo {year} {2014})}\BibitemShut {NoStop}%
\bibitem [{\citenamefont {Epperly}\ \emph {et~al.}(2022)\citenamefont {Epperly}, \citenamefont {Lin},\ and\ \citenamefont {Nakatsukasa}}]{epperly2022theory}%
  \BibitemOpen
  \bibfield  {author} {\bibinfo {author} {\bibfnamefont {E.~N.}\ \bibnamefont {Epperly}}, \bibinfo {author} {\bibfnamefont {L.}~\bibnamefont {Lin}},\ and\ \bibinfo {author} {\bibfnamefont {Y.}~\bibnamefont {Nakatsukasa}},\ }\bibfield  {title} {\bibinfo {title} {A theory of quantum subspace diagonalization},\ }\href {https://doi.org/10.1137/21m145954x} {\bibfield  {journal} {\bibinfo  {journal} {SIAM J. Matrix Anal. Appl.}\ }\textbf {\bibinfo {volume} {43}},\ \bibinfo {pages} {1263–1290} (\bibinfo {year} {2022})}\BibitemShut {NoStop}%
\bibitem [{\citenamefont {Motta}\ \emph {et~al.}(2024)\citenamefont {Motta}, \citenamefont {Kirby}, \citenamefont {Liepuoniute}, \citenamefont {Sung}, \citenamefont {Cohn}, \citenamefont {Mezzacapo}, \citenamefont {Klymko}, \citenamefont {Nguyen}, \citenamefont {Yoshioka},\ and\ \citenamefont {Rice}}]{motta2024subspace}%
  \BibitemOpen
  \bibfield  {author} {\bibinfo {author} {\bibfnamefont {M.}~\bibnamefont {Motta}}, \bibinfo {author} {\bibfnamefont {W.}~\bibnamefont {Kirby}}, \bibinfo {author} {\bibfnamefont {I.}~\bibnamefont {Liepuoniute}}, \bibinfo {author} {\bibfnamefont {K.~J.}\ \bibnamefont {Sung}}, \bibinfo {author} {\bibfnamefont {J.}~\bibnamefont {Cohn}}, \bibinfo {author} {\bibfnamefont {A.}~\bibnamefont {Mezzacapo}}, \bibinfo {author} {\bibfnamefont {K.}~\bibnamefont {Klymko}}, \bibinfo {author} {\bibfnamefont {N.}~\bibnamefont {Nguyen}}, \bibinfo {author} {\bibfnamefont {N.}~\bibnamefont {Yoshioka}},\ and\ \bibinfo {author} {\bibfnamefont {J.~E.}\ \bibnamefont {Rice}},\ }\bibfield  {title} {\bibinfo {title} {Subspace methods for electronic structure simulations on quantum computers},\ }\href {https://doi.org/10.1088/2516-1075/ad3592} {\bibfield  {journal} {\bibinfo  {journal} {Electron. Struct.}\ }\textbf {\bibinfo {volume} {6}},\ \bibinfo {pages} {013001} (\bibinfo {year} {2024})}\BibitemShut {NoStop}%
\bibitem [{\citenamefont {Boyd}\ \emph {et~al.}(2025)\citenamefont {Boyd}, \citenamefont {Koczor},\ and\ \citenamefont {Cai}}]{boyd2025high}%
  \BibitemOpen
  \bibfield  {author} {\bibinfo {author} {\bibfnamefont {G.}~\bibnamefont {Boyd}}, \bibinfo {author} {\bibfnamefont {B.}~\bibnamefont {Koczor}},\ and\ \bibinfo {author} {\bibfnamefont {Z.}~\bibnamefont {Cai}},\ }\bibfield  {title} {\bibinfo {title} {High-dimensional subspace expansion using classical shadows},\ }\href {https://doi.org/10.1103/PhysRevA.111.022423} {\bibfield  {journal} {\bibinfo  {journal} {Phys. Rev. A}\ }\textbf {\bibinfo {volume} {111}},\ \bibinfo {pages} {022423} (\bibinfo {year} {2025})}\BibitemShut {NoStop}%
\bibitem [{\citenamefont {Ding}\ and\ \citenamefont {Lin}(2023{\natexlab{b}})}]{ding2023simultaneous}%
  \BibitemOpen
  \bibfield  {author} {\bibinfo {author} {\bibfnamefont {Z.}~\bibnamefont {Ding}}\ and\ \bibinfo {author} {\bibfnamefont {L.}~\bibnamefont {Lin}},\ }\bibfield  {title} {\bibinfo {title} {Simultaneous estimation of multiple eigenvalues with short-depth quantum circuit on early fault-tolerant quantum computers},\ }\href {https://doi.org/10.22331/q-2023-10-11-1136} {\bibfield  {journal} {\bibinfo  {journal} {Quantum}\ }\textbf {\bibinfo {volume} {7}},\ \bibinfo {pages} {1136} (\bibinfo {year} {2023}{\natexlab{b}})}\BibitemShut {NoStop}%
\bibitem [{\citenamefont {O'Brien}\ \emph {et~al.}(2019)\citenamefont {O'Brien}, \citenamefont {Tarasinski},\ and\ \citenamefont {Terhal}}]{obrien2019quantum}%
  \BibitemOpen
  \bibfield  {author} {\bibinfo {author} {\bibfnamefont {T.~E.}\ \bibnamefont {O'Brien}}, \bibinfo {author} {\bibfnamefont {B.}~\bibnamefont {Tarasinski}},\ and\ \bibinfo {author} {\bibfnamefont {B.~M.}\ \bibnamefont {Terhal}},\ }\bibfield  {title} {\bibinfo {title} {Quantum phase estimation of multiple eigenvalues for small-scale (noisy) experiments},\ }\href {https://doi.org/10.1088/1367-2630/aafb8e} {\bibfield  {journal} {\bibinfo  {journal} {New J. Phys.}\ }\textbf {\bibinfo {volume} {21}},\ \bibinfo {pages} {023022} (\bibinfo {year} {2019})}\BibitemShut {NoStop}%
\bibitem [{\citenamefont {Chan}\ \emph {et~al.}(2025)\citenamefont {Chan}, \citenamefont {Meister}, \citenamefont {Goh},\ and\ \citenamefont {Koczor}}]{chan2025algorithmic}%
  \BibitemOpen
  \bibfield  {author} {\bibinfo {author} {\bibfnamefont {H.~H.~S.}\ \bibnamefont {Chan}}, \bibinfo {author} {\bibfnamefont {R.}~\bibnamefont {Meister}}, \bibinfo {author} {\bibfnamefont {M.~L.}\ \bibnamefont {Goh}},\ and\ \bibinfo {author} {\bibfnamefont {B.}~\bibnamefont {Koczor}},\ }\bibfield  {title} {\bibinfo {title} {Algorithmic shadow spectroscopy},\ }\href {https://link.aps.org/doi/10.1103/PRXQuantum.6.010352} {\bibfield  {journal} {\bibinfo  {journal} {PRX Quantum}\ }\textbf {\bibinfo {volume} {6}},\ \bibinfo {pages} {010352} (\bibinfo {year} {2025})}\BibitemShut {NoStop}%
\bibitem [{\citenamefont {Cotler}\ \emph {et~al.}(2019)\citenamefont {Cotler}, \citenamefont {Choi}, \citenamefont {Lukin}, \citenamefont {Gharibyan}, \citenamefont {Grover}, \citenamefont {Tai}, \citenamefont {Rispoli}, \citenamefont {Schittko}, \citenamefont {Preiss}, \citenamefont {Kaufman}, \citenamefont {Greiner}, \citenamefont {Pichler},\ and\ \citenamefont {Hayden}}]{cotler2019quantum}%
  \BibitemOpen
  \bibfield  {author} {\bibinfo {author} {\bibfnamefont {J.}~\bibnamefont {Cotler}}, \bibinfo {author} {\bibfnamefont {S.}~\bibnamefont {Choi}}, \bibinfo {author} {\bibfnamefont {A.}~\bibnamefont {Lukin}}, \bibinfo {author} {\bibfnamefont {H.}~\bibnamefont {Gharibyan}}, \bibinfo {author} {\bibfnamefont {T.}~\bibnamefont {Grover}}, \bibinfo {author} {\bibfnamefont {M.~E.}\ \bibnamefont {Tai}}, \bibinfo {author} {\bibfnamefont {M.}~\bibnamefont {Rispoli}}, \bibinfo {author} {\bibfnamefont {R.}~\bibnamefont {Schittko}}, \bibinfo {author} {\bibfnamefont {P.~M.}\ \bibnamefont {Preiss}}, \bibinfo {author} {\bibfnamefont {A.~M.}\ \bibnamefont {Kaufman}}, \bibinfo {author} {\bibfnamefont {M.}~\bibnamefont {Greiner}}, \bibinfo {author} {\bibfnamefont {H.}~\bibnamefont {Pichler}},\ and\ \bibinfo {author} {\bibfnamefont {P.}~\bibnamefont {Hayden}},\ }\bibfield  {title} {\bibinfo {title} {Quantum virtual cooling},\ }\href {https://doi.org/10.1103/PhysRevX.9.031013} {\bibfield  {journal} {\bibinfo  {journal} {Phys. Rev. X}\ }\textbf {\bibinfo {volume} {9}},\ \bibinfo {pages} {031013} (\bibinfo {year} {2019})}\BibitemShut {NoStop}%
\bibitem [{\citenamefont {Huggins}\ \emph {et~al.}(2021)\citenamefont {Huggins}, \citenamefont {McArdle}, \citenamefont {O'Brien}, \citenamefont {Lee}, \citenamefont {Rubin}, \citenamefont {Boixo}, \citenamefont {Whaley}, \citenamefont {Babbush},\ and\ \citenamefont {McClean}}]{huggins2021virtual}%
  \BibitemOpen
  \bibfield  {author} {\bibinfo {author} {\bibfnamefont {W.~J.}\ \bibnamefont {Huggins}}, \bibinfo {author} {\bibfnamefont {S.}~\bibnamefont {McArdle}}, \bibinfo {author} {\bibfnamefont {T.~E.}\ \bibnamefont {O'Brien}}, \bibinfo {author} {\bibfnamefont {J.}~\bibnamefont {Lee}}, \bibinfo {author} {\bibfnamefont {N.~C.}\ \bibnamefont {Rubin}}, \bibinfo {author} {\bibfnamefont {S.}~\bibnamefont {Boixo}}, \bibinfo {author} {\bibfnamefont {K.~B.}\ \bibnamefont {Whaley}}, \bibinfo {author} {\bibfnamefont {R.}~\bibnamefont {Babbush}},\ and\ \bibinfo {author} {\bibfnamefont {J.~R.}\ \bibnamefont {McClean}},\ }\bibfield  {title} {\bibinfo {title} {Virtual distillation for quantum error mitigation},\ }\href {https://doi.org/10.1103/PhysRevX.11.041036} {\bibfield  {journal} {\bibinfo  {journal} {Phys. Rev. X}\ }\textbf {\bibinfo {volume} {11}},\ \bibinfo {pages} {041036} (\bibinfo {year} {2021})}\BibitemShut {NoStop}%
\bibitem [{\citenamefont {Koczor}(2021)}]{koczor2021exponential}%
  \BibitemOpen
  \bibfield  {author} {\bibinfo {author} {\bibfnamefont {B.}~\bibnamefont {Koczor}},\ }\bibfield  {title} {\bibinfo {title} {Exponential error suppression for near-term quantum devices},\ }\href {https://doi.org/10.1103/PhysRevX.11.031057} {\bibfield  {journal} {\bibinfo  {journal} {Phys. Rev. X}\ }\textbf {\bibinfo {volume} {11}},\ \bibinfo {pages} {031057} (\bibinfo {year} {2021})}\BibitemShut {NoStop}%
\bibitem [{\citenamefont {McArdle}\ \emph {et~al.}(2019)\citenamefont {McArdle}, \citenamefont {Jones}, \citenamefont {Endo}, \citenamefont {Li}, \citenamefont {Benjamin},\ and\ \citenamefont {Yuan}}]{mcardle2019variational}%
  \BibitemOpen
  \bibfield  {author} {\bibinfo {author} {\bibfnamefont {S.}~\bibnamefont {McArdle}}, \bibinfo {author} {\bibfnamefont {T.}~\bibnamefont {Jones}}, \bibinfo {author} {\bibfnamefont {S.}~\bibnamefont {Endo}}, \bibinfo {author} {\bibfnamefont {Y.}~\bibnamefont {Li}}, \bibinfo {author} {\bibfnamefont {S.~C.}\ \bibnamefont {Benjamin}},\ and\ \bibinfo {author} {\bibfnamefont {X.}~\bibnamefont {Yuan}},\ }\bibfield  {title} {\bibinfo {title} {Variational ansatz-based quantum simulation of imaginary time evolution},\ }\href {https://doi.org/10.1038/s41534-019-0187-2} {\bibfield  {journal} {\bibinfo  {journal} {npj Quantum Inf.}\ }\textbf {\bibinfo {volume} {5}},\ \bibinfo {pages} {75} (\bibinfo {year} {2019})}\BibitemShut {NoStop}%
\bibitem [{\citenamefont {Huo}\ and\ \citenamefont {Li}(2023)}]{huo2023error}%
  \BibitemOpen
  \bibfield  {author} {\bibinfo {author} {\bibfnamefont {M.}~\bibnamefont {Huo}}\ and\ \bibinfo {author} {\bibfnamefont {Y.}~\bibnamefont {Li}},\ }\bibfield  {title} {\bibinfo {title} {Error-resilient {Monte} {Carlo} quantum simulation of imaginary time},\ }\href {https://doi.org/10.22331/q-2023-02-09-916} {\bibfield  {journal} {\bibinfo  {journal} {Quantum}\ }\textbf {\bibinfo {volume} {7}},\ \bibinfo {pages} {916} (\bibinfo {year} {2023})}\BibitemShut {NoStop}%
\bibitem [{\citenamefont {Tsuchimochi}\ \emph {et~al.}(2023)\citenamefont {Tsuchimochi}, \citenamefont {Ryo}, \citenamefont {Ten-no},\ and\ \citenamefont {Sasasako}}]{tsuchimochi2023improved}%
  \BibitemOpen
  \bibfield  {author} {\bibinfo {author} {\bibfnamefont {T.}~\bibnamefont {Tsuchimochi}}, \bibinfo {author} {\bibfnamefont {Y.}~\bibnamefont {Ryo}}, \bibinfo {author} {\bibfnamefont {S.~L.}\ \bibnamefont {Ten-no}},\ and\ \bibinfo {author} {\bibfnamefont {K.}~\bibnamefont {Sasasako}},\ }\bibfield  {title} {\bibinfo {title} {Improved algorithms of quantum imaginary time evolution for ground and excited states of molecular systems},\ }\href {https://doi.org/10.1021/acs.jctc.2c00906} {\bibfield  {journal} {\bibinfo  {journal} {J. Chem. Theory Comput.}\ }\textbf {\bibinfo {volume} {19}},\ \bibinfo {pages} {503–513} (\bibinfo {year} {2023})}\BibitemShut {NoStop}%
\bibitem [{\citenamefont {Choi}\ \emph {et~al.}(2021)\citenamefont {Choi}, \citenamefont {Lee}, \citenamefont {Bonitati}, \citenamefont {Qian},\ and\ \citenamefont {Watkins}}]{choi2021rodeo}%
  \BibitemOpen
  \bibfield  {author} {\bibinfo {author} {\bibfnamefont {K.}~\bibnamefont {Choi}}, \bibinfo {author} {\bibfnamefont {D.}~\bibnamefont {Lee}}, \bibinfo {author} {\bibfnamefont {J.}~\bibnamefont {Bonitati}}, \bibinfo {author} {\bibfnamefont {Z.}~\bibnamefont {Qian}},\ and\ \bibinfo {author} {\bibfnamefont {J.}~\bibnamefont {Watkins}},\ }\bibfield  {title} {\bibinfo {title} {Rodeo algorithm for quantum computing},\ }\href {https://doi.org/10.1103/PhysRevLett.127.040505} {\bibfield  {journal} {\bibinfo  {journal} {Phys. Rev. Lett.}\ }\textbf {\bibinfo {volume} {127}},\ \bibinfo {pages} {040505} (\bibinfo {year} {2021})}\BibitemShut {NoStop}%
\bibitem [{\citenamefont {Kyriienko}(2020)}]{kyriienko2020quantum}%
  \BibitemOpen
  \bibfield  {author} {\bibinfo {author} {\bibfnamefont {O.}~\bibnamefont {Kyriienko}},\ }\bibfield  {title} {\bibinfo {title} {Quantum inverse iteration algorithm for programmable quantum simulators},\ }\href {https://doi.org/10.1038/s41534-019-0239-7} {\bibfield  {journal} {\bibinfo  {journal} {npj Quantum Inf.}\ }\textbf {\bibinfo {volume} {6}},\ \bibinfo {pages} {7} (\bibinfo {year} {2020})}\BibitemShut {NoStop}%
\bibitem [{\citenamefont {Yoshikura}\ \emph {et~al.}(2023)\citenamefont {Yoshikura}, \citenamefont {Ten-no},\ and\ \citenamefont {Tsuchimochi}}]{yoshikura2023quantum}%
  \BibitemOpen
  \bibfield  {author} {\bibinfo {author} {\bibfnamefont {T.}~\bibnamefont {Yoshikura}}, \bibinfo {author} {\bibfnamefont {S.~L.}\ \bibnamefont {Ten-no}},\ and\ \bibinfo {author} {\bibfnamefont {T.}~\bibnamefont {Tsuchimochi}},\ }\bibfield  {title} {\bibinfo {title} {Quantum inverse algorithm via adaptive variational quantum linear solver: Applications to general eigenstates},\ }\href {https://doi.org/10.1021/acs.jpca.3c02800} {\bibfield  {journal} {\bibinfo  {journal} {J. Phys. Chem. A}\ }\textbf {\bibinfo {volume} {127}},\ \bibinfo {pages} {6577–6592} (\bibinfo {year} {2023})}\BibitemShut {NoStop}%
\bibitem [{\citenamefont {Millar}\ \emph {et~al.}(2026)\citenamefont {Millar}, \citenamefont {Anderson}, \citenamefont {Altamura}, \citenamefont {Wallis}, \citenamefont {Sahin}, \citenamefont {Crain},\ and\ \citenamefont {Thomson}}]{millar2026imaginary}%
  \BibitemOpen
  \bibfield  {author} {\bibinfo {author} {\bibfnamefont {D.~A.}\ \bibnamefont {Millar}}, \bibinfo {author} {\bibfnamefont {L.~W.}\ \bibnamefont {Anderson}}, \bibinfo {author} {\bibfnamefont {E.}~\bibnamefont {Altamura}}, \bibinfo {author} {\bibfnamefont {O.}~\bibnamefont {Wallis}}, \bibinfo {author} {\bibfnamefont {M.~E.}\ \bibnamefont {Sahin}}, \bibinfo {author} {\bibfnamefont {J.}~\bibnamefont {Crain}},\ and\ \bibinfo {author} {\bibfnamefont {S.~J.}\ \bibnamefont {Thomson}},\ }\bibfield  {title} {\bibinfo {title} {Imaginary time spectral transforms for excited-state preparation},\ }\href {https://doi.org/10.1103/v4rc-gmcx} {\bibfield  {journal} {\bibinfo  {journal} {Phys. Rev. Res.}\ }\textbf {\bibinfo {volume} {8}},\ \bibinfo {pages} {L022042} (\bibinfo {year} {2026})}\BibitemShut {NoStop}%
\bibitem [{\citenamefont {Khinevich}\ \emph {et~al.}(2025)\citenamefont {Khinevich}, \citenamefont {Lee}, \citenamefont {Yoshioka},\ and\ \citenamefont {Mizukami}}]{khinevich2025quantum}%
  \BibitemOpen
  \bibfield  {author} {\bibinfo {author} {\bibfnamefont {V.}~\bibnamefont {Khinevich}}, \bibinfo {author} {\bibfnamefont {Y.}~\bibnamefont {Lee}}, \bibinfo {author} {\bibfnamefont {N.}~\bibnamefont {Yoshioka}},\ and\ \bibinfo {author} {\bibfnamefont {W.}~\bibnamefont {Mizukami}},\ }\href@noop {} {\bibinfo {title} {Quantum power iteration unified using generalized quantum signal processing}} (\bibinfo {year} {2025}),\ \Eprint {https://arxiv.org/abs/2507.11142} {arXiv:2507.11142 [quant-ph]} \BibitemShut {NoStop}%
\bibitem [{\citenamefont {Li}\ and\ \citenamefont {Lin}(2025)}]{li2025dissipative}%
  \BibitemOpen
  \bibfield  {author} {\bibinfo {author} {\bibfnamefont {H.-E.}\ \bibnamefont {Li}}\ and\ \bibinfo {author} {\bibfnamefont {L.}~\bibnamefont {Lin}},\ }\href@noop {} {\bibinfo {title} {Dissipative quantum algorithms for excited-state quantum chemistry}} (\bibinfo {year} {2025}),\ \Eprint {https://arxiv.org/abs/2512.19870} {arXiv:2512.19870 [quant-ph]} \BibitemShut {NoStop}%
\bibitem [{\citenamefont {Childs}\ \emph {et~al.}(2025)\citenamefont {Childs}, \citenamefont {Fu}, \citenamefont {Leung}, \citenamefont {Li}, \citenamefont {Ozols},\ and\ \citenamefont {Vyas}}]{childs2025streaming}%
  \BibitemOpen
  \bibfield  {author} {\bibinfo {author} {\bibfnamefont {A.~M.}\ \bibnamefont {Childs}}, \bibinfo {author} {\bibfnamefont {H.}~\bibnamefont {Fu}}, \bibinfo {author} {\bibfnamefont {D.}~\bibnamefont {Leung}}, \bibinfo {author} {\bibfnamefont {Z.}~\bibnamefont {Li}}, \bibinfo {author} {\bibfnamefont {M.}~\bibnamefont {Ozols}},\ and\ \bibinfo {author} {\bibfnamefont {V.}~\bibnamefont {Vyas}},\ }\bibfield  {title} {\bibinfo {title} {Streaming quantum state purification},\ }\href {https://doi.org/10.22331/q-2025-01-21-1603} {\bibfield  {journal} {\bibinfo  {journal} {{Quantum}}\ }\textbf {\bibinfo {volume} {9}},\ \bibinfo {pages} {1603} (\bibinfo {year} {2025})}\BibitemShut {NoStop}%
\bibitem [{\citenamefont {Dalzell}\ \emph {et~al.}(2025)\citenamefont {Dalzell}, \citenamefont {Gilyén}, \citenamefont {Hann}, \citenamefont {McArdle}, \citenamefont {Salton}, \citenamefont {Nguyen}, \citenamefont {Kubica},\ and\ \citenamefont {Brandão}}]{dalzell2025distillation}%
  \BibitemOpen
  \bibfield  {author} {\bibinfo {author} {\bibfnamefont {A.~M.}\ \bibnamefont {Dalzell}}, \bibinfo {author} {\bibfnamefont {A.}~\bibnamefont {Gilyén}}, \bibinfo {author} {\bibfnamefont {C.~T.}\ \bibnamefont {Hann}}, \bibinfo {author} {\bibfnamefont {S.}~\bibnamefont {McArdle}}, \bibinfo {author} {\bibfnamefont {G.}~\bibnamefont {Salton}}, \bibinfo {author} {\bibfnamefont {Q.~T.}\ \bibnamefont {Nguyen}}, \bibinfo {author} {\bibfnamefont {A.}~\bibnamefont {Kubica}},\ and\ \bibinfo {author} {\bibfnamefont {F.~G.}\ \bibnamefont {Brandão}},\ }\bibfield  {title} {\bibinfo {title} {A distillation-teleportation protocol for fault-tolerant qram},\ }in\ \href {https://doi.org/10.1109/FOCS63196.2025.00008} {\emph {\bibinfo {booktitle} {2025 IEEE 66th Annual Symposium on Foundations of Computer Science (FOCS)}}}\ (\bibinfo {year} {2025})\ pp.\ \bibinfo {pages} {38--74}\BibitemShut {NoStop}%
\bibitem [{\citenamefont {Li}\ \emph {et~al.}(2021)\citenamefont {Li}, \citenamefont {Chai}, \citenamefont {Guo}, \citenamefont {Ji}, \citenamefont {Wang}, \citenamefont {Shi}, \citenamefont {Wang}, \citenamefont {Lloyd},\ and\ \citenamefont {Du}}]{li2021resonant}%
  \BibitemOpen
  \bibfield  {author} {\bibinfo {author} {\bibfnamefont {Z.}~\bibnamefont {Li}}, \bibinfo {author} {\bibfnamefont {Z.}~\bibnamefont {Chai}}, \bibinfo {author} {\bibfnamefont {Y.}~\bibnamefont {Guo}}, \bibinfo {author} {\bibfnamefont {W.}~\bibnamefont {Ji}}, \bibinfo {author} {\bibfnamefont {M.}~\bibnamefont {Wang}}, \bibinfo {author} {\bibfnamefont {F.}~\bibnamefont {Shi}}, \bibinfo {author} {\bibfnamefont {Y.}~\bibnamefont {Wang}}, \bibinfo {author} {\bibfnamefont {S.}~\bibnamefont {Lloyd}},\ and\ \bibinfo {author} {\bibfnamefont {J.}~\bibnamefont {Du}},\ }\bibfield  {title} {\bibinfo {title} {Resonant quantum principal component analysis},\ }\href {https://doi.org/10.1126/sciadv.abg2589} {\bibfield  {journal} {\bibinfo  {journal} {Sci. Adv.}\ }\textbf {\bibinfo {volume} {7}},\ \bibinfo {pages} {eabg2589} (\bibinfo {year} {2021})}\BibitemShut {NoStop}%
\bibitem [{\citenamefont {Motwani}\ and\ \citenamefont {Raghavan}(1995)}]{motwani1995randomized}%
  \BibitemOpen
  \bibfield  {author} {\bibinfo {author} {\bibfnamefont {R.}~\bibnamefont {Motwani}}\ and\ \bibinfo {author} {\bibfnamefont {P.}~\bibnamefont {Raghavan}},\ }\href {https://doi.org/10.1017/cbo9780511814075} {\emph {\bibinfo {title} {Randomized Algorithms}}}\ (\bibinfo  {publisher} {Cambridge University Press},\ \bibinfo {year} {1995})\BibitemShut {NoStop}%
\bibitem [{\citenamefont {Mitzenmacher}\ and\ \citenamefont {Upfal}(2005)}]{mitzenmacher2005probability}%
  \BibitemOpen
  \bibfield  {author} {\bibinfo {author} {\bibfnamefont {M.}~\bibnamefont {Mitzenmacher}}\ and\ \bibinfo {author} {\bibfnamefont {E.}~\bibnamefont {Upfal}},\ }\href {https://doi.org/10.1017/cbo9780511813603} {\emph {\bibinfo {title} {Probability and Computing: Randomized Algorithms and Probabilistic Analysis}}}\ (\bibinfo  {publisher} {Cambridge University Press},\ \bibinfo {year} {2005})\BibitemShut {NoStop}%
\bibitem [{\citenamefont {Luis}\ and\ \citenamefont {Pe\v{r}ina}(1996)}]{luis1996optimum}%
  \BibitemOpen
  \bibfield  {author} {\bibinfo {author} {\bibfnamefont {A.}~\bibnamefont {Luis}}\ and\ \bibinfo {author} {\bibfnamefont {J.}~\bibnamefont {Pe\v{r}ina}},\ }\bibfield  {title} {\bibinfo {title} {Optimum phase-shift estimation and the quantum description of the phase difference},\ }\href {https://doi.org/10.1103/PhysRevA.54.4564} {\bibfield  {journal} {\bibinfo  {journal} {Phys. Rev. A}\ }\textbf {\bibinfo {volume} {54}},\ \bibinfo {pages} {4564} (\bibinfo {year} {1996})}\BibitemShut {NoStop}%
\bibitem [{\citenamefont {Bu\v{z}ek}\ \emph {et~al.}(1999)\citenamefont {Bu\v{z}ek}, \citenamefont {Derka},\ and\ \citenamefont {Massar}}]{buzek1999optimal}%
  \BibitemOpen
  \bibfield  {author} {\bibinfo {author} {\bibfnamefont {V.}~\bibnamefont {Bu\v{z}ek}}, \bibinfo {author} {\bibfnamefont {R.}~\bibnamefont {Derka}},\ and\ \bibinfo {author} {\bibfnamefont {S.}~\bibnamefont {Massar}},\ }\bibfield  {title} {\bibinfo {title} {Optimal quantum clocks},\ }\href {https://doi.org/10.1103/PhysRevLett.82.2207} {\bibfield  {journal} {\bibinfo  {journal} {Phys. Rev. Lett.}\ }\textbf {\bibinfo {volume} {82}},\ \bibinfo {pages} {2207} (\bibinfo {year} {1999})}\BibitemShut {NoStop}%
\bibitem [{\citenamefont {van Dam}\ \emph {et~al.}(2007)\citenamefont {van Dam}, \citenamefont {D'Ariano}, \citenamefont {Ekert}, \citenamefont {Macchiavello},\ and\ \citenamefont {Mosca}}]{dam2007optimal}%
  \BibitemOpen
  \bibfield  {author} {\bibinfo {author} {\bibfnamefont {W.}~\bibnamefont {van Dam}}, \bibinfo {author} {\bibfnamefont {G.~M.}\ \bibnamefont {D'Ariano}}, \bibinfo {author} {\bibfnamefont {A.}~\bibnamefont {Ekert}}, \bibinfo {author} {\bibfnamefont {C.}~\bibnamefont {Macchiavello}},\ and\ \bibinfo {author} {\bibfnamefont {M.}~\bibnamefont {Mosca}},\ }\bibfield  {title} {\bibinfo {title} {Optimal quantum circuits for general phase estimation},\ }\href {https://doi.org/10.1103/PhysRevLett.98.090501} {\bibfield  {journal} {\bibinfo  {journal} {Phys. Rev. Lett.}\ }\textbf {\bibinfo {volume} {98}},\ \bibinfo {pages} {090501} (\bibinfo {year} {2007})}\BibitemShut {NoStop}%
\bibitem [{\citenamefont {Rendon}\ \emph {et~al.}(2022)\citenamefont {Rendon}, \citenamefont {Izubuchi},\ and\ \citenamefont {Kikuchi}}]{rendon2022effects}%
  \BibitemOpen
  \bibfield  {author} {\bibinfo {author} {\bibfnamefont {G.}~\bibnamefont {Rendon}}, \bibinfo {author} {\bibfnamefont {T.}~\bibnamefont {Izubuchi}},\ and\ \bibinfo {author} {\bibfnamefont {Y.}~\bibnamefont {Kikuchi}},\ }\bibfield  {title} {\bibinfo {title} {Effects of cosine tapering window on quantum phase estimation},\ }\href {https://doi.org/10.1103/PhysRevD.106.034503} {\bibfield  {journal} {\bibinfo  {journal} {Phys. Rev. D}\ }\textbf {\bibinfo {volume} {106}},\ \bibinfo {pages} {034503} (\bibinfo {year} {2022})}\BibitemShut {NoStop}%
\bibitem [{\citenamefont {Berry}\ \emph {et~al.}(2024)\citenamefont {Berry}, \citenamefont {Su}, \citenamefont {Gyurik}, \citenamefont {King}, \citenamefont {Basso}, \citenamefont {Barba}, \citenamefont {Rajput}, \citenamefont {Wiebe}, \citenamefont {Dunjko},\ and\ \citenamefont {Babbush}}]{berry2024analyzing}%
  \BibitemOpen
  \bibfield  {author} {\bibinfo {author} {\bibfnamefont {D.~W.}\ \bibnamefont {Berry}}, \bibinfo {author} {\bibfnamefont {Y.}~\bibnamefont {Su}}, \bibinfo {author} {\bibfnamefont {C.}~\bibnamefont {Gyurik}}, \bibinfo {author} {\bibfnamefont {R.}~\bibnamefont {King}}, \bibinfo {author} {\bibfnamefont {J.}~\bibnamefont {Basso}}, \bibinfo {author} {\bibfnamefont {A.~D.~T.}\ \bibnamefont {Barba}}, \bibinfo {author} {\bibfnamefont {A.}~\bibnamefont {Rajput}}, \bibinfo {author} {\bibfnamefont {N.}~\bibnamefont {Wiebe}}, \bibinfo {author} {\bibfnamefont {V.}~\bibnamefont {Dunjko}},\ and\ \bibinfo {author} {\bibfnamefont {R.}~\bibnamefont {Babbush}},\ }\bibfield  {title} {\bibinfo {title} {Analyzing prospects for quantum advantage in topological data analysis},\ }\href {https://doi.org/10.1103/PRXQuantum.5.010319} {\bibfield  {journal} {\bibinfo  {journal} {PRX Quantum}\ }\textbf {\bibinfo {volume} {5}},\ \bibinfo {pages} {010319} (\bibinfo {year} {2024})}\BibitemShut {NoStop}%
\bibitem [{\citenamefont {D\"urr}\ and\ \citenamefont {H\o{}yer}(1996)}]{durr1996quantum}%
  \BibitemOpen
  \bibfield  {author} {\bibinfo {author} {\bibfnamefont {C.}~\bibnamefont {D\"urr}}\ and\ \bibinfo {author} {\bibfnamefont {P.}~\bibnamefont {H\o{}yer}},\ }\href@noop {} {\bibinfo {title} {A quantum algorithm for finding the minimum}} (\bibinfo {year} {1996}),\ \Eprint {https://arxiv.org/abs/quant-ph/9607014} {arXiv:quant-ph/9607014 [quant-ph]} \BibitemShut {NoStop}%
\bibitem [{\citenamefont {Pati}\ \emph {et~al.}(1993)\citenamefont {Pati}, \citenamefont {Rezaiifar},\ and\ \citenamefont {Krishnaprasad}}]{pati1993orthogonal}%
  \BibitemOpen
  \bibfield  {author} {\bibinfo {author} {\bibfnamefont {Y.}~\bibnamefont {Pati}}, \bibinfo {author} {\bibfnamefont {R.}~\bibnamefont {Rezaiifar}},\ and\ \bibinfo {author} {\bibfnamefont {P.}~\bibnamefont {Krishnaprasad}},\ }\bibfield  {title} {\bibinfo {title} {Orthogonal matching pursuit: recursive function approximation with applications to wavelet decomposition},\ }in\ \href {https://doi.org/10.1109/ACSSC.1993.342465} {\emph {\bibinfo {booktitle} {Proceedings of 27th Asilomar Conference on Signals, Systems and Computers}}}\ (\bibinfo {year} {1993})\ pp.\ \bibinfo {pages} {40--44 vol.1}\BibitemShut {NoStop}%
\bibitem [{\citenamefont {Grover}(1996)}]{grover1996fast}%
  \BibitemOpen
  \bibfield  {author} {\bibinfo {author} {\bibfnamefont {L.~K.}\ \bibnamefont {Grover}},\ }\bibfield  {title} {\bibinfo {title} {A fast quantum mechanical algorithm for database search},\ }in\ \href {https://doi.org/10.1145/237814.237866} {\emph {\bibinfo {booktitle} {Proceedings of the Twenty-Eighth Annual ACM Symposium on Theory of Computing}}},\ \bibinfo {series and number} {STOC '96}\ (\bibinfo  {publisher} {Association for Computing Machinery},\ \bibinfo {address} {New York, NY, USA},\ \bibinfo {year} {1996})\ p.\ \bibinfo {pages} {212–219}\BibitemShut {NoStop}%
\bibitem [{\citenamefont {Gordon}(1941)}]{gordon1941values}%
  \BibitemOpen
  \bibfield  {author} {\bibinfo {author} {\bibfnamefont {R.~D.}\ \bibnamefont {Gordon}},\ }\bibfield  {title} {\bibinfo {title} {Values of {Mills'} ratio of area to bounding ordinate and of the normal probability integral for large values of the argument},\ }\href {http://www.jstor.org/stable/2235868} {\bibfield  {journal} {\bibinfo  {journal} {Ann. Math. Stat.}\ }\textbf {\bibinfo {volume} {12}},\ \bibinfo {pages} {364} (\bibinfo {year} {1941})}\BibitemShut {NoStop}%
\bibitem [{\citenamefont {Rigollet}\ and\ \citenamefont {H\"{u}tter}(2023)}]{rigollet2023high}%
  \BibitemOpen
  \bibfield  {author} {\bibinfo {author} {\bibfnamefont {P.}~\bibnamefont {Rigollet}}\ and\ \bibinfo {author} {\bibfnamefont {J.-C.}\ \bibnamefont {H\"{u}tter}},\ }\href@noop {} {\bibinfo {title} {High-dimensional statistics}} (\bibinfo {year} {2023}),\ \Eprint {https://arxiv.org/abs/2310.19244} {arXiv:2310.19244 [math.ST]} \BibitemShut {NoStop}%
\bibitem [{\citenamefont {Grover}\ and\ \citenamefont {Rudolph}(2002)}]{grover2002creating}%
  \BibitemOpen
  \bibfield  {author} {\bibinfo {author} {\bibfnamefont {L.}~\bibnamefont {Grover}}\ and\ \bibinfo {author} {\bibfnamefont {T.}~\bibnamefont {Rudolph}},\ }\href@noop {} {\bibinfo {title} {Creating superpositions that correspond to efficiently integrable probability distributions}} (\bibinfo {year} {2002}),\ \Eprint {https://arxiv.org/abs/quant-ph/0208112} {arXiv:quant-ph/0208112 [quant-ph]} \BibitemShut {NoStop}%
\bibitem [{\citenamefont {Horn}\ and\ \citenamefont {Johnson}(2012)}]{horn2012matrix}%
  \BibitemOpen
  \bibfield  {author} {\bibinfo {author} {\bibfnamefont {R.~A.}\ \bibnamefont {Horn}}\ and\ \bibinfo {author} {\bibfnamefont {C.~R.}\ \bibnamefont {Johnson}},\ }\href {https://doi.org/10.1017/cbo9780511810817} {\emph {\bibinfo {title} {Matrix Analysis}}},\ \bibinfo {edition} {2nd}\ ed.\ (\bibinfo  {publisher} {Cambridge University Press},\ \bibinfo {year} {2012})\BibitemShut {NoStop}%
\bibitem [{\citenamefont {Wedin}(1972)}]{wedin1972perturbation}%
  \BibitemOpen
  \bibfield  {author} {\bibinfo {author} {\bibfnamefont {P.-A.}\ \bibnamefont {Wedin}},\ }\bibfield  {title} {\bibinfo {title} {Perturbation bounds in connection with singular value decomposition},\ }\href {https://doi.org/10.1007/bf01932678} {\bibfield  {journal} {\bibinfo  {journal} {BIT Numer. Math.}\ }\textbf {\bibinfo {volume} {12}},\ \bibinfo {pages} {99–111} (\bibinfo {year} {1972})}\BibitemShut {NoStop}%
\bibitem [{\citenamefont {Rall}(2020)}]{rall2020quantum}%
  \BibitemOpen
  \bibfield  {author} {\bibinfo {author} {\bibfnamefont {P.}~\bibnamefont {Rall}},\ }\bibfield  {title} {\bibinfo {title} {Quantum algorithms for estimating physical quantities using block encodings},\ }\href {https://doi.org/10.1103/PhysRevA.102.022408} {\bibfield  {journal} {\bibinfo  {journal} {Phys. Rev. A}\ }\textbf {\bibinfo {volume} {102}},\ \bibinfo {pages} {022408} (\bibinfo {year} {2020})}\BibitemShut {NoStop}%
\bibitem [{\citenamefont {Martyn}\ \emph {et~al.}(2021)\citenamefont {Martyn}, \citenamefont {Rossi}, \citenamefont {Tan},\ and\ \citenamefont {Chuang}}]{martyn2021grand}%
  \BibitemOpen
  \bibfield  {author} {\bibinfo {author} {\bibfnamefont {J.~M.}\ \bibnamefont {Martyn}}, \bibinfo {author} {\bibfnamefont {Z.~M.}\ \bibnamefont {Rossi}}, \bibinfo {author} {\bibfnamefont {A.~K.}\ \bibnamefont {Tan}},\ and\ \bibinfo {author} {\bibfnamefont {I.~L.}\ \bibnamefont {Chuang}},\ }\bibfield  {title} {\bibinfo {title} {Grand unification of quantum algorithms},\ }\href {https://doi.org/10.1103/PRXQuantum.2.040203} {\bibfield  {journal} {\bibinfo  {journal} {PRX Quantum}\ }\textbf {\bibinfo {volume} {2}},\ \bibinfo {pages} {040203} (\bibinfo {year} {2021})}\BibitemShut {NoStop}%
\bibitem [{\citenamefont {Huang}\ and\ \citenamefont {Koczor}(2026)}]{huang2026low}%
  \BibitemOpen
  \bibfield  {author} {\bibinfo {author} {\bibfnamefont {P.-W.}\ \bibnamefont {Huang}}\ and\ \bibinfo {author} {\bibfnamefont {B.}~\bibnamefont {Koczor}},\ }\href@noop {} {\bibinfo {title} {Low-depth amplitude estimation via statistical eigengap estimation}} (\bibinfo {year} {2026}),\ \Eprint {https://arxiv.org/abs/2603.05475} {arXiv:2603.05475 [quant-ph]} \BibitemShut {NoStop}%
\bibitem [{\citenamefont {Camps}\ and\ \citenamefont {Van~Beeumen}(2020)}]{camps2020approximate}%
  \BibitemOpen
  \bibfield  {author} {\bibinfo {author} {\bibfnamefont {D.}~\bibnamefont {Camps}}\ and\ \bibinfo {author} {\bibfnamefont {R.}~\bibnamefont {Van~Beeumen}},\ }\bibfield  {title} {\bibinfo {title} {Approximate quantum circuit synthesis using block encodings},\ }\href {https://doi.org/10.1103/PhysRevA.102.052411} {\bibfield  {journal} {\bibinfo  {journal} {Phys. Rev. A}\ }\textbf {\bibinfo {volume} {102}},\ \bibinfo {pages} {052411} (\bibinfo {year} {2020})}\BibitemShut {NoStop}%
\bibitem [{\citenamefont {Rall}\ and\ \citenamefont {Fuller}(2023)}]{rall2023amplitude}%
  \BibitemOpen
  \bibfield  {author} {\bibinfo {author} {\bibfnamefont {P.}~\bibnamefont {Rall}}\ and\ \bibinfo {author} {\bibfnamefont {B.}~\bibnamefont {Fuller}},\ }\bibfield  {title} {\bibinfo {title} {Amplitude estimation from quantum signal processing},\ }\href {https://doi.org/10.22331/q-2023-03-02-937} {\bibfield  {journal} {\bibinfo  {journal} {Quantum}\ }\textbf {\bibinfo {volume} {7}},\ \bibinfo {pages} {937} (\bibinfo {year} {2023})}\BibitemShut {NoStop}%
\bibitem [{\citenamefont {Sun}\ \emph {et~al.}(2026)\citenamefont {Sun}, \citenamefont {Zhou}, \citenamefont {Xu}, \citenamefont {Yao}, \citenamefont {Du}, \citenamefont {Zhang}, \citenamefont {Gu}, \citenamefont {Huang}, \citenamefont {Zhou}, \citenamefont {Wang}, \citenamefont {Yosifov}, \citenamefont {Dong}, \citenamefont {Huang}, \citenamefont {Serrano}, \citenamefont {Wang}, \citenamefont {Feng}, \citenamefont {Sadugol}, \citenamefont {Yu}, \citenamefont {You}, \citenamefont {Qin}, \citenamefont {Zhang}, \citenamefont {Wu}, \citenamefont {Iyer}, \citenamefont {Zhou}, \citenamefont {Li}, \citenamefont {Li}, \citenamefont {Ma}, \citenamefont {Zhao}, \citenamefont {Zeng}, \citenamefont {Zhang},\ and\ \citenamefont {Yuan}}]{sun2026quantum}%
  \BibitemOpen
  \bibfield  {author} {\bibinfo {author} {\bibfnamefont {J.}~\bibnamefont {Sun}}, \bibinfo {author} {\bibfnamefont {B.}~\bibnamefont {Zhou}}, \bibinfo {author} {\bibfnamefont {J.}~\bibnamefont {Xu}}, \bibinfo {author} {\bibfnamefont {Y.}~\bibnamefont {Yao}}, \bibinfo {author} {\bibfnamefont {Z.}~\bibnamefont {Du}}, \bibinfo {author} {\bibfnamefont {Z.}~\bibnamefont {Zhang}}, \bibinfo {author} {\bibfnamefont {Y.}~\bibnamefont {Gu}}, \bibinfo {author} {\bibfnamefont {J.}~\bibnamefont {Huang}}, \bibinfo {author} {\bibfnamefont {S.}~\bibnamefont {Zhou}}, \bibinfo {author} {\bibfnamefont {Z.}~\bibnamefont {Wang}}, \bibinfo {author} {\bibfnamefont {A.}~\bibnamefont {Yosifov}}, \bibinfo {author} {\bibfnamefont {W.}~\bibnamefont {Dong}}, \bibinfo {author} {\bibfnamefont {Y.}~\bibnamefont {Huang}}, \bibinfo {author} {\bibfnamefont {D.}~\bibnamefont {Serrano}}, \bibinfo {author} {\bibfnamefont {X.}~\bibnamefont {Wang}}, \bibinfo {author} {\bibfnamefont {T.}~\bibnamefont {Feng}}, \bibinfo {author} {\bibfnamefont {S.}~\bibnamefont {Sadugol}}, \bibinfo {author} {\bibfnamefont {W.}~\bibnamefont {Yu}}, \bibinfo {author} {\bibfnamefont {Z.}~\bibnamefont {You}}, \bibinfo {author} {\bibfnamefont {D.}~\bibnamefont {Qin}}, \bibinfo {author} {\bibfnamefont {X.-M.}\ \bibnamefont {Zhang}}, \bibinfo {author} {\bibfnamefont {Y.}~\bibnamefont {Wu}}, \bibinfo {author} {\bibfnamefont {A.}~\bibnamefont {Iyer}}, \bibinfo {author} {\bibfnamefont {Y.}~\bibnamefont {Zhou}}, \bibinfo {author} {\bibfnamefont {T.}~\bibnamefont {Li}}, \bibinfo {author} {\bibfnamefont {Y.}~\bibnamefont {Li}}, \bibinfo {author} {\bibfnamefont {X.}~\bibnamefont {Ma}}, \bibinfo {author} {\bibfnamefont {Q.}~\bibnamefont {Zhao}}, \bibinfo {author} {\bibfnamefont {P.}~\bibnamefont {Zeng}}, \bibinfo {author} {\bibfnamefont {P.}~\bibnamefont {Zhang}},\ and\ \bibinfo {author} {\bibfnamefont {X.}~\bibnamefont {Yuan}},\ }\href@noop {} {\bibinfo {title} {Quantum-classical crossover in fault-tolerant quantum dynamics simulation}} (\bibinfo {year} {2026}),\ \Eprint {https://arxiv.org/abs/2607.16116} {arXiv:2607.16116 [quant-ph]} \BibitemShut {NoStop}%
\bibitem [{\citenamefont {Grinko}\ \emph {et~al.}(2021)\citenamefont {Grinko}, \citenamefont {Gacon}, \citenamefont {Zoufal},\ and\ \citenamefont {Woerner}}]{grinko2021iterative}%
  \BibitemOpen
  \bibfield  {author} {\bibinfo {author} {\bibfnamefont {D.}~\bibnamefont {Grinko}}, \bibinfo {author} {\bibfnamefont {J.}~\bibnamefont {Gacon}}, \bibinfo {author} {\bibfnamefont {C.}~\bibnamefont {Zoufal}},\ and\ \bibinfo {author} {\bibfnamefont {S.}~\bibnamefont {Woerner}},\ }\bibfield  {title} {\bibinfo {title} {Iterative quantum amplitude estimation},\ }\href {https://doi.org/10.1038/s41534-021-00379-1} {\bibfield  {journal} {\bibinfo  {journal} {npj Quantum Inf.}\ }\textbf {\bibinfo {volume} {7}},\ \bibinfo {pages} {52} (\bibinfo {year} {2021})}\BibitemShut {NoStop}%
\end{thebibliography}%

\clearpage
\appendix

\crefalias{section}{appendix}
\crefalias{subsection}{appendix}
\counterwithin{theorem}{section}
\counterwithin{proposition}{section}
\counterwithin{lemma}{section}
\counterwithin{corollary}{theorem}
\counterwithin{assumption}{section}
\counterwithin{equation}{section}
\counterwithin{definition}{section}

\onecolumngrid
\begin{center}
    \noindent{\large\bfseries Appendices for ``\textsl{Eigenstate Preparation Through Near-Optimal Eigenprobability Filtering}''}
\end{center}
\appendixtableofcontents
\vspace{2em}
\clearpage
\twocolumngrid

\section{Related work on excited state problems}
\label{appLitRev}
\subsection{Classical algorithms in quantum chemistry}
\label{appLitRevClass}

A broad range of classical methods for molecular excited states has been developed, offering different trade-offs between accuracy and computational cost. Approaches suited primarily to states with single-reference character include configuration-interaction singles, time-dependent Hartree–Fock, time-dependent density-functional theory (TDDFT), and equation-of-motion coupled-cluster (EOM-CC) methods \citep{dreuw2005single,casida2012progress,schirmer1982beyond,krylov2008equation}. Systems involving near-degeneracies, bond rearrangement, or substantial static correlation generally require multireference methods. Complete active-space self-consistent field (CASSCF), for example, optimises both the orbitals and the configuration-interaction expansion within a selected active space, while CASPT2 and NEVPT2 recover additional dynamic correlation \citep{roos1980complete,andersson1990second,angeli2001introduction}. Density matrix renormalisation group (DMRG) methods provide an efficient active-space solver by representing the wave function as a matrix product state, enabling larger active spaces than conventional complete-active-space approaches \citep{chan2011density,sharma2012spin}. Nevertheless, classical excited-state calculations remain challenging because the many-electron configuration space grows rapidly with system size, excited states are often closely spaced or strongly correlated, and their accurate treatment may require balancing several states and different types of electron correlation.

\subsection{Quantum and hybrid approaches}
\label{appLitRevQuant}
The task of computing excited states and their properties is generally considered to be a much more difficult task than that of the ground state. Despite this, emerging recent literature has developed a plethora of heuristics that may enable preparation of excited states and general eigenstates on a quantum computer. We mainly split the existing methods into four categories: heuristic-oriented strategies that are usually more suited for near-term quantum computers, hybrid approaches that mix time evolution and classical processing, methods with theoretical guarantees that can often only be implemented on fault-tolerant machines, and algorithms that take copies of a mixed state as input and distil the dominant eigenstate.

\subsubsection{Heuristic approaches}

\paragraph*{Variational algorithms.} Variational algorithms~\citep{cerezo2021variational,tilly2022variational} constitute a major category of excited-state algorithms for near-term devices. These methods extend the Variational Quantum Eigensolver (VQE) ~\citep{peruzzo2014variational} by enforcing orthogonality to previously computed eigenstates, optimising multiple states simultaneously, or transforming the Hamiltonian such that a target excited state becomes the variational ground state. Representative approaches include Variational Quantum Deflation (VQD)~\citep{higgott2019variational,jones2019variational,gocho2023excited}, Subspace Search VQE (SSVQE)~\citep{nakanishi2019subspace,parrish2019quantum}, Folded Spectrum VQE (FS-VQE)~\citep{caditazi2024folded}, Covariance Root Finding (CoVaR)~\citep{boyd2022training,hwang2025preparing}, which reformulates variational optimisation as a covariance-based root-finding problem using classical shadows.
On the other hand, methods such as Variational Quantum State Diagonalisation (VQSD)~\citep{larose2019variational} and Variational Quantum State Eigensolver (VQSE)~\citep{cerezo2022variational} train a parameterised ansatz that diagonalises a density matrix $\rho$ into the computational basis. This allows one to obtain the largest eigenprobability of $\rho$ through sampling. Further, one can then prepare the dominant eigenstate by applying the trained ansatz inversely to the computational basis state corresponding to the largest eigenprobability.

\paragraph*{Subspace methods.}
Another strategy is to utilise subspace diagonalisation, where one can approximate eigenstates of a Hamiltonian by projecting into a low-dimensional subspace generated from provided reference states. Building on this, the quantum computer is used to compute elements of the overlap matrix between a set of basis states, after which the corresponding eigenstates can be computed on a classical computer by solving the generalised eigenvalue problem~\citep{epperly2022theory,motta2024subspace}. Notable methods include the Quantum Subspace Expansion (QSE)~\citep{mcclean2017hybrid}, Quantum Filter Diagonalisation~\citep{parrish2019quantumb} and Quantum Krylov methods~\citep{motta2019determining,stair2020multireference,yoshioka2025krylov,cortes2022quantum,oumarou2025molecular}, which can also be enhanced using classical shadows~\cite{boyd2025high}.

\paragraph*{Adiabatic methods.}
Adiabatic methods can prepare excited states by slowly evolving an initial Hamiltonian with a known eigenstate into the target Hamiltonian while remaining in the corresponding instantaneous eigenstate~\citep{aspuruguzik2005simulated,burton2025excited,lutz2026adiabatic}. While such methods can potentially provide guarantees based on the adiabatic theorem, practical applications rely on the spectral gaps along the entire adiabatic path, which can be hard to quantify and control.

\subsubsection{Time evolution-based hybrid approaches}

\paragraph*{Time-series and signal processing.}
Unlike conventional phase estimation, time-series methods treat the quantum computer as a pure time-evolution engine. By applying $e^{-iHt}$ for varying durations and measuring overlaps, the full eigenspectrum can be extracted via classical signal processing~\citep{somma2019quantum,ding2023simultaneous,ding2024quantum}. Crucially, as demonstrated by \citet{obrien2019quantum}, feeding the measurement outcomes of these short-time evolutions into a classical Bayesian or frequentist post-processor allows for the simultaneous resolution of multiple excited-state energies present in an initial superposition, bypassing the deep circuits required to collapse the system into a single state. Recently, \citet{chan2025algorithmic} drastically reduced the measurement overhead of these methods using algorithmic shadow spectroscopy. By performing randomised measurements on time-evolved states to construct classical shadows, their approach resolves excited-state energy gaps through classical post-processing using orders of magnitude fewer circuit repetitions and no ancilla qubits.

\paragraph*{Virtual distillation and cooling.}
Virtual cooling mathematically simulates physical purification to extract expectation values directly from mixed or noisy states, bypassing the need for perfect coherent state preparation~\citep{cotler2019quantum}. Recently, frameworks have shifted the burden of multi-copy quantum coherence entirely into classical post-processing. Most notably, the Exponential Distillation of Dominant Eigenproperties (DDE) protocol by \citet{bako2026exponential} applies random real-time evolution to create an average mixed state, which is then ``virtually purified'' using classical Monte Carlo integration. When paired with appropriate initial states or shifted operators, DDE achieves exponential suppression of unwanted eigencomponents without the hardware overhead of multi-copy quantum control, as in conventional virtual distillation~\cite{huggins2021virtual, koczor2021exponential}.

\paragraph*{Imaginary time evolution.}
While quantum imaginary time evolution natively cools an initial state to the ground state~\citep{mcardle2019variational,motta2019determining,huo2023error}, this non-unitary dynamics can be redirected toward excited states. Techniques like deflation and spectral transformations can be employed to target excited states~\citep{tsuchimochi2023improved}.

\subsubsection{Theoretically guaranteed methods}

\paragraph*{Phase estimation.}
Quantum phase estimation is one of the most established algorithms for estimating eigenvalues of a Hamiltonian given an input state with non-zero overlap with the corresponding eigenstate~\citep{kitaev1995quantum,cleve1998quantum,nielsen2010quantum}. Recent developments have significantly reduced the circuit overhead, including variants that estimate the eigenvalues of multiple eigenstates present in the input state while requiring only a single ancilla qubit~\citep{ding2023simultaneous,ding2024quantum}.

Apart from eigenvalue estimation, early work of \citet{abrams1999quantum} established that the textbook phase estimation using the quantum Fourier transform can be used to obtain not only the ground state, but also excited states. By post-selecting on the target eigenenergies in the phase register, one can prepare the eigenstate from the post-selected residual state. However, this requires knowledge of the target eigenvalue up to high accuracy, though this can also be acquired by using phase estimation to obtain the target eigenvalue first, as well as a sufficient overlap between the initial state and the target eigenstate.  Similarly, the Rodeo algorithm~\citep{choi2021rodeo} prepares eigenstates through repeated randomised controlled Hamiltonian evolutions and post-selection, with a circuit reminiscent of Kitaev's iterative phase estimation~\citep{kitaev1995quantum}, while retaining the same requirements of \emph{a priori} target eigenenergy knowledge and a sufficiently large initial overlap.

We provide a more detailed discussion of the phase estimation algorithm for eigenenergy estimation and eigenstate preparation in \cref{appQPE,appQPEMAE}, showing that the na\"ive textbook approach using post-selection results in a sub-optimal algorithm, and while the runtime can be improved using multidimensional amplitude estimation, this comes at a cost of highly increased qubit usage.

\paragraph*{Spectral transformations.}
Spectral transformation methods can be used for excited state preparation by transforming the Hamiltonian spectrum such that the target eigenstate becomes the dominant factor of the transformed operator. Examples of these transformations include shift-and-invert methods based on the inverse Hamiltonian $(H-\mu \mathbb I)^{-1}$~\citep{kyriienko2020quantum,yoshikura2023quantum,millar2026imaginary}, folded-spectrum methods employing $(H-\mu \mathbb{I})^2$~\citep{caditazi2024folded,khinevich2025quantum}, and spectral filtering methods that filter out other states using an eigenstate filtering polynomial~\citep{lin2020optimal,patil2026efficient}. These methods, while efficient and theoretically guaranteed, often require knowledge of the target eigenenergy to shift or filter out the desired eigenstate. In \cref{appHamThresh}, we further discuss methods that apply eigenstate filtering polynomials to Hamiltonians in the case where the eigenenergy is not known \emph{a priori}.
Recently, spectral transformations have been further incorporated into dissipative state preparation algorithms, where they are used to design Lindbladians whose steady state is the target excited state~\citep{li2025dissipative}. Such methods can eliminate the need for an initial state with significant overlap with the target eigenstate.

\subsubsection{Sample-based algorithms}
Sample-based quantum purity amplification has been studied extensively~\citep{cirac1999optimal}, with recent algorithms preparing the dominant eigenstate of a density operator from multiple copies using SWAP-based gadgets, Clebsch–Gordan transforms, or density matrix exponentiation~\citep{li2024optimal,childs2025streaming,grier2025streaming,li2024optimal,dalzell2025distillation}. Furthermore, lower bounds established by \citet{grier2025streaming} show that these sample-based purity amplification protocols are asymptotically optimal.

A related approach is resonant quantum principal component analysis~\citep{li2021resonant}, which distils the dominant eigenstate from multiple copies of a density matrix by coupling it to a frequency-tunable probe qubit and exploiting resonant transitions.

\section{Dominant eigenstate preparation with textbook phase estimation and post-selection}
\label{appQPE}

In this appendix, we describe the textbook approach to dominant eigenstate preparation based on phase estimation~\citep{kitaev1995quantum,cleve1998quantum,nielsen2010quantum} and post-selection.

After executing phase estimation, measuring the phase register and obtaining an eigenenergy $E_k$, the system register collapses to the corresponding eigenstate. Given knowledge of $E_k$, one can further enhance the success probability via amplitude amplification by reflecting the ancilla register about $\ket{E_k}$, i.e., applying $\mathbb{I} - 2\ketbra{E_k}{E_k}$. However, even for the simpler task of ground state preparation, the limited accuracy of the eigenenergy readout in the ancillary register leaves residual imperfections in the main register. This imposes a significant limitation on using phase estimation and amplitude amplification for eigenstate preparation. Specifically, the target eigenenergy $E_k$ must be known to high accuracy \emph{a priori}, and calculating this value often requires higher runtime than the actual state preparation process~\citep{ge2019faster}.

For the task of dominant eigenstate preparation, the objective is to first identify the dominant eigenstate present in the input state, and estimate its corresponding eigenenergy to high accuracy. To determine which eigenvalue is dominant, we need to sample repeatedly from the phase estimation circuit to learn the output eigenprobability distribution. Using this to identify the dominant eigenvalue, we then use this estimate in the post-selection of the eigenvalue after phase estimation to produce a residual state close enough to the dominant eigenvalue.

In particular, when phase estimation is applied to our input state $\ket{\psi} = \sum_{k \in \mathcal{S}} c_k \ket{\psi_k}$, the resulting state takes the form
\begin{align}
    {\rm QPE}(\mathbb{I}\otimes U_\psi)\ket{0}\ket{0}
     & \approx \sum_{k\in{\mathcal{S}}} c_k\ket{E_k}\ket{\psi_k}\nonumber \\
     & = \sum_{k\in{\mathcal{S}}} \sqrt{p_k}\ket{E_k}\ket{\psi_k'} ,
\end{align}
where the phase of $c_k$ is absorbed into $\ket{\psi_k'}$. Here, phase estimation provides the eigenenergies as a fixed-point encoding in the computational basis in the ancillary register. Measuring the ancillary register (and tracing out the main register) thus samples from the probability distribution over eigenenergies, where, as in the main text, the probabilities $p_k$ are given by the squared overlaps of the input state with the corresponding eigenstates.

This observation allows us to interpret phase estimation as a provider of quantum sample access to a probability distribution over computational basis states indexed by eigenenergies. Formally, we define a distribution $\vec{p'}$ over the computational basis states whereby $p'_{E_k} = p_k$ and $p'_{\ell} = 0$ for $\ell \notin \{E_k\}_{k\in \mathcal{S}}$. It follows that
\begin{equation}
    {\rm QPE}(\mathbb{I}\otimes U_\psi) \ket{0}\ket{0}\approx \sum_{k\in{\mathcal{S}}} \sqrt{p'_{E_k}}\ket{E_k}\ket{\psi_k'}
\end{equation}
acts as a quantum sample state for $\vec{p'}$.

\subsection{Estimating the dominant eigenenergy with phase estimation}
As noted previously, the first step is to generate a distribution over eigenenergies using phase estimation. In the idealised setting, one would expect an evolution time of $\mathcal{O}(\eta^{-1})$ to achieve additive accuracy $\eta$. However, as observed by \citet{ge2019faster}, when the input state is a superposition of eigenstates, an additional multiplicative factor arises that depends on the spectral weight distribution of the input. This overhead is required to ensure that the measurement outcomes concentrate sufficiently onto a single computational basis state corresponding to an eigenenergy. The measurement outcomes are then obtained from a subset of the ancillary registers, as illustrated in \cref{figQPE}. For ground state energy estimation utilising textbook phase estimation, we require $\mathcal{O}(p_G^{-1}\eta^{-1})$ evolution time for the measurement results to concentrate sufficiently and avoid obtaining energy estimates lower than the ground state energy. To facilitate further discussion, we assume the maximum evolution time, as well as the dimension size of the phase estimation ancillary register, is $L$, and the readout register is the top $m$ qubits of the ancillary register, and set $M=2^m \in \mathcal{O}(\eta^{-1})$.

\begin{figure}
    \includegraphics[width=\linewidth]{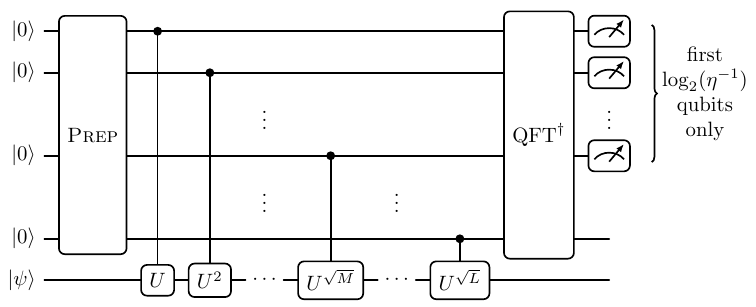}
    \caption{For input states that are not pure eigenstates, phase estimation requires additional qubits beyond the $\mathcal{O}(\log (\eta^{-1}))$ dependency.}
    \label{figQPE}
\end{figure}

\begin{lemma}[Textbook phase estimation~\citep{nielsen2010quantum}]
    For a given input eigenstate, the ``textbook'' phase estimation algorithm can estimate the eigenvalue up to additive accuracy $\eta$ with success rate of $1-\zeta$ in
   \begin{equation*}
        \widetilde{\mathcal{O}}\left(\frac{\|H\|}{\eta\zeta}\right),
    \end{equation*}
    query depth.
\end{lemma}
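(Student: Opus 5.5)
The proof goes through the exact output distribution of textbook QPE and a tail bound on it. The lemma is the standard Nielsen--Chuang statement with $t$ phase qubits, where $t = m + \lceil\log_2(2+1/(2\zeta))\rceil$, so the query depth is expressed in units of the evolution time needed to reach phase resolution $\eta$. First we rescale the Hamiltonian. We pick $\tau \in \Theta(\|H\|^{-1})$ so that the eigenphases $\theta_k = E_k\tau/(2\pi)$ lie in a fixed subinterval of $[0,1)$ and do not wrap around. An additive energy accuracy $\eta$ then corresponds to a phase accuracy $\eta' = \eta\tau/(2\pi) \in \Theta(\eta/\|H\|)$. We choose the readout register of $m$ qubits with $2^{-m} \le \eta'$, so that $2^m \in \mathcal{O}(\|H\|/\eta)$.

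Next we analyse a single eigenstate input $\ket{\psi_k}$. After the controlled powers $U_\tau^{2^j}$ and the inverse QFT on a $t$-qubit register, the amplitude on outcome $b$ is the Fejér-type sum
\begin{equation*}
\alpha_b = \frac{1}{2^t}\sum_{x=0}^{2^t-1} e^{2\pi i x(\theta_k - b/2^t)}.
\end{equation*}
We write $\delta_b = \theta_k - b/2^t$ and use $|1-e^{i\phi}| \ge 2|\phi|/\pi$ for $|\phi|\le\pi$. This gives $|\alpha_b| \le 1/(2^{t+1}|\delta_b|)$.

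The main step is to sum these bounds over the outcomes $b$ lying more than $e = 2^{t-m}-1$ grid points from the best approximation of $\theta_k$. This is the standard harmonic-tail estimate:
\begin{equation*}
\Pr[|b - b^\star| > e] \le \frac{1}{2}\sum_{\ell > e}\frac{1}{\ell^2}\le \frac{1}{2(e-1)}.
\end{equation*}
Setting this bound equal to $\zeta$ forces $2^{t-m} \in \mathcal{O}(1/\zeta)$, and then the top $m$ bits give $\theta_k$ to accuracy $2^{-m}$ with probability at least $1-\zeta$.

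Finally we count queries. The total controlled evolution is $\sum_j 2^j\tau \approx 2^t\tau \in \mathcal{O}(2^m/\zeta)\cdot\tau$. Expressed as query depth in calls to a unit-normalised evolution, this is $\mathcal{O}(\|H\|/(\eta\zeta))$. The $\widetilde{\mathcal{O}}$ absorbs the cost of implementing the powers as repeated $U_\tau$ and the constant offset needed to avoid phase wrap-around. The only delicate point is this tail summation together with keeping track of the rescaling constants between phase and energy accuracy. Everything else is routine.
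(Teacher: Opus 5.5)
Your proposal is correct and follows essentially the same route as the paper, which likewise bounds the textbook QPE amplitudes by $|\alpha_{k\ell}|\le 1/(2|\ell-L\lambda_k|)$. It then sums the harmonic tail over outcomes with $|\ell-L\lambda_k|\ge L/(2M)$ via an integral bound, giving failure probability at most $M/(L-2M)\le 2M/L$, and sets $L=2M/\zeta$ with $M\in\mathcal{O}(\|H\|/\eta)$, which is exactly your $2^{t}\in\mathcal{O}(2^m/\zeta)$ with $2^m\in\mathcal{O}(\|H\|/\eta)$. The only blemishes in your write-up are cosmetic and leave the asymptotics unchanged: the tail sum should start at $\ell=e$ because of the sub-grid offset, and reading only the top $m$ bits costs a factor of two in accuracy.
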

\begin{proof}
    Given input eigenstate $\ket{\psi_k}$ with eigenphase $\lambda_k$, phase estimation produces the output state 
    \begin{equation}
        \sum_{\ell=0}^{L-1} \alpha_{k\ell} \ket{\ell} \otimes \ket{\phi_k}
    \end{equation}
    where $\alpha_{k,\ell}$ depends on the input state.

    We first bound the probability of obtaining an estimate of $\lambda_k$ within an additive error $\eta$. Suppose we have a smaller register made up of the top $m$ qubits of the ancillary register, where we assume $M = 2^m \in\mathcal{O}(\|H\|/\eta) \le L$. We suppose a bit string approximation $b_k$ of $\lambda_k$ where we have $ 0 \le b_k < M$ and 
    \begin{equation}
        \lambda_k - \frac{1}{2M} \le \frac{b_k}{M} \le \lambda_k +\frac{1}{2M}.
    \end{equation}
    The probability of measuring $\ket{b_k}$ in the top $m$ qubits of the first register is thus the probability of obtaining an estimate of $\lambda_k$ within additive accuracy $\eta$. We define the set of bit strings of length $r$ whose first $m$ bits match $b_k$ as $\mathcal{B}_k$. 

    We can then bound the probability of not measuring $\ket{b_k}$ by some failure probability $\zeta$ such that
    \begin{equation}
        \sum_{\ell\notin \mathcal{B}_k}|\alpha_{k\ell}|^2 \le \zeta.
        \label{eqAlphaTailBound}
    \end{equation}

    For textbook phase estimation~\citep{nielsen2010quantum}, one can find
    \begin{align}
        &\sum_{\ell\notin \mathcal{B}_k}|\alpha_{k\ell}|^2 = \sum_{\frac{\ell}{L} \ge \lambda_k +\frac{1}{2M}}|\alpha_{k\ell}|^2 + \sum_{\frac{\ell}{L} \le \lambda_k -\frac{1}{2M}}|\alpha_{k\ell}|^2\nonumber\\
        &\le \sum_{\frac{\ell}{L} \ge \lambda_k +\frac{1}{2M}}\left(\frac{1}{2(\ell-L\lambda_k)}\right)^2 + \sum_{\frac{\ell}{L} \le \lambda_k -\frac{1}{2M}}\left(\frac{1}{2(L\lambda_k - \ell)}\right)^2\nonumber\\
        &\le \frac{1}{2}\sum_{L \ge \frac{L}{2M}} \frac{1}{L^2} \le \frac{1}{2} \int_{\frac{L}{2M}-1}^{\infty} \frac{1}{x^2} \mathrm dx = \frac{M}{L-2M}\le \frac{2M}{L}
    \end{align}
    Upper bounding the above with $\zeta$, we find that
    \begin{equation}
        L = \frac{2M}{\zeta} \in \mathcal{O}\left(\frac{\|H\|}{\eta\zeta}\right).
    \end{equation}
\end{proof}

We can generalise this result to cases where we have a state $\ket{\psi}$ with dominant eigenstate $\ket{\psi_0}$. We show this via an analogue of Proposition 2 in \citet{ge2019faster} for dominant eigenenergy estimation instead of ground state energy estimation.

\begin{lemma}[Dominant eigenenergy resolution via textbook phase estimation]
   Consider applying textbook phase estimation to resolve the dominant eigenenergy $E_0$ to additive accuracy $\eta \le \Delta$. The resulting phase estimation circuit requires a total controlled evolution time of
    \begin{equation*}
        \widetilde{\mathcal{O}}\left(\frac{\|H\|}{\eta(p_0-p_1)}\right),
    \end{equation*}
    and uses 
    \begin{equation*}
    \mathcal{O}\left(\log\frac{\|H\|}{\eta(p_0-p_1)}\right)
    \end{equation*}
    additional ancilla qubits.
    A single execution of the circuit produces an outcome distributed according to a perturbed spectral distribution $\{\widehat p_k\}_{k\in\mathcal{S}}$, where the probability of sampling the dominant eigenenergy is $\widehat p_0$, satisfying
    \begin{equation*}
        |\widehat p_0 - p_0| \in \mathcal{O}(p_0 - p_1).
    \end{equation*}
    \label{lemQPESingle}
\end{lemma}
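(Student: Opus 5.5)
By linearity, textbook phase estimation on $\ket{\psi}=\sum_{k\in\mathcal S}c_k\ket{\psi_k}$ produces the state $\sum_k c_k\sum_{\ell}\alpha_{k\ell}\ket{\ell}\ket{\psi_k}$. Since the $\ket{\psi_k}$ are orthonormal, measuring the ancilla gives outcome $\ell$ with probability $\sum_k p_k|\alpha_{k\ell}|^2$, with no cross terms. Let $\mathcal B_k$ be the set of $L$-bit strings whose top $m$ bits equal the rounding $b_k$ of $\lambda_k$, with $M=2^m\in\mathcal O(\|H\|/\eta)$. The assumption $\eta\le\Delta$ makes the windows $\mathcal B_k$ pairwise disjoint across $k\in\mathcal S$, since distinct supported eigenphases are separated by at least $\Delta/\|H\|$ after rescaling. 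We then define the perturbed distribution by $\widehat p_k:=\Pr[\text{outcome}\in\mathcal B_k]=\sum_j p_j\sum_{\ell\in\mathcal B_k}|\alpha_{j\ell}|^2$. Each such event corresponds to reading out $E_k$ to accuracy $\eta$, which is exactly the sampling model claimed in the statement.

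The core estimate is the deviation $|\widehat p_0-p_0|$. There are two contributions:
\begin{equation*}
\widehat p_0-p_0 \;=\; -p_0\sum_{\ell\notin\mathcal B_0}|\alpha_{0\ell}|^2 \;+\; \sum_{j\ne 0}p_j\sum_{\ell\in\mathcal B_0}|\alpha_{j\ell}|^2 .
\end{equation*}
The first term is the leakage of the dominant component out of its own window. By the tail bound \cref{eqAlphaTailBound} from the preceding lemma, it is at most $p_0\cdot 2M/L$.

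The second term is leakage from other components into $\mathcal B_0$. It is bounded by $\sum_{j\ne0}p_j\sum_{\ell\notin\mathcal B_j}|\alpha_{j\ell}|^2\le 2M/L$, using disjointness so that $\mathcal B_0\subseteq\mathcal B_j^c$. This gives $|\widehat p_0-p_0|\le 2M/L$, and the same bound holds for every $k$. Choosing $\zeta=2M/L=c\,(p_0-p_1)$ for a small constant $c$, for instance $c=1/4$ so that $\widehat p_0>\widehat p_k$ still holds for all $k\neq0$, gives $L=2M/\zeta\in\mathcal O\!\left(\|H\|/(\eta(p_0-p_1))\right)$.

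The resource counts then follow directly. The total controlled evolution time of textbook phase estimation with an $L$-dimensional register is $\mathcal O(L)$ in units of $\|H\|^{-1}$-normalised steps, which gives the stated $\widetilde{\mathcal O}(\|H\|/(\eta(p_0-p_1)))$. The ancilla count is $\log_2 L=\mathcal O(\log(\|H\|/(\eta(p_0-p_1))))$.

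The main obstacle is the cross-leakage term. Bounding it naively per component is fine, but the resulting bound must stay uniform in $|\mathcal S|$. This works only because the leakage is weighted by $p_j$, which sums to at most $1$, and because the windows are disjoint; that disjointness is where $\eta\le\Delta$ is genuinely needed. Rounding of $\lambda_k$ near window boundaries also needs care. I would handle it by the standard argument that the top-$m$ readout is within $\eta$ of $\lambda_k$, and, if needed, absorb a factor of $2$ into $M$.
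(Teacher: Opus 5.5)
Your proposal is correct and follows essentially the same route as the paper. Like the paper, you bound the leakage of the $k$-th component out of $\mathcal B_k$ by the single-eigenstate tail bound $\zeta=2M/L$, use $\eta\le\Delta$ to make the windows disjoint so that the inflow from the other components is at most $\sum_{j\ne k}p_j\zeta\le\zeta$, and then set $\zeta\in\mathcal O(p_0-p_1)$ to obtain $L\in\mathcal O(\|H\|/(\eta(p_0-p_1)))$. The only extra step in the paper is an explicit check that the total weight falling outside all windows is at most $\zeta$ and hence below $\widehat p_0$; your choice $c=1/4$ already guarantees this, since $\widehat p_0\ge p_0-\zeta\ge \tfrac34 p_0>\zeta$.
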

\begin{proof}
    Generalizing to superposed input states $\ket{\psi} = \sum_{k\in\mathcal{S}} c_k \ket{\psi_k}$, we obtain the output state
    \begin{equation}
        \sum_{\ell=0}^{L-1} \sum_{k\in\mathcal{S}}c_k\alpha_{k\ell} \ket{\ell}\ket{\psi_k}
    \end{equation}

    The probability of measuring $\ket{b_k}$ in the top $m$ qubits can then be written as follows:
    \begin{align}
        \Pr[b_k] &= \Pr[\ell \in \mathcal{B}_k]= 1 - \sum_{j \in \mathcal{S}}\sum_{\ell\notin \mathcal{B}_k} |c_{j}|^2|\alpha_{j\ell}|^2\nonumber\\
        &= 1 - \sum_{\ell\notin \mathcal{B}_k} p_k|\alpha_{k\ell}|^2 - \sum_{j \in \mathcal{S}\setminus\{k\}}\sum_{\ell\notin \mathcal{B}_k} p_{j}|\alpha_{j\ell}|^2\nonumber\\
        &\ge 1 - \sum_{\ell\notin \mathcal{B}_k} |\alpha_{k\ell}|^2 - \sum_{j \in \mathcal{S}\setminus\{k\}}\sum_{\ell =0}^{L-1} p_{j}|\alpha_{j\ell}|^2\nonumber\\
        &= p_k - \sum_{\ell\notin \mathcal{B}_k} |\alpha_{k\ell}|^2
    \end{align}
    This then results in a final lower bound of 
    \begin{equation}
        \Pr[b_k] \ge p_k -\zeta.
    \end{equation}

    With the further assumption that $\eta \le \Delta$ such that $\mathcal{B}_{k} \cap \mathcal{B}_{j} = \emptyset$, we can also provide the upper bound of the probability as follows:
    \begin{align}
        \Pr[b_k] &= \Pr[\ell \in \mathcal{B}_k]= \sum_{j \in \mathcal{S}}\sum_{\ell\in \mathcal{B}_k} |c_{j}|^2|\alpha_{j\ell}|^2\nonumber\\
        &= \sum_{\ell\in \mathcal{B}_k} p_k|\alpha_{k\ell}|^2 + \sum_{j \in \mathcal{S}\setminus\{k\}}\sum_{\ell\in \mathcal{B}_k} p_{j}|\alpha_{j\ell}|^2\nonumber\\
        &\le p_k + \sum_{j \in \mathcal{S}\setminus\{k\}}\sum_{\ell\notin \mathcal{B}_{j}} p_{j}|\alpha_{j\ell}|^2\nonumber\\
        &\le p_k + \sum_{j \in \mathcal{S}\setminus\{k\}} p_{j}\zeta \le p_k + \zeta
    \end{align}
    where we use \cref{eqAlphaTailBound} in the second-to-last inequality.

    We also bound the probability of measuring a string $b$ that is not in the set of $\{b_k\}_{k\in \mathcal{S}}$ for all $k$ as
    \begin{align}
        \Pr\left[b \notin \{b_k\}_{k\in \mathcal{S}}\right] &= \sum_{j \in \mathcal{S}}\sum_{\ell\notin \bigcup_{k\in\mathcal{S}}\mathcal{B}_k} |c_{j}|^2|\alpha_{j\ell}|^2\nonumber\\
        &\le \sum_{j \in \mathcal{S}}\sum_{\ell\notin \mathcal{B}_j} p_{j}|\alpha_{j\ell}|^2\le \zeta,
    \end{align}
    where we used \cref{eqAlphaTailBound} in the last inequality. 

    To discern the dominant eigenenergy, we would require that 
    \begin{equation}
        \Pr[b_0] > \Pr[b_1]\text{ and } \Pr[b_0] > \Pr\left[b \notin \{b_k\}_{k\in \mathcal{S}}\right].
    \end{equation}
    Using the bounds derived above, we can see that the above can be satisfied if 
    \begin{equation}
        \Pr[b_0] \ge p_0 -\zeta > p_1 + \zeta \ge \Pr[b_1]
    \end{equation}
    and
    \begin{equation}
        \Pr[b_0] \ge p_0 -\zeta > \zeta \ge \Pr\left[b \notin \{b_k\}_{k\in \mathcal{S}}\right]
    \end{equation}
    where we obtain the bound $\zeta \in \mathcal{O} (p_0-p_1)$. Note that $\zeta$ also indicates the deviation with which the probability of an eigenenergy can be sampled.
\end{proof}
Given access to the phase estimation circuit, the most straightforward method to obtain the dominant eigenenergy is to repeatedly sample from the phase estimation circuit and identify the dominant eigenenergy. We can write the output of the phase estimation algorithm as 
\begin{multline}
    {\rm QPE}(\mathbb{I}\otimes \mathbb{I} \otimes U_\psi) \ket{0}_m\ket{0}_{r-m}\ket{0}_n\\
    = \sum_{\ell = 0}^{M-1} \sqrt{p_{\ell}'} \ket{\ell}_m \ket{\chi_\ell}_{r-m+n}
    \label{eqQPEOutput}
\end{multline}
where the $\ket{\chi_\ell}$ is the residual state after phase estimation and $\vec{p'} \in \bbDelta^M$ (where we use the symbol $\bbDelta^M$ to indicate a simplex with dimension $M$) is eigenenergy-indexed probability with the following bounds:
\begin{equation}
    \begin{cases}
        p'_{\ell} = \widehat{p}_k:\; p_k - \zeta \le \widehat{p}_k \le p_k + \zeta & \text{ if }\ell = b_k\\
        p'_{\ell} \le \zeta & \text{ if } \ell \notin \{b_k\}_{k\in\mathcal{S}}
    \end{cases}
    \label{eqQPEDistr}
\end{equation}
where $b_k \in \{0,1\}^M$ is the binary approximation of $\lambda_k$ within $\eta$. It is easy to see that sampling from the first $m$-sized register samples over $\vec{p'}$.

\begin{proposition}[Dominant eigenenergy estimation by sampling phase estimation circuits]
    With success probability $1-\delta$, the dominant eigenenergy $E_0$ can be estimated up to additive error $\eta \le \Delta$ with 
    \begin{equation*}
        \mathcal{O}\left(\frac{1}{(p_0 - p_1)^2}\log\left(\frac{1}{\delta}\min\left\{\frac{1}{p_0-p_1}, \frac{\|H\|}{\eta}\right\}\right)\right)
    \end{equation*}
    samples to the phase estimation circuit. This results in a total runtime complexity of 
    \begin{equation*}
        \widetilde{\mathcal{O}}\left(\frac{\|H\|}{(p_0 - p_1)^3\eta}\log\frac{1}{\delta}\right).
    \end{equation*}
    \label{propQPEsamp}
\end{proposition}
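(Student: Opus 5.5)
The plan is to reduce the task to learning the mode of the discrete distribution $\vec{p'}$ in \cref{eqQPEDistr} from i.i.d. samples. We first invoke \cref{lemQPESingle} with $\zeta = c(p_0-p_1)$ for a small constant $c$ (say $c=1/8$). Then every outcome $\ell$ of the top-$m$ register has probability at most $p'_{b_0}-\Omega(p_0-p_1)$, except $\ell=b_0$ itself. This covers both the other eigenenergy strings $b_k$, $k\neq0$, since $\widehat p_k\le p_1+\zeta$, and the spurious strings, each with mass at most $\zeta$. Meanwhile $p'_{b_0}\ge p_0-\zeta$. So the gap between the largest atom and every other atom is $\gamma:=p_0-p_1-2\zeta\in\Omega(p_0-p_1)$. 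The estimator will be: draw $S$ samples, output the empirical mode $\hat b$, and return $\hat b/M$ (rescaled by $\|H\|$). Correctness of the mode gives additive error $\eta$ by construction of $b_0$.

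The core step is a concentration bound for the empirical frequencies. For each fixed $\ell$, Hoeffding (or Chernoff) gives $|\hat p'_\ell - p'_\ell|<\gamma/2$ except with probability $2e^{-S\gamma^2/2}$. A union bound over all candidate atoms then requires $S\in\mathcal{O}(\gamma^{-2}\log(K/\delta))$, where $K$ is the number of atoms we must control. The main obstacle is the $\log\min\{(p_0-p_1)^{-1},\|H\|/\eta\}$ factor, i.e.\ bounding $K$. Trivially $K\le M\in\mathcal{O}(\|H\|/\eta)$, which gives the second branch. For the first branch we avoid a union bound over all $M$ strings. Any atom with $p'_\ell\le p'_{b_0}/2$ would need an upward deviation of order $p'_{b_0}\ge\gamma$ to beat $b_0$, and the atoms with $p'_\ell>\gamma/4$ number at most $\mathcal{O}(1/\gamma)$. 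We would handle the small atoms by grouping them into $\mathcal{O}(1/\gamma)$ buckets of total mass $\Theta(\gamma)$ each, and applying a Chernoff bound to each bucket's count. A bucket's count dominates each member's count, so no single small atom can reach $b_0$'s empirical frequency unless its bucket does, which happens with probability $e^{-\Omega(S\gamma^2)}$. This yields $K\in\mathcal{O}(1/(p_0-p_1))$, and taking the better of the two gives the stated sample count.

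Finally, we multiply by the per-sample cost from \cref{lemQPESingle}: each phase estimation run uses $\widetilde{\mathcal{O}}(\|H\|/(\eta(p_0-p_1)))$ controlled evolution time. Multiplying by $S\in\widetilde{\mathcal{O}}((p_0-p_1)^{-2}\log(1/\delta))$ gives $\widetilde{\mathcal{O}}(\|H\|(p_0-p_1)^{-3}\eta^{-1}\log(1/\delta))$. The doubly-logarithmic terms from the $\min$ are absorbed into the tilde.
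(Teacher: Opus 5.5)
Your proposal is correct. Its skeleton matches the paper's: you take $\zeta\in\mathcal{O}(p_0-p_1)$ in \cref{lemQPESingle} so that $b_0$ beats every other outcome by a margin $\gamma\in\Omega(p_0-p_1)$, resolve that margin with Chernoff/Hoeffding plus a union bound, and multiply the sample count by the single-run cost.

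You differ from the paper on the $\log(1/(p_0-p_1))$ branch of the minimum. The paper frames the task as learning $\vec{p'}$ in $\ell_\infty$ to accuracy $\xi\in\mathcal{O}(p_0-p_1)$ and then taking the argmax. To avoid a union bound over all $M$ strings, it first runs a preliminary pass of $\mathcal{O}(\xi^{-1}\log(1/(\xi\delta)))$ samples. This pass identifies the $\mathcal{O}(\xi^{-1})$ entries with $p'_\ell>\xi$, appealing to standard randomized-algorithms texts. The paper then union-bounds only over those entries. You instead use a single pass with the empirical mode, and you control the light atoms by bucketing them and noting that a bucket's count dominates each member's count.

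Your route is self-contained and needs neither the two-stage procedure nor the heavy-hitter citation. The paper's route is more modular: it outputs $\ell_\infty$ estimates of all heavy entries, which is exactly the interface it later replaces with multidimensional amplitude estimation in \cref{propQPEMAE}.

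Two small points need pinning down. First, the bucket masses must be at most $c\gamma$ for some constant $c<1$, not merely $\Theta(\gamma)$, because you only know $p'_{b_0}\ge p_0-\zeta\ge\gamma$. For example, greedily grouping the atoms of mass at most $\gamma/4$ gives at most $4/\gamma+1$ buckets, each of mass at most $\gamma/2$, which suffices. Second, the factor absorbed into the $\widetilde{\mathcal{O}}$ at the end is $\log(1/(p_0-p_1))$. This is a single logarithm, not a doubly-logarithmic term, though the conclusion is unaffected.
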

\begin{proof}
    The phase estimation circuit provides a classical sampling distribution $\vec p' \in \bbDelta^M$. To identify the dominant item, we must first learn the probability distribution up to $\ell_\infty$ accuracy $\xi$, before picking the maximum index. For each index, standard Chernoff bounds show that an eestimate$\widetilde{p}_{\ell}$ of $p_{\ell}'$ within additive error $\xi$ can be obtained with $\mathcal{O}(\log(1/\delta)/\xi^2 )$ samples. By a union bound, learning this for all items in the probability distribution requires 
    \begin{equation}
        \mathcal{O}\left(\frac{1}{\xi^2} \log \frac{M}{\delta}\right)
    \end{equation}
    samples. Alternatively, given that most of the entries in our probability distribution are close to 0 (upper bounded by $\zeta$), we can use $\mathcal{O}(\log(1/\delta\xi)/\xi)$ to first identify $\mathcal{O}(\xi^{-1})$ probability entries lower bounded by $p_{\ell}' > \xi$~\citep{motwani1995randomized,mitzenmacher2005probability}. Then by union bound, learning these $\mathcal{O}(\xi^{-1})$ entries in the probability distribution requires 
    \begin{equation}
        \mathcal{O}\left(\frac{1}{\xi^2} \log \frac{1}{\xi\delta}\right).
    \end{equation}

    To discern between the dominant and second dominant eigenenergies, which have the learned probabilities $\widetilde p_{b_0}$ and $\widetilde p_{b_1}$ from the sampled results, we require $\widetilde p_{b_0} \ge \widetilde p_{b_1}$. Using the bounds provided \cref{eqQPEDistr}, we obtain the following inequality:
    \begin{align}
        \widetilde p_{b_0} - \widetilde p_{b_1} &\ge (p'_{b_0}-\xi) - (p'_{b_1} +\xi) = \widehat{p}_0 - \widehat{p}_1 - 2\xi \nonumber\\
        &\ge (p_0 - \zeta) - (p_1 +\zeta) - 2\xi\nonumber\\
        &= p_0 - p_1 - 2\zeta - 2\xi
        \label{eqProbOffset}
    \end{align}
    where by $\zeta \in \mathcal{O}(p_0-p_1)$ from \cref{lemQPESingle}, we can see that $\xi \in \mathcal{O}(p_0-p_1)$.

    Thus, the total number of samples required in this 
    \begin{equation}
        \mathcal{O}\left(\frac{1}{(p_0 - p_1)^2}\log\left(\frac{1}{\delta}\min\left\{\frac{1}{p_0-p_1}, \frac{\|H\|}{\eta}\right\}\right)\right)
    \end{equation}
    where we use $M \in \mathcal{O}(\|H\|/\eta)$.

    Lastly, the total runtime complexity can be obtained by multiplying the sample complexity in this lemma with the runtime complexity in \cref{lemQPESingle} for a single-shot measurement.
\end{proof}

\subsection{Post-selecting the dominant eigenstate from phase estimation}

Given knowledge of the dominant eigenenergy obtained in the previous section, we can use this for dominant eigenstate preparation via post-selection over the known eigenenergy. Noting that the corresponding result in Proposition 3 of \citet{ge2019faster} can be used to prepare any eigenstate provided the eigenvalue. We can write the generalised version of the results as follows. Note that here we use post-selection instead of amplitude amplification as in \citet{ge2019faster} to match the usage of post-selection in eigenenergy estimation.

\begin{lemma}[Post-selecting the residual state for known eigenvalue]
	Suppose that the value of $E_k$ is known to an accuracy of 
    \begin{equation}
        \mathcal{O}\left( \sqrt{p_k}\Delta\varepsilon\right).
    \end{equation} 
    Then, post-selection on phase estimation can be used to prepare a state $\varepsilon$-close to the eigenstate with runtime complexity of
	\begin{equation*}
	\widetilde{\mathcal{O}}\left(\frac{\|H\|}{p_k^{1.5}\Delta\varepsilon}\log\frac{1}{\delta}\right),
	\end{equation*}	 
	and using $n + \mathcal{O}\left(\log (\|H\|/(p_k\Delta\varepsilon))\right)$ qubits. 
    \label{lemQPEStatePrep}
\end{lemma}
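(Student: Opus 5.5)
We run one textbook phase-estimation circuit on $\ket{\psi}$, measure the top $m$ qubits of the phase register, and accept only when the outcome equals the known bit-string approximation $b_k$ of $E_k$. On acceptance, the main register holds the residual state $\ket{\chi_{b_k}}$ from \cref{eqQPEOutput}. The proof then reduces to two bounds: the distance between this residual state and $\ket{\psi_k}$, and the acceptance probability, which fixes the number of repetitions. The plan is to reuse the notation and tail bound \cref{eqAlphaTailBound} from the proofs of the two preceding lemmas.

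First, take the readout resolution $M=2^m\in\Theta(\|H\|/\eta')$ with $\eta'\le\Delta$, so that the sets $\mathcal B_j$ are disjoint. We need the known estimate of $E_k$ to be accurate to a fraction of $\eta'$ for $b_k$ to be the correct string, and this is where the assumed $\mathcal{O}(\sqrt{p_k}\Delta\varepsilon)$ accuracy enters. Projecting the output $\sum_{\ell}\sum_j c_j\alpha_{j\ell}\ket{\ell}\ket{\psi_j}$ onto $\ell\in\mathcal B_k$ gives an unnormalised state. Its $\ket{\psi_k}$ component has weight $p_k\sum_{\ell\in\mathcal B_k}|\alpha_{k\ell}|^2\ge p_k(1-\zeta)$. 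The components along $j\neq k$ have total weight at most $\sum_{j\ne k}p_j\sum_{\ell\notin\mathcal B_j}|\alpha_{j\ell}|^2\le\zeta$. Since distinct $j$ carry orthogonal system states $\ket{\psi_j}$, the post-selected normalised state has infidelity with $\ket{\psi_k}$ at most roughly $\zeta/(p_k(1-\zeta)+\zeta)\lesssim \zeta/p_k$. Its trace/Euclidean distance is therefore $\mathcal{O}(\sqrt{\zeta/p_k})$. Demanding this be at most $\varepsilon$ gives $\zeta\in\mathcal{O}(p_k\varepsilon^2)$.

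Second, by the textbook phase-estimation lemma the register size is $L=2M/\zeta$, so the circuit depth is
$$\widetilde{\mathcal O}\Big(\frac{\|H\|}{\eta'\zeta}\Big).$$
The number of qubits is $n+\log_2 L$, which gives the stated $\mathcal{O}(\log(\|H\|/(p_k\Delta\varepsilon)))$ ancillas. The acceptance probability is $\Pr[b_k]\ge p_k-\zeta\ge p_k/2$, so $\mathcal{O}(p_k^{-1}\log\delta^{-1})$ repetitions succeed with probability $1-\delta$. Multiplying depth by repetitions gives $\widetilde{\mathcal O}(\|H\|\log(1/\delta)/(\eta' \zeta p_k))$.

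The main obstacle is reconciling this with the claimed $p_k^{-1.5}\varepsilon^{-1}$ scaling. The naive choice $\zeta\sim p_k\varepsilon^2$, $\eta'\sim\Delta$ gives the worse $p_k^{-2}\varepsilon^{-2}$. So I would redo the tail bound more carefully, as Ge et al.\ do in Proposition~3, in terms of amplitudes rather than probabilities. The leakage from $j\ne k$ into $\mathcal B_k$ has amplitude scaling as $1/(L\cdot\mathrm{dist})$, with $\mathrm{dist}\gtrsim\Delta$, rather than through the crude square-root tail $\sqrt{M/L}$. Combining this with $1/\sqrt{p_k}$ normalisation gives an error of $\mathcal{O}(\|H\|/(L\Delta\sqrt{p_k}))$. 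Requiring this to be at most $\varepsilon$ gives depth $L\sim\|H\|/(\sqrt{p_k}\Delta\varepsilon)$, consistent with the required accuracy of the known $E_k$. The $p_k^{-1}$ repetitions then yield the stated $p_k^{-1.5}$. I expect this sharper per-$\ell$ bound on $|\alpha_{j\ell}|$, summed over the $L/M$ strings in $\mathcal B_k$, to be the technical step requiring most care.
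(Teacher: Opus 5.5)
Your final plan is essentially the paper's proof, which simply cites Proposition~3 of Ge et al.\ for the per-run depth $\widetilde{\mathcal{O}}(\|H\|/(\sqrt{p_k}\Delta\varepsilon))$ and multiplies it by $\mathcal{O}(p_k^{-1}\log\delta^{-1})$ post-selection repetitions; the paper justifies these repetitions with a success probability of at least $4p_k/\pi^2$ rather than your $p_k-\zeta$, and your diagnosis that the crude tail bound of the preceding lemmas only yields $p_k^{-2}\varepsilon^{-2}$ is correct. The one point to fix in your sketch is that the post-selection window must contain only $\mathcal{O}(1)$ fine strings, i.e.\ $M\sim L$, which is exactly what the assumed $\mathcal{O}(\sqrt{p_k}\Delta\varepsilon)$ accuracy of $E_k$ permits, because summing the per-string bound $|\alpha_{j\ell}|^2\in\mathcal{O}(\|H\|^2/(L\Delta)^2)$ over the $L/M$ strings of a coarse bin with $M\sim\|H\|/\Delta$ just reproduces the crude $\mathcal{O}(\|H\|/(L\Delta))$ leakage.
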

\begin{proof}
    From Proposition 3 of \citet{ge2019faster}, we know that if we run the phase estimation circuit to accuracy $\eta \in \mathcal{O}\left(\sqrt{p_k}\Delta\varepsilon\right)$, and post-select on the eigenenergy that is known \emph{a priori}, then the post-selected residual state $\ket{\psi_k^R}$ satisfies $\lVert\ket{\psi_k^R}-\ket{\psi_k}\rVert\le \varepsilon$. 
    
    From standard phase estimation results~\citep{nielsen2010quantum}, we know the sample probability of obtaining the residual state $\lVert\ket{\psi_k^R}$ is 
    \begin{equation}
        \widetilde p_k \ge \frac{4}{\pi^2} p_k.
    \end{equation}
    Then, to post-select on the correct eigenenergy, we require 
    \begin{equation}
        \mathcal{O}\left(\frac{1}{\widetilde{p}_k}\log\frac{1}{\delta}\right) \subseteq\mathcal{O}\left(\frac{1}{p_k}\log\frac{1}{\delta}\right)
    \end{equation}
    samples from standard classical sampling results. Multiplying the sample complexity by the phase estimation runtime costs that generate eigenenergies with accuracy $\eta$, we recover the runtime as shown in the lemma.
\end{proof}

We note from these results that the cost of actually preparing the eigenstate once the eigenenergy is located to high accuracy is relatively small compared to the eigenenergy estimation. This also implies that using amplitude amplification instead of post-selection does not help with the overall runtime. 

Putting the requirements of the accuracy of the eigenvalue into the eigenenergy estimation, we obtain the final runtime complexity of preparing the eigenstate through textbook phase estimation.
\begin{theorem}[Dominant eigenstate preparation by phase estimation]
    With success probability $1-\delta$, the dominant eigenstate $\ket{\psi_0}$ can be prepared up to $\ell_2$ distance $\varepsilon$ with runtime complexity
    \begin{equation*}
        \widetilde{\mathcal{O}}\left(\frac{\|H\|}{(p_0 - p_1)^3\sqrt{p_0}\Delta\varepsilon}\log\frac{1}{\delta}\right)
    \end{equation*}
    and requires 
    \begin{equation*}
        n + \mathcal{O}\left(\log \frac{\|H\|}{(p_0-p_1)\Delta\varepsilon}\right),
    \end{equation*}
    qubits. 
\end{theorem}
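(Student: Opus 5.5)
The plan is to chain the two preceding results: first \cref{propQPEsamp} to identify the dominant eigenenergy, then \cref{lemQPEStatePrep} with $k=0$ to post-select the eigenstate. The accuracy demanded of the first stage is dictated by the second. To prepare $\ket{\psi_0}$ to $\ell_2$ distance $\varepsilon$, \cref{lemQPEStatePrep} needs $E_0$ known to accuracy $\eta\in\mathcal{O}(\sqrt{p_0}\Delta\varepsilon)$. Since $\sqrt{p_0}\varepsilon\le 1$, this gives $\eta\le\Delta$, so the hypothesis of \cref{lemQPESingle} and \cref{propQPEsamp} holds.

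\textbf{Stage 1 (estimation).} Invoke \cref{propQPEsamp} with this $\eta$ and failure probability $\delta/2$. It costs
\begin{equation*}
\widetilde{\mathcal{O}}\left(\frac{\|H\|}{(p_0-p_1)^3\sqrt{p_0}\Delta\varepsilon}\log\frac{1}{\delta}\right).
\end{equation*}
It uses $\mathcal{O}(\log(\|H\|/(\eta(p_0-p_1))))=\mathcal{O}(\log(\|H\|/((p_0-p_1)\Delta\varepsilon)))$ ancillas, up to an additive $\log(1/p_0)$ term. The log-factor from $\min\{\cdot\}$ in the sample count is absorbed into $\widetilde{\mathcal{O}}$.

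\textbf{Stage 2 (preparation).} Run \cref{lemQPEStatePrep} with $k=0$, failure probability $\delta/2$ and the identified energy, at cost $\widetilde{\mathcal{O}}(\|H\|\log(1/\delta)/(p_0^{1.5}\Delta\varepsilon))$. The total runtime is the sum of the two stages, and a union bound gives overall success probability $1-\delta$.

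\textbf{Main obstacle: showing stage 1 dominates.} Since $p_0-p_1\le p_0\le 1$, we have
\begin{equation*}
(p_0-p_1)^3\sqrt{p_0}\le p_0^{3}\sqrt{p_0}\le p_0^{1.5},
\end{equation*}
so stage 2 is absorbed into stage 1 and the stated runtime follows. The qubit count is the maximum over the two stages, since registers are reused. The term $\log(\|H\|/(p_0\Delta\varepsilon))$ is at most $\log(\|H\|/((p_0-p_1)\Delta\varepsilon))$, and the residual $\log(1/\sqrt{p_0})$ from $\eta$ is likewise bounded by $\log(1/(p_0-p_1))$. Adding the $n$ system qubits gives the claimed total.

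A secondary subtlety needs care. Stage 1 returns a bitstring estimate, whereas \cref{lemQPEStatePrep} assumes the energy is known to $\mathcal{O}(\sqrt{p_0}\Delta\varepsilon)$ and post-selects on that outcome. We therefore run both stages at the same grid resolution $\eta$, so the identified bitstring $b_0$ is exactly the post-selection target.
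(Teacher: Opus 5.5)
Your proposal is correct and follows the paper's proof: both substitute $1/\eta \in \mathcal{O}(1/(\sqrt{p_0}\Delta\varepsilon))$ from \cref{lemQPEStatePrep} into \cref{propQPEsamp}, note that the post-selection stage is dominated by the estimation stage, and absorb the $\log(1/\sqrt{p_0})$ qubit term into $\log(1/(p_0-p_1))$. The union bound over the two stages and the use of a common grid resolution for $b_0$ are bookkeeping details the paper leaves implicit.
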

\begin{proof}
    We obtain the runtime complexity by plugging in
    \begin{equation}
        \frac{1}{\eta} \in \mathcal{O}\left(\frac{1}{\sqrt{p_0}\Delta\varepsilon}\right)
    \end{equation}
    required by \cref{lemQPEStatePrep} into \cref{propQPEsamp} for dominant eigenvalue estimation. Note that this runtime far exceeds that of the eigenstate preparation in \cref{lemQPEStatePrep}, and the $(\sqrt{p_0})^{-1}$ factor in the logarithmic term is subsumed by the $(p_0-p_1)^{-1}$.
\end{proof}

\section{Improved dominant eigenstate preparation with phase estimation}
\label{appQPEMAE}
In this appendix, we improve on the na\"ive phase estimation algorithm in the previous section.

First, we replace textbook phase estimation with modern tapered phase estimation~\citep{rendon2023lowdepth,rendon2024improved,chen2025quantum,luis1996optimum,buzek1999optimal,dam2007optimal,rendon2022effects,berry2024analyzing,patel2026optimal}. Instead of the uniform Hadamard state, this approach uses states that mimic the window functions used in classical signal processing for sharpening the correct output state and suppressing the other erroneous output states, thereby closing the gap between actual runtime costs and the complexity lower bounds of phase estimation~\citep{mande2026tight}.

For the next improvement, we adapt the faster phase-estimation--based ground state preparation algorithm analysed in the appendix of \citet{ge2019faster} to our setting (rather than the main algorithm presented in that work). The procedure of their algorithm is as follows:
\begin{enumerate}
    \item Apply the D\"urr--H\o{}yer minimum-finding algorithm~\citep{durr1996quantum} to the ancillary register of the phase estimation circuit to identify and read out the ground state energy $E_G$ to sufficiently high accuracy. The accuracy must be chosen such that the residual state is distilled enough to avoid a build-up of errors in the following step. The evolution time should be sufficiently long to ensure that the spread in the phase estimation distribution does not yield values below the true ground state energy.
    \item Perform amplitude amplification on the phase estimation circuit, reflecting the ancillary register about the state $\ket{E_G}$, i.e., applying $\mathbb{I} - 2\ketbra{E_G}{E_G}$.
\end{enumerate}

While the second step remains the same for us, we need to modify the first step such that the dominant eigenenergies are extracted. To replace the minimum finding, we require a procedure that, given quantum sample access to a probability distribution, identifies the index corresponding to its largest entry. This task can be accomplished using multidimensional amplitude estimation~\citep{vanapeldoorn2021quantum}, which, given quantum sample access to a probability distribution $\vec{p}$, outputs a classical estimate $\vec{\widetilde p}$ satisfying $\lVert \vec{\widetilde p} - \vec{p} \rVert_\infty \le \xi$. This allows us to reliably recover the index of the largest component of $\vec{p}$, provided the gap between the largest and second-largest probabilities is sufficiently large, and the accuracy $\xi$ of the estimate is less than this gap.

To apply multidimensional amplitude estimation, it remains to specify how we can provide this quantum sample access. In our setting, this is provided directly by the phase estimation circuit, where we can recall that
\begin{equation}
    {\rm QPE}(\mathbb{I}\otimes U_\psi) \ket{0}\ket{0}\approx \sum_{k\in{\mathcal{S}}} \sqrt{p'_{E_k}}\ket{E_k}\ket{\psi_k'}
\end{equation}
acts as a quantum sample state for $\vec{p'}$.

With this interpretation, multidimensional amplitude estimation can be applied directly to the phase estimation circuit to obtain an $\ell_\infty$-accurate estimate of $\vec{p'}$. The dominant eigenenergy $E_0$ can then be identified by identifying the index of the maximum value in $\vec{p'}$. Following this, we can now apply amplitude amplification on the phase estimation algorithm by reflecting the ancillary register about $\ket{E_0}$.

We detail the runtime costs and theorem statements in the following sections.

\subsection{Gaussian and tapered phase estimation}
To apply tapered phase estimation and to match the use of Gaussian quadratures in the DEFEAT algorithm in the main text, we discuss Gaussian phase estimation~\citep{rendon2024improved,rendon2023lowdepth,chen2025quantum}, which uses a discrete Gaussian state as input.

\begin{lemma}[Gaussian phase estimation]
   For a given input eigenstate, the Gaussian phase estimation algorithm can estimate the eigenvalue up to additive accuracy $\eta$ with a success probability of $1-\zeta$ in
   \begin{equation*}
        \mathcal{O}\left(\frac{\|H\|}{\eta}\log\frac{\|H\|}{\eta\zeta}\right),
    \end{equation*}
    query depth.
    \label{lemGaussQPE}
\end{lemma}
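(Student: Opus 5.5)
The plan is to follow the template of the textbook phase estimation lemma proved above. The only change is that the uniform ancilla state is replaced by a truncated discrete Gaussian. We then bound the tail mass \eqref{eqAlphaTailBound} using Gaussian concentration instead of the $1/\ell^2$ Fej\'er-kernel tail.

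\emph{Step 1: the output state.} We prepare the ancilla in $\sum_{j} \sqrt{w_j}\ket{j}$, with $w_j \propto e^{-(j\tau)^2/(2\sigma^2)}$ and $j$ ranging over $L$ points. This is the same Gaussian state used in \cref{algoDefeat}. Next we apply the multiplexed evolution of \cref{figMultiplex} with step $\tau \in \mathcal{O}(\|H\|^{-1})$, so that eigenphases are unaliased, and then apply ${\rm QFT}^{-1}$. For an eigenstate with eigenvalue $\lambda_k$, the amplitude on outcome $\ell$ is
\begin{equation*}
\alpha_{k\ell} = \frac{1}{\sqrt{L}}\sum_j \sqrt{w_j}\, e^{i j(\lambda_k\tau - 2\pi \ell/L)}.
\end{equation*}
By Poisson summation this is, up to truncation and aliasing errors, a Gaussian in the frequency offset $\lambda_k\tau - 2\pi\ell/L$ with width $\propto 1/\sigma$. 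Equivalently, as a function of energy offset the width is $\propto 1/(\sigma\tau)$.

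\emph{Step 2: tail bound.} We require $\sum_{\ell\notin\mathcal{B}_k}|\alpha_{k\ell}|^2 \le \zeta$, where $\mathcal{B}_k$ is the set of outcomes within $\eta$ of $\lambda_k$. This error splits into three pieces:
\begin{enumerate}[label=(\roman*)]
\item the continuous Gaussian tail beyond distance $\eta$, which is $\lesssim e^{-\Theta((\eta\sigma\tau)^2)}$ by the Gordon/Mills-ratio bound;
\item the truncation error of the time-domain Gaussian beyond $|j\tau|\le L\tau/2$, which is $\lesssim e^{-\Theta((L\tau/\sigma)^2)}$;
\item the aliasing error from Poisson summation, which is exponentially small once $\tau\|H\|<\pi$.
\end{enumerate}
Choosing $\sigma\tau \in \Theta(\eta^{-1}\sqrt{\log(1/\zeta)})$ and $L\tau \in \Theta(\sigma\tau\sqrt{\log(1/\zeta)})$ makes each piece at most $\zeta/3$. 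This gives maximal evolution time $L\tau \in \mathcal{O}(\eta^{-1}\log(1/\zeta))$.

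\emph{Step 3: counting queries.} Since $\tau \in \Theta(\|H\|^{-1})$, the query depth is $L \in \mathcal{O}\big(\tfrac{\|H\|}{\eta}\log\tfrac1\zeta\big)$. The register also has to resolve the grid of readout points of spacing $\eta$, which costs $\log(\|H\|/\eta)$ bits. Absorbing this resolution requirement into the log factor gives the stated $\mathcal{O}\big(\tfrac{\|H\|}{\eta}\log\tfrac{\|H\|}{\eta\zeta}\big)$.

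The main obstacle is Step 2(ii)--(iii): keeping the discrete, truncated Gaussian's Fourier transform close to a continuous Gaussian uniformly in $\lambda_k$. In particular, the normalisation $\mathcal{Z}$ and the rounding of outcomes to the $m$-bit readout register must not introduce polynomial rather than exponential tails. My first attempt would be to reuse the Riemann-sum error analysis behind \cref{lemma:riemann_twirl}, or alternatively to cite the Gaussian phase estimation bounds of \citep{rendon2023lowdepth,chen2025quantum} directly for this step.
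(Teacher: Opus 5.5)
The paper gives no proof of this lemma: it is quoted from the Gaussian phase estimation literature it cites (Rendon et al.; Chen, Gily\'en and de Wolf). Your proposal is a self-contained version of the standard argument behind those citations, and it is sound. It in fact gives query depth $L\in\mathcal{O}\bigl(\tfrac{\|H\|}{\eta}\log\tfrac1\zeta\bigr)$. This implies the stated bound, since $\log\tfrac1\zeta\le\log\tfrac{\|H\|}{\eta\zeta}$ whenever $\eta\le\|H\|$; for $\eta>\|H\|$, outputting $0$ already suffices.

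\textbf{Step 3 is unnecessary, and its reasoning is of the wrong kind.} The $\log(\|H\|/\eta)$ bits needed to hold the readout grid are qubits, not queries, so they cannot be ``absorbed'' into a query-depth bound. You do not need them at all.

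\textbf{Two smaller repairs.}
\begin{itemize}
\item \emph{Units of $\sigma$.} With $w_j\propto e^{-(j\tau)^2/(2\sigma^2)}$, $\sigma$ is a time, so the spectral width is $\propto 1/\sigma$ in energy and the tail in (i) is $e^{-\Theta((\eta\sigma)^2)}$. Your (i) and your parameter choices silently treat $\sigma$ as an index width, while (ii) treats it as a time. The conclusion $L\tau\in\mathcal{O}(\eta^{-1}\log\tfrac1\zeta)$ survives.
\item \emph{Role of $\tau\|H\|<\pi$.} The periodic copies produced by Poisson summation in (iii) are suppressed by $e^{-\Theta(\sigma^2/\tau^2)}$, i.e.\ by having many samples per Gaussian width, not by $\tau\|H\|<\pi$. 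Keeping $\tau\|H\|$ bounded away from $\pi$ (say $\tau\|H\|\le\pi/2$) serves a different purpose. It ensures a circular phase error of at most $\eta\tau$ means an eigenvalue error of at most $\eta$, rather than a wrap-around error of order $2\|H\|$.
\end{itemize}

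\textbf{Closing the obstacle you flag.} A short unitarity argument suffices, so you need neither the Riemann-sum lemma nor a citation.
\begin{itemize}
\item Set $a_n=\sqrt{w_n}\,e^{in\lambda_k\tau}$ for all $n\in\mathbb{Z}$, using the same Gaussian formula. Let $v=(a_j)_{-L/2\le j<L/2}$ be the actual ancilla vector after phase kickback, and let $u_j=\sum_{m\in\mathbb{Z}}a_{j+mL}$ be its $L$-periodisation.
\item Since $e^{-2\pi i(j+mL)\ell/L}=e^{-2\pi ij\ell/L}$, the discrete Fourier transform of $u$ is exactly the untruncated sum $\tfrac1{\sqrt L}\sum_{n\in\mathbb{Z}}\sqrt{w_n}\,e^{in(\lambda_k\tau-2\pi\ell/L)}$.
\item Poisson summation identifies this with a sampled, periodised Gaussian in $\lambda_k\tau-2\pi\ell/L$. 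Its mass outside $\mathcal{B}_k$ is a discrete Gaussian tail, bounded by an integral plus one boundary term.
\item By unitarity of the QFT, the true output amplitudes differ from these in $\ell_2$ by exactly $\|u-v\|_2$. This is of the order of the square root of the Gaussian mass outside $[-L/2,L/2)$, i.e.\ $e^{-\Theta((L\tau/\sigma)^2)}$ with no polynomial prefactor. The normalisation $\mathcal{Z}$ is absorbed the same way.
\end{itemize}
So truncation, normalisation and readout rounding only add an exponentially small $\ell_2$ perturbation and cannot produce polynomial tails. If you instead bound the truncation entrywise, you pick up a $\sqrt{\sigma/\tau}$ prefactor. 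That costs a $\log(\|H\|/\eta)$-type term inside the logarithm, which the stated bound still tolerates.
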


With the use of Gaussian phase estimation, we modify \cref{lemQPESingle} under this better runtime.

\begin{lemma}[Dominant eigenenergy resolution via Gaussian phase estimation]
   Applying Gaussian phase estimation to resolve the dominant eigenenergy $E_0$ to additive accuracy $\eta \le \Delta$ requires a total controlled evolution time of
    \begin{equation*}
        \widetilde{\mathcal{O}}\left(\frac{\|H\|}{\eta}\log \frac{1}{p_0-p_1}\right),
    \end{equation*}
    and uses 
    \begin{equation*}
    \mathcal{O}\left(\log\frac{\|H\|}{\eta}+\log \log \frac{1}{p_0-p_1}\right)
    \end{equation*}
    additional ancilla qubits.
    A single execution of the circuit produces an outcome distributed according to a perturbed spectral distribution $\{\widehat p_k\}_{k\in\mathcal{S}}$, where the probability of sampling the dominant eigenenergy is $\widehat p_0$, satisfying
    \begin{equation*}
        |\widehat p_0 - p_0| \in \mathcal{O}(p_0 - p_1).
    \end{equation*}
    \label{lemGaussQPESingle}
\end{lemma}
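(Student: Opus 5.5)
The proof follows that of \cref{lemQPESingle}, with \cref{lemGaussQPE} replacing the textbook phase-estimation tail bound. The structural part of the earlier argument only uses the per-eigenstate tail estimate \cref{eqAlphaTailBound}. It never uses the specific form of the amplitudes $\alpha_{k\ell}$. So I will first fix the readout register of size $M = 2^m \in \mathcal{O}(\|H\|/\eta)$ and the bins $\mathcal{B}_k$ exactly as before. I will then invoke \cref{lemGaussQPE} with failure probability $\zeta$ to guarantee $\sum_{\ell\notin\mathcal{B}_k}|\alpha_{k\ell}|^2 \le \zeta$ for every $k\in\mathcal{S}$. Since the Gaussian window is applied identically to every eigencomponent, one circuit with a fixed choice of $\sigma$ and cutoff serves all $k$ simultaneously.

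Next I repeat the three probability estimates from the proof of \cref{lemQPESingle} verbatim, for the superposed input $\sum_k c_k\ket{\psi_k}$. The first is the lower bound $\Pr[b_k] \ge p_k - \zeta$. The second is the upper bound $\Pr[b_k]\le p_k+\zeta$, which uses $\eta\le\Delta$ so that the bins $\mathcal{B}_k$ are disjoint. The third is $\Pr[b\notin\{b_k\}]\le\zeta$. Together these give $|\widehat p_0 - p_0|\le\zeta$. Requiring $p_0-\zeta > p_1+\zeta$ and $p_0-\zeta>\zeta$ then forces the choice $\zeta\in\Theta(p_0-p_1)$, which yields the claimed $|\widehat p_0-p_0|\in\mathcal{O}(p_0-p_1)$.

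For the costs, I substitute $\zeta = c(p_0-p_1)$ into \cref{lemGaussQPE}. The evolution time becomes $\mathcal{O}\bigl(\tfrac{\|H\|}{\eta}\log\tfrac{\|H\|}{\eta(p_0-p_1)}\bigr)$, which is $\widetilde{\mathcal{O}}\bigl(\tfrac{\|H\|}{\eta}\log\tfrac{1}{p_0-p_1}\bigr)$ once the $\log(\|H\|/\eta)$ factor is absorbed into the tilde. For the ancilla count, the Gaussian register must span the time window $[-T,T]$ at resolution $\tau\in\Theta(\|H\|^{-1})$. Gaussian tail truncation needs $T\in\mathcal{O}(\sigma\sqrt{\log(1/\zeta)})$ with $\sigma\in\Theta(\eta^{-1}\sqrt{\log(1/\zeta)})$, so the register holds $\mathcal{O}(\tfrac{\|H\|}{\eta}\log\tfrac1\zeta)$ grid points. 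Taking $\log_2$ gives $\mathcal{O}(\log\tfrac{\|H\|}{\eta} + \log\log\tfrac{1}{p_0-p_1})$ qubits.

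The main obstacle is this qubit count. The doubly-logarithmic dependence requires that the register size be set by the Gaussian support and not by $1/\zeta$ as in the textbook case, where it grows like $\log(1/\zeta)$. I would make this precise using the Gaussian tail bound $\sum_{|j\tau|>T} w_j \le e^{-T^2/(2\sigma^2)}$ together with the Fourier-side concentration of the discrete Gaussian from Refs.~\citep{rendon2023lowdepth,chen2025quantum}. The aliasing error from the finite grid must also be checked to be $\mathcal{O}(\zeta)$, which holds because $\tau\|H\|<\pi$.
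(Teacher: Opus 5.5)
Your proposal is correct and follows the paper's own argument: it reuses the three probability bounds from the proof of \cref{lemQPESingle} with the Gaussian tail guarantee of \cref{lemGaussQPE}, chooses $\zeta\in\Theta(p_0-p_1)$, and substitutes into \cref{lemGaussQPE}. Your explicit count of $\mathcal{O}\bigl(\tfrac{\|H\|}{\eta}\log\tfrac{1}{\zeta}\bigr)$ grid points is a useful addition, because it justifies the $\log\log\frac{1}{p_0-p_1}$ ancilla term that the paper's one-line proof leaves implicit (\cref{lemGaussQPE} states only the query depth).
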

\begin{proof}
    Following the proof of \cref{lemQPESingle}, we again set $\zeta \in \mathcal{O}(p_0-p_1)$. Substituting this into \cref{lemGaussQPE} yields the result.
\end{proof}

\subsection{Estimating the dominant eigenenergy with multidimensional amplitude estimation}

As mentioned before, we can further improve the runtime of dominant eigenenergy estimation by replacing the na\"ive sampling with multidimensional amplitude estimation~\citep{vanapeldoorn2021quantum} as follows:

\begin{lemma}[Multidimensional amplitude estimation -- Theorem 5 and 9, \citep{vanapeldoorn2021quantum}]
    Let $\vec p \in \bbDelta^M$ and let $U_p$ provide quantum sample access to $\vec p$. Let $\xi > 0$. A probability distribution $\widetilde{\vec p} \in \bbDelta^M$ such that $\|\vec p - \widetilde{\vec p}\|_\infty \le \xi$ can be found with error probability at most $\delta  > 0$ using
    \begin{equation*}
        \mathcal{O}\left(\frac{1}{\xi}\log\frac{1}{\xi\delta}\right)
    \end{equation*}
    applications of $U_p$ and 
    \begin{equation*}
        \mathcal{O}\left(\frac{1}{\xi}\log\frac{1}{\xi}\right)
    \end{equation*}
    additional qubits.
    \label{lemMAE}
\end{lemma}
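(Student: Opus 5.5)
The lemma is quoted from \citet{vanapeldoorn2021quantum}, so the plan is mainly to show how its hypotheses match our quantum-sample-access setting. The extra work is in checking the query and qubit counts in the stated form, especially the absence of any dependence on the dimension $M$.

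\textbf{Step 1: turn sample access into a multidimensional phase oracle.} Starting from $U_p\ket{0}=\sum_\ell \sqrt{p_\ell}\ket{\ell}\ket{\chi_\ell}$, I would build the reflection about the sample state and the controlled reflections about $\ket{\ell}$. I would then use the standard conversion from probability oracles to phase oracles, through singular value transformation of the block $\sqrt{p_\ell}$. This gives, with $\mathcal{O}(1)$ uses of $U_p$ per application, an operator that imprints on a query register $\ket{\vec{x}}$ a phase approximately linear in $\langle \vec{x},\vec{p}\rangle$. This is the setting of Theorem 5 of \citet{vanapeldoorn2021quantum}.

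\textbf{Step 2: apply Jordan-type multidimensional phase estimation.} Applying the oracle to a uniform superposition over a grid of resolution $\sim\xi$ and then taking an inverse QFT yields a vector estimate $\vec{\widetilde p}$ using $\mathcal{O}(1/\xi)$ applications. Each coordinate is $\xi$-accurate with constant probability. Theorem 5 then upgrades this to $\ell_\infty$ accuracy $\xi$ with failure probability $\delta$. The cost is an extra logarithmic factor, obtained by coordinatewise medians over repetitions.

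\textbf{Step 3: remove the dependence on $M$.} This is the step I expect to be the main obstacle. A naive union bound gives a $\log(M/\delta)$ factor rather than the claimed $\log(1/(\xi\delta))$. I would follow the mechanism behind Theorem 9, which rests on two observations:
\begin{itemize}
\item Because $\sum_\ell p_\ell=1$, at most $\mathcal{O}(1/\xi)$ coordinates satisfy $p_\ell\ge\xi/2$.
\item Setting every other coordinate to $0$ costs at most $\xi$ in $\ell_\infty$.
\end{itemize}
With these, I would first identify a candidate set of heavy indices of size $\mathcal{O}(1/\xi)$, for instance by retaining the indices that appear nonzero in a majority of $\mathcal{O}(\log(1/(\xi\delta)))$ independent runs. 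Only on this set would I take a union bound, so the logarithm depends on $1/\xi$ rather than on $M$. The error analysis must also control false positives among the light coordinates. I would handle this by showing that the output distribution of each Jordan run places only mass $\mathcal{O}(p_\ell)$-tail on a light index, so in expectation only $\mathcal{O}(1/\xi)$ spurious indices survive.

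\textbf{Step 4: project and count qubits.} Finally, I would project $\vec{\widetilde p}$ onto the simplex $\bbDelta^M$. This projection at most doubles the $\ell_\infty$ error, so it suffices to rescale $\xi$ by a constant. For the qubit count, I would store the $\mathcal{O}(1/\xi)$ candidate indices together with $\mathcal{O}(\log(1/\xi))$-bit precision registers for each. This gives the stated $\mathcal{O}(\xi^{-1}\log\xi^{-1})$ additional qubits.
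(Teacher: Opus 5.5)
\textbf{There is a genuine gap in your Step 3: the candidate-identification phase breaks the lemma's $M$-independent qubit bound.}

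The paper gives no proof of this lemma; it imports the statement from van Apeldoorn (Theorems 5 and 9). Your Steps 1--2 follow the mechanism behind Theorem 5 in outline: probability oracle to phase oracle, then Jordan-type multidimensional phase estimation, with coordinatewise medians giving $\mathcal{O}(\xi^{-1}\log(M/\delta))$ queries. That part is fine. In Step 3, however, you find heavy coordinates by running the Jordan procedure on the full $M$-dimensional problem $\mathcal{O}(\log(1/(\xi\delta)))$ times. Each such run carries a phase register for every one of the $M$ coordinates. So this phase alone needs $\Omega(M)$ qubits, indeed $\Theta(M\log(1/\xi))$ at resolution $\xi$. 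The bound $\mathcal{O}(\xi^{-1}\log\xi^{-1})$ therefore cannot follow, and your Step 4 count silently leaves this phase out. A secondary problem is that your bound on the number of spurious candidates holds only in expectation. You therefore cannot fix in advance the size of the registers for the second, reduced run.

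The missing idea is to find the heavy coordinates by \emph{classical} sampling, which is the ``heavy hitter pre-sampling'' the paper refers to when it later invokes this lemma.
\begin{enumerate}
\item Measure the index register of $U_p\ket{0}$ $k=\Theta(\xi^{-1}\log(1/(\xi\delta)))$ times and keep the indices observed at least $\xi k/2$ times.
\item At most $2/\xi$ indices can pass, deterministically. By a Chernoff bound, each $\ell$ with $p_\ell\ge\xi$ fails to pass with probability at most $e^{-\xi k/8}\le\xi\delta$. A union bound over the at most $1/\xi$ such indices gives overall failure probability $\delta$.
\item Compose $U_p$ with a hard-coded reversible map into a fresh $\mathcal{O}(\log(1/\xi))$-qubit register. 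It writes the label of each kept index, and a single ``rest'' label for all other indices. The original index is left behind as garbage, which quantum sample access permits.
\item Apply Theorem 5 to this $M'=\mathcal{O}(1/\xi)$-dimensional distribution. Output $0$ on every index that was not kept; with probability at least $1-\delta$ each such index has $p_\ell<\xi$.
\end{enumerate}
This costs $\mathcal{O}(\xi^{-1}\log(M'/\delta))=\mathcal{O}(\xi^{-1}\log(1/(\xi\delta)))$ queries and $\mathcal{O}(M'\log(1/\xi))=\mathcal{O}(\xi^{-1}\log\xi^{-1})$ qubits. It also removes the need for any false-positive analysis of Jordan outputs on light coordinates.
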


\begin{remark}
    As the phase query for multidimensional amplitude estimation only provides one probability encoded as a phase in each measurement register, we do not require extra evolution time as in phase estimation, nor does swapping the uniform Hadamards with a Gaussian state improve the performance asymptotically, similar to standard amplitude estimation.
\end{remark}

It is easy to see from \cref{eqQPEOutput} that ${\rm QPE}(\mathbb{I}\otimes U_\psi)$ provides quantum sample access to $\vec p'$ in the first register. With this, we can apply multidimensional amplitude estimation.
\begin{proposition}[Dominant eigenenergy estimation by phase estimation and multidimensional amplitude estimation]
    With success probability $1-\delta$, the dominant eigenenergy $E_0$ can be estimated up to additive error $\eta \le \Delta$ with 
    \begin{equation*}
        \mathcal{O}\left(\frac{1}{p_0 - p_1}\log\frac{1}{(p_0-p_1)\delta}\right)
    \end{equation*}
    queries to the phase estimation circuit. This results in a total runtime complexity of 
    \begin{equation*}
        \widetilde{\mathcal{O}}\left(\frac{\|H\|}{(p_0 - p_1)\eta}\right),
    \end{equation*}
    and requires $n + \widetilde{\mathcal{O}}\left((p_0 - p_1)^{-1}+\log(\|H\|/\eta)\right)$ qubits.
    \label{propQPEMAE}
\end{proposition}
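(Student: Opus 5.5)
The plan is to treat the Gaussian phase-estimation circuit of \cref{lemGaussQPESingle} as the quantum sample-access unitary $U_{p'}$ required by \cref{lemMAE}, and then read off the dominant index from the resulting $\ell_\infty$ estimate.

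First, I would fix the single-shot circuit. Using \cref{lemGaussQPESingle} with accuracy $\eta\le\Delta$ and tail parameter $\zeta=c(p_0-p_1)$ for a small constant $c$ (say $c=1/8$), the output in \cref{eqQPEOutput} satisfies the bounds of \cref{eqQPEDistr}. These give $p'_{b_0}\ge p_0-\zeta$ and $p'_{b_1}\le p_1+\zeta$. Every other index $\ell\notin\{b_k\}$ carries mass at most $\zeta$, and because $\eta\le\Delta$ the bins $\mathcal{B}_k$ are disjoint. Hence $p'_{b_0}-\max_{\ell\ne b_0}p'_\ell\ge p_0-p_1-2\zeta$. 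For the indices not in $\{b_k\}$ I also need $p_0-\zeta>\zeta$, which holds because $p_0>p_0-p_1$. This circuit costs evolution time $\widetilde{\mathcal{O}}(\|H\|/\eta)$, with only a $\log(1/(p_0-p_1))$ overhead. Since it is a unitary with an available inverse, it provides the quantum sample access assumed in \cref{lemMAE}.

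Second, I would invoke \cref{lemMAE} with $\xi=c(p_0-p_1)$ and failure probability $\delta$. This produces $\widetilde{\vec p}$ with $\|\widetilde{\vec p}-\vec p'\|_\infty\le\xi$ using $\mathcal{O}\bigl(\xi^{-1}\log(1/(\xi\delta))\bigr)=\mathcal{O}\bigl((p_0-p_1)^{-1}\log(1/((p_0-p_1)\delta))\bigr)$ applications of the QPE circuit and its inverse. It also needs $\mathcal{O}(\xi^{-1}\log\xi^{-1})=\widetilde{\mathcal{O}}((p_0-p_1)^{-1})$ extra qubits. Mirroring \cref{eqProbOffset}, we have $\widetilde p_{b_0}-\widetilde p_\ell\ge p_0-p_1-2\zeta-2\xi>0$ for all $\ell\ne b_0$, so the index of the maximal entry of $\widetilde{\vec p}$ is $b_0$. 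Decoding $b_0$ then gives $E_0$ within $\eta$.

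Third, I would total the resources. Multiplying the query count by the per-circuit evolution time gives $\widetilde{\mathcal{O}}(\|H\|/((p_0-p_1)\eta))$, with the logarithmic factors hidden in the tilde. The qubit count is the sum $n+\mathcal{O}(\log(\|H\|/\eta)+\log\log(1/(p_0-p_1)))+\widetilde{\mathcal{O}}((p_0-p_1)^{-1})$, which matches the statement.

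The main obstacle is justifying that \cref{lemMAE} applies when the QPE output is only approximately a sample state. Two issues need care. The residual register $\ket{\chi_\ell}$ is entangled with the readout, and the distribution $\vec p'$ lives on $M$ outcomes rather than on $|\mathcal{S}|$. I would handle this by noting that \cref{lemMAE} needs only a unitary $U_p$ whose readout marginal equals $\vec p$, with arbitrary garbage allowed. Its complexity is independent of $M$, so the tails bounded by $\zeta$ are harmless once they are absorbed into the choice of constants.
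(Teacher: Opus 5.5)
Your proposal is correct and follows the paper's proof: the Gaussian phase-estimation circuit of \cref{lemGaussQPESingle} serves as the quantum sample oracle for $\vec{p}'$, \cref{lemMAE} is applied with $\xi\in\Theta(p_0-p_1)$ as in the offset argument of \cref{eqProbOffset}, and the query count is multiplied by the single-circuit evolution time. Your extra care with indices outside $\{b_k\}$ and with the garbage register is sound. Since \cref{eqQPEOutput} already makes the circuit an exact sample oracle for $\vec{p}'$, the only approximation to control is the gap between $\vec{p}'$ and $\vec{p}$, which your choice of constants handles.
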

\begin{proof}
    Apply \cref{lemMAE} by setting $\xi \in \mathcal{O}(p_0 - p_1)$ similarly to \cref{eqProbOffset} in \cref{propQPEsamp}. The runtime complexity is obtained by multiplying by the runtime of \cref{lemGaussQPESingle}.
\end{proof}

\subsection{Amplifying the dominant eigenstate from Gaussian phase estimation}

Given knowledge of the dominant eigenenergy obtained in the previous section, we can use this for dominant eigenstate preparation via amplitude amplification over the known eigenenergy. We make use of the following Gaussian version of Proposition 3 in \citet{ge2019faster}

\begin{lemma}[Eigenstate preparation by Gaussian phase estimation and amplitude amplification with known eigenenergy]
	Suppose that the value of $E_k$ is known to an accuracy of
    \begin{equation}
        \mathcal{O}\left(\frac{\Delta}{\log(1/p_k\varepsilon)}\right).
    \end{equation} 
    Then, phase estimation and (fixed-point) amplitude amplification can be used to prepare a state $\varepsilon$-close to the eigenstate with runtime complexity of
	\begin{equation*}
	\widetilde{\mathcal{O}}\left(\frac{\|H\|}{\sqrt{p_k}\Delta} \log\frac{1}{\varepsilon} \right),
	\end{equation*}	 
	and using $n + \mathcal{O}\left(\log (\|H\|/\Delta)+\log\log\left(p_k^{-1}\varepsilon^{-1}\right)\right)$ qubits. 
    \label{lemGaussQPEAmp}
\end{lemma}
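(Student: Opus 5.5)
The plan is to build a coherent, amplitude-amplifiable version of the post-selection argument of \cref{lemQPEStatePrep}, with textbook phase estimation replaced by Gaussian phase estimation (\cref{lemGaussQPE}). The knowledge of $E_k$ is used only to define the reflection on the energy register.

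\textbf{Step 1: the sample state.} Run Gaussian phase estimation on $U_\psi\ket{0}$ with target resolution $\eta\in\Theta(\Delta)$ and failure parameter $\zeta$, to be fixed later. By \cref{lemGaussQPE} this costs evolution time
\begin{equation*}
\mathcal{O}\left(\frac{\|H\|}{\Delta}\log\frac{\|H\|}{\Delta\zeta}\right)
\end{equation*}
and uses $\mathcal{O}(\log(\|H\|/\Delta)+\log\log\zeta^{-1})$ ancillas. The output is
\begin{equation*}
\sum_{j\in\mathcal{S}} c_j\ket{\phi_j}\ket{\psi_j}, \qquad \ket{\phi_j}=\sum_\ell \alpha_{j\ell}\ket{\ell}.
\end{equation*}
Let $\Pi_k$ project the energy register onto the window $\mathcal{B}_k$ of outcomes within $\eta/2$ of the known estimate $\widetilde E_k$. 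Because the estimate has accuracy $\mathcal{O}(\Delta/\log(1/p_k\varepsilon))$, which is much smaller than $\Delta$, the window $\mathcal{B}_k$ contains the Gaussian bulk around $E_k$. It is also separated from every other $E_j$ by at least roughly $\Delta/2$. The Gaussian tail bounds used in \cref{lemGaussQPE} then give $\|\Pi_k\ket{\phi_k}\|^2\ge 1-\zeta$ and $\|\Pi_k\ket{\phi_j}\|^2\le\zeta$ for $j\neq k$.

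\textbf{Step 2: fidelity of the good component.} Write the good component as
\begin{equation*}
\Pi_k\ket{\Psi} = c_k\Pi_k\ket{\phi_k}\ket{\psi_k} + \sum_{j\ne k} c_j\Pi_k\ket{\phi_j}\ket{\psi_j}.
\end{equation*}
The first term has squared norm at least $p_k(1-\zeta)$. The remaining terms are mutually orthogonal on the system register, so their total squared norm is at most $\sum_{j\neq k}p_j\zeta\le\zeta$.

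After normalising and tracing out the energy register, the infidelity with $\ket{\psi_k}$ is $\mathcal{O}(\zeta/p_k)$. Choosing $\zeta\in\Theta(p_k\varepsilon^2)$, or $\Theta(p_k\varepsilon)$ depending on the distance measure, makes the normalised state $\varepsilon$-close to $\ket{\psi_k}\otimes(\text{junk})$. This choice only enters logarithmically, giving the $\log\log(p_k^{-1}\varepsilon^{-1})$ ancilla count and a $\log(1/p_k\varepsilon)$ factor in the evolution time.

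That $\log(1/p_k\varepsilon)$ factor is exactly why the required eigenvalue accuracy in the statement scales as $\Delta/\log(1/p_k\varepsilon)$. The Gaussian width $\sigma\sim\sqrt{\log(1/\zeta)}/T$ must be small compared with the window, so the window edges must be placed accurately relative to $E_k$. I expect this bookkeeping, placing the window robustly given only an approximate $E_k$ while keeping both tails below $\zeta$, to be the main obstacle. I would handle it by centering the window at $\widetilde E_k$ with half-width $\Delta/4$ and absorbing the estimate error into the margin.

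\textbf{Step 3: amplification.} The success amplitude $\|\Pi_k\ket{\Psi}\|\ge\sqrt{p_k(1-\zeta)}$ is known only through a lower bound. I therefore apply fixed-point amplitude amplification (QSVT-based), using reflections about $\Pi_k$ and about $\ket{\Psi}$; the latter uses phase estimation, $U_\psi$, and their inverses.

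This takes $\mathcal{O}(p_k^{-1/2}\log\varepsilon^{-1})$ rounds to reach the post-selected state up to error $\varepsilon$. Multiplying by the per-round cost from Step 1 gives
\begin{equation*}
\widetilde{\mathcal{O}}\left(\frac{\|H\|}{\sqrt{p_k}\Delta}\log\frac{1}{\varepsilon}\right).
\end{equation*}
The qubit count is $n$ plus the phase-estimation register plus $\mathcal{O}(1)$ QSVT ancillas. A triangle inequality combining the errors of Steps 2 and 3 completes the argument.
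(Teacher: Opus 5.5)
Your route is the paper's: Gaussian phase estimation with tail parameter $\zeta$, and projection of the energy register onto outcomes near the known $E_k$. You then bound the good component below by $p_k(1-\zeta)$ and the rest above by $\sum_{j\neq k}p_j\zeta\le\zeta$, choose $\zeta\in\Theta(p_k\varepsilon^2)$, and multiply the per-run cost by the amplification overhead.

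The step that does not go through as written is the final count. You use $\mathcal{O}(p_k^{-1/2}\log\varepsilon^{-1})$ fixed-point rounds, each costing $\mathcal{O}\bigl((\|H\|/\Delta)\log(\|H\|/(\Delta\zeta))\bigr)$ with $\zeta\in\Theta(p_k\varepsilon^2)$. The product contains $\log^2(1/\varepsilon)$, not the single $\log(1/\varepsilon)$ in the statement. This matters here, because the paper contrasts this single logarithm with DEFEAT's $\log^2(1/\varepsilon)$.

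The paper multiplies by only $\mathcal{O}(p_k^{-1/2})$ rounds. You can justify that count even though the success amplitude is known only through the lower bound $\sqrt{p_k(1-\zeta)}$:
\begin{enumerate}
\item Run fixed-point amplification to a \emph{constant} target error, which costs $\mathcal{O}(p_k^{-1/2})$ rounds.
\item Measure the window flag $\Pi_k$ and repeat on failure. This takes $\mathcal{O}(1)$ expected repetitions, or $\mathcal{O}(\log\delta^{-1})$ for confidence $1-\delta$.
\end{enumerate}
Fixed-point amplification keeps the state in the span of $\Pi_k\ket{\Psi}$ and $(\mathbb{I}-\Pi_k)\ket{\Psi}$. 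So on success you hold exactly the normalised $\Pi_k\ket{\Psi}$, which your Step 2 already shows is $\varepsilon$-close to $\ket{\psi_k}$ tensored with junk. The total then becomes $\mathcal{O}\bigl(\|H\|/(\sqrt{p_k}\Delta)\cdot\log(\|H\|/(\Delta p_k\varepsilon))\bigr)$, which matches the claim.

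A smaller point concerns the accuracy hypothesis. With a window of half-width $\Delta/4$ centred at $\widetilde E_k$, the Gaussian tail bound only needs $|\widetilde E_k-E_k|$ to be a small constant fraction of $\Delta$. So the hypothesised accuracy $\Delta/\log(1/p_k\varepsilon)$ is more than your construction uses, and your explanation of where that logarithm comes from is off. In the paper, the logarithm arises because the proof post-selects on a single bit string $b_k$. That string is specified at the resolution set by the inverse of the total evolution time, which is about $(\|H\|/\Delta)\log(1/p_k\varepsilon)$. Your window version needs less prior knowledge of $E_k$ and is, if anything, more robust.
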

\begin{proof}
    To perform amplitude amplification based on the correct eigenvalue, we require a bit string $b_k$ close to the known \emph{a priori} eigenvalue. Post-selecting on $\ket{b_k}$, the residual state in the main register is then 
    \begin{equation}
        \ket{\psi_k^R} = \frac{\sum_{j\in\mathcal{S}}c_{j}\alpha_{j,b_k} \ket{b_k}\ket{\psi_{j}}}{\left\lVert\sum_{j\in\mathcal{S}}c_{j}\alpha_{j,b_k} \ket{b_k}\ket{\psi_{j}}\right\rVert}.
    \end{equation}
    For simplicity, we denote $a_{j,b_k} = a_{j}^{(k)}$. Similar to \cref{eqAlphaTailBound}, we define \begin{equation}
        \sum_{\ell\ne b_j}|\alpha_{j\ell}|^2 = 1 - \left\lvert a_j^{(j)}\right\rvert^2\le \zeta.
    \end{equation}The $\ell_2$ distance between the residual state and the eigenstate is then 
    \begin{equation}
        \left\lVert\ket{\psi_k^R} -\ket{\psi_k}\right\rVert \in \Theta\left(\frac{\left\lVert\sum_{j\in\mathcal{S}\setminus\{k\}}c_{j}\alpha_j^{(k)} \ket{b_k}\ket{\psi_{j}}\right\rVert}{\left\lVert\sum_{j\in\mathcal{S}}c_j\alpha_j^{(k)} \ket{b_k}\ket{\psi_{j}}\right\rVert}\right)
    \end{equation}
    
    The numerator and denominator above can be bounded, respectively, as
    \begin{align}
        \left\lVert\sum_{j\in\mathcal{S}\setminus\{k\}}c_{j}\alpha_j^{(k)} \ket{b_k}\ket{\psi_{j}}\right\rVert^2 &= \sum_{j\in\mathcal{S}\setminus\{k\}}p_j\left|\alpha_j^{(k)}\right|^2 \nonumber\\
        &\le \sum_{j\in\mathcal{S}\setminus\{k\}}p_j \left(1-  \left|\alpha_j^{(j)}\right|^2\right)\nonumber\\
        &\le \sum_{j\in\mathcal{S}\setminus\{k\}}p_j\zeta \le \zeta
    \end{align}
    and
    \begin{align}
        \left\lVert\sum_{j\in\mathcal{S}}c_{j}\alpha_j^{(k)} \ket{b_k}\ket{\psi_{j}}\right\rVert^2  &= \sum_{j\in\mathcal{S}}p_j\left|\alpha_j^{(k)}\right|^2 \ge p_k\left|\alpha_k^{(k)}\right|^2\nonumber\\
        &\ge p_k(1-\zeta).
    \end{align}
    
    Thus, to produce a $\ket{\psi_k^R}$ that is $\varepsilon$-close to $\ket{\psi_k}$, we require
    \begin{equation}
        \sqrt{\frac{\zeta}{p_k(1-\zeta)}}\le \sqrt{\frac{2\zeta}{p_k}} \le \varepsilon,
    \end{equation}
    and we have 
    \begin{equation}
        \zeta \in \mathcal{O}(p_k\varepsilon^2).
    \end{equation}

    Therefore, to restrict the error of the residual state to $\varepsilon$, by \cref{lemQPESingle}, we need every single run of the phase estimation circuit to have runtime 
    \begin{equation}
        \widetilde{\mathcal{O}}\left(\frac{\|H\|}{\Delta}\log\frac{1}{p_k\varepsilon}\right),
    \end{equation}
    which implies that we need to know the bit string $b_k$ up to accuracy 
        \begin{equation}
        \mathcal{O}\left(\frac{\Delta}{\|H\|\log(1/p_k\varepsilon)}\right).
    \end{equation}
    Lastly, multiplying the amplitude amplification rounds of $\mathcal{O}(1/\sqrt{p_k})$, we obtain the total runtime complexity
    \begin{equation}
	\widetilde{\mathcal{O}}\left(\frac{\|H\|}{\sqrt{p_k}\Delta} \log\frac{1}{\varepsilon}\right).
	\end{equation}	
\end{proof}

Thus, using this lemma to perform dominant eigenstate preparation requires prior knowledge of the dominant eigenvalue to $\mathcal{O}(\Delta/\log(1/p_k\varepsilon))$ accuracy, which we can obtain by \cref{propQPEMAE}. We put everything together in the following result.
\begin{theorem}[Dominant eigenstate preparation by phase estimation and amplitude amplification]
    With success probability $1-\delta$, the dominant eigenstate $\ket{\psi_0}$ can be prepared up to $\ell_2$ distance $\varepsilon$ with runtime complexity
    \begin{equation*}
        \widetilde{\mathcal{O}}\left(\frac{\|H\|}{(p_0 - p_1)\Delta}\log\frac{1}{\varepsilon}\log\frac{1}{\delta}\right)
    \end{equation*}
    and requires 
    \begin{equation*}
        n + \widetilde{\mathcal{O}}\left(\frac{1}{p_0-p_1} + \log \frac{\|H\|}{\Delta}+ \log\log \frac{1}{\varepsilon}\right),
    \end{equation*}
    qubits. The qubit number dependency of $(p_0-p_1)^{-1}$ can be removed at the cost of increasing the total runtime complexity by $(p_0-p_1)^{-1}$.
    \label{propQPEprep}
\end{theorem}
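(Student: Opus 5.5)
The plan is to chain \cref{propQPEMAE} (eigenenergy identification) with \cref{lemGaussQPEAmp} (amplification given a known energy), setting the accuracy parameters so that the first stage supplies exactly the precision the second one needs. The total cost is then the sum of the two stage costs, and the failure probability comes only from the first, estimation stage.

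\textbf{Stage 1 (eigenenergy identification).} Set the target additive accuracy to
\[
\eta \in \Theta\!\left(\frac{\Delta}{\log(1/(p_0\varepsilon))}\right),
\]
which is the precision demanded by \cref{lemGaussQPEAmp} with $k=0$. Since $\eta\le\Delta$, \cref{propQPEMAE} applies. With probability $1-\delta$ it returns the index $b_0$ of the dominant eigenenergy to accuracy $\eta$, using $\mathcal{O}\bigl((p_0-p_1)^{-1}\log(1/((p_0-p_1)\delta))\bigr)$ calls to the Gaussian phase-estimation circuit. By \cref{lemGaussQPESingle}, each call costs $\widetilde{\mathcal{O}}(\|H\|/\eta)$ evolution time. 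The product is
\[
\widetilde{\mathcal{O}}\!\left(\frac{\|H\|}{(p_0-p_1)\Delta}\log\frac1\varepsilon\log\frac1\delta\right),
\]
where the factor $\log\log(1/p_0)$ is absorbed into the tilde, and $\log(1/p_0)\le\log(1/(p_0-p_1))$. Correctness of picking the largest estimated entry follows the offset argument of \cref{eqProbOffset}: with $\zeta,\xi\in\mathcal{O}(p_0-p_1)$ and suitable constants, the estimated entry at $b_0$ strictly exceeds every other estimated entry.

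\textbf{Stage 2 (amplification).} Conditioned on the Stage 1 success, invoke \cref{lemGaussQPEAmp}. This means running Gaussian phase estimation at the finer per-run precision $\mathcal{O}(\Delta/(\|H\|\log(1/p_0\varepsilon)))$, reflecting the ancilla about $\ket{b_0}$, and applying fixed-point amplitude amplification. The cost is $\widetilde{\mathcal{O}}\bigl(\|H\|(\sqrt{p_0}\Delta)^{-1}\log(1/\varepsilon)\bigr)$. Since $\sqrt{p_0}\ge p_0\ge p_0-p_1$, this term is dominated by the Stage 1 cost, giving the stated runtime.

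\textbf{Main obstacle: bit-string consistency.} The delicate point is that Stage 1 returns a readout string at coarser resolution than the register used in Stage 2. The plan is to run both stages on the same top-$m$ readout register, with $M=2^m\in\Theta(\|H\|/\eta)$, and to obtain the finer per-run precision only through the Gaussian width and extra lower ancilla bits. This mirrors the construction with sets $\mathcal{B}_k$ in \cref{lemQPESingle}, so the reflection about $\ket{b_0}$ acts on exactly the string identified in Stage 1. One must also check that the identified $b_0$ satisfies the leakage bound $\zeta\in\mathcal{O}(p_0\varepsilon^2)$ required in the proof of \cref{lemGaussQPEAmp}. I expect this to hold by taking the Gaussian width $\sigma\propto\Delta^{-1}\sqrt{\log(1/(p_0\varepsilon))}$, which keeps the leakage exponentially small in $\log$ factors.

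\textbf{Qubit count.} The count is $n$ plus the maximum of two contributions: the multidimensional amplitude estimation ancillas from \cref{lemMAE}, which is $\widetilde{\mathcal{O}}((p_0-p_1)^{-1})$, and the phase-register size $\mathcal{O}(\log(\|H\|/\Delta)+\log\log(1/\varepsilon))$.

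\textbf{Final remark of the theorem.} Replace \cref{propQPEMAE} by the classical sampling of \cref{propQPEsamp}, with Gaussian phase estimation in place of the textbook version. This needs $\mathcal{O}((p_0-p_1)^{-2})$ samples instead of $(p_0-p_1)^{-1}$, so it removes the $(p_0-p_1)^{-1}$ ancilla overhead at the price of one extra factor of $(p_0-p_1)^{-1}$ in the runtime.
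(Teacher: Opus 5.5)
Your proposal is correct and follows the paper's proof. The paper does the same three things: it plugs the accuracy $\eta\in\Theta(\Delta/\log(1/(p_0\varepsilon)))$ required by \cref{lemGaussQPEAmp} into \cref{propQPEMAE}; it notes that the amplification cost is dominated and that the $\log(1/p_0)$ terms are absorbed into the $(p_0-p_1)^{-1}$ dependence; and it trades the $(p_0-p_1)^{-1}$ ancillas for an extra $(p_0-p_1)^{-1}$ runtime factor by sampling Gaussian phase estimation as in \cref{propQPEsamp}.

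Your bit-string-consistency paragraph is unnecessary, since the paper invokes \cref{lemGaussQPEAmp} as a black box, and the check you propose there fails as stated. With $\sigma\propto\Delta^{-1}\sqrt{\log(1/(p_0\varepsilon))}$, the readout spread $\sim\sigma^{-1}$ exceeds the bin width $\eta$ by a factor $\sqrt{\log(1/(p_0\varepsilon))}$, so the own-bin leakage of $\ket{\psi_0}$ is far from $\mathcal{O}(p_0\varepsilon^2)$; only the cross-leakage of $j\neq 0$ into $b_0$ is small. The clean fix is to reflect about a $\Theta(\Delta)$-wide window around the Stage~1 estimate, in the spirit of the paper's remark in \cref{appHamThresh}.
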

\begin{proof}
    We obtain the runtime complexity by plugging in
    \begin{equation}
        \frac{1}{\eta} \in \mathcal{O}\left(\frac{1}{\Delta}\log\frac{1}{p_0\varepsilon}\right)
    \end{equation} required by \cref{lemGaussQPEAmp} into \cref{propQPEMAE} for dominant eigenvalue estimation. It is easy to see that this runtime is larger than that of \cref{lemGaussQPEAmp}. The qubit reduction can be achieved by replacing the multidimensional amplitude estimation in \cref{propQPEMAE} with sample-based learning, similar to  \cref{propQPEsamp}, but with Gaussian phase estimation instead, which removes the additional time dependency on $(p_0-p_1)^{-1}$ in \cref{propQPEsamp}.  Note that the $\log (1/p_0)$ terms are now hidden as they are strictly dominated by the $(p_0-p_1)^{-1}$ term.
\end{proof}

\subsection{Bonus: Applications to multiple eigenvalue estimation}
\label{appQEEP}
Here, we briefly depart from the main topic of dominant eigenstate preparation to discuss other uses of the eigenvalue estimation algorithm -- specifically, its application to the quantum eigenvalue estimation problem.

The quantum eigenvalue estimation problem (QEEP) is a generalisation of the phase estimation problem, where one extracts multiple (sufficiently dominant) eigenvalues from a single input instead of a sufficiently dominant eigenvalue (where the eigenstate makes up $> 50\%$ of the amplitude). 
The early work by \citet{somma2019quantum} recast this as a time-series analysis problem over expectation values obtained by Hadamard tests.
More recently, \citet{ding2023simultaneous} introduced low-depth algorithms for simultaneous estimation of multiple eigenvalues by combining Gaussian filtering with least-squares optimisation. This line of work was further advanced by \citet{ding2024quantum}, who proposed the Quantum Multiple Eigenvalue Gaussian-filtered Search (QMEGS) algorithm, integrating a search strategy analogous to orthogonal matching pursuit~\citep{pati1993orthogonal}.

The dominant eigenenergy estimation algorithm using multidimensional amplitude estimation can be viewed as a fully coherent version of QMEGS, and results in a quadratic speedup over the dependence on the probability gap. Similar to QMEGS, we assume the \emph{sufficiently dominant condition}: there exists a set of indices $\mathcal{D}\subset \{0,1,\cdots,N-1\}$ such that $p_{\min} = \min_{j\in\mathcal{D}}p_j > p_{\rm tail} = \sum_{i\notin\mathcal{D}} p_j$. We refer to $\{\lambda_m\}_{m\in \mathcal{D}}$ as the \emph{sufficiently dominant} eigenvalues. In particular, we have the following result:

\begin{theorem}[Multiple eigenvalue estimation by phase estimation and multidimensional amplitude estimation]
    Assume $p_{\rm min} > p_{\rm tail}$. With success probability $1-\delta$, the sufficiently dominant eigenvalues $\{E_k\}_{k\in\mathcal{D}}$ can be retrieved up to $\eta$ with runtime complexity
    \begin{equation*}
        \widetilde{\mathcal{O}}\left(\frac{\|H\|}{(p_{\rm min} - p_{\rm tail})\eta}\log\frac{1}{\delta}\right)
    \end{equation*}
    and requires 
    \begin{equation*}
        n + \widetilde{\mathcal{O}}\left(\frac{1}{(p_{\rm min} - p_{\rm tail})} + \log \frac{\|H\|}{\eta}\right)
    \end{equation*}
    qubits.
\end{theorem}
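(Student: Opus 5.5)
The plan follows the structure of \cref{propQPEMAE}, with the single dominant index replaced by the set $\mathcal{D}$ of sufficiently dominant indices.

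\emph{Step 1: quantum sample access.} Run Gaussian phase estimation (\cref{lemGaussQPE}) on $U_\psi\ket{0}$ with accuracy $\eta\le\Delta$ and per-eigenstate tail failure probability $\zeta$. This gives quantum sample access to the perturbed distribution $\vec p'$ of \cref{eqQPEDistr}. Exactly as in the proof of \cref{lemQPESingle}, the bins $\mathcal{B}_k$ are disjoint, so
\begin{itemize}
\item[(i)] $|p'_{b_k}-p_k|\le\zeta$ for every $k\in\mathcal{S}$;
\item[(ii)] the total mass outside $\{b_k\}_{k\in\mathcal{S}}$ is at most $\zeta$.
\end{itemize}
Consequently every bin not labelled by $\mathcal{D}$ carries mass at most $p_{\rm tail}+\zeta$, since the $p_j$ with $j\notin\mathcal{D}$ sum to $p_{\rm tail}$. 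Every $b_k$ with $k\in\mathcal{D}$ carries mass at least $p_{\rm min}-\zeta$.

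\emph{Step 2: estimation and thresholding.} Apply multidimensional amplitude estimation (\cref{lemMAE}) to this sample oracle with $\ell_\infty$ accuracy $\xi$ and failure probability $\delta$, obtaining $\widetilde{\vec p}$. Choose $\zeta,\xi\in\Theta(p_{\rm min}-p_{\rm tail})$ so that $2\zeta+2\xi<p_{\rm min}-p_{\rm tail}$, and set the threshold
\[
\theta=\tfrac12\left(p_{\rm min}+p_{\rm tail}\right).
\]
Then every $\widetilde p_{b_k}$ with $k\in\mathcal{D}$ is at least $p_{\rm min}-\zeta-\xi>\theta$. Every other entry of $\widetilde{\vec p}$ is at most $p_{\rm tail}+\zeta+\xi<\theta$.

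Thresholding $\widetilde{\vec p}$ at $\theta$ therefore returns exactly $\{b_k\}_{k\in\mathcal{D}}$, which are $\eta$-accurate estimates of $\{E_k\}_{k\in\mathcal{D}}$. If $\theta$ is not known a priori, I would use the fact that the estimates of dominant and non-dominant bins are separated by a gap. Sorting $\widetilde{\vec p}$ then exposes this gap, and one can instead output the indices above the largest gap between consecutive sorted values.

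\emph{Step 3: resource count.} \cref{lemMAE} uses $\mathcal{O}(\xi^{-1}\log(1/\xi\delta))$ queries to the phase estimation circuit and $\widetilde{\mathcal{O}}(\xi^{-1})$ additional qubits. By \cref{lemGaussQPE}, each query costs $\mathcal{O}\bigl((\|H\|/\eta)\log(\|H\|/\eta\zeta)\bigr)$ evolution time. The phase register needs $\mathcal{O}(\log(\|H\|/\eta)+\log\log\zeta^{-1})$ qubits. Substituting $\xi,\zeta\in\Theta(p_{\rm min}-p_{\rm tail})$ gives the claimed runtime and qubit count, with logarithmic factors absorbed into $\widetilde{\mathcal{O}}$.

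\emph{Main obstacle.} The delicate point is step (ii): the leakage mass from dominant components must not be able to create a spurious bin above $\theta$. The disjointness of the $\mathcal{B}_k$ (which uses $\eta\le\Delta$) is what guarantees that each spurious bin and each non-dominant bin is bounded by $p_{\rm tail}+\zeta$. If $\eta>\Delta$ is allowed, bins would merge. In that case I would first run with $\eta'=\min(\eta,\Delta)$ and then coarsen the output, at no asymptotic cost beyond replacing $\eta$ by $\min(\eta,\Delta)$.
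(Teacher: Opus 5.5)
Your estimation-and-thresholding argument matches the paper's and is correct when $\eta\le\Delta$, but it does not prove the statement as written. The claimed bound contains no spectral gap, and that is the point of the sufficiently-dominant setting: the tail may contain arbitrarily many, arbitrarily closely spaced eigenvalues. With an exponentially large support, $\Delta\le 2\|H\|/(|\mathcal{S}|-1)$ is exponentially small. You route the upper bound on non-dominant bins through the two-sided estimate (i), which needs every $\mathcal{B}_k$, $k\in\mathcal{S}$, to be disjoint. Your fallback $\eta'=\min(\eta,\Delta)$ then inflates the runtime by a factor $\max\{1,\eta/\Delta\}$. That is a genuine $\Delta^{-1}$ dependence, not ``no asymptotic cost'', so what you prove is strictly weaker than the theorem.

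The missing observation is that disjointness is unnecessary; one-sided bounds suffice.
\begin{itemize}
\item For $k\in\mathcal{D}$, bin $b_k$ has mass at least $p_k\Pr[b_k\mid k]\ge p_k(1-\zeta)\ge p_{\rm min}-\zeta$, whatever else lands in it.
\item For any bin $b\notin\{b_k\}_{k\in\mathcal{D}}$, each dominant $j$ contributes at most $p_j\zeta$ because $b\neq b_j$. All tail eigenstates together contribute at most $p_{\rm tail}$. Hence $p'_b\le p_{\rm tail}+\zeta$, no matter how many tail eigenvalues merge into $b$.
\end{itemize}
These are the paper's two inequalities. They need no resolution of the tail, and not even of $\mathcal{D}$: a bin shared by two dominant eigenvalues still clears the threshold and reports both to within $\eta$, which is the paper's last case. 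With them, your Steps 2--3 give the $\Delta$-free bound unchanged.

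Separately, your rule for unknown $\theta$ (cut at the largest gap between consecutive sorted estimates) fails. Take dominant weights $0.6$ and $0.25$ and tail mass $0.15$ spread as $0.01$ per bin. The largest gap then lies between $0.6$ and $0.25$, so a dominant eigenvalue is discarded. You need genuine side information such as the threshold itself; the paper's proof also leaves the choice of cut implicit.
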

\begin{proof}
    We define the dominant spectral gap of eigenphases $\Delta_{\rm dom}=\min_{j,k\in \mathcal{D}, j\not=k}\left|\lambda_j-\lambda_k\right|$. We first assume $\eta/\|H\| \le \Delta_{\rm dom}$, which implies that the sufficiently dominant eigenvalues can be sufficiently resolved using phase estimation. Note that this does not include the non-dominant tail eigenvalues. 
    
    Similar to the proof in \cref{lemQPESingle}, we can obtain
    \begin{subequations}
    \begin{gather}
        \Pr[b_k] \ge p_k - \zeta,\\
        \Pr\left[b \notin \{b_k\}_{k\in \mathcal{D}}\right] \le \zeta + p_{\rm tail},
    \end{gather}
    \end{subequations}
    where an additional $p_{\rm tail}$ term in the upper bound accounts for the probability of measuring non-dominant eigenvalue bit strings, as the eigenvalue resolution does not include the tail eigenvalues.

    Thus, to discern the sufficiently dominant eigenvalues from the tail eigenvalues, we have
    \begin{equation}
        \min_{k\in\mathcal{D}} \Pr[b_k] \ge p_{\min} - \zeta > \zeta + p_{\rm tail} \ge \Pr\left[b \notin \{b_k\}_{k\in \mathcal{D}}\right],
    \end{equation}
    which gives us 
    \begin{equation}
        \zeta \in \mathcal{O}(p_{\min} - p_{\rm tail}).
    \end{equation}
    Substituting this into the phase estimation (\cref{lemQPESingle}) and multidimensional amplitude estimation (\cref{lemMAE}), where we remove the heavy hitter pre-sampling and replace the relevant bounds with a union bound over the grid resolution $\eta/\|H\|$, we obtain the bound
    \begin{equation}
        \widetilde{\mathcal{O}}\left(\frac{\|H\|}{(p_{\rm min} - p_{\rm tail})\eta}\log\frac{1}{\delta}\right)
    \end{equation}
    as shown in the theorem. The number of qubits follows from \cref{lemMAE}.

    Lastly, if we do \emph{not} assume $\eta/\|H\| \le \Delta_{\rm dom}$, then we may have multiple eigenvalues corresponding to the same bit string $b$. The upper bound for tail eigenvalue bit strings still follows,
    \begin{equation}
        \Pr\left[b \notin \{b_k\}_{k\in \mathcal{D}}\right] \le \zeta + p_{\rm tail},
    \end{equation}
    but the lower bound for sufficiently dominant eigenvalue bit strings is updated such that 
    \begin{equation}
        \Pr[b \in \{b_k\}_{k\in \mathcal{D}}] \ge 
        \sum_{j: \eta b_k \le \lambda_j \le \eta (b_k+1)} p_j - \zeta.
    \end{equation} 
    However, this can still be lower bounded by $p_{\min}$, and hence the runtime complexity remains the same as the case where $\eta/\|H\| \le \Delta_{\rm dom}$.
\end{proof}

\section{Sufficient dominant eigenstate filtering from spectral methods}
\label{appHamThresh}
In this section, we demonstrate an alternative algorithm based on spectral filtering that does not require prior knowledge of the target eigenvalue, and we discuss its shortcomings. In particular, without such knowledge, modifying the Hamiltonian-based ground-state preparation algorithm of \citet{lin2020nearoptimal} only yields a solution to the weaker task of \emph{sufficiently dominant} eigenstate preparation, where the sufficiently dominant condition is imposed on a single eigenvalue. Consequently, this Hamiltonian-based approach cannot be extended to the more general problem of dominant eigenstate preparation considered here, unless one employs our eigenprobability operator or first estimates the target eigenenergy.

To prepare the ground state without \emph{a priori} knowledge of the ground state energy, \citet{lin2020nearoptimal} first estimate it using binary amplitude estimation and binary search, locating the lowest eigenvalue by thresholding the Hamiltonian block-encoding. To adapt this to locate the dominant eigenenergy, we take inspiration from early fault-tolerant methods like \citet{lin2022heisenberg} and identify ``jumps'' in the cumulative distribution function across eigenvalues, which can be probed by thresholding the eigenvalues of the Hamiltonian. In our case, instead of identifying the first jump, which corresponds to the ground state energy, we look for the maximum jump, which corresponds to the dominant eigenenergy. 

To illustrate why this requires the eigenstate to be sufficiently dominant, we try to locate the dominant eigenenergy using binary search. We first see that in the case of perfect projectors $\Pi_\mu$ constructed from a spectral threshold $\mu$, we can measure
\begin{equation}
    a_\mu = \lVert\Pi_\mu \ket{\psi}\rVert = \sqrt{\sum_{j:E_j\le \mu} p_j}.
\end{equation}

Suppose we have the thresholds $\mu_{\rm low}$ and $\mu_{\rm high}$ such that $E_k \in [\mu_{\rm low}, \mu_{\rm high})$ with the corresponding $a_{\rm low}$ and $a_{\rm high}$. Bisecting the range to obtain $\mu_{\rm mid}$, we want to find whether $E_k < \mu_{\rm mid}$ or $E_k \ge \mu_{\rm mid}$ and using the constructed projector from the thresholds, obtain
\begin{align}
    a_{\rm mid} &= \lVert\Pi_{\mu, {\rm mid}}\ket{\psi}\rVert = \sqrt{\sum_{j:E_j< \mu_{\rm mid}} p_j}.
\end{align}
From this information, we can obtain
\begin{subequations}
\begin{align}
\sum_{j: \mu_{\rm low} \le E_j< \mu_{\rm mid}} p_j &= a_{\rm mid}^2 - a_{\rm low}^2\\
\sum_{j: \mu_{\rm mid} \le E_j< \mu_{\rm high}} p_j &= a_{\rm high}^2 - a_{\rm mid}^2.
\end{align}
\end{subequations}
However, given that we obtain a sum of eigenprobabilities, we can only determine which of the two ranges to bisect if the range containing $E_0$ is guaranteed to yield a larger sum than the other. In the worst case, this requires
\begin{equation}
    p_0 > \sum_{j \in \mathcal{S}\setminus\{0\}} p _j \Rightarrow p_0 \ge \frac{1}{2}.
\end{equation}
Hence, if binary search is used, spectral methods can only obtain the dominant eigenvalue if the eigenstate is sufficiently dominant.

One can, of course, solve this as an unstructured search problem, but given the grid size of the eigenvalues, or $\Delta$, an additional $\mathcal{O}(1/\sqrt{\Delta})$ cost is required even with the use of Grover search~\citep{grover1996fast}.

Alternatively, one can use \cref{propQPEMAE} to obtain the dominant eigenenergy first up to $\Delta$ accuracy, filter the eigenstate by applying a rectangle function around the target eigenenergy with $\mathcal{O}(\Delta)$ width, and apply amplitude amplification. This would reduce the runtime of \cref{propQPEprep} by a logarithmic factor due to the reduced accuracy for eigenenergy estimation, but it still maintains the high qubit dependency.

\section{Proofs for the twirling superoperator}
\label{appTwirl}
\subsection{Approximation via continuous time twirling}
\label{appInt}
We build on results from \citet{bako2026exponential}, which apply random time evolutions to implement an average mixed state that is nearly diagonal in the energy eigenbasis. We generalise their results to obtain a twirling superoperator and bound the spectral norm of the error $\|\mathcal{E}\|$ in the corresponding density matrix. To achieve this, we construct a continuous integral kernel rather than the summation-based kernel in \cref{eq:kernel}:
\begin{equation}
    K_{\mathrm{G}}(\lambda_k,\lambda_\ell) = \int_{-\infty}^{\infty} G(t) f_t(\lambda_k) f_t(\lambda_{\ell})^*\,\mathrm{d}t.
    \label{eq:gaussian_kernel}
\end{equation}
$K_G$ relies on the continuous set of functions $f_t(\lambda)=e^{-it\lambda}$ and the continuous set of Gaussian weights as $G(t) = (\sqrt{2\pi}\sigma)^{-1} e^{-t^2/(2\sigma^2)}$. Using this, we can construct the twirling superoperator
\begin{equation}
    \widetilde{\mathcal T}_{\rm time}^{\rm (int)}(\,\cdot\,) = \int_{-\infty}^{\infty} G(t) \,  e^{-iHt} [\,\cdot\,] e^{iHt}  \mathrm{d}t.
\end{equation}
Formally, this superoperator is equivalent to the time averaging in Lemma~1 of Ref~\cite{bako2026exponential}, which draws evolution times $t$ from a Gaussian distribution $\mathcal{N}(0,\sigma)$ and results in the time-averaged, twirled density matrix as
\begin{equation}
    \ravg =  \int_{-\infty}^{\infty} G(t) \,  e^{-i Ht} \ketbra{\psi}{\psi} e^{iHt}  \mathrm{d}t.
    \label{eq:twirl_gaussian}
\end{equation}
We derive the required Gaussian bandwidth for sufficient twirling in the following lemma.
\begin{lemma}[Continuous time twirling approximation]
    Suppose the spectral gap satisfies $|\lambda_k-\lambda_\ell|\ge\Delta$ for all $k \ne \ell \in \mathcal{S}$. To approximate the ideal twirling superoperator on $\ketbra{\psi}{\psi}$ to a fixed error $\mathcal{E}$, that is, $\left\lVert\mathcal T(\ketbra{\psi}{\psi}) - \widetilde{\mathcal T}_{\rm time}^{\rm(int)}(\ketbra{\psi}{\psi})\right\rVert \le \varepsilon$, then the required bandwidth of the Gaussian distribution scales as
    \begin{equation*}
        \sigma \in \Theta\left(\frac{1}{\Delta}\sqrt{\log\frac{1}{\mathcal{\|E\|}}}\right)
    \end{equation*}
    \label{lemma:ravg}
\end{lemma}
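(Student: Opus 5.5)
The plan is to compute the Gaussian kernel in closed form, express the twirling error entrywise in the Hamiltonian eigenbasis, and then bound its spectral norm.

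\emph{Kernel.} Since $G$ is the centred Gaussian density with variance $\sigma^2$, its characteristic function gives
\begin{equation*}
K_{\mathrm G}(\lambda_k,\lambda_\ell)=\int_{-\infty}^{\infty}G(t)e^{-it(\lambda_k-\lambda_\ell)}\,\mathrm dt=\exp\left(-\tfrac{\sigma^2(\lambda_k-\lambda_\ell)^2}{2}\right).
\end{equation*}
This equals $1$ on the diagonal. Off the diagonal within $\mathcal S$, the gap assumption bounds it by $e^{-\sigma^2\Delta^2/2}$.

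\emph{Error operator.} Write $\ketbra{\psi}{\psi}=\sum_{k,\ell\in\mathcal S}c_kc_\ell^*\ketbra{\psi_k}{\psi_\ell}$. Then
\begin{equation*}
\mathcal E:=\widetilde{\mathcal T}^{(\rm int)}_{\rm time}(\ketbra{\psi}{\psi})-\mathcal T(\ketbra{\psi}{\psi})=\sum_{k\neq\ell}K_{\mathrm G}(\lambda_k,\lambda_\ell)\,c_kc_\ell^*\ketbra{\psi_k}{\psi_\ell},
\end{equation*}
because the diagonal terms cancel exactly. Equivalently, $\mathcal E=(K-\mathbb I)\circ\ketbra{\psi}{\psi}$, a Hadamard product.

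\emph{Upper bound on $\|\mathcal E\|$.} Bound the spectral norm by the Frobenius norm:
\begin{equation*}
\|\mathcal E\|^2\le\sum_{k\ne\ell}|K_{\mathrm G}|^2p_kp_\ell\le e^{-\sigma^2\Delta^2}\Big(\sum_k p_k\Big)^2\le e^{-\sigma^2\Delta^2}.
\end{equation*}
This gives $\|\mathcal E\|\le e^{-\sigma^2\Delta^2/2}$, independent of $|\mathcal S|$. A cleaner alternative is to note that $K-\mathbb I$ has entries bounded by $e^{-\sigma^2\Delta^2/2}$ and $\ketbra{\psi}{\psi}$ is rank one with unit norm, so $\|\mathcal E\|=\max_{\|x\|=1}|\sum_{k\neq\ell}\cdots|\le e^{-\sigma^2\Delta^2/2}\sum_{k}|c_k||x_k|\cdot\sum_\ell|c_\ell||y_\ell|$, which again gives at most $e^{-\sigma^2\Delta^2/2}$. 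Setting this equal to $\varepsilon$ yields $\sigma=\Delta^{-1}\sqrt{2\log(1/\varepsilon)}$, which is the claimed $\mathcal O$ scaling.

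\emph{Lower bound (the $\Theta$).} This is the main obstacle, since the statement claims tightness rather than just sufficiency. I would exhibit a worst-case instance: two eigenstates with $|\lambda_0-\lambda_1|=\Delta$ and $p_0=p_1=1/2$. There $\mathcal E$ is a $2\times2$ off-diagonal matrix with norm $\tfrac12 e^{-\sigma^2\Delta^2/2}$. Achieving $\|\mathcal E\|\le\varepsilon$ therefore forces $\sigma\ge\Delta^{-1}\sqrt{2\log(1/(2\varepsilon))}$, matching the upper bound up to constants.

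Expressed through the fixed error $\|\mathcal E\|$ as in the statement, the two bounds together give $\sigma\in\Theta\bigl(\Delta^{-1}\sqrt{\log(1/\|\mathcal E\|)}\bigr)$.
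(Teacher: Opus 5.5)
Your upper bound is the paper's argument: bound the spectral norm by the Frobenius norm of the off-diagonal entries $K_{\mathrm G}(\lambda_k,\lambda_\ell)c_kc_\ell^*$ (the paper cites Bak\'o et al., Lemma~1, for this step), giving $\|\mathcal E\|\le e^{-\Delta^2\sigma^2/2}$ and hence that $\sigma=\Delta^{-1}\sqrt{2\log(1/\varepsilon)}$ suffices. The paper's proof stops there and never justifies the lower half of the claimed $\Theta$. Your two-level instance is therefore a correct addition, read as a worst case over instances with gap $\Delta$. If you want to respect $p_0>p_1$, take $p_0=\tfrac12+\eta$, $p_1=\tfrac12-\eta$; any choice with $\sqrt{p_0p_1}$ bounded below by a constant gives the same bound.
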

\begin{proof}
    Provided that we have $\rho = \mathcal{T}(\ketbra{\psi}{\psi})$, and $\bar\rho = \widetilde{\mathcal{T}}_{\rm time}^{\rm (int)}(\ketbra{\psi}{\psi})$, we can see that
    \begin{equation}
        \|\mathcal{E}\| \le \|\mathcal{E}\|_{\rm F} \le \exp\left(-\frac{\Delta^2\sigma^2}{2}\right)
    \end{equation}
    where the latter bound is obtained by evaluating off-diagonal terms of the continuous Gaussian kernel in~\cref{eq:gaussian_kernel}, as shown in the upper bound of the Hilbert-Schmidt/Frobenius norm from \citet[Lemma~1]{bako2026exponential}. This recovers the bandwidth $\sigma$ upper bound in the lemma.
    
\end{proof}

\subsection{Approximation via discrete-time twirling}
\label{appTime}
Here, we discuss the implementation of the discrete-time twirling superoperator by approximating the continuous-time twirling introduced in the previous subsection. In particular, we can approximate the Gaussian integral kernel with a Gaussian summation kernel using quadrature formulas. 

We review the cost and error dependency using two techniques: the Gauss--Hermite quadrature and the simple truncated Riemann sum. We show that the truncated Riemann sum is more efficient in physically relevant systems, and we rely on this equispaced quadrature formula for constructing the block-encoding of the eigenprobability operator.

\subsubsection{Gauss--Hermite quadrature}
The most natural quadrature formula approximating a Gaussian integral is the Gauss--Hermite quadrature, which in its standard form is defined as
\begin{equation}
    \int\limits_{-\infty}^{\infty} e^{-t^2} f(t) \mathrm dt \approx \sum_{j=0}^{M-1} w_j f(x_j),
\end{equation}
where $f(x)$ is a function, $x_j$ are roots of Hermite polynomials, and the weights $w_j$ are calculated using the images of these roots. Crucially, both the roots and weights are computable classically in $\mathcal{O}(M)$ time.

Using a change of variables, it is straightforward to extend this formulation to general Gaussian window functions $G(t) = \frac{1}{\sqrt{2\pi}\sigma} \exp(\frac{-t^2}{2\sigma^2})$ by scaling the weights and shifting the roots as
\begin{align}
    I & = \int\limits_{-\infty}^{\infty} G(t) f(t) \mathrm dt 
     \approx \sum_{j=0}^{M-1} \frac{w_j}{\sqrt{\pi}} f(\sqrt{2} \sigma x_j) \coloneqq I_M(f). 
\end{align}
The error of this generalised quadrature can be written as
\begin{equation}
    E_M = I-I_M = \frac{M! \sigma^{2M}}{(2M)!} f^{(2M)}(\xi),
\end{equation}
where $f^{(2M)}(\xi)$ is the $(2M)$-th derivative of the function at some $\xi \in (-\infty, \infty)$.

Applying the Gauss--Hermite quadrature sum instead of continuous Gaussian integration, we find that, setting $t_j = \sqrt{2}\sigma x_j$, we can construct a discrete time-twirling superoperator
\begin{equation}
    \widetilde{\mathcal T}_{\rm time}^{\rm (GH)}(\,\cdot\,) = \sum_{j=0}^{M-1} \frac{w_j}{\sqrt{\pi}}  e^{-iHt_j} [\,\cdot\,] e^{iHt_j}.
\end{equation}
Let $\ket{\psi(t)} := e^{-iHt}\ket{\psi}$. Then
\begin{equation}
\frac{\mathrm d}{\mathrm dt}\bketbra{\psi(t)}{\psi(t)} = -i\mathcal{L}\left(\bketbra{\psi(t)}{\psi(t)}\right),
\end{equation}
where $\mathcal{L}(\,\cdot\,):=[H,\,\cdot\,]$ is a Liouvillian superoperator and repeated differentiation yields
\begin{equation}
\frac{\mathrm d^{2M}}{\mathrm dt^{2M}}\left(\bketbra{\psi(t)}{\psi(t)}\right) = (-1)^M \mathcal{L}^{2M} \left(\bketbra{\psi(t)}{\psi(t)}\right).
\end{equation}

We can then obtain the difference between our time evolved density operators $\bar{\rho} = \mathcal{T}_{\rm time}^{\rm (int)}(\ketbra{\psi}{\psi})$, and $\widetilde{\rho}_M = \mathcal{T}_{\rm time}^{\rm (GH)}(\ketbra{\psi}{\psi})$, such that we have the error operator
\begin{equation}
    \mathcal{E}_M:= \bar{\rho} - \widetilde{\rho}_M = \frac{M! \sigma^{2M}}{(2M)!} (-1)^{M}\mathcal{L}^{2M} \left(\bketbra{\psi(\xi)}{\psi(\xi)}\right),
\end{equation}
for some $\xi\in(-\infty,\infty)$. We can bound the spectral norm of this error operator as
\begin{equation}
    \|\mathcal{E}_M \| \leq \frac{M! \sigma^{2M}}{(2M)!} (2\|H \|)^{2M},
\end{equation}
where we used the triangle inequality, the submultiplicativity of the spectral norm, and $\|\ketbra{\psi(t)}{\psi(t)}\| = 1$.

From this error bound, we can also derive the required number of quadrature points for a given final accuracy $\varepsilon$.

\begin{lemma}[Gauss--Hermite quadrature cost]
    To approximate the Gaussian integral 
    \begin{equation*}
        \ravg =\int_{-\infty}^{\infty} \frac{1}{\sqrt{2\pi}\sigma} e^{-t^2/2\sigma^2} e^{-iHt} \ketbra{\psi}{\psi} e^{iHt} \, \mathrm{d}t
    \end{equation*}
    with fixed accuracy $\varepsilon = \|\ravg - \widetilde{\rho}_M \|$ using the Gauss--Hermite quadrature, it suffices to use
    \begin{equation*}
    M \in \widetilde{\mathcal O}\left( (\sigma \|H \|)^2 + \log(1/\varepsilon)\right)
    \end{equation*}
    quadrature points, where $\widetilde{\rho}_M$ denotes the density operator obtained with $M$ quadrature points, and the maximal evolution time is bounded as  $T\in\widetilde{\mathcal{O}}\left(\sigma^2 \|H\| + \sigma\sqrt{\log(1/\varepsilon)}\right)$.
    \label{lemma:quadrature}
\end{lemma}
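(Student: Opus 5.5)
The plan is to start from the error bound derived just above, $\|\mathcal{E}_M\|\le \frac{M!\,\sigma^{2M}}{(2M)!}(2\|H\|)^{2M}$, and find the smallest $M$ for which this is at most $\varepsilon$. First simplify the prefactor with Stirling's formula. Since $\frac{(2M)!}{M!}\ge M!\,M^M \ge (M^2/e)^M$, we get
\begin{equation*}
    \|\mathcal{E}_M\| \le \left(\frac{4e\,\sigma^2\|H\|^2}{M^2}\right)^{M}.
\end{equation*}
Set $x:=\sigma\|H\|$. Choosing $M\ge c\,x$ with $c$ a large enough constant (for instance $c\ge 2\sqrt{e}\cdot e$) makes the base at most $e^{-2}$, so the bound decays like $e^{-2M}$.

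Next, handle the two regimes together. Take $M=\max\{c\,x,\ \tfrac12\log(1/\varepsilon)\}$. The first term makes the base geometrically small, and the second makes $e^{-2M}\le\varepsilon$. This already gives $M\in\mathcal{O}(\sigma\|H\|+\log(1/\varepsilon))$, which is at least as good as the claimed $\widetilde{\mathcal{O}}((\sigma\|H\|)^2+\log(1/\varepsilon))$. This is where I expect the main subtlety. The crude derivative bound $\|f^{(2M)}(\xi)\|\le(2\|H\|)^{2M}$ combined with the mean-value form of the remainder should give linear-in-$x$ scaling. However, the paper states the result with $(\sigma\|H\|)^2$, probably because of a conservative Stirling step (bounding $(2M)!/M!$ only by $M^M$, which forces $M\gtrsim x^2$). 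I would present the clean bound and remark that it implies the stated one; if I have mis-handled the remainder, the weaker $M^M$ estimate gives exactly $M\gtrsim 4e\,x^2$ plus $\log(1/\varepsilon)$, matching the statement.

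Finally, for the maximal evolution time, I would use the fact that the largest Hermite root satisfies $x_{\max}\le\sqrt{2M+1}$. Hence $T=\sqrt2\,\sigma\,x_{\max}\in\mathcal{O}(\sigma\sqrt{M})$. Substituting $M\in\mathcal{O}((\sigma\|H\|)^2+\log(1/\varepsilon))$ and using $\sqrt{a+b}\le\sqrt a+\sqrt b$ gives
\begin{equation*}
    T\in\widetilde{\mathcal{O}}\left(\sigma^2\|H\|+\sigma\sqrt{\log(1/\varepsilon)}\right),
\end{equation*}
as claimed. Logarithmic slack from the Stirling constants is absorbed into the tilde.
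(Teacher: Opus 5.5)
The step that fails is your factorial estimate $\frac{(2M)!}{M!}\ge M!\,M^M$. It is false already at $M=3$ ($6!/3!=120<162=3!\cdot 3^3$). Asymptotically $\frac{(2M)!}{M!}\sim\sqrt{2}\,(4M/e)^M$, which is smaller than $(M^2/e)^M$ by roughly a factor $M^M$. So the bound $(4e\sigma^2\|H\|^2/M^2)^M$ and the claimed $M\in\mathcal{O}(\sigma\|H\|+\log(1/\varepsilon))$ do not follow, and your diagnosis of the paper is backwards.

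The paper uses the \emph{sharp} Robbins estimate $\frac{M!}{(2M)!}<\left(\frac{e}{4M}\right)^M$, giving $\|\mathcal{E}_M\|<(e\sigma^2\|H\|^2/M)^M$. The matching lower Robbins bound shows the remainder $\frac{M!}{(2M)!}(2\sigma\|H\|)^{2M}$ exceeds $\frac{1}{\sqrt2}(e\sigma^2\|H\|^2/M)^M$, which is at least $1/\sqrt2$ whenever $M\le e(\sigma\|H\|)^2$. Hence the quadratic dependence is forced by this remainder bound; it is not an artifact of a loose Stirling step. It also matches the known inefficiency of Gauss--Hermite quadrature for oscillatory integrands. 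Near the origin its nodes are spaced by $\Theta(1/\sqrt{M})$ in the rescaled variable, so resolving frequencies of order $\sigma\|H\|$ needs $M\gtrsim(\sigma\|H\|)^2$. This is exactly why the paper later prefers the truncated Riemann sum, whose point count is linear in $\sigma\|H\|$.

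Your fallback, by contrast, is correct and should be the proof. Since $\frac{(2M)!}{M!}=\prod_{k=1}^{M}(M+k)\ge M^M$, you get $\|\mathcal{E}_M\|\le(4\sigma^2\|H\|^2/M)^M$. Then $M=\lceil\max\{4e(\sigma\|H\|)^2,\log(1/\varepsilon)\}\rceil$ makes the base at most $1/e$ and the error at most $e^{-M}\le\varepsilon$.

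This differs from the paper only in bookkeeping. The paper solves $(e\sigma^2\|H\|^2/M)^M\le\varepsilon$ via the Lambert $W$ function and adds a separate convergence condition $M>(\sigma\|H\|)^2-\tfrac12$, obtained from the ratio of consecutive bounds. Your $\max$ choice is elementary and self-contained. It loses only a constant in front of $(\sigma\|H\|)^2$ and the $\log\log(1/\varepsilon)$ saving in the second term, both invisible under $\widetilde{\mathcal{O}}$.

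Your evolution-time step is identical to the paper's: $|x_j|<\sqrt{2M+1}$ gives $T<\sigma\sqrt{4M+2}$, and then $\sqrt{a+b}\le\sqrt a+\sqrt b$.
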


\begin{proof}
    First, let us prove the accuracy scaling of $M$. Using Robbins' bounds, we can write
    \begin{align}
        \frac{M!}{(2M)!} & < \frac{\sqrt{2\pi M}(M/e)^M \exp{(1/12M)}}
        {\sqrt{4\pi M}(2M/e)^{2M} \exp{(1/(24M+1))}} \nonumber\\
        & < \left(\frac{e}{4M}\right)^M.
    \end{align}

    Substituting this into the inequality for the accuracy, we see that it is sufficient to require $\varepsilon \ge (e \sigma^2 \|H\|^2/M)^M$, which we can rearrange as
    \begin{equation}
        \frac{\log (1/\varepsilon)}{e \sigma^2 \|H\|^2} \le \frac{M}{e \sigma^2 \|H\|^2} \log \left( \frac{M}{e \sigma^2 \|H\|^2}\right),
    \end{equation}
    and solved using the principal branch of the Lambert $W$ function $W_0$ as
    \begin{equation}
        M \ge \frac{\log(1/\varepsilon)}{W_0(\log(1/\varepsilon)/(e \sigma^2 \|H\|^2))}.
    \end{equation}
    Using the bound $W_0(x)> \log(x)-\log\log(x)$, we get the leading-order approximation

    \begin{equation}
        M \in \Omega\left( \frac{\log(1/\varepsilon)}{\log\log(1/\varepsilon) - 2\log(\sigma \|H\|)}\right).
    \end{equation}

    Now that we have shown the accuracy scaling of $M$, we need to take into account the physical constraint. Namely, for the error to be in the convergent regime (where adding quadrature points actually improves the results), we need the error sequence to monotonically decrease as
    \begin{equation}
        \frac{\|\mathcal{E}_M \|}{\|\mathcal{E}_{M+1} \|} = \frac{2M+1}{2(\sigma \|H \| )^2} > 1,
    \end{equation}
    which puts the condition on the number of quadrature points
    \begin{equation}
        M > \frac{2(\sigma \|H \|)^2 -1}{2}.
    \end{equation}
    
    Finally, since these resource bounds are additive, it suffices to use 
    \begin{equation}
        M \in \mathcal{O}\left(\left(\sigma \|H\|\right)^2 + \frac{\log(1/\varepsilon)}{\log\log(1/\varepsilon) - 2\log(\sigma \|H\|)}\right)
    \end{equation}
    quadrature points.

    From this bound, we can also obtain the maximal evolution time. Since the roots of the physicist's version of the Hermite polynomials are bounded as $|x_j|<\sqrt{2M+1}$, we can bound the maximal evolution time from the quadrature points as $T < \sigma \sqrt{4M+2}$. Consequently, the maximal evolution time is bounded as 
    \begin{equation}
        T \in \widetilde{\mathcal{O}}\left(\sigma^2 \|H\| + \sigma\sqrt{\log(1/\varepsilon)}\right).
    \end{equation}

\end{proof}

\subsubsection{Truncated Riemann sum}

Next, we discuss the simple Riemann sum, where the integral is truncated at $[-T, T]$ and discretised with a stepsize $\tau$. In this case, the required number of points $M = 2T/\tau$ is bounded in the following lemma. Using this, we can produce an approximate twirling superoperator based on Riemann quadratures
\begin{equation}
    \widetilde{\rho}_M = \widetilde{\mathcal T}_{\rm time}(\ketbra{\psi}{\psi})
\end{equation}
as defined in \cref{defTime}.

\begin{lemma}[Riemann sum cost]
    Approximating the Gaussian integral 
    \begin{equation}
        \ravg =\int_{-\infty}^{\infty} \frac{1}{\sqrt{2\pi}\sigma} e^{-t^2/2\sigma^2} e^{-iHt} \ketbra{\psi}{\psi} e^{iHt} \, \mathrm{d}t
    \end{equation}
    with fixed accuracy $\varepsilon = \|\ravg - \widetilde{\rho}_M \|$ using a simple Riemann sum on a truncated interval $[-T, T]$, requires 
    \begin{equation}
    M \in \mathcal{O} \left(\sigma \|H\| \sqrt{\log\frac{1}{\varepsilon}} + \log\frac{1}{\varepsilon}\right).
    \end{equation}
    points, where $\widetilde{\rho}_M$ denotes the density operator obtained with $M$ Riemann terms, and $T\in\mathcal{O}(\sigma \sqrt{\log(1/\varepsilon)})$
    \label{lemma:riemann}
\end{lemma}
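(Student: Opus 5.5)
The error splits into two pieces: truncating the integral to $[-T,T]$, and discretising the truncated integral with step $\tau$. Write $\rho(t):=e^{-iHt}\ketbra{\psi}{\psi}e^{iHt}$, so $\|\rho(t)\|=1$. By the triangle inequality, $\|\ravg-\widetilde\rho_M\|$ is at most the sum of three terms: the Gaussian tail mass outside $[-T,T]$, the Riemann discretisation error on $[-T,T]$, and the renormalisation error from the factor $\mathcal Z$ in \cref{defTime}. The plan is to make each term at most $\varepsilon/3$.

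\emph{Truncation.} The tail contribution is bounded by $\int_{|t|>T}G(t)\,\mathrm dt=\operatorname{erfc}(T/\sqrt2\sigma)\le e^{-T^2/2\sigma^2}$, using a Mills-ratio or Gordon-type bound. It therefore suffices to take $T=\sigma\sqrt{2\log(3/\varepsilon)}\in\mathcal O(\sigma\sqrt{\log(1/\varepsilon)})$. The normalisation $\mathcal Z$ differs from $1$ by the same tail mass plus the discretisation error of $\int G$ itself. Hence renormalising only adds a constant factor to the other two errors, and I will not treat it separately.

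\emph{Discretisation.} This is the main step. Two routes are available, and I will use the second because it gives the stated additive form.

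A first, naive route uses the left or midpoint Riemann error bound $\tau^2 T\sup\|\partial_t^2(G\rho)\|$. Here $\|\partial_t\rho\|\le2\|H\|$ and $\|\partial_t^2\rho\|\le4\|H\|^2$, and $G'$, $G''$ scale as $\sigma^{-1}$, $\sigma^{-2}$. This only gives polynomial convergence in $\tau$, which is insufficient.

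The route I will use exploits the fact that the equispaced rule converges exponentially for a smooth, rapidly decaying integrand. Expand in the eigenbasis: the $(k,\ell)$ entry of the integrand is $p$-weighted $G(t)e^{-i\omega t}$ with $\omega=\lambda_k-\lambda_\ell$ and $|\omega|\le2\|H\|$. By Poisson summation, the untruncated equispaced sum $\tau\sum_j G(j\tau)e^{-i\omega j\tau}$ equals $\sum_{m}\widehat G(\omega+2\pi m/\tau)$, where $\widehat G(\nu)=e^{-\sigma^2\nu^2/2}$. The $m=0$ term is exactly the continuous kernel. The aliasing terms $m\neq0$ are bounded by $\sum_{m\ne0}e^{-\sigma^2(2\pi|m|/\tau-2\|H\|)^2/2}$.

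To control the aliasing, choose $\tau$ so that $2\pi/\tau-2\|H\|\ge\sigma^{-1}\sqrt{2\log(3/\varepsilon)}$. This means $1/\tau\in\mathcal O(\|H\|+\sigma^{-1}\sqrt{\log(1/\varepsilon)})$, and the aliasing is then $\mathcal O(\varepsilon)$ entrywise. The discrepancy between the infinite and the truncated sum is again bounded by the Gaussian tail, with the same $T$.

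\emph{From entrywise to operator norm.} The entrywise bounds must be converted to an operator-norm bound on the error matrix. Write the error as a Schur product of $\ketbra{\psi}{\psi}$ with a kernel matrix whose entries are bounded by $\varepsilon$, and bound its norm by the Frobenius norm, $\le\sum_{k\ell}\sqrt{p_kp_\ell}\,\varepsilon\le|\mathcal S|\varepsilon$. A cleaner alternative avoids dimension factors: write the kernel error as $\int \delta(t)\rho(t)\,\mathrm dt$ with a signed measure $\delta$ whose Fourier transform is uniformly small on $[-2\|H\|,2\|H\|]$, and bound the operator norm directly by that uniform bound times $\|\rho(0)\|$. In the worst case this costs only a $\log$ inside $\varepsilon$, which is absorbed in the $\mathcal O$.

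\emph{Counting points.} Finally $M=2T/\tau\in\mathcal O\big(\sigma\sqrt{\log(1/\varepsilon)}\,(\|H\|+\sigma^{-1}\sqrt{\log(1/\varepsilon)})\big)=\mathcal O\big(\sigma\|H\|\sqrt{\log(1/\varepsilon)}+\log(1/\varepsilon)\big)$. This is the claim.

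The main obstacle I expect is the entrywise-to-operator-norm conversion in the discretisation step without picking up a dimension factor. I would first attempt the signed-measure argument.
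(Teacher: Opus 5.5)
Your proposal is essentially the paper's proof. It uses the same split into truncation, discretisation and normalisation errors (the last dominated by the first two), a Mills-type tail bound fixing $T\in\mathcal{O}(\sigma\sqrt{\log(1/\varepsilon)})$, Poisson summation with aliases at $2\pi m/\tau$ shifted by at most $2\|H\|$ to fix $1/\tau\in\mathcal{O}(\|H\|+\sigma^{-1}\sqrt{\log(1/\varepsilon)})$, and the count $M=2T/\tau$.

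The obstacle you anticipate comes from an arithmetic slip. The Frobenius norm of $\ketbra{\psi}{\psi}\circ K$ is $\bigl(\sum_{k\ell}p_kp_\ell|K_{k\ell}|^2\bigr)^{1/2}\le\max_{k\ell}|K_{k\ell}|$, because $\sum_{k\ell}p_kp_\ell=1$. It is not $\sum_{k\ell}\sqrt{p_kp_\ell}\,\max|K_{k\ell}|$, which is where your $|\mathcal S|$ came from. So the entrywise-to-operator-norm conversion is dimension-free, and this is what implicitly justifies the paper's direct bound on $\|\hat g(2\pi m/\tau)\|$. The signed-measure detour is therefore unnecessary. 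As an operator-norm statement it would in general need the total variation of $\delta$ rather than $\sup|\hat\delta|$, so here it only works by falling back on the same rank-one Frobenius computation.
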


\begin{proof}
    From the continuous time-twirling, we want to further use a discretised and truncated finite sum to approximate the Gaussian integral. In particular, our goal is to have a finite sum, for some time step $\tau$, in the form of 
    \begin{equation}
        \sum_{j = -M/2}^{M/2-1} \frac{\tau G(j\tau)}{\mathcal{Z}} e^{-t^2/2\sigma^2} e^{-ijH\tau} \ketbra{\psi}{\psi} e^{ijH\tau},
    \end{equation}
    where $G(t) = (\sqrt{2\pi\sigma})^{-1} \exp(-t^2/2\sigma^2)$ is the Gaussian function, and $\mathcal{Z}$ is the normalisation factor (such that the Gaussian weights are $\ell_1$ normalised). Note that here, the values of $j$ are taken such that when $M=2^m$ is a power of two (to enable simpler implementation on a binary representation), the value of $j$ can be implemented on $m$ bits with two's complement.

    The sources of error in this approximation are the truncation, discretisation, and normalisation. Using the triangle inequality, the total error is bounded as 
    \begin{equation}
        \mathcal{E}_{\rm total} \leq \mathcal{E}_{\rm norm} + \mathcal{E}_{\rm trunc}+ \mathcal{E}_{\rm disc}.
    \end{equation}
    Let us truncate the continuous interval to the time window $[-T, T]$. This enables us to determine the stepsize $\tau$ and the number of quadrature points $M$ afterwards. 
    
    We can write the normalisation constant as
    \begin{equation}
        \mathcal{Z} = \sum_{j = -M/2}^{M/2-1} \tau G(j\tau).
    \end{equation}
    The corresponding normalisation error is
    \begin{align}
        \mathcal{E}_{\rm norm} & = |\widetilde{\rho}_M-\mathcal{Z}\widetilde{\rho}_M| \le |1-\mathcal{Z}| \lVert\widetilde{\rho}_M\rVert = |1-\mathcal{Z}| \nonumber \\
        & = \left| \int_{-\infty}^\infty G(t) \mathrm dt -  \sum_{j = -M/2}^{M/2-1} \tau G(j\tau)\right|.
    \end{align}
    where we use the fact that $\widetilde{\rho}_M$ is normalised such that $\|\widetilde{\rho}_M\|= 1$ and  $\int_{-\infty}^{\infty}G(t)\,\mathrm{d}t=1$. This equation shows that $|1 - \mathcal{Z}|$ is bounded by its own truncation and discretisation errors. Crucially, the error for the pure Gaussian is strictly smaller than the error for the operator. Therefore, it is perfectly safe to bound it as
    \begin{equation}
        \mathcal{E}_{\rm norm} \leq \mathcal{E}_{\rm trunc}+ \mathcal{E}_{\rm disc}.
    \end{equation}
    
    Without loss of generality, we require that the truncation and discretisation errors are less than $\varepsilon/4$:
    \begin{align}
        \mathcal{E}_{\rm total} \leq 2 (\mathcal{E}_{\rm trunc}+ \mathcal{E}_{\rm disc}) \leq 2(\varepsilon/4 + \varepsilon/4) = \varepsilon.
    \end{align}

     First, let us consider the truncation error. The Gaussian tail can be bounded as
    \begin{align}
        \mathcal{E}_{\rm trunc} & = \left\| \tau \sum_{j \notin [-M/2,M/2-1]} G(j\tau)\bketbra{\psi(j\tau)}{\psi(j\tau)}\right\|\nonumber\\ 
        &\leq 2\tau\sum_{j=M/2}^{\infty} G(j\tau) \leq 2\int\limits_{T}^{\infty} G(t) \mathrm dt \nonumber\\
        &\leq \frac{2\sigma}{T\sqrt{2\pi}} e^{-\frac{T^2}{2\sigma^2}},
    \end{align}
    where we used the fact that the Gaussian is strictly decreasing on the given interval and set $T = (M/2 -1) \tau$, and, in the last line, we used Mills' inequality~\cite{gordon1941values,rigollet2023high}. By requiring these error bounds to be $\leq \varepsilon/4$, we get the sufficient condition
    \begin{align}
        T^{\star} & = \sigma \sqrt{\frac{1}{2} W_0 \left( \frac{8}{\varepsilon^2 \pi} \right)} \nonumber\\
        & \approx \sigma \sqrt{\frac{1}{2} \left( \log \frac{8}{\varepsilon^2 \pi} - \log\log \frac{8}{\varepsilon^2 \pi} \right) }. 
    \end{align}

    Next, we bound the stepsize $\tau$. By the Riemann rule
    \begin{align}
        \int\limits_{-\infty}^{\infty} g(t) \mathrm dt\approx \tau \sum_{k=-\infty}^{\infty} g(k\tau) = \sum_{m=-\infty}^{\infty} \hat{g}(2\pi m /\tau),
    \end{align}
    where for ease of notation, we used $g(t) = G(t)\bketbra{\psi(t)}{\psi(t)}$ and used the Poisson formula with the Fourier transform $\hat{g}$. Furthermore, we know that the integral of interest is the Fourier transform at frequency $0$: $\hat{g}(0) = \int_{-\infty}^{\infty} g(t) \mathrm dt$. From this, the discretisation error can be bounded as 
    \begin{align}
        \mathcal{E}_{\rm disc} & \leq \sum_{m\neq 0} \| \hat{g}(2\pi m /\tau) \| \nonumber\\
        & \leq 2 \sum_{m=1}^{\infty} \frac{1}{\sqrt{2}} \exp\left( -\frac{\sigma^2}{2} \left(\frac{2\pi m}{\tau} - 2 \| H \|\right)^2 \right)
    \end{align}
    Since the leading term is at $m=\pm1$, let us bound this as 
    \begin{equation}
        \| \hat{g}(1) \| \leq \frac{1}{\sqrt{2}} \exp\left( -\frac{\sigma^2}{2} \left(\frac{2\pi}{\tau} - 2 \| H \|\right)^2 \right) \leq \frac{\varepsilon}{4}.
    \end{equation}
    From this, we obtain a sufficient condition on the stepsize
    \begin{equation}
        \tau^{\star} \approx \frac{2\pi \sigma}{\sqrt{2\log(4/\varepsilon)} +2\sigma \| H \|}
    \end{equation}

    Finally, the required number of terms is bounded as
    \begin{align}
        M & = 2\left(\frac{T^{\star}}{\tau^{\star}} + 1\right) \nonumber    \\
        & \leq \frac{1}{\pi} \sqrt{\log\frac{8}{\varepsilon^2 \pi}\log\frac{4}{\varepsilon}} + \frac{\sqrt{2}}{\pi}\sigma \|H\| \sqrt{\log\frac{8}{\varepsilon^2 \pi}} + 2
    \end{align}

    This gives the complexity bound
    \begin{equation}
        M \in \mathcal{O} \left(\sigma \|H\| \sqrt{\log\frac{1}{\varepsilon}} + \log\frac{1}{\varepsilon}\right).
    \end{equation}
    
\end{proof}

While the dependence on the final accuracy is worse than in the case of the Gauss--Hermite quadrature, it provides better scaling in the physical parameters.

We can now prove our main result for time-twirling:
\thmTimeTwirl*
\begin{proof}
    From \cref{lemma:riemann}, we can see that the quadrature points required is $M \in \mathcal{O} \left(\sigma \|H\| \sqrt{\log(\varepsilon^{-1})} + \log(\varepsilon^{-1})\right)$. Further, from \cref{lemma:ravg}, we can see that the Gaussian bandwidth $\sigma \in \mathcal{O}\left(\Delta^{-1}\sqrt{\log(\varepsilon^{-1})}\right)$, and we can recover the scaling of the number of quadrature points in the main theorem. 
\end{proof}

Additionally, we can see that by applying \cref{lemma:ravg}, the stepsize can be found to be 
\begin{align}
    \tau^{\star} &\approx \frac{2\pi \sigma}{\sqrt{2\log(4/\varepsilon)} +2\sigma \| H \|} \nonumber\\
    &\in \mathcal{O}\left(\frac{\Delta^{-1}\sqrt{\log(\varepsilon^{-1})}}{\sqrt{\log(\varepsilon^{-1})} +\| H \|\Delta^{-1}\sqrt{\log(\varepsilon^{-1}}) }\right)\nonumber\\
    &\subseteq \mathcal{O}\left(\|H\|^{-1}\right).
\end{align}

To compare with the Gauss--Hermite quadrature results,  we can combine \cref{lemma:quadrature} and \cref{lemma:ravg} to see that the number of Gauss--Hermite quadrature points scales as
\begin{equation}
    M \in \widetilde{\mathcal{O}}\bigg(\left(\frac{\|H\|}{\Delta}\right)^2 \log\frac{1}{\varepsilon}\bigg).
\end{equation}

This shows that the truncated Riemann sum is more efficient than the Gauss--Hermite quadrature in physically relevant systems, i.e., where $\Delta\ll 1$ and $\|H\|>1$, an observation consistent with the practical applicability of the Gauss--Hermite approach~\cite{trefethen2022exactness}. Therefore, in the following, we will rely on the truncated Riemann sum, where the quadrature points are all separated by a constant stepsize $\tau$, a property that enables efficient preparation of the diagonal mixed state.

\subsection{Approximation via Chebyshev twirling}
\label{appCheb}
We now discuss the alternative twirling method based on Chebyshev polynomials instead of time evolution.
As the eigenvalues are not necessarily bounded in the range of $[-1, 1]$, to apply the Chebyshev polynomial to the eigenvalues of $H$, we need to normalise the Hamiltonian with some normalisation factor $\alpha$.

\lemCheb*
\begin{proof}
Let $\gamma_j = \arccos(\tfrac{\lambda_j}{\alpha})$. We expand the Chebyshev-twirled density matrix as follows:
\begin{align}
    \widetilde{\rho}_M &= \sum_{j=0}^{M-1} 2w_j T_j(\tfrac{H}{\alpha})\ketbra{\psi}{\psi}T_j(\tfrac{H}{\alpha})\nonumber\\
    &= \sum_{j=0}^{M-1} 2w_j \cos(j \arccos(\tfrac{H}{\alpha}))\ketbra{\psi}{\psi}\cos(j \arccos(\tfrac{H}{\alpha}))\nonumber\\
    &= \sum_{j=0}^{M-1} 2w_j \sum_{k,\ell\in\mathcal{S}}c_kc_\ell^*\cos(j\gamma_k)\cos(j\gamma_\ell)\ketbra{\psi_k}{\psi_{\ell}}\nonumber\\
    &= \begin{multlined}[t]
        \sum_{j=0}^{M-1} w_j \sum_{k,\ell\in\mathcal{S}}c_kc_\ell^*\cos(j(\gamma_k-\gamma_\ell))\ketbra{\psi_k}{\psi_{\ell}} \\
    + \sum_{j=0}^{M-1} w_j \sum_{k,\ell\in\mathcal{S}}c_kc_\ell^*\cos(j(\gamma_k+\gamma_\ell))\ketbra{\psi_k}{\psi_{\ell}}.
    \end{multlined}
\end{align}
Let $w_j$ be the truncated folded discrete Gaussian distribution such that
\begin{equation}
w_j=\begin{cases}
\dfrac{1}{\mathcal{Z}}, & j=0,\\[1ex]
\dfrac{2}{\mathcal{Z}}\exp\left(-\dfrac{j^2}{2\sigma^2}\right), & 1\le j\le M-1.
\end{cases}
\end{equation}
From the Riemann sum approximation errors derived in \cref{lemma:riemann}, we can see that approximating the discrete Gaussian sum with the Gaussian integral produces three sources of error: the normalisation error $\mathcal{E}_{\rm norm}$, the truncation error $\mathcal{E}_{\rm trunc}$, and the discretisation error $\mathcal{E}_{\rm disc}$. In particular, we can inherit the analysis of $\mathcal{E}_{\rm norm}$ and $\mathcal{E}_{\rm trunc}$ directly. For the discretisation error, the Poisson formula is used to characterise the Fourier transform of the Gaussian, but since the Gaussian is even, the cosine transform is equivalent to the Fourier transform and thus we can also use the bounds on $\mathcal{E}_{\rm disc}$ by replacing $\|H\|$ with $\lVert\arccos(H / \alpha)\rVert <\pi$. This suggests the number of quadrature points $M$ can be bounded such that 
\begin{equation}
    M \in \mathcal{O} \left(\sigma \sqrt{\log\frac{1}{\varepsilon}} + \log\frac{1}{\varepsilon}\right).
\end{equation}

Thus, approximating the (Riemann) summation by an integral, we have
\begin{equation}
    \ravg = \begin{multlined}[t]
        \int_{- \infty}^{\infty} G(t) \sum_{k,\ell\in\mathcal{S}}c_kc_\ell^*\cos((\gamma_k-\gamma_\ell)t)\ketbra{\psi_k}{\psi_{\ell}} \mathrm{d}t\\
        + \int_{- \infty}^{\infty} G(t)  \sum_{k,\ell\in\mathcal{S}}c_kc_\ell^*\cos((\gamma_k+\gamma_\ell)t)\ketbra{\psi_k}{\psi_{\ell}} \mathrm{d}t.
        \end{multlined}
\end{equation}
While the first term encodes $\gamma_k -\gamma_\ell$ and can be related to the spectral gap, the second term is an aliasing term stemming from the even parity of the Chebyshev basis, which requires an additional assumption to be suppressed.

We first show that the first term is approximately diagonal if $\sigma$ is sufficiently large. Evaluating the Gaussian integral, the first term can be written as 
\begin{align}
&\int_{- \infty}^{\infty} G(t) \sum_{k,\ell\in\mathcal{S}}c_kc_\ell^*\cos((\gamma_k-\gamma_\ell)t)\ketbra{\psi_k}{\psi_{\ell}} \,\mathrm{d}t\nonumber\\
& = \sum_{k,\ell\in\mathcal{S}}c_kc_\ell^*\exp\left(-\frac{(\gamma_k-\gamma_\ell)^2 \sigma^2}{2}\right)\ketbra{\psi_k}{\psi_{\ell}}\nonumber\\
& \begin{multlined}[b]
    =\sum_{k\in\mathcal{S}}p_k\ketbra{\psi_k}{\psi_k}\\
     + \sum_{k,\ell\in\mathcal{S}}c_kc_\ell^*\exp\left(-\frac{(\gamma_k-\gamma_\ell)^2 \sigma^2}{2}\right)\ketbra{\psi_k}{\psi_{\ell}}
\end{multlined}
\end{align}
where the off-diagonal terms can be suppressed by setting
\begin{equation}
    \sigma \in \mathcal{O}\left(\frac{1}{\Delta_{\gamma}} \sqrt{\log\frac{1}{\varepsilon}}\right),
\end{equation}
where $\Delta_{\gamma} = \min_{k\ne \ell \in \mathcal{S}} |\gamma_k - \gamma_\ell|$. As $\gamma_k = \arccos(\lambda_k/\alpha)$, from the Lipschitz bound of $\cos$, we have
\begin{equation}
    \lVert x-y\rVert \le \lVert\arccos(x)-\arccos(y)\rVert.
\end{equation}
This implies $\Delta /\alpha \le \Delta_\gamma$ and thus 
\begin{equation}
    \sigma \in \mathcal{O}\left(\frac{\alpha}{\Delta} \sqrt{\log\frac{1}{\varepsilon}}\right).
\end{equation}
Hence, the number of quadrature points is
\begin{equation}
    M \in \mathcal{O}\left(\frac{\alpha}{\Delta} \log\frac{1}{\varepsilon}\right).
\end{equation}

Lastly, we derive a sufficient condition for suppressing the aliasing term. Let us introduce the $\kappa$ such that $\forall k, \gamma_k \ge \kappa$. Then, by bounding the spectral norm of the aliasing term, we can obtain 
\begin{align}
&\left\lVert\sum_{k,\ell\in\mathcal{S}}\frac{c_kc_\ell^*}{2}\ketbra{\psi_k}{\psi_{\ell}}\int_{- \infty}^{\infty} G(j)  \cos(j(\gamma_k+\gamma_\ell)) \mathrm{d}t\right\rVert\nonumber\\
&=\left\lVert\sum_{k,\ell\in\mathcal{S}}\frac{c_kc_\ell^*}{2}\ketbra{\psi_k}{\psi_{\ell}}\exp\left(-\frac{(\gamma_k+\gamma_\ell)^2 \sigma^2}{2}\right)\right\rVert\nonumber\\
&\le\left\lVert\sum_{k,\ell\in\mathcal{S}}\frac{c_kc_\ell^*}{2}\ketbra{\psi_k}{\psi_{\ell}}\exp\left(-2\kappa^2\sigma^2\right)\right\rVert\nonumber\\
&\le\left\lVert\sum_{k,\ell\in\mathcal{S}}\frac{c_kc_\ell^*}{2}\ketbra{\psi_k}{\psi_{\ell}}\exp\left(-2\kappa^2\sigma^2\right)\right\rVert_{\rm F}\nonumber\\
&=\sqrt{\sum_{k,\ell\in\mathcal{S}}\frac{1}{4}|c_k|^2|c_\ell|^2\exp\left(-4\kappa^2\sigma^2\right)}\nonumber\\
&\le\frac{1}{2}\exp\left(-2\kappa^2\sigma^2\right)
\end{align}
Thus, if $\kappa \in \Omega(\sigma^{-1} \sqrt{\log(\varepsilon^{-1})}) \subseteq \Omega(\Delta/\alpha)$, then the aliasing term can be suppressed. Plugging in the bounds on $\sigma$, we find that 
we require
\begin{equation}
    \arccos\left(\frac{\|H\|}{\alpha}\right) \in \Omega \left(\frac{\Delta}{\alpha}\right).
\end{equation}
Suppose we have, for some constant $C$, 
\begin{equation}
\arccos\left(\frac{\|H\|}{\alpha}\right) \ge C \frac{\Delta}{\alpha}
\end{equation}
Then we can show that
\begin{equation}
\frac{\|H\|}{\alpha} \le \cos\left(\frac{C\Delta}{\alpha}\right) \le 1 - \frac{2}{\pi^2}\left(\frac{C\Delta}{\alpha}\right)^2.
\end{equation}
Solving for $\alpha$, we have
\begin{equation}
\alpha \ge \frac{\|H\|+\sqrt{\|H\|^2+8C^2\Delta^2/\pi^2}}{2},
\end{equation}
which, upon expansion, we see that 
\begin{equation}
\alpha \ge \|H\| \left(1 + \mathcal{O}\left(\frac{\Delta^2}{\|H\|^2}\right)\right)
\end{equation}
is required to suppress the aliasing term.

We can also note that if $\alpha \ge 2\|H\|$, then 
\begin{equation}
    \arccos\left(\frac{\|H\|}{\alpha}\right) \ge \frac{\pi}{3} > 1 \ge \frac{2\|H\|}{\alpha} \ge \frac{\Delta}{\alpha},
\end{equation}
which satisfies the requirement to suppress the aliasing term when we set a sufficiently large $\sigma$.
\end{proof}

\section{Construction of the eigenprobability operator}

\subsection{The eigenprobability operator from time-twirling}
\label{appRho}
In \cref{secTimeTwirl}, we have established that the nearly-diagonal density operator can be approximated using quadrature formulas as
\begin{equation}
    \ravg = \mathbb{E}_{t\sim G(t)}\left[\bketbra{\psi(t)}{\psi(t)}\right] \approx \sum_{j=-M/2}^{M/2-1} w_j \bketbra{\psi(t_j)}{\psi(t_j)},
\end{equation}
where time values $t_j$ come from a discrete set $\{t_j\}$ indexed by $j$, and the weights $w_j$ are defined by the associated quadrature formula. In the case of the Riemann sum, these weights are defined as
$ w_j = \tau G(j\tau)/\mathcal{Z}$, where $\mathcal{Z}$ is the normalizing constant accounting for the small deviation from $1$, and $t_j = j\tau$ is just the multiple of the time step $\tau$.

Importantly, since $\bketbra{\psi(t)}{\psi(t)}$ is positive semi-definite, and $\left\rVert\bketbra{\psi(t)}{\psi(t)}\right\rVert = 1$ for all $t$, we can implement its block-encoding through LCU, and by definition we have $\alpha_{\rm{LCU}} = 1$, since the quadrature weights are normalised. 
\propRho*

\begin{proof}
    From \cref{lemma:riemann_twirl}, we see that we can construct
    \begin{equation}
        \|\rho - \widetilde{\rho}_M \| \le \varepsilon
    \end{equation}
    with 
    \begin{equation}
        M \in \mathcal{O}\left(\frac{\|H\|}{\Delta}\log \frac{1}{\varepsilon}\right).
    \end{equation}

    To construct the block-encoding, we need to introduce a clock register that indexes the quadrature points and define the conditional unitary operator
    \begin{equation}
        \select = \sum_{j=-M/2}^{M/2-1} \ketbra{j}{j}_a \otimes \be\left[\bketbra{\psi(j\tau)}{\psi(j\tau)}\right],
        \label{eq:select}
    \end{equation}
    where $\ket{i}\in \mathbb{C}^M$ are basis states on $\log_2(M)$ qubits, and the latter can be obtained by an LCU between identity and
    \begin{equation}
        e^{-ijH\tau}U_\psi (2\ketbra{0}{0}-\mathbb{I}) U_\psi^\dagger e^{ijH\tau}.
    \end{equation}
    We also need to define a preparation unitary that loads the quadrature weights to the clock register
    \begin{equation}
        \prep \ket{0}_a \rightarrow \ket{G}_a \coloneqq \sum_{j=-M/2}^{M/2-1} \sqrt{w_j} \ket{j}.
    \end{equation}
    Then the block-encoding is obtained as
    \begin{align}
        \bra{G}_a \select \ket{G}_a & = \sum_{j=-M/2}^{M/2-1} w_j \be\left[\bketbra{\psi(j\tau)}{\psi(j\tau)}\right] \nonumber\\
        & = \be[\widetilde{\rho}_M]
    \end{align}

    Consequently, the state $\widetilde{\rho}_M$ can be block-encoded using $\lceil\log_2(M)\rceil$ additional qubits. 

    Finally, let us consider the cost of implementing this block-encoding. From \cref{lemma:riemann_twirl}, we see that using a time evolution operator with stepsize $\tau\in\mathcal{O}(\|H\|^{-1})$ produces a normalised time evolution operator, which we can use as the basic unit for runtime complexity. What is left to show here is that the maximal evolution time is the dominant factor in the runtime complexity. We express this directly in the proposition statement as the query complexity of $U_\tau$.
    
    Na\"ively implementing each multi-controlled term in \cref{eq:select} separately would require $M$ calls to the block-encodings. However, since the $\ket{j}$ are computational basis states, we can encode $j$ using an $m=\log_2 M$ bit two's-complement register. The corresponding power can then be implemented as
    \begin{equation}
        U_\tau^j = \prod_{k=0}^{m-2} \left(U_\tau^{2^k}\right)^{j_k} \cdot \left(U_\tau^{-2^{m-1}}\right)^{j_{m-1}},
    \end{equation}
    where the most significant bit $j_{m-1}$ encodes the sign in two's complement as shown in \cref{figMultiplex}. Consequently, the operation
    \begin{equation}
        \select = \sum_{j=-M/2}^{M/2-1} \ketbra{j}{j} \otimes U_\tau^j U_\psi \be[\ketbra{0}{0}] U_\psi^\dagger (U_\tau^\dagger)^j
    \end{equation}
    can be implemented using one ${\rm C}\mhyphen U_\tau^{2^k}$ operation for each bit of the $\prep$ register. Thus, the multiplexed implementation requires $M$ ${\rm C}\mhyphen U_\tau$ operations in total. In contrast, na\"ively implementing each of the $M$ terms separately requires $M$ number of ${\rm C}\mhyphen U_\tau^{2^k}$ operations, corresponding to $\mathcal O(M^2)$ total ${\rm C}\mhyphen U_\tau$ operations.
    
    To show that $\prep$ can be implemented efficiently, we use the fact that the quadrature weights are normalised as $\sum_{j=-M/2}^{M/2-1} w_j = 1$. Loading such a discretised Gaussian distribution can be achieved with various techniques. The Grover-Rudolph algorithm requires $\mathcal{O}(\polylog M)$ time through $\mathcal{O}(\log M)$ calls to the amplitude oracle; however, implementing coherent arithmetic requires $\mathcal{O}(\polylog M)$ ancillary qubits~\citep{grover2002creating}. Conversely, the QSVT-based approach of Ref.~\citep{mcardle2026quantum} requires only $3$ additional qubits and $\mathcal{O}(\log(1/\varepsilon)\log M)$ gates that do not change the overall asymptotic complexity of our approach. 
\end{proof}

We remark that in the case when we are given block-encoding access to $H$, we need $\widetilde{\mathcal{O}}(\alpha t)$ queries to the block-encoding to implement Hamiltonian simulation with evolution time $t$, where $\alpha\geq \| H\|$ is the scaling factor of the block-encoding. To efficiently implement different evolution times $t$, preprocessing of the phase sequences for our different times $t = j\tau$ and further fractional queries may be required. Hence, while this can be implemented in theory, a more practical option is to utilise Chebyshev-twirling as shown in the next section.

\subsection{The eigenprobability operator from Chebyshev-twirling}
\label{appRhoAlt}
Consider access to a $(\alpha, a_H, 0)$-block-encoding of the Hamiltonian $H$. After qubitisation, we obtain the walk operator
\begin{equation}
    W = \be[\tfrac{H}{\alpha}](2\ketbra{0}{0}_{a_H} - \mathbb{I})
\end{equation}
with eigenvalues $e^{\pm i\arccos(\lambda_k/\alpha)}$ and corresponding eigenstates $\ket{\psi_k}\ket{0} \pm i \ket{\perp_k}$ in the non-trivial subspace. Thus, the eigenvalues of $W^j$ can be written as 
\begin{multline}
    e^{\pm ij\arccos(\lambda_k/\alpha)} = \cos(j\arccos(\tfrac{\lambda_k}{\alpha}))\\
    \pm i\sin(j\arccos(\tfrac{\lambda_k}{\alpha}))
\end{multline}
where, upon post-selecting $\ket{0}$ on the ancilla register, we cancel out the imaginary sines, and the effective eigenvalues applied to the input eigenstate $\ket{\psi_k}$ are
\begin{equation}
    T_j(\tfrac{\lambda_k}{\alpha}) = \cos(j\arccos(\tfrac{\lambda_k}{\alpha})).
\end{equation}
This allows us to block-encode $T_j(\tfrac{H}{\alpha})$ as shown in \cref{figCheby}. Using this, we can apply Chebyshev twirling from \cref{lemCheb} to the input state, and subsequently block-encode the eigenprobability operator as follows: 
\begin{figure}
    \includegraphics{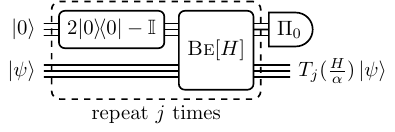}
    \caption{Block-encoding of the Chebyshev polynomial of the Hamiltonian via qubitisation.}
    \label{figCheby}
\end{figure}

\begin{figure}
    \includegraphics[width=\linewidth]{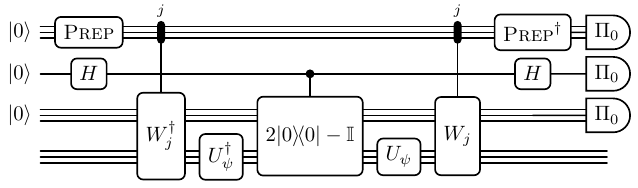}
    \caption{Block-encoding of the Chebyshev-polynomial--twirled state.}
    \label{figChebyRho}
\end{figure}

\begin{proposition}[Block-encoding of $\rho$ given Hamiltonian block-encodings]
    Given an $(\alpha, a_H, 0)$-block-encoding of $H$, we can construct $(2,a+a_H+2,\varepsilon)$-block-encoding of the $n$-qubit diagonal density operator $\rho$ with 
    \begin{equation*}
        M \in\mathcal{O}\left(\frac{\alpha}{\Delta}\log\frac{1}{\varepsilon}\right)
    \end{equation*}
    queries to $\be[\tfrac{H}{\alpha}]$, where $a = \log_2 M$.
    \label{propRhoAlt}
\end{proposition}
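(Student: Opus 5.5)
We mirror the proof of \cref{propRho}, replacing the multiplexed time evolution by multiplexed Chebyshev block-encodings, and rely on \cref{lemCheb} for the approximation error. By \cref{lemCheb}, with folded Gaussian weights $w_j$ and $M \in \mathcal{O}(\frac{\alpha}{\Delta}\log\frac1\varepsilon)$ terms, the operator
\begin{equation*}
\widetilde\rho_M=\sum_{j=0}^{M-1}2w_j\,T_j(\tfrac{H}{\alpha})\ketbra{\psi}{\psi}T_j(\tfrac{H}{\alpha})
\end{equation*}
satisfies $\|\rho-\widetilde\rho_M\|\le\varepsilon$. It therefore suffices to block-encode $\widetilde\rho_M$ exactly, with normalisation $2$, using $\mathcal{O}(M)$ queries to $\be[\tfrac{H}{\alpha}]$.

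\textbf{Step 1 (single term).} Write $T_j(\tfrac{H}{\alpha})\ketbra{\psi}{\psi}T_j(\tfrac{H}{\alpha}) = A_j^\dagger A_j$ with $A_j=\bra{0}U_\psi^\dagger T_j(\tfrac{H}{\alpha})$. Via qubitisation (\cref{figCheby}), $W^j$ on $a_H$ ancillas block-encodes $T_j(\tfrac{H}{\alpha})$ exactly. Composing it with $U_\psi^\dagger$ and the $\ket{0}$ projector gives a block-encoding of $A_j$. Sandwiching $\be[A_j]^\dagger\,(2\ketbra{0}{0}-\mathbb{I})\,\be[A_j]$ and taking the LCU with the identity yields a block-encoding of the product term. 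This uses one extra qubit for the LCU and one for flagging the projector, which accounts for the ``$+2$''.

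\textbf{Step 2 (sum over $j$).} Add a clock register of $a=\log_2 M$ qubits. Use $\prep\ket{0}=\sum_j\sqrt{w_j}\ket{j}$, and a $\select$ that applies $W^j$ controlled on $\ket{j}$, implemented by binary powers $W^{2^k}$ as in \cref{figMultiplex}. Since $j\ge0$, no two's complement is needed. The cost is $\sum_k 2^k\le M$ queries to $W$ for the forward half, and the same again for the inverse half.

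The normalisation is $\sum_j 2w_j=2$, so the LCU subnormalisation is $2$. This gives a $(2,a+a_H+2,\varepsilon)$-block-encoding with $\mathcal{O}(M)$ queries to $\be[\tfrac{H}{\alpha}]$ (each $W$ uses one). Preparing $\prep$ is handled exactly as in \cref{propRho}.

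\textbf{Main obstacle.} The delicate point is that $T_j(\tfrac{H}{\alpha})$ is Hermitian but not unitary. The factorisation $A_j^\dagger A_j$ must therefore be realised so that the ancilla of $W^j$ is post-selected on both sides, with the clock register uncomputed consistently between the two halves.

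I would implement $\be[\widetilde\rho_M]=\be[\widetilde\rho_{\rm sqrt}]^\dagger\,\Pi\,\be[\widetilde\rho_{\rm sqrt}]$. Here $\widetilde\rho_{\rm sqrt}=\sum_j\sqrt{2w_j}\ket{j}\otimes A_j$, and $\Pi$ projects the $a_H$ and $U_\psi$ flags onto $\ket{0}$. This is a product of two block-encodings, with the middle projector converted to a reflection via one LCU qubit.

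The step to check is that the $\ket{0}_{a_H}$ projection in the middle is what selects $T_j$ rather than the full $W^j$. I would verify this directly on the qubitisation invariant subspaces spanned by $\ket{\psi_k}\ket{0}$ and $\ket{\perp_k}$.
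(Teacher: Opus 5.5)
Your circuit is sound. Writing $\widetilde\rho_M/2=\widetilde\rho_{\rm sqrt}^\dagger\widetilde\rho_{\rm sqrt}$ and realising it as the average of $\mathbb I$ and $\be[\widetilde\rho_{\rm sqrt}]^\dagger(2\Pi-\mathbb I)\be[\widetilde\rho_{\rm sqrt}]$, with $\Pi$ projecting onto $\ket{0}_{a_H}\ket{0}_{\mathcal B}$, gives exactly $\sum_j w_jT_j\ketbra{\psi}{\psi}T_j$. The $\ket{0}_{a_H}$ projection in the middle is indeed what turns $W^j$ into $T_j$. The binary-power multiplexing and the normalisation $2$ are also correct.

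The gap is in your first sentence. You invoke \cref{lemCheb} for an arbitrary $(\alpha,a_H,0)$-block-encoding, but that lemma requires $\alpha\ge\min\{2\|H\|,\|H\|(1+\mathcal O(\Delta^2/\|H\|^2))\}$. A block-encoding only guarantees $\alpha\ge\|H\|$. The hypothesis is what suppresses the aliasing term $\cos(j(\gamma_k+\gamma_\ell))$, with $\gamma_k=\arccos(\lambda_k/\alpha)$. That term arises because $T_j(x)T_j(y)=\tfrac12[\cos(j(\gamma_x-\gamma_y))+\cos(j(\gamma_x+\gamma_y))]$. Suppose $\alpha=\|H\|$ and an eigenvector with $\lambda_k=\pm\alpha$ lies in $\mathcal S$. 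Then $T_j(\lambda_k/\alpha)^2=1$ for every $j$, so $\bra{\psi_k}\widetilde\rho_M\ket{\psi_k}=\sum_j 2w_jp_k=2p_k$. This is an error of $p_k$ that no choice of $M$ or $\sigma$ removes. More generally, the aliasing contribution is only negligible when $\arccos(\|H\|/\alpha)\gtrsim\Delta/\alpha$ at $\sigma\sim(\alpha/\Delta)\sqrt{\log(1/\varepsilon)}$.

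The paper closes this by first deflating the block-encoding. Tensoring in one extra qubit acted on by $RY(2\pi/3)$, which has $\bra{0}RY(2\pi/3)\ket{0}=1/2$, gives an exact $(1,a_H+1,0)$-block-encoding of $H/(2\alpha)$. Then $\arccos(\|H\|/(2\alpha))\ge\pi/3$, so \cref{lemCheb} applies with $\alpha'=2\alpha$, and $M\in\mathcal O(\frac{\alpha}{\Delta}\log\frac1\varepsilon)$ changes only by a constant factor.

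This deflation qubit is one of the two in the ``$+2$''; the other is the LCU qubit. Your accounting instead spends the second qubit on a ``flag'' for the projector. That flag is unnecessary, since the reflection $2\Pi-\mathbb I$ about $\ket{0}_{a_H+1}\ket{0}_{\mathcal B}$ can be a multi-controlled phase. Once you add the deflation step (with $\Pi$ now covering all $a_H+1$ ancillas) and drop the flag, you recover the stated $(2,a+a_H+2,\varepsilon)$-block-encoding with $\mathcal O(M)$ queries.
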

\begin{proof}
    From \cref{lemCheb}, we see that we can construct $\|\rho - \widetilde{\rho}_M \| \le \varepsilon$
    with 
    \begin{equation}
        M \in \mathcal{O}\left(\frac{\alpha}{\Delta}\log \frac{1}{\varepsilon}\right)
    \end{equation}
    if $\alpha \ge \min\{2\|H\|,\|H\|(1 + \mathcal O(\Delta^2/\|H\|^2))\}$. 

    We note that the latter restriction on $\alpha$ can most likely be satisfied given that producing a block-encoding with scaling factors close to $\|H\|$ is difficult in practice, and it is often the case that $\Delta \ll \|H\|$. However, in case it is not guaranteed, given that $\alpha \ge \|H\|$ is always true, we can still deliberately deflate the block-encoding of $H$ by a factor of 2 to ensure that the new scaling factor $\alpha' = 2\alpha \ge 2\|H\|$, such that the former restriction is achieved and the aliasing term is suppressed. Given the gate
    \begin{equation}
        RY\left(\frac{2\pi}{3}\right) = \begin{bmatrix} \frac{1}{2} & -\frac{\sqrt{3}}{2}\\ \frac{\sqrt{3}}{2} & \frac{1}{2}\end{bmatrix},
    \end{equation}
    we can construct a block-encoding by applying it to an extra qubit such that
    \begin{equation}
        (\bra{0}\otimes \bra{0}_{a_H}\otimes\mathbb{I})(RY(\tfrac{2\pi}{3})\otimes \be[\tfrac{H}{\alpha}])(\ket{0}\otimes \ket{0}_{a_H}\otimes\mathbb{I}) = \frac{H}{2\alpha}.
    \end{equation}
    With qubitisation of $RY\left(\frac{2\pi}{3}\right)\otimes \be[\tfrac{H}{\alpha}]= \be[\tfrac{H}{2\alpha}]$, we can produce a $(1, a_H+1, 0)$-block-encoding of $T_k(\tfrac{H}{2\alpha})$ as shown in \cref{figCheby}. Further, we can produce
    \begin{equation}
        \select = \sum_{j=0}^{M-1} \ketbra{j}{j}_{a}
        \otimes \be[T_j(\tfrac{H}{2\alpha})]\,\be[\ketbra{\psi}{\psi}]\,\be[T_j(\tfrac{H}{2\alpha})]^\dagger,
    \end{equation}
    implemented by multiplexed block-encodings of $T_j(\tfrac{H}{2\alpha})$, and again with an extra qubit for $\be[\ketbra{\psi}{\psi}]$.

    Given the $\prep$ unitary
    \begin{equation}
        \prep \ket{0} = \sum_{j=0}^{M-1} \sqrt{w_j} \ket{j}
    \end{equation}
    where $w_j$ is the folded Gaussian distribution in \cref{defCheb}, we can produce the block-encoding implementation of the twirled operator
    \begin{equation}
        \sum_{j=0}^{M-1} w_j T_j(\tfrac{H}{2\alpha})\ketbra{\psi}{\psi}T_j(\tfrac{H}{2\alpha}) = \frac{\widetilde{\rho}_M}{2}
    \end{equation}
    as shown in \cref{figChebyRho}, which implies that the scaling factor of our block-encoding is 2. Note here that the scaling factor 2 stems from the additional multiplier of 2 in the Chebyshev twirling superoperator of \cref{defCheb} and is not a result of the dilation of the block-encoding of $H$.
    
    From \cref{lemCheb}, we note that the number of quadrature points we require is
    \begin{equation}
        M \in \mathcal{O}\left(\frac{\alpha}{\Delta}\log\frac{1}{\varepsilon}\right),
    \end{equation}
    and thus $a = \log_2 M$. Given the implementation of multiplexed block-encodings of $T_j(\tfrac{H}{2\alpha})$, the total number of queries $M$ to $\be[\tfrac{H}{2\alpha}]$ is $2M$. 

    Hence, \cref{figChebyRho} produces a $(2, a+a_H+2, \varepsilon)$-block-encoding of $\rho$.
    
\end{proof}

\section{Proofs on the square-root eigenprobability operator}
\label{appRhoSqrt}

In this section, we show two versions of the runtime cost of block-encoding the square-root eigenprobability operator.

\subsection{General proof from splitting Gaussian quadratures}
First, we show a general version of the proof that works for both Gauss--Hermite quadratures and the Riemann-sum--based Gaussian quadrature, as well as Chebyshev twirling results. This proof transfers existing bounds of the quadrature approximation errors directly to the amplified results via connecting to the closest semi-unitary problem, which we detail below.

Our goal is to provide a block-encoding implementation of $\rho_{\rm sqrt} = \sum_{k \in \mathcal{S}} \sqrt{p_k}\ketbra{\phi_k}{\psi_k}$. Now, suppose that we have a unitary $\be[\rho_{\rm sqrt}]$ acting on two sets of registers that block-encodes $\rho_{\rm sqrt}$ such that
\begin{equation}
    \rho_{\rm sqrt} = (\mathbb{I} \otimes \Pi_0) \be[\rho_{\rm sqrt}] (\Pi_0 \otimes \mathbb{I})
\end{equation}
and for some input eigenstate $\ket{\psi_k}$, we have
\begin{align}
    \rho_{\rm sqrt}\ket{\psi_k} &= (\mathbb{I} \otimes \Pi_0) \be[\rho_{\rm sqrt}] (\Pi_0 \otimes \mathbb{I}) \ket{0}\ket{\psi_k}\nonumber\\
    &= \sqrt{p_k}\ket{\phi_k}\ket{0}
\end{align}

Given that the orthonormal set $\{\ket{\phi_k}\}_{k\in\mathcal{S}}$ is ill-defined here, we define a concrete set by finding the closest orthogonal set to our output states $\{\ket{\widetilde\phi_k}\}_{k\in\mathcal{S}}$ by first writing our sets in matrix form as follows:
\begin{subequations}
\begin{align}
    \widetilde{\Phi} &= \left[\ket{\widetilde\phi_0} \;\; \ket{\widetilde\phi_1}\;\; \cdots \;\; \ket{\widetilde\phi_{|\mathcal{S}|-1}}\right],\\
    \Phi &= \left[\ket{\phi_0} \;\; \ket{\phi_1}\;\; \cdots \;\; \ket{\phi_{|\mathcal{S}|-1}}\right].
\end{align}
\end{subequations}
with $\Phi^\dagger \Phi = \mathbb{I}$. The problem then reduces to finding the closest semi-unitary matrix~\citep{higham2008functions}, which we can then set as $\Phi$, using the result as follows.

\begin{lemma}[Closest semi-unitary matrix -- Theorem 8.4, \citep{higham2008functions}]
    Let $A \in \mathbb{C}^{m \times n}$ with $m \ge n$ have a polar decomposition $A = QP$, where $Q \in \mathbb{C}^{m \times n}$ is a (semi-)unitary and $P \in \mathbb{C}^{n \times n}$ is a positive semi-definite Hermitian. Then, for any unitarily invariant norm,
    \begin{equation*}
        \|Q-A\| = \min_{B \in \mathbb{C}^{m \times n}} \|B - A\|
        \quad \text{subject to } B^\dagger B = \mathbb{I}_n.
    \end{equation*}
    If $A$ has full column rank, the minimiser is unique for the Frobenius norm.
    \label{lemProcrustes}
\end{lemma}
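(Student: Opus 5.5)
The plan is to use the polar decomposition. Take $A=\widetilde{\Phi}$, a matrix of size $\dim(\mathcal{A}\otimes\mathcal{B})\times|\mathcal{S}|$, and write $A=QP$. For the unitarily invariant norms the lemma is about, I will follow the standard Fan--Hoffman argument rather than appeal to anything earlier in the paper.

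First, I reduce to singular values. Let $A=U\Sigma V^\dagger$ be a thin SVD, with $U\in\mathbb{C}^{m\times n}$ semi-unitary, $\Sigma=\mathrm{diag}(\sigma_1,\dots,\sigma_n)$ and $V$ unitary. The polar factors are then $Q=UV^\dagger$ and $P=V\Sigma V^\dagger$. Since $Q^\dagger Q=VU^\dagger UV^\dagger=\mathbb{I}_n$, the matrix $Q$ is feasible. Moreover $A-Q=U(\Sigma-\mathbb{I})V^\dagger$, so its singular values are $|\sigma_i-1|$. By unitary invariance, $\|Q-A\|=\|\Sigma-\mathbb{I}\|$.

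Second, I prove the lower bound for an arbitrary feasible $B$ with $B^\dagger B=\mathbb{I}_n$. Every singular value of $B$ equals $1$. Mirsky's theorem states that for any unitarily invariant norm, $\|A-B\|\ge\|\mathrm{diag}(\sigma_i(A)-\sigma_i(B))\|$. Applied here, it gives $\|A-B\|\ge\|\mathrm{diag}(\sigma_i-1)\|=\|\Sigma-\mathbb{I}\|=\|A-Q\|$. This establishes minimality. Mirsky's theorem (Horn--Johnson) is the key input. If it has to be derived, I would use the Ky Fan dominance principle: it suffices to show that the vector $(|\sigma_i-1|)$ is weakly majorised by the singular values of $A-B$, which follows from Ky Fan's maximum principle via the $2n\times2n$ Hermitian dilation and Lidskii's inequality. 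I expect this step to be the main obstacle, and I will cite it rather than reprove it.

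Third, I handle uniqueness in the Frobenius norm. Expanding gives
\[
\|B-A\|_F^2=n+\|A\|_F^2-2\,\mathrm{Re}\,\tr(B^\dagger A),
\]
so minimising the distance is equivalent to maximising $\mathrm{Re}\,\tr(B^\dagger A)=\mathrm{Re}\,\tr(W\Sigma)$, where $W=V^\dagger B^\dagger U$ has operator norm at most $1$. This gives
\[
\mathrm{Re}\,\tr(W\Sigma)=\sum_i\sigma_i\,\mathrm{Re}\,W_{ii}\le\sum_i\sigma_i,
\]
with equality exactly when $\mathrm{Re}\,W_{ii}=1$ for every $i$ with $\sigma_i>0$. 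Full column rank means all $\sigma_i>0$, so every diagonal entry satisfies $\mathrm{Re}\,W_{ii}=1$. Because the rows of $W$ have norm at most $1$, this forces $W=\mathbb{I}$. Then $U^\dagger B=V^\dagger$. Combining this with $B^\dagger B=\mathbb{I}$ gives $\|B\|_F^2=n=\|U^\dagger B\|_F^2$, so the component of $B$ orthogonal to the range of $U$ vanishes. Hence $B=UU^\dagger B=UV^\dagger=Q$. The polar factor $Q$ is itself unique when $A$ has full column rank, so the minimiser is unique.
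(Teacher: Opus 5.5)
Your proof is correct. The paper itself gives no argument for this lemma: it imports it from Higham's Theorem 8.4 (the Fan--Hoffman theorem) and only adds the SVD recipe $Q = W\,\mathbb{I}_{m,n}V^\dagger$, which is your $Q = UV^\dagger$ written with a full instead of a thin SVD.

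What you supply is the standard proof behind that citation. Minimality in every unitarily invariant norm comes from Mirsky's inequality $\|A-B\| \ge \|\mathrm{diag}(\sigma_i(A)-\sigma_i(B))\|$, which is the one nontrivial input and reasonable to cite. Frobenius uniqueness comes from maximising $\mathrm{Re}\,\tr(B^\dagger A)$, where the inner-product structure forces $V^\dagger B^\dagger U = \mathbb{I}$.

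The gain over the bare citation is that your argument shows why uniqueness is claimed only for the Frobenius norm. Mirsky yields only an inequality, and in the spectral norm minimisers are genuinely non-unique even at full rank. For example, with $A = \mathrm{diag}(2,1)$, every $B = \mathrm{diag}(1,e^{i\phi})$ with $|\phi|\le\pi/3$ attains $\|A-B\|_2 = 1 = \|A-\mathbb{I}\|_2$. This is consistent with the caveat the paper places after the lemma.

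Two minor points. First, the opening specialisation $A = \widetilde{\Phi}$ is unnecessary, since nothing in your argument uses the structure of $\widetilde{\Phi}$. Second, for rank-deficient $A$ the polar factor is not unique, so you need to show that an arbitrary given polar factor $Q$ is a minimiser. Note that $P = (A^\dagger A)^{1/2}$ is forced. Diagonalising $P = V\Sigma V^\dagger$ and setting $U := QV$ then gives a thin SVD with $Q = UV^\dagger$, to which your first two steps apply.
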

 
A semi-unitary polar factor $Q$ can be constructed from the singular value decomposition. In particular, let
\begin{equation}
    A = W\Sigma V^\dagger
\end{equation}
be an SVD of $A$, where $W\in\mathbb{C}^{m\times m}$ and $V\in\mathbb{C}^{n\times n}$ are unitary. Then a semi-unitary polar factor is
\begin{equation}
    Q = W\,\mathbb{I}_{m,n}V^\dagger,
\end{equation}
where $\mathbb{I}_{m,n} \in \{0,1\}^{m \times n}$ is the rectangular identity matrix $\mathbb{I}_{m,n} = 
\begin{bmatrix}
    \mathbb{I}_n \\
    O_{m-n, n}
\end{bmatrix}.$ If $A$ has full column rank, then $Q$ is the unique polar factor of $A$. This, however, does not necessarily imply that $Q$ is the unique closest semi-unitary matrix for an arbitrary unitarily invariant norm other than the Frobenius norm (e.g., the spectral norm).

Using this result, we can define the closest $\Phi$ to $\widetilde{\Phi}$ as the orthonormal set that serves as the left singular vectors of $\rho_{\rm sqrt}$.
\rhoSqrt*
\begin{proof}
    We now bound the spectral error in the block-encoding. Since $\widetilde{\rho}_{\rm sqrt}$ is constructed directly from splitting $\widetilde{\rho}$, it would not necessarily satisfy \cref{defSqrt}. Moreover, the square-root eigenprobability operator defined in \cref{defSqrt} is not unique, as it depends on the choice of the orthonormal set $\{\ket{\phi_k}\}_{k\in\mathcal{S}}$. We therefore define the spectral error as the minimum spectral distance between our provided $\widetilde{\rho}_{\rm sqrt}$ and any valid square-root eigenprobability operator,
    \begin{equation}
        \varepsilon_{\rm sqrt} := \min_{\substack{\{\ket{\phi_k}\}_{k\in\mathcal{S}}\\\braket{\phi_k|\phi_\ell}=\delta_{k\ell}}}\left|\widetilde{\rho}_{\rm sqrt}-\sum_{k\in\mathcal{S}}\sqrt{p_k}\ketbra{\phi_k}{\psi_k}\right|.
    \end{equation}
    Since $\{\ket{\psi_k}\}_{k\in\mathcal{S}}$ and $\{p_k\}_{k\in\mathcal{S}}$ are fixed by the input state $\ket{\psi}$, the minimisation depends entirely over the choice of the orthonormal set $\{\ket{\phi_k}\}_{k\in\mathcal{S}}$.

    Rather than solving this weighted optimisation directly, we consider the simpler unweighted problem of finding a closest semi-unitary matrix to $\widetilde{\Phi}$ as formed from the set $\{\ket{\widetilde\phi_k}\}_{k\in\mathcal{S}}$. The resulting orthonormal set $\{\ket{\phi_k}\}_{k\in\mathcal{S}}$ defines a valid square-root eigenprobability operator and thus provides an upper bound on $\varepsilon_{\rm sqrt}$. In particular, for this choice, we have
    \begin{align}
        \varepsilon_{\rm sqrt} &\le \lVert\widetilde{\rho}_{\rm sqrt} - \rho_{\rm sqrt}\rVert  \nonumber\\
        &= \left\lVert\sum_{k\in\mathcal{S}}\sqrt{p_k} \ketbra{\widetilde{\phi}_k}{\psi_k} - \sum_{k\in\mathcal{S}}\sqrt{p_k} \ketbra{\phi_k}{\psi_k}\right\rVert \nonumber\\
        &\le \left\lVert\sum_{k\in\mathcal{S}} \ketbra{\widetilde{\phi}_k}{k} - \sum_{k\in\mathcal{S}} \ketbra{\phi_k}{k}\right\rVert\left\lVert\sum_{k\in\mathcal{S}}\sqrt{p_k}\ketbra{k}{\psi_k}\right\rVert ,
    \end{align}
    where the second inequality follows from submultiplicativity of the spectral norm. Note that the second term can be upper-bounded by 
    \begin{equation}
        \left\lVert\sum_{k\in\mathcal{S}}\sqrt{p_k}\ketbra{k}{\psi_k}\right\rVert \le \left\lVert\sum_{k\in\mathcal{S}}\sqrt{p_k}\ketbra{k}{k}\right\rVert\left\lVert\sum_{k\in\mathcal{S}}\ketbra{k}{\psi_k}\right\rVert
        \le \sqrt{p_0}
    \end{equation}
    again by submultiplicativity and noting that $\sum_{k\in\mathcal{S}}\ketbra{k}{\psi_k}$ is semi-unitary and thus has spectral norm $1$.

    We now choose $\Phi$ to minimise this upper bound. By \cref{lemProcrustes}, a closest semi-unitary matrix to $\widetilde{\Phi}$ can be constructed from its singular value decomposition such that 
    \begin{equation}
        \widetilde{\Phi} = W \widetilde\Sigma V^\dagger
    \end{equation}
    and set 
    \begin{equation}
        \Phi = W\, \mathbb{I}_{N,|\mathcal{S}|} V^\dagger.
    \end{equation}
    Consequently, we find that
    \begin{equation}
        \lVert\widetilde{\rho}_{\rm sqrt} - \rho_{\rm sqrt}\rVert \le \sqrt{p_0} \left \lVert \widetilde{\Phi} - \Phi\right\rVert = \sqrt{p_0} \left \lVert \widetilde\Sigma -  \mathbb{I}_{N,|\mathcal{S}|}\right\rVert,
    \end{equation}
    where the last equality stems from unitary invariance. To compute this term, we need to compute the singular values of $\widetilde{\Phi}$ and provide bounds on them.

    Let $\varepsilon_w$ be the upper bound of the sum of the truncation, discretisation, and normalisation errors of \cref{lemma:riemann,lemma:riemann_twirl}. Defining the Gram matrix $\widetilde G = \widetilde{\Phi}^\dagger\widetilde{\Phi}$, we find that $\braket{k|\widetilde G|k} = 1$ and when $k\ne \ell$,
    \begin{align}
        \left\lvert\braket{k|\widetilde G|\ell}\right\rvert &=  \left\lvert\braket{\widetilde\phi_k|\widetilde\phi_\ell}\right\rvert
        =  \left\lvert\braket{\widetilde\phi_k'|\widetilde\phi_\ell'}\right\rvert\nonumber\\
        &= \left\lvert\sum_{j} w_j \exp\left(-i(E_k-E_\ell)t_j\right)\right\rvert\nonumber\\
        &\le \left\lvert\int_{-\infty}^{\infty} \frac{1}{\sqrt{2\pi}\sigma}e^{-\frac{t^2}{2\sigma}}e^{-i(E_k-E_\ell)t} \mathrm dt \right\rvert+ \varepsilon_w\nonumber\\
        & = e^{-\frac{(E_k-E_\ell)^2\sigma^2}{2}} + \varepsilon_w\nonumber\\
        & \le e^{-\frac{\Delta^2\sigma^2}{2}} + \varepsilon_w \le \varepsilon_\Delta + \varepsilon_w
    \end{align}
    where we define, for the last line, $\varepsilon_\Delta = e^{-\frac{\Delta^2\sigma^2}{2}}$.

    Substituting in the singular value decomposition of $\widetilde{\Phi}$ as $\widetilde{\Phi} = W \widetilde\Sigma V^\dagger$, the Gram matrix becomes
    \begin{equation}
        \widetilde G = \widetilde{\Phi}^\dagger \widetilde{\Phi} = V \widetilde\Sigma^\dagger W^\dagger W \widetilde\Sigma V^\dagger = V \widetilde\Sigma^\dagger \widetilde\Sigma V^\dagger.
    \end{equation}
    Thus, $\widetilde G$ is unitarily diagonalizable with eigenvalues given by the diagonal entries of $\widetilde\Sigma^\dagger \widetilde\Sigma$, i.e., the squared singular values of $\widetilde{\Phi}$.

    To calculate $\left\lVert\widetilde\Sigma -  \mathbb{I}_{N,|\mathcal{S}|}\right\rVert$, let us first calculate the eigenvalues of $\widetilde\Sigma^\dagger\widetilde\Sigma -  \mathbb{I}_{|\mathcal{S}|}$. By Gerschgorin circle theorem~\citep{horn2012matrix}, we have, for all eigenvalues, 
    \begin{equation}
        \left\lvert\lambda_k(\widetilde\Sigma^\dagger \widetilde\Sigma) -1\right\rvert \le (|\mathcal{S}|-1) (\varepsilon_\Delta + \varepsilon_w),
    \end{equation}
    where we can obtain, for the bounds of the singular values,
    \begin{subequations}
    \begin{align}
        \sigma_k(\widetilde\Sigma)  &\ge \sqrt{1 - (|\mathcal{S}|-1)  (\varepsilon_\Delta + \varepsilon_w)}\\
        \sigma_k(\widetilde\Sigma) &\le \sqrt{1+(|\mathcal{S}|-1) (\varepsilon_\Delta + \varepsilon_w)}
    \end{align}
    \end{subequations}
    Therefore, we can provide the bound 
    \begin{align}
        \left\lVert\widetilde\Sigma-\mathbb{I}\right\rVert &\le 1- \sqrt{1-(|\mathcal{S}|-1) (\varepsilon_\Delta + \varepsilon_w)}\nonumber\\
        &\le (|\mathcal{S}|-1) (\varepsilon_\Delta + \varepsilon_w)
    \end{align}
    Finally, returning to the block-encoding spectral error, we have
    \begin{equation}
        \lVert\widetilde{\rho}_{\rm sqrt} - \rho_{\rm sqrt}\rVert \le \sqrt{p_0} \left \lVert \widetilde{\Phi} - \Phi\right\rVert = \sqrt{p_0} (|\mathcal{S}|-1) (\varepsilon_\Delta + \varepsilon_w).
    \end{equation}

    We can then set 
    \begin{equation}
        \varepsilon_\Delta \le \frac{\varepsilon_{\rm sqrt}}{2\sqrt{p_0}(|\mathcal{S}|-1)},\quad\varepsilon_w \le \frac{\varepsilon_{\rm sqrt}}{2\sqrt{p_0}(|\mathcal{S}|-1)}
    \end{equation}
    to bound the entire error with $\varepsilon_{\rm sqrt}$. From
    \begin{equation}
        \varepsilon_\Delta = e^{-\frac{\Delta^2\sigma^2}{2}},
    \end{equation}
    we can obtain 
    \begin{equation}
        \sigma \in \mathcal{O}\left( \frac{1}{\Delta} \sqrt{\log\frac{\sqrt{p_0}(\lvert\mathcal{S}\rvert-1)}{\varepsilon_{\rm sqrt}}}\right).
    \end{equation}
    From \cref{lemma:riemann}, we require query complexity
    \begin{equation}
        M \in \mathcal{O}\left(\sigma \|H\| \sqrt{\log\frac{\sqrt{p_0}(\lvert\mathcal{S}\rvert-1)}{\varepsilon_{\rm sqrt}}} + \log\frac{\sqrt{p_0}(\lvert\mathcal{S}\rvert-1)}{\varepsilon_{\rm sqrt}}\right).
    \end{equation}
    where we see that we need
    \begin{equation}
        M \in \mathcal{O}\left(\frac{\|H\|}{\Delta} \log\frac{\sqrt{p_0}(\lvert\mathcal{S}\rvert-1)}{\varepsilon_{\rm sqrt}}\right)
    \end{equation}
    queries to $U_\tau$.

    Finally, given that $\Delta$ is the minimum spectral gap between eigenenergies of $H$, we note that
    \begin{equation}
        (\lvert\mathcal{S}\rvert-1)\Delta \le 2\|H\|.
    \end{equation}
    With this and $\sqrt{p_0} \le 1$, the above runtime can be further simplified to be 
    \begin{equation}
        M \in \widetilde{\mathcal{O}}\left(\frac{\|H\|}{\Delta} \log\frac{1}{\varepsilon_{\rm sqrt}}\right)
    \end{equation}
\end{proof}

\begin{remark}
    We note that
    \begin{equation*}
        M \in \mathcal{O}\left(\left(\frac{\|H\|}{\Delta}\right)^2 \log\frac{\sqrt{p_0}(\lvert\mathcal{S}\rvert-1)}{\varepsilon_{\rm sqrt}}\right)
    \end{equation*}
   queries to $U_\tau$ are required for Gauss--Hermite quadratures by applying \cref{lemma:quadrature} instead.
\end{remark}

\begin{remark}
    A similar analysis can be conducted for splitting the eigenprobability operator implemented by Chebyshev twirling and accounting for the extra aliasing term in \cref{propRhoAlt} to show that $\rho_{\rm sqrt}$ can be implemented with
    \begin{equation*}
        M \in \mathcal{O}\left(\frac{\alpha}{\Delta} \log\frac{\sqrt{p_0}(\lvert\mathcal{S}\rvert-1)}{\varepsilon_{\rm sqrt}}\right)
    \end{equation*}
    queries to $\be[\tfrac{H}{\alpha}]$, but with a scaling factor of $\sqrt{2}$ instead.
\end{remark}

We note that here we have an additional dependence on the support of the input state as opposed to the block-encoding of $\rho$.

\subsection{Alternate proof for time-twirling with discretised Gaussian weights}
Here, we provide a simpler proof that shows the latter and looser asymptotic bound in \cref{propRhoSqrt} directly. We can make use of the observation that the block-encoding implementation shares some similarities with Gaussian phase estimation~\citep{rendon2023lowdepth,rendon2024improved,chen2025quantum}.

Using the Gaussian tapered input state, the phase estimation circuit produces 
\begin{equation}
    {\rm QPE}\ket{0}\ket{\psi_k} = \left(\beta_k \ket{b_k} + \sqrt{1-\beta_k^2} \ket{\perp_k}\right)\ket{\psi_k}
\end{equation}
where $1-\beta_k \le \zeta$, and $\zeta$ is the upper bound for the failure probability of phase estimation. To obtain the block-encoding of the square-root eigenprobability operator, note that the implementation by performing SOSSA on the eigenprobability operator can be found by 
\begin{equation}
    ({\rm QFT}\otimes\bra{0}U_\psi^\dagger){\rm QPE}\ket{0}\ket{\psi_k}.
\end{equation}
We can use this to show the following proof:

\begin{proof}[Alternative proof to \cref{propRhoSqrt}]
    Given $M\in \widetilde{\mathcal{O}}(\log (1/\zeta)/\Delta)$ evolution time, phase estimation generates the state $\left(\beta_k \ket{b_k}\ket{\gamma_k} + \sqrt{1-\beta_k^2} \ket{\perp_k}\right)\ket{\psi_k}$ with $\beta_k\in\mathbb{R}$ and $1-\beta_k^2 \le\zeta$ for all $k$. Note that the complex phases are absorbed into $\ket{\gamma_k}$ and $\ket{\perp_k}$. Applying QFT on the first register to cancel out the original iQFT in phase estimation, the inverse state preparation unitary, and the two zero projectors on both sides, we can write the following:
    \begin{multline}
       ({\rm QFT}\otimes\Pi_0U_\psi^\dagger){\rm QPE}(\Pi_0\otimes\mathbb{I})\\
        = \sum_{k\in\mathcal{S}} \braket{\psi|\psi_k}\bigg(\beta_k {\rm QFT}\ketbra{b_k,\gamma_k}{0} + \sqrt{1-\beta_k^2} {\rm QFT}\ketbra{\perp_k}{0}\bigg)\\
        \otimes\ketbra{0}{\psi_k}
    \end{multline}
    where $b_k \ne b_{k'}$ for all $k' \in \mathcal{S}$. This ensures that $\ket{b_k}\ket{\gamma_k}$, and subsequently, ${\rm QFT} \ket{b_k}\ket{\gamma_k}$ is orthogonal between all $k\in\mathcal{S}$.

    Absorbing the global phase and the QFT, and ignoring the zero-projected qubits, we find that we can rewrite the above in the form
    \begin{equation}
       \widetilde{\rho}_{\rm sqrt} = \sum_{k\in\mathcal{S}} \sqrt{p_k}\left(\beta_k \ketbra{\phi_k}{\psi_k} + \sqrt{1-\beta_k^2} \ketbra{\perp_k'}{\psi_k}\right),
    \end{equation}
    whereas the ideal square-root eigenprobability operator would take the form of 
    \begin{equation}
       \rho_{\rm sqrt} = \sum_{k\in\mathcal{S}} \sqrt{p_k}\ketbra{\phi_k}{\psi_k}.
    \end{equation}
    The spectral norm error can be bounded by 
    \begin{align}
       &\lVert\widetilde{\rho}_{\rm sqrt} - \rho_{\rm sqrt}\rVert\le \lVert\widetilde{\rho}_{\rm sqrt} - \rho_{\rm sqrt}\rVert_{\rm F}\nonumber\\
       &=\left\lVert\sum_{k\in\mathcal{S}} \sqrt{p_k}\left((\beta_k-1) \ketbra{\phi_k}{\psi_k} + \sqrt{1-\beta_k^2} \ketbra{\perp_k'}{\psi_k}\right)\right\rVert_{\rm F}\nonumber\\
       &\le \sqrt{\sum_{k\in\mathcal{S}} p_k (\beta_k-1) ^2 + p_k (1-\beta_k^2)}\le \sqrt{\sum_{k\in\mathcal{S}} p_k(2-2\beta_k)}\nonumber\\
       &\le \sqrt{\sum_{k\in\mathcal{S}} p_k(2-2\beta_k^2)}\le \sqrt{2\sum_{k\in\mathcal{S}} p_k\zeta} = \sqrt{2\zeta}.
    \end{align}
    Upper bounding this by $\varepsilon_{\rm sqrt}$, and replacing $\zeta$, we recover the runtime of 
    \begin{equation}
        M \in \widetilde{\mathcal{O}}\left(\frac{\|H\|}{\Delta} \log\frac{1}{\varepsilon_{\rm sqrt}}\right)
    \end{equation}
    from Gaussian phase estimation runtimes.
\end{proof}

Using this phase-estimation--based proof, we can see that a coherent phase estimation algorithm that saturates the optimal lower bound of $\mathcal{O}(\varepsilon^{-1}\log\delta^{-1})$~\citep{mande2026tight} can be used to remove the additional logarithmic dependency on the spectral gap. This could be achieved by phase estimation combined with coherent median finding~\citep{nagaj2009fast}, but actual implementation is costly due to requiring a large number of ancillary qubits and quantum sorting networks.

It may be possible to remove the logarithmic dependency using the DPSS kernel~\citep{patel2026optimal}, which is the optimal average-case taper for phase estimation, instead of the Gaussian state. However, the current analysis only bounds the average-case error rather than the worst-case error. Although empirical results suggest that the worst-case error is comparable to the average-case error, it remains unclear whether this suffices to remove the logarithmic dependency on $\Delta^{-1}$ completely.

\section{Proof for DEFEAT and dominant eigenstate preparation}

\subsection{Proof for DEFEAT}
\label{appDefeat}
Given that our prepared block-encoding of $\rho_{\rm sqrt}$ is approximate, the rectangle function applied to it must be robust to these errors and to scale after amplitude amplification.
Our robustness analysis extends prior analysis on robust ground-state preparation~\citep{huang2026fullqubit} to the non-Hermitian setting. In addition to controlling singular value perturbations, we further bound singular vector perturbations, yielding a tighter robustness guarantee.

We first recall Wedin's $\sin\Theta$ theorem, which provides bounds on singular vector perturbations as follows: 
\begin{lemma}[Wedin's $\sin\Theta$ theorem~\citep{wedin1972perturbation}]
    
    Let $A = \sum_{j=0}^{r-1} s_j \vec{u}_j \vec{v}_j^\dagger$ be the singular value decomposition of a rank-$r$ matrix, and let $\widetilde{A} = \sum_{j=0}^{r-1}\widetilde{s}_j \widetilde{\vec{u}}_j \widetilde{\vec{v}}_j^\dagger$ be its perturbed version.
    Without loss of generality, assume that $s_0$ is separated from the other perturbed singular values and 0 by the gap
    \begin{equation*}
        \Delta = \min\left\{\min_{j\ne 0}|s_0 - \widetilde{s}_j|, s_0\right\}.
    \end{equation*}
    Then 
    \begin{equation*}
        \max\{\sin\angle(\vec u_0,\widetilde{\vec u}_0), \sin\angle(\vec v_0,\widetilde{\vec v}_0)\} \le \frac{\|A-\widetilde{A}\|}{\Delta}.
    \end{equation*}
    \label{lemWedin}
\end{lemma}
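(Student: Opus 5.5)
This lemma is the classical singular-subspace perturbation bound, and the paper only uses it as a tool. My plan is to reprove it directly by a residual argument on the rank-one block corresponding to $s_0$, with $E := \widetilde A - A$.

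\emph{Step 1: residuals.} Since $A\vec v_0 = s_0\vec u_0$ and $A^\dagger\vec u_0 = s_0\vec v_0$, the residuals
\begin{equation*}
\vec r := \widetilde A\vec v_0 - s_0\vec u_0 = E\vec v_0,\qquad \vec t := \widetilde A^\dagger\vec u_0 - s_0\vec v_0 = E^\dagger\vec u_0
\end{equation*}
satisfy $\|\vec r\|,\|\vec t\|\le\|E\|$.

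\emph{Step 2: project onto the complement.} Complete $\{\widetilde{\vec u}_j\}$ and $\{\widetilde{\vec v}_j\}$ to orthonormal bases, setting $\widetilde s_j=0$ for the padded indices. The assumption $s_0\ge\Delta$ is exactly what controls these padded directions. Define the components $x_j=\widetilde{\vec u}_j^\dagger\vec u_0$ and $y_j=\widetilde{\vec v}_j^\dagger\vec v_0$ for $j\neq0$. Then
\begin{equation*}
\sin^2\angle(\vec u_0,\widetilde{\vec u}_0)=\sum_{j\ne0}|x_j|^2,\qquad \sin^2\angle(\vec v_0,\widetilde{\vec v}_0)=\sum_{j\ne0}|y_j|^2.
\end{equation*}
Taking the $j$-th components of the residual identities gives, for each $j\neq0$, the $2\times2$ system
\begin{equation*}
\widetilde s_j y_j - s_0 x_j = \widetilde{\vec u}_j^\dagger\vec r,\qquad \widetilde s_j x_j - s_0 y_j = \widetilde{\vec v}_j^\dagger\vec t.
\end{equation*}

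\emph{Step 3: invert each $2\times2$ system.} The coefficient matrix is $\begin{bmatrix}-s_0&\widetilde s_j\\ \widetilde s_j&-s_0\end{bmatrix}$, whose singular values are $|s_0-\widetilde s_j|$ and $s_0+\widetilde s_j$. Both are at least $\Delta$, the first by the gap hypothesis and the second because $\widetilde s_j\ge0$. Hence
\begin{equation*}
|x_j|^2+|y_j|^2\le \frac{|\widetilde{\vec u}_j^\dagger\vec r|^2+|\widetilde{\vec v}_j^\dagger\vec t|^2}{\Delta^2}.
\end{equation*}
Summing over $j\neq0$ gives $\sin^2\angle(\vec u_0,\widetilde{\vec u}_0)+\sin^2\angle(\vec v_0,\widetilde{\vec v}_0)\le(\|\vec r\|^2+\|\vec t\|^2)/\Delta^2$.

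\emph{Main obstacle: the constant.} Steps 1--3 alone yield only $\max\{\cdot\}\le\sqrt2\,\|E\|/\Delta$, because $\vec r$ and $\vec t$ are bounded separately. To remove the $\sqrt2$, I would first try to exploit the joint structure of $\vec r=E\vec v_0$ and $\vec t=E^\dagger\vec u_0$. The idea is to bound the $\vec u$- and $\vec v$-parts separately, using the fact that the decoupling has the form of a Sylvester equation, as in Wedin's original argument, which gives the max-form $\max\{\|\vec r\|,\|\vec t\|\}/\Delta$. If this refinement fails, the proof still yields the lemma with a factor $\sqrt2$. That weaker version suffices for every later use in the paper, since it changes only constants in the robustness analysis of DEFEAT.
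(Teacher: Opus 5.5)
The paper gives no proof of this lemma; it is quoted from Wedin. Your Steps 1--3 are correct and reproduce the Frobenius form of Wedin's theorem: under the stated hypothesis they give $\sin^2\angle(\vec u_0,\widetilde{\vec u}_0)+\sin^2\angle(\vec v_0,\widetilde{\vec v}_0)\le(\|\vec r\|^2+\|\vec t\|^2)/\Delta^2$. The gap is the planned removal of the $\sqrt{2}$. If you use only $|s_0-\widetilde s_j|\ge\Delta$, which allows other perturbed singular values on \emph{both} sides of $s_0$, no refinement can succeed: the constant-$1$ bound is then false.

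Here is a counterexample. Take $A=\mathrm{diag}(1,0,2)$, so $s_0=1$ and $\vec u_0=\vec v_0=\vec e_0$. Take $E=\widetilde A-A=\epsilon(\vec e_1\vec e_0^\dagger+\vec e_0\vec e_2^\dagger)$, which has $\|E\|=\epsilon$, $\vec r=\epsilon\vec e_1$ and $\vec t=\epsilon\vec e_2$. The other perturbed singular values are $0$ and $2+\epsilon^2/3+O(\epsilon^4)$, so $\Delta=1$. Yet $\widetilde{\vec u}_0\propto\vec e_0+\epsilon\vec e_1-\tfrac{2}{3}\epsilon\vec e_2+O(\epsilon^2)$, so $\sin\angle(\vec u_0,\widetilde{\vec u}_0)=\tfrac{\sqrt{13}}{3}\epsilon+O(\epsilon^2)>\|E\|/\Delta$.

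In your $2\times2$ picture, the padded index with $\widetilde s_j=0$ contributes $|x_j|\approx\epsilon$ from $\vec r$. The index with $\widetilde s_j\approx 2>s_0$ contributes $|x_j|\approx\tfrac12\bigl(1/|s_0-\widetilde s_j|+1/(s_0+\widetilde s_j)\bigr)\epsilon=\tfrac{2}{3}\epsilon$ from $\vec t$. These add in quadrature. The Wedin max form $\max\{\|\vec r\|,\|\vec t\|\}/\Delta$ that you invoke holds only under one-sided separation.

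The missing ingredient is exactly that one-sidedness: $\alpha:=\max_{j\ne0}\widetilde s_j\le s_0-\Delta$. It holds where the paper uses the lemma, since in the DEFEAT proof $s_0=\sqrt{p_0}$ is the top singular value and the denominator is $\sqrt{p_0}-\widetilde s_1$. More generally it holds whenever $s_0=\|A\|$. If some $\widetilde s_j>s_0$, then $\Delta\le\widetilde s_j-s_0\le\|\widetilde A\|-\|A\|\le\|A-\widetilde A\|$ and the bound is trivial. Otherwise $\widetilde s_j\le s_0-\Delta$ for all $j\ne0$.

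Given one-sidedness, no Sylvester machinery is needed. Let $\vec x=(x_j)_{j\ne0}$ and $\vec y=(y_j)_{j\ne0}$. Your component identities read $s_0x_j=\widetilde s_jy_j-\widetilde{\vec u}_j^\dagger\vec r$ and $s_0y_j=\widetilde s_jx_j-\widetilde{\vec v}_j^\dagger\vec t$; for unpaired padded indices of a rectangular $A$ the $\widetilde s_j$ term is simply absent. Hence $s_0\|\vec x\|\le\alpha\|\vec y\|+\|\vec r\|$ and $s_0\|\vec y\|\le\alpha\|\vec x\|+\|\vec t\|$. Applying the inequality for whichever of $\|\vec x\|,\|\vec y\|$ is larger gives $\Delta\max\{\|\vec x\|,\|\vec y\|\}\le(s_0-\alpha)\max\{\|\vec x\|,\|\vec y\|\}\le\max\{\|\vec r\|,\|\vec t\|\}\le\|A-\widetilde A\|$, which is the claimed bound.

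Your fallback remark is also right: the $\sqrt2$ version would only change the constant $16/7$ in the DEFEAT error bound to $16\sqrt{2}/7$.
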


Using this, we now show the results of our main theorem.
\filter*
\begin{proof}
    Suppose that we can write the singular value decomposition of the approximate square-root eigenprobability operator in \cref{propRhoSqrt} as
    \begin{equation}
        \widetilde{\rho}_{\rm sqrt} = \sum_{k\in\mathcal{S}}\widetilde{s}_k\ketbra{\widetilde\phi_k}{\widetilde\psi_k}.
    \end{equation}

    By Weyl's inequality for singular values~\citep{horn2012matrix}, we note that the difference in singular values of the square-root eigenprobability operators can be upper-bounded as follows:
    \begin{align}
        \left\lvert\sqrt{p_k}-\widetilde{s}_k\right\rvert \le \|\rho_{\rm sqrt} - \widetilde{\rho}_{\rm sqrt}\| \le \varepsilon_{\rm sqrt}.
    \end{align}

    Let $\varepsilon_{\rm sqrt} \le \frac{\sqrt{p_0} - \sqrt{p_1}}{8}$. Then, for
    \begin{equation}
        \mu \in \left[\frac{\sqrt{p_0}+3\sqrt{p_1}}{4}, \frac{3\sqrt{p_0}+\sqrt{p_1}}{4}\right],
    \end{equation}
    we can apply the polynomial approximation $\widetilde{F}_{\mu}(x)$ of the rectangle function 
    \begin{equation}
         \begin{cases}
            \widetilde{F}_{\mu}(x) \in [-1, -1 +\xi], & \forall x \in \left[-1, -\mu - \delta\right] \cup \left[\mu + \delta, 1\right]\\
            \widetilde{F}_{\mu}(x) \in [1-\xi, 1], &\forall  x \in \left[-\mu + \delta, \mu - \delta\right],
         \end{cases}
    \end{equation}
    where $\delta = \frac{\sqrt{p_0} - \sqrt{p_1}}{8}$ to $\widetilde{\rho}_{\rm sqrt}$. We can use this to produce a reflection operator $\mathcal{R}_\mu$ up to error $\xi$. By \citet[Corollary 16]{gilyen2019quantum}, we note that the polynomial approximation should be taken to degree $D \in \mathcal O (\log(1/\xi)/\delta)$.  This produces a block-encoded approximation $\widetilde{\mathcal{R}}$ of the projector $\mathbb{I} - 2\ketbra{\widetilde\psi_0}{\widetilde\psi_0}$ within spectral error $\xi$.

    Next, to bound the difference between right singular vectors of the square-root eigenprobability operator $\rho_{\rm sqrt}$ and the approximated operator $\widetilde\rho_{\rm sqrt}$ with \cref{propRhoSqrt}, we can use Wedin's $\sin\Theta$ theorem (\cref{lemWedin}) to show the following:
    \begin{equation}
        \sin\angle\left(\ket{\psi_0}, \ket{\widetilde\psi_0}\right) = \sqrt{1-\left\lvert\braket{\psi_0|\widetilde\psi_0}\right\rvert^2} \le \frac{\|\rho_{\rm sqrt} - \widetilde{\rho}_{\rm sqrt}\|}{\sqrt{p_0}-\widetilde{s}_1}.
    \end{equation}
    Since $\ketbra{\psi_0}{\psi_0}-\ketbra{\widetilde\psi_0}{\widetilde\psi_0}$ is Hermitian with rank at most 2, and $\tr(\ketbra{\psi_0}{\psi_0}-\ketbra{\widetilde\psi_0}{\widetilde\psi_0})=0$, with $\det(\ketbra{\psi_0}{\psi_0}-\ketbra{\widetilde\psi_0}{\widetilde\psi_0})=-(1-\lvert\braket{\psi_0|\widetilde\psi_0}\rvert^2)$, we can see that its non-zero eigenvalues are $\pm\sqrt{1-\lvert\braket{\psi_0|\widetilde\psi_0}\rvert^2}$, so $\|\ketbra{\psi_0}{\psi_0}-\ketbra{\widetilde\psi_0}{\widetilde\psi_0}\|_2=\sqrt{1-\lvert\braket{\psi_0|\widetilde\psi_0}\rvert^2}$. Thus, we can show that
    \begin{align}
        \left\lVert\ketbra{\psi_0}{\psi_0}-\ketbra{\widetilde\psi_0}{\widetilde\psi_0}\right\rVert&=\sqrt{1-|\braket{\psi_0|\widetilde\psi_0}|^2} \nonumber\\
        &\le\frac{\|\rho_{\rm sqrt} - \widetilde{\rho}_{\rm sqrt}\|}{\sqrt{p_0}-\widetilde{s}_1}\le\frac{\varepsilon_{\rm sqrt}}{\sqrt{p_0}-\widetilde{\sigma}_1}\nonumber\\
        &\le\frac{\varepsilon_{\rm sqrt}}{\sqrt{p_0}-\sqrt{p_1}-\varepsilon_{\rm sqrt}}\nonumber\\
        &\le\frac{8\varepsilon_{\rm sqrt}}{7(\sqrt{p_0}-\sqrt{p_1})}
    \end{align}
    while the last inequality stems from our earlier assumption that $\varepsilon_{\rm sqrt} \le \frac{\sqrt{p_0} - \sqrt{p_1}}{8}$.
    
    Thus, the total block-encoding error can be found to be
    \begin{align}
        &\left\lVert\widetilde{\mathcal{R}} - (\mathbb{I}-2\ketbra{\psi_0}{\psi_0})\right\rVert\nonumber\\
        &\le \left\lVert\widetilde{\mathcal{R}} - (\mathbb{I}-2\ketbra{\widetilde\psi_0}{\widetilde\psi_0})\right\rVert + 2\left\lVert\ketbra{\widetilde\psi_0}{\widetilde\psi_0}-\ketbra{\psi_0}{\psi_0})\right\rVert\nonumber\\
        &\le \xi +\frac{16\varepsilon_{\rm sqrt}}{7 (\sqrt{p_0} - \sqrt{p_1})}
    \end{align}
    Upper bounding the above with $\varepsilon$, we find that 
    \begin{equation}
        \xi \in \mathcal{O}(\varepsilon), \quad \varepsilon_{\rm sqrt} \in \mathcal{O}((\sqrt{p_0} - \sqrt{p_1})\varepsilon)
    \end{equation}
    
   The total runtime complexity for preparing the dominant eigenstate is obtained by combining this with \cref{propRhoSqrt} as
    \begin{equation}
        \mathcal{O}\left(\frac{\|H\|}{(\sqrt{p_0}-\sqrt{p_1})\Delta} \log \frac{1}{\varepsilon}\log\frac{\|H\|}{(\sqrt{p_0}-\sqrt{p_1})\Delta\varepsilon}\right).
    \end{equation}

    For the number of queries to $U_\psi$, we simply have to look at the degree $D$ of the polynomial expansion, that is, $D \in \mathcal{O}(\log(1/\xi)/\delta$, which is
    \begin{equation}
        \mathcal{O}\left(\frac{1}{\sqrt{p_0}-\sqrt{p_1}}\log\frac{1}{\varepsilon}\right).
    \end{equation}

    For the number of qubits required, as QSVT only adds a single qubit, the qubit dependency comes from \cref{propRhoSqrt}, which produces
    \begin{equation}
        n + \mathcal{O}\left(\log\frac{\|H\|}{\Delta} + \log \log \frac{1}{(\sqrt{p_0}-\sqrt{p_1})\varepsilon}\right)
    \end{equation}
\end{proof}

\subsection{Filtering without \texorpdfstring{\emph{a priori}}{a priori} knowledge of the threshold}
\label{appGetMu}
In this section, we discuss some relaxations of \cref{assumptDEFEAT}. 

First, the assumption on knowledge of the minimum spectral gap, as follows, can be relaxed:
\begin{manualdef}[Assumption~\ref*{assumptGap}]
    A lower bound on the minimum spectral gap within the support $\Delta \le \min_{\substack{(j\neq k) \in\mathcal S}} |E_j-E_k|$.
\end{manualdef}
In practice, we do not need explicit knowledge of the actual lower bound of the spectral gap, as we can just time-evolve for a sufficiently long time to satisfy \hyperref[assumptGap]{Assumption~\ref*{assumptGap}}.

Further, in the event that we \emph{do not} have \emph{a priori} knowledge of the probability threshold $\mu$ as shown in \hyperref[assumptThresh]{Assumption~\ref*{assumptThresh}} below, one can still estimate the value, albeit with a longer runtime.
\begin{manualdef}[Assumption~\ref*{assumptThresh}]
    Knowledge of a threshold value $\mu$ satisfying
    \begin{equation*}
        \sqrt{p_0}-\mu,\ \mu-\sqrt{p_1}\in\Omega(\Lambda).
    \end{equation*}
\end{manualdef}
We can use the same idea for finding the ground state energy from a block-encoded Hamiltonian with binary amplitude estimation~\citep{lin2020nearoptimal}. 
We use binary search to find a suitable estimate of $\mu$ by locating $\sqrt{p_0}$ and shifting downwards using the knowledge of $\sqrt{p_0}-\sqrt{p_1}$. For each candidate $\widetilde{\mu}$, we can apply the above thresholding function $F_{\widetilde\mu}$ on $\rho_{\rm sqrt}$ to approximate the reflector $\mathcal R_{\widetilde\mu} = \mathbb{I} - 2\mathcal P_{\widetilde\mu}$, where projector $\mathcal P_{\widetilde\mu} = \sum_{k:\sqrt{p_k} \ge \mu}\ketbra{\psi_k}{\psi_k}$. Using this reflector for amplitude estimation yields $\lVert \mathcal P_{\widetilde\mu}\ket{\psi}\rVert$, which discerns whether there exists a nonzero weight above the threshold, and thus distinguishes between the cases $\sqrt{p_0}>\widetilde{\mu}$ and $\sqrt{p_0}<\widetilde{\mu}$. This procedure then provides us with a suitable $\mu$ which we can use for dominant eigenstate preparation.

\begin{lemma}[Estimating $\mu$]
    Given \cref{assumptDEFEAT} without \hyperref[assumptThresh]{Assumption~\ref*{assumptThresh}}, with success probability $1-\delta$, we can estimate $\mu$ satisfying \hyperref[assumptThresh]{Assumption~\ref*{assumptThresh}}, with 
    \begin{equation*}
        \widetilde{\mathcal{O}}\left(\frac{\|H\|}{(p_0-p_1)\Delta}\log\frac{1}{\delta}\right),
    \end{equation*}
   queries to $U_\tau$, as well as
    \begin{equation*}
        \widetilde{\mathcal{O}}\left(\frac{1}{p_0-p_1}\log\frac{1}{\delta}\right)
    \end{equation*}
    queries to $U_\psi$ and 
    \begin{equation*}
        n + \mathcal{O}\left(\log\frac{\|H\|}{\Delta}+ \log \log \frac{1}{\sqrt{p_0}-\sqrt{p_1}}\right)
    \end{equation*}
    qubits.
\end{lemma}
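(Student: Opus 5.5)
The plan is to locate $\sqrt{p_0}$ by a binary search over candidate thresholds, in the spirit of the ground-energy search of \citet{lin2020nearoptimal}. At each step we decide, with a fuzzy tolerance, whether $\sqrt{p_0}$ lies above or below the current candidate $\widetilde\mu$, using DEFEAT-style reflectors inside amplitude estimation. Once $\sqrt{p_0}$ is known to within $\mathcal{O}(\Lambda)$, we shift down by roughly $\Lambda/2$ to obtain a valid $\mu$. This shift uses the lower bound $\Lambda$ from \hyperref[assumptAmpGap]{Assumption~\ref*{assumptAmpGap}}.

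\textbf{Test primitive.} Fix a candidate $\widetilde\mu\in[0,1]$. As in the proof of \cref{thmFilter}, we build a block-encoding of $\widetilde{F}_{\widetilde\mu}(\widetilde\rho_{\rm sqrt})$ with transition half-width $\delta=\Theta(\Lambda)$ and error $\xi$. Its degree is $\mathcal{O}(\Lambda^{-1}\log\xi^{-1})$ in queries to $U_\psi$. The cost in $U_\tau$ is $\widetilde{\mathcal{O}}(\|H\|/(\Lambda\Delta))$, obtained by invoking \cref{propRhoSqrt} with $\varepsilon_{\rm sqrt}=\mathcal{O}(\Lambda)$. This gives an approximate reflector $\mathcal{R}_{\widetilde\mu}=\mathbb{I}-2\mathcal{P}_{\widetilde\mu}$, where $\mathcal{P}_{\widetilde\mu}$ projects onto the eigenstates with $\sqrt{p_k}\gtrsim\widetilde\mu$.

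We then run binary amplitude estimation of $\|\mathcal{P}_{\widetilde\mu}\ket{\psi}\|$ on the state $U_\psi\ket0$, using this reflector. The two cases separate as follows.
\begin{itemize}
\item If $\widetilde\mu\le\sqrt{p_0}-\delta$, the amplitude is at least $\sqrt{p_0}$.
\item If $\widetilde\mu\ge\sqrt{p_0}+\delta$, the amplitude is at most $\mathcal{O}(\xi)$, because no singular value lies above the threshold.
\end{itemize}
Distinguishing amplitude $\ge\sqrt{p_0}$ from $\le\sqrt{p_0}/2$ therefore takes $\mathcal{O}(p_0^{-1/2}\log\delta'^{-1})$ uses of the reflector. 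This requires $\xi\le c\sqrt{p_0}$, which costs only logarithmic overhead. Since $\sqrt{p_0}\ge\sqrt{p_0}-\sqrt{p_1}\ge\Lambda$, this estimation round is affordable.

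\textbf{Search and cost.} We carry out the bisection over $[0,1]$ until the interval has width $\mathcal{O}(\Lambda)$, which takes $\mathcal{O}(\log\Lambda^{-1})$ rounds. In the ambiguous band $|\widetilde\mu-\sqrt{p_0}|<\delta$, either answer is acceptable, so the search still ends with an estimate $\hat s$ satisfying $|\hat s-\sqrt{p_0}|=\mathcal{O}(\Lambda)$. We then set $\mu=\hat s-\Lambda/2$, with constants tuned so that $\sqrt{p_0}-\mu$ and $\mu-\sqrt{p_1}$ are both $\Omega(\Lambda)$. The per-round failure probability is set to $\delta'=\delta/\log\Lambda^{-1}$, so a union bound gives overall success probability $1-\delta$.

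The total query count to $U_\psi$ is then $\widetilde{\mathcal{O}}(\Lambda^{-1}p_0^{-1/2}\log\delta^{-1})$. We convert this using $(\sqrt{p_0}-\sqrt{p_1})\sqrt{p_0}\ge(p_0-p_1)/2$, which follows from $\sqrt{p_0}+\sqrt{p_1}\le2\sqrt{p_0}$, so the bound is $\widetilde{\mathcal{O}}((p_0-p_1)^{-1}\log\delta^{-1})$. Multiplying by the $U_\tau$ cost per reflector gives the stated evolution-time bound. For qubits, binary amplitude estimation can be done with $\mathcal{O}(1)$ extra qubits via an iterative or statistical variant, so the count is inherited from \cref{thmFilter} with constant accuracy.

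\textbf{Main obstacle.} The delicate step is the robustness of the test near the threshold. The approximate $\widetilde\rho_{\rm sqrt}$ perturbs the singular values, and the polynomial is only approximate on the transition band. Because of this, the amplitude need not be cleanly $0$ or $\ge\sqrt{p_0}$ when $\widetilde\mu$ falls near some $\sqrt{p_k}$ with $k\ge1$. Such a $k$ lies below $\sqrt{p_0}$, and a partial contribution there only makes the amplitude smaller than $\sqrt{p_0}$, not larger. I would handle this by choosing the decision threshold at $\sqrt{p_0}/2$ and arguing monotonicity: the amplitude is at least $\sqrt{p_0}(1-\mathcal{O}(\xi))$ whenever $\widetilde\mu$ is below $\sqrt{p_0}-\delta$. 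Making that argument rigorous, via Weyl's inequality as in the proof of \cref{thmFilter}, is where I expect the care to go.
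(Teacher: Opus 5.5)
Your proposal is essentially the paper's proof. Both run a binary search over trial thresholds $\widetilde\mu$, decide each trial by binary amplitude estimation (as in \citet{lin2020nearoptimal}) with the reflector $\mathcal{R}_{\widetilde\mu}$ built as in \cref{thmFilter}, locate $\sqrt{p_0}$ to accuracy $\Theta(\Lambda)$, shift down by about half the gap, and convert the $\widetilde{\mathcal{O}}(1/(\sqrt{p_0}\Lambda))$ cost into $\widetilde{\mathcal{O}}(1/(p_0-p_1))$ via $\sqrt{p_0}(\sqrt{p_0}-\sqrt{p_1})\ge(p_0-p_1)/2$.

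Two details need tightening, each fixable at only logarithmic cost:
\begin{enumerate}
\item Your decision level $\sqrt{p_0}/2$ uses the unknown $p_0$; the paper's sketch glosses over this too. Use $\widetilde\mu/2$ instead: bisection trials never fall much below $\sqrt{p_0}/2$, so each test still costs $\mathcal{O}(p_0^{-1/2})$.
\item Weyl's inequality alone does not keep $|\braket{\widetilde\psi_0|\psi}|$ close to $\sqrt{p_0}$. Wedin only bounds the rotation of the top singular vector by $\mathcal{O}(\varepsilon_{\rm sqrt}/\Lambda)$, which is $\Theta(1)$ for $\varepsilon_{\rm sqrt}=\Theta(\Lambda)$ and can swamp a small $\sqrt{p_0}$, so take $\varepsilon_{\rm sqrt}=\mathcal{O}(\Lambda^2)$.
\end{enumerate}
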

\begin{proof}
    With binary amplitude estimation~\citep[Lemma 7 and Algorithm 1]{lin2020nearoptimal}, we can estimate $\sqrt{p_0}$ up to $\xi$ accuracy with success probability $1-\delta$ using the projector provided by \cref{thmFilter} under trial $\widetilde{\mu}$-s to replace $\mu$. This produces a total runtime complexity
    \begin{equation}
        \mathcal{O}\left(\frac{1}{\sqrt{p_0}\xi\Delta}\log\frac{1}{p_0}\log\frac{1}{\xi}\log\left(\frac{\log(\xi^{-1})}{\delta}\right)\right).
    \end{equation}

    Assuming first that we have access to the value of $\sqrt{p_0}-\sqrt{p_1}$. Setting $\xi = \frac{\sqrt{p_0} - \sqrt{p_1}}{4}$, to estimate $\sqrt{p_0}$, we can then shift the estimate down by a further $\frac{\sqrt{p_0} - \sqrt{p_1}}{2}$ to find
    \begin{equation}
        \widetilde{\mu} \in \left[\frac{\sqrt{p_0} + 3\sqrt{p_1}}{4}, \frac{3\sqrt{p_0} + \sqrt{p_1}}{4} \right].
    \end{equation}
    
    Thus, the total runtime complexity is 
    \begin{equation}
        \widetilde{\mathcal{O}}\left(\frac{1}{(p_0-p_1)\Delta}\log\frac{1}{\delta}\right).
    \end{equation}
    The qubit dependency follows from \cref{thmFilter} by having access to $\widetilde{\rho}_{\rm sqrt}$ to accuracy $\varepsilon \le \frac{\sqrt{p_0}-\sqrt{p_1}}{8}$.

    In the more general case of only having the lower bound $\Lambda$ of $\sqrt{p_0}-\sqrt{p_1}$ per \hyperref[assumptAmpGap]{Assumption~\ref*{assumptAmpGap}}, we can obtain similar results, just by replacing the appearances of $\sqrt{p_0}-\sqrt{p_1}$ with the actual lower bound. Similarly, when using \cref{thmFilter} with the trial $\widetilde{\mu}$, the amplitude-gap dependency in the runtime is replaced by $\Lambda$.
\end{proof}

\subsection{Proof for dominant eigenstate preparation}
\label{appDesp}
For eigenstate preparation, we can use the reflector in \cref{thmFilter} to amplify and prepare the dominant eigenstate.

\eigenstate*
\begin{proof}
    To prepare the dominant eigenstate, we use the reflector from \cref{thmFilter} to perform amplitude amplification on the input $\ket{\psi}$ akin to (robust) ground state preparation~\citep{lin2020nearoptimal,huang2026fullqubit}.

    Given that the overlap between the initial state and the dominant eigenstate is $\lvert\braket{\psi_0|\psi}\rvert = \sqrt{p_0}$, we can use this reflector and initial state preparation unitary $U_\psi$ to produce the dominant eigenstate via $\mathcal O ((\sqrt{p_0})^{-1})$ rounds of amplitude amplification~\citep{brassard2002quantum}. To produce the eigenstate with at least square-root fidelity $1-\varepsilon$, we require the error $\xi$ of the block-encoding $\widetilde{\mathcal R}$ to be within $\sqrt{p_0}\varepsilon$. 
    
    Substituting the block-encoding error in \cref{thmFilter} and multiplying by the amplitude amplification cost, we find that the runtime complexity for eigenstate preparation is 
    \begin{equation}
        \mathcal{O}\left(\frac{\|H\|}{(p_0-p_1)\Delta} \log \frac{1}{p_0\varepsilon}\log\frac{\|H\|}{(p_0-p_1)\Delta\varepsilon}\right).
    \end{equation}

    We can obtain the query complexity of $U_{\psi}$ by adding the number of queries to $\be[\widetilde{\rho}_{\rm sqrt}]$ and to $\mathbb{I} - 2\ketbra{\psi}{\psi}$. This is dominated by the queries to $\widetilde{\rho}_{\rm sqrt}$, where we can see that  
    \begin{equation}
        \mathcal{O}\left(\frac{\|H\|}{p_0-p_1} \log \frac{1}{p_0\varepsilon}\right).
    \end{equation}
    queries to $U_\psi$ and its inverse are needed. This recovers the asymptotics in the main statement.

    Similarly, the number of qubits can be obtained from scaling the error term dependencies.
\end{proof}

Finally, recall from the main text that the additional $(\sqrt{p_0})^{-1}$ cost from amplitude amplification is not fundamental. Here, we construct an algorithm that replaces this additional cost with $(\sqrt{\Delta})^{-1}$ by applying the shifted sign function instead of the rectangle function of the square-root eigenprobability operator to produce $\ketbra{\psi_0}{\phi_0}$. By applying QFT to the right-hand side, we can obtain
\begin{equation}
A =\ketbra{\psi_0}{E_0}
\end{equation}
where $\ket{E_0}$ is a fixed-point encoding of the dominant eigenenergy, identical to the value one would obtain via phase estimation. Given $M \in \mathcal{O}(\Delta^{-1})$, we can prepare a uniform superposition on the eigenenergy space, and apply $\be[A]$ such that
\begin{equation}
    \be[A]\left(\ket{0}\otimes\frac{1}{\sqrt{M}}\sum_{j=0}^{M-1}\ket{j}\right) = \frac{1}{\sqrt{M}}\ket{0}\ket{\psi_0} +\ket{\perp}.
\end{equation}
Then, using the reflector 
\begin{equation}
    2(\ketbra{0}{0}\otimes\ketbra{\psi_0}{\psi_0}) - \mathbb{I},
\end{equation}
which can be obtained by taking the tensor product between $\widetilde{\rho}_{\rm sqrt}$ and $\ketbra{0}{0}$ before the polynomial transformation; we can construct an algorithm that requires total evolution time 
\begin{equation}
    \widetilde{\mathcal{O}}\left(\frac{\|H\|^{3/2}}{(\sqrt{p_0}-\sqrt{p_1})\Delta^{3/2}}\log^2\frac{1}{\varepsilon}\right),
    \label{eqAltPrep}
\end{equation}
as well as
\begin{equation}
    \widetilde{\mathcal{O}}\left(\frac{1}{(\sqrt{p_0}-\sqrt{p_1})\sqrt{\Delta}}\log\frac{1}{\varepsilon}\right)
\end{equation}
queries to $U_\psi$.
This indicates that while $(\sqrt{p_0})^{-1}$ can be removed, and the optimal runtime dependency on the probability distribution is only on $(\sqrt{p_0}-\sqrt{p_1})^{-1}$, it comes at a cost of an additional dependency of $(\sqrt{\Delta})^{-1}$, which in practice is often more costly.

\section{Proof for further applications}
\label{appApp}
\subsection{Dominant eigenenergy estimation}
First, we prove the runtime complexity for eigenenergy estimation. We accomplish this by using the eigenstate prepared by \cref{thmEigenstate} as input to the QMEGS algorithm.

\begin{proposition}[Dominant eigenenergy estimation]
    Using the dominant eigenstate prepared from \cref{thmEigenstate}, we can estimate the dominant eigenenergy within an additive error $\varepsilon$ with success probability $1-\delta$ using an additional phase estimation with
    \begin{equation*}
        \widetilde{\mathcal{O}}\left(\left(\frac{\lVert H\rVert}{(p_0-p_1)\Delta}+\frac{1}{\varepsilon}\right)\log\frac{1}{\delta}\right),
    \end{equation*}
   queries to $U_\tau$, as well as
    \begin{equation*}
    \widetilde{\mathcal{O}}\left(\frac{1}{p_0-p_1}\log\frac{1}{\delta\varepsilon}\right),
    \end{equation*}
    queries to $U_\psi$ and
    \begin{equation*}
    n + \mathcal{O}\left(\log\frac{\|H\|}{\Delta} + \log\log\frac{1}{p_0-p_1}\right)
    \end{equation*}
    qubits.
    \label{propEigenenergy}
\end{proposition}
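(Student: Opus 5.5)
The plan is a two-stage procedure. First we prepare an approximation $\ket{\widetilde\psi_0}$ of the dominant eigenstate via \cref{thmEigenstate}. We then feed this state into a statistical phase estimation routine, QMEGS~\citep{ding2024quantum}, which requires only a single ancilla qubit. The point of the first stage is to convert the original input, where $p_0$ may be well below $1/2$, into an input whose overlap with $\ket{\psi_0}$ is a constant close to one. That is exactly the regime where QMEGS (or any Heisenberg-limited single-eigenvalue estimator) succeeds.

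\textbf{Step 1: state preparation.} We fix a constant target infidelity in \cref{thmEigenstate}, say square-root fidelity at least $0.9$. Then $|\braket{\psi_0|\widetilde\psi_0}|^2 \ge 0.8$, so the prepared state has a sufficiently dominant component in the sense of QMEGS, and the residual weight is spread over the other eigenstates. With constant $\varepsilon$, \cref{thmEigenstate} costs $\widetilde{\mathcal{O}}(\|H\|/((p_0-p_1)\Delta))$ queries to $U_\tau$, $\widetilde{\mathcal{O}}(1/(p_0-p_1))$ queries to $U_\psi$, and $n+\mathcal{O}(\log(\|H\|/\Delta)+\log\log(1/(p_0-p_1)))$ qubits. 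These costs account for the qubit count and the first terms of the other bounds.

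\textbf{Step 2: energy estimation.} We invoke QMEGS with the single dominant eigenvalue and tail weight at most $0.2 < 0.8$. Its guarantee gives an additive-$\varepsilon$ estimate of $E_0$ with probability $1-\delta$. This uses maximal evolution time $\widetilde{\mathcal{O}}(1/\varepsilon)$ and $\mathcal{O}(\log(1/\delta)\,\mathrm{polylog}(1/\varepsilon))$ circuit repetitions, since for constant overlap gap the sample count is logarithmic. Each repetition consumes one fresh copy of $\ket{\widetilde\psi_0}$, produced by re-running Step 1. QMEGS uses a single ancilla qubit, so the qubit count is unchanged.

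\textbf{Combining the costs.} The total cost is (number of repetitions) times (prep cost plus evolution cost). The evolution per shot is $\widetilde{\mathcal{O}}(1/\varepsilon)$, and QMEGS is designed so that the total evolution time across shots is also $\widetilde{\mathcal{O}}(\varepsilon^{-1}\log\delta^{-1})$. The preparation contributes $\widetilde{\mathcal{O}}(\|H\|\log(\delta^{-1})/((p_0-p_1)\Delta))$. Together these give the stated $U_\tau$ bound. Queries to $U_\psi$ occur only in preparation, giving $\widetilde{\mathcal{O}}((p_0-p_1)^{-1}\log(1/(\delta\varepsilon)))$, where the $\log(1/\varepsilon)$ counts the QMEGS repetitions.

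\textbf{Main obstacle.} The delicate step is checking that the preparation error does not bias the estimate. QMEGS treats the input as an arbitrary state and only requires the dominant weight to exceed the tail. The infidelity of $\ket{\widetilde\psi_0}$ therefore enters only as a constant reduction of the gap $p_{\min}-p_{\rm tail}$, not as an additive bias on $E_0$. This is why a constant-fidelity preparation suffices and the $\log^2(1/\varepsilon)$ of \cref{thmEigenstate} does not appear. If needed, we would also bound the failure probability of the amplitude amplification stage in Step 1 and absorb it via a union bound into $\delta$.
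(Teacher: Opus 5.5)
Your proposal is correct and follows the paper's proof essentially step for step. You prepare $\ket{\psi_0}$ to constant square-root fidelity via \cref{thmEigenstate} (the paper uses $7/8$--$15/16$, you use $0.9$), so that QMEGS's sufficient-dominance condition $q_0 > q_{\rm tail}$ holds with a constant gap; you then run QMEGS with a fresh preparation per circuit and multiply the $\widetilde{\mathcal{O}}(\log(1/(\delta\varepsilon)))$ repetitions by the per-shot preparation and evolution costs. Your remark that the $\log^2(1/\varepsilon)$ factor of \cref{thmEigenstate} drops out at constant fidelity is slightly cleaner than the paper, which carries that factor along before absorbing it into the $\widetilde{\mathcal{O}}$.
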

\begin{proof}
    From \cref{thmEigenstate}, we can prepare the dominant eigenstate $\ket{\psi_0}$ with square-root fidelity at least $1 - \xi$. We denote the prepared state as $\ket{\phi}$ such that
    \begin{equation}
        q_k:=\lvert\braket{\phi|\psi_k}\rvert^2.
    \end{equation}
    We can then note that 
    \begin{equation}
        q_0 \ge (1-\xi)^2.
    \end{equation}
    We can thus write (up to a phase absorbed into $\ket{\phi}$ and $\ket{\perp}$)
    \begin{equation}
        \ket{\phi} = \sqrt{q_0} \ket{\psi_0} + \sqrt{1-q_0} \ket{\perp},
    \end{equation}
    where $\ket{\perp} \in \operatorname{span} \{\ket{\psi_1}, \ket{\psi_2}, \cdots, \ket{\psi_{N-1}}\}$. Thus, we can see that the prepared approximate eigenstate has 
    \begin{equation}
        q_{\rm tail} = \sum_{j=1}^{N-1} q_j = 1- q_0 \le 2\xi-\xi^2.
    \end{equation}

    To satisfy the sufficiently dominant assumption that $q_0 > q_{\rm tail}$, we can set $\xi$ such that 
    \begin{equation}
        q_0 \ge (1- \xi)^2 > 2\xi - \xi^2 \ge q_{\rm tail}
    \end{equation}
    We then apply a statistical phase-estimation procedure, such as QMEGS~\citep{ding2024quantum}, using $\ket{\phi}$ as the input state and evolving it under the Hamiltonian $H$. By \citet[Theorem~3.1]{ding2024quantum}, with probability at least $1-\delta$, the normalised dominant eigenenergy $\lambda_0 = \frac{E_0}{\lVert H\rVert}$ can be estimated to additive error $\varepsilon$ using circuits of maximum depth
    \begin{equation}
        \mathcal{O}\left(\frac{1}{\varepsilon}\log\frac{1}{1-4\xi}\right)
    \end{equation}
    and
    \begin{align}
        &\widetilde{\mathcal{O}}\left(\frac{1}{(q_0-q_{\rm tail})^2}\log\frac{1}{\delta\varepsilon}\right)\nonumber\\
        &\subseteq\widetilde{\mathcal{O}}\left(\frac{1}{\bigl((1-\xi)^2-(2\xi-\xi^2)\bigr)^2}\log\frac{1}{\delta\varepsilon}\right)\nonumber\\
        &\subseteq\widetilde{\mathcal{O}}\left(\frac{1}{1-8\xi}\log\frac{1}{\delta\varepsilon}\right)
    \end{align}
    circuits. In particular, choosing $\xi\leq 1/16$ makes the $\xi$-dependent factors constant. The maximum circuit depth is therefore $\mathcal{O}(\varepsilon^{-1})$, while the number of circuits is
    \begin{equation}
        \widetilde{\mathcal{O}}\left(\log\frac{1}{\delta\varepsilon}\right).
    \end{equation}
    Consequently, measuring the runtime by the total Hamiltonian-evolution
    time across all circuits gives
    \begin{equation}
        \widetilde{\mathcal{O}}\left(\frac{1}{\varepsilon}\log\frac{1}{\delta}
        \right).
    \end{equation}

    Each of these parallel circuit executions requires a fresh preparation of the input state. Thus, for each of the $\widetilde{\mathcal{O}}\left(\log(\delta^{-1}\varepsilon^{-1})\right)$ circuits, we prepare $\ket{\psi_0}$ to square-root fidelity at least $7/8$. By \cref{thmEigenstate}, the total runtime complexity is 
    \begin{equation}
        \mathcal{O}\left(\frac{1}{(p_0-p_1)\Delta}\log^2\frac{1}{\varepsilon}\log\frac{1}{\delta\varepsilon}\right),
    \end{equation}
    after taking the circuit repetition cost into account. Combining the state preparation runtime complexity and time evolution cost for QMEGS, which is an additional 
    \begin{equation}
        \mathcal{O}\left(\frac{1}{\varepsilon}\log\frac{1}{\delta\varepsilon}\right),
    \end{equation} 
    runtime cost (or queries to $U_\tau= \exp(-iH\tau)$) from the evolution time for the controlled $U_\tau$ used in the actual single-ancilla phase estimation routine; we obtain the cost stated in the main result. 
    
    The qubit count follows similarly to \cref{thmEigenstate}. Recall that the square-root fidelity is set to $15/16$, so there are no longer asymptotic dependencies of accuracy $\varepsilon$ in the ancilla count, and that QMEGS adds only one additional ancilla qubit.

    The number of queries of $U_{\psi}$ similarly follows from \cref{thmEigenstate}, multiplied by $\mathcal{O}(\log(\delta^{-1}\varepsilon^{-1}))$ repetition circuits from QMEGS.
\end{proof}

As noted in the main text, we can trade off the logarithmic $\varepsilon$-dependency in the query complexity to $U_\psi$ with an increased number of qubits such that we require
\begin{equation*}
    \widetilde{\mathcal{O}}\left(\frac{1}{p_0-p_1}\log\frac{1}{\delta}\right)
\end{equation*}
queries to $U_\psi$ and
\begin{equation*}
    n + \mathcal{O}\left(\max\left\{\log\frac{\|H\|}{\Delta} + \log\log\frac{1}{p_0-p_1}, \log\frac{1}{\varepsilon}\right\}\right)
\end{equation*}
qubits by using the dominant eigenstate as input to Gaussian phase estimation as shown in \cref{lemGaussQPESingle} instead.

\subsection{Dominant eigenproperty estimation}
Next, we present the results for eigenproperty estimation. To do this, we perform the Hadamard test on the block-encoding of observable $O$, and improve the runtime complexity result with amplitude estimation. We replace the reflection around the extended state $\ket{\psi_0}\ket{0}$ with the modified reflector from \cref{thmFilter}. 

\begin{proposition}[Dominant eigenproperty estimation]
    Using the dominant eigenstate reflector from \cref{thmFilter} and an initial dominant eigenstate prepared from \cref{thmEigenstate},  given block-encoding access to an observable $O$ with scaling factor $\alpha$, we can estimate the dominant eigenproperty $\braket{\psi_0|O|\psi_0}$ within additive error $\varepsilon$ with success probability $1-\delta$ with
    \begin{equation*}
        \widetilde{\mathcal{O}}\left(\frac{\|H\|}{(\sqrt{p_0}-\sqrt{p_1})\Delta}\left(\frac{1}{\sqrt{p_0}}+\frac{\alpha}{\varepsilon}\right)\log\frac{1}{\delta}\right),
    \end{equation*}
   queries to $U_\tau$, as well as
    \begin{equation*}
        \widetilde{\mathcal{O}}\left(\frac{1}{\sqrt{p_0}-\sqrt{p_1}}\left(\frac{1}{\sqrt{p_0}}+\frac{\alpha}{\varepsilon}\right)\log\frac{1}{\delta}\right)
    \end{equation*}
    queries to $U_\psi$ and \begin{equation*}
    n + \mathcal{O}\left(\log \frac{\|H\|}{\Delta} + \log\log\frac{\alpha}{(p_0-p_1)\varepsilon}\right)
    \end{equation*}
    qubits.
    \label{propEigenprop}
\end{proposition}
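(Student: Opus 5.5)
The plan is to reduce $\braket{\psi_0|O|\psi_0}$ to a single amplitude and estimate it with amplitude estimation. That estimation will be driven by one state-preparation step (from \cref{thmEigenstate}) and by repeated uses of the reflector from \cref{thmFilter}.

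\emph{Step 1: a Hadamard-test unitary.} Given $\be[O/\alpha]$, build a controlled block-encoding with one extra control qubit and form the Hadamard-test circuit $V$ on the registers (control, block-encoding ancilla, system). For input $\ket{0}\ket{0}\ket{\psi_0}$, the probability of measuring the control in $\ket{0}$ is $\tfrac12(1+\mathrm{Re}\braket{\psi_0|O|\psi_0}/\alpha)$. Estimating this probability to additive accuracy $\varepsilon/(2\alpha)$ therefore estimates $\braket{\psi_0|O|\psi_0}$ to accuracy $\varepsilon$; for Hermitian $O$ only the real part is needed.

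\emph{Step 2: amplitude estimation with the DEFEAT reflector.} Amplitude estimation needs the Grover iterate $V R_0 V^\dagger R_{\rm good}$. Here $R_0$ is the reflection about the initial state $\ket{0}\ket{0}\ket{\psi_0}$. We cannot build $R_0$ exactly, so we replace it by $\ketbra{00}{00}\otimes\widetilde{\mathcal{R}}$ plus the identity on the orthogonal ancilla subspace. This is a reflection about $\ket{00}\ket{\psi_0}$ up to sign convention, and it is obtained from the $\xi$-accurate block-encoded reflector of \cref{thmFilter}. The initial state itself is produced once by \cref{thmEigenstate}, at cost $\widetilde{\mathcal{O}}(\|H\|/((\sqrt{p_0}-\sqrt{p_1})\sqrt{p_0}\Delta))$. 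This is where the additive $1/\sqrt{p_0}$ term in the stated bound comes from. Amplitude estimation then uses $\mathcal{O}(\alpha/\varepsilon)$ iterates. Each iterate costs one call to $\widetilde{\mathcal R}$, i.e.\ $\widetilde{\mathcal{O}}(\|H\|/((\sqrt{p_0}-\sqrt{p_1})\Delta))$ queries to $U_\tau$ and $\mathcal{O}((\sqrt{p_0}-\sqrt{p_1})^{-1}\log(1/\xi))$ queries to $U_\psi$. We obtain success probability $1-\delta$ by taking the median of $\mathcal{O}(\log(1/\delta))$ repetitions. The total of these costs gives the stated $U_\tau$ and $U_\psi$ counts.

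\emph{Step 3: error propagation (main obstacle).} Two errors must be controlled. The first is the infidelity of the prepared initial state. The second is the spectral error $\xi$ of each reflector, which accumulates linearly over the $\mathcal{O}(\alpha/\varepsilon)$ Grover iterations. By the triangle inequality over the iterates, the total deviation of the final amplitude-estimation state is at most $\mathcal{O}((\alpha/\varepsilon)\xi)$ plus the initial-state error. Both must be $\mathcal{O}(1)$ small, a constant such as $1/8$ suffices for a constant-probability correct estimate, before median boosting.

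It is tempting to say only a constant-fidelity input state is needed, but that is not enough. The iterate should act as a rotation in the invariant two-dimensional subspace spanned by $V\ket{00}\ket{\psi_0}$ and its good projection, and a constant-infidelity input puts weight outside that subspace. So we demand initial fidelity error $\mathcal{O}(\varepsilon/\alpha)$ and $\xi\in\mathcal{O}(\varepsilon/\alpha)$. Both requirements enter only logarithmically, through the $\log(1/\varepsilon)$ factors of \cref{thmFilter,thmEigenstate}, and are absorbed into $\widetilde{\mathcal{O}}$.

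The ancilla count is the count of \cref{thmFilter} at accuracy $\xi\sim\sqrt{p_0}\varepsilon/\alpha$. Here the factor $p_0$ enters through the state-preparation requirement of \cref{thmEigenstate}. Adding $\mathcal{O}(1)$ Hadamard-test qubits and the ancillas of $\be[O/\alpha]$ yields $n+\mathcal{O}(\log(\|H\|/\Delta)+\log\log(\alpha/((p_0-p_1)\varepsilon)))$. This assumes amplitude estimation is done in its low-ancilla (QSP or iterative) form, so that no $\log(1/\varepsilon)$ phase register is needed.
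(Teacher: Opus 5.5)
Your route matches the paper's: a Hadamard test on $\be[O/\alpha]$, then amplitude estimation with the reflection about the initial state replaced by the DEFEAT reflector extended to the ancillas, an input state from \cref{thmEigenstate}, and median boosting. You control $\widetilde{\mathcal R}$ on the ancillas being $\ket{00}$, whereas the paper tensors $\be[\widetilde\rho_{\rm sqrt}]$ with a block-encoding of $\ketbra{0}{0}$ before the QSVT; either works.

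\textbf{The gap: error propagation in Step~3.} $\widetilde{\mathcal R}$ is only the top-left block of a unitary, and to meet the stated qubit count its ancillas must be reused by every Grover iterate. A block with $\lVert\widetilde{\mathcal R}-\mathcal R\rVert\le\xi$ can still leak amplitude up to $\sqrt{1-(1-\xi)^2}\approx\sqrt{2\xi}$ out of the ancilla-zero subspace on each use. These leaked pieces can add coherently; picture a rotation by angle $\approx\sqrt{2\xi}$ repeated $m$ times. So the per-iterate deviation of the full state is $\Theta(\sqrt\xi)$ in the worst case, and telescoping gives $\mathcal{O}(m\sqrt\xi)$, not $\mathcal{O}(m\xi)$. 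This is why the paper uses the leakage bound $\lVert\ket{\perp_m}\rVert^2\le 2\varepsilon_m$ and arrives at an error $\mathcal{O}(\sqrt\eta+m\sqrt\xi)$. With your choice $\xi\in\mathcal{O}(\varepsilon/\alpha)$ and $m\sim\alpha/\varepsilon$, the deviation is of order $\sqrt{\alpha/\varepsilon}$, so the argument does not close. Fresh ancillas per iterate would restore linear accumulation, but they would multiply the ancilla count by $\alpha/\varepsilon$ and break the qubit bound.

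\textbf{The fix.} With your constant per-run budget, take $\xi\in\mathcal{O}(\varepsilon^2/\alpha^2)$. The paper charges the implementation bias directly against $\varepsilon/\alpha$, so it takes $\xi\in\mathcal{O}(\varepsilon^4/\alpha^4)$ and $\eta\in\mathcal{O}(\varepsilon^2/\alpha^2)$. Since $\xi$ enters only through $\log(1/\xi)$ in \cref{thmFilter}, all three stated bounds survive unchanged.

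\textbf{Two smaller points.} First, your claim that a constant-infidelity input ``is not enough'' contradicts your own budget. A fixed circuit moves the output distribution by at most the trace distance $\sqrt{1-|\braket{\phi|\psi_0}|^2}$. So a small constant infidelity already gives constant success probability, and demanding $\mathcal{O}(\varepsilon/\alpha)$ is harmless but unnecessary. Second, once you switch to iterative or QSP-based amplitude estimation to avoid a phase register, the ``constant success probability, then median'' argument must be restated round by round. Each round's bias widens that round's confidence interval, as in the paper. The input state must also be re-prepared for every shot rather than once. Both changes cost only polylogarithmic factors.
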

\begin{proof}
    Given access to $O$ via a block-encoding unitary $\be[\tfrac{O}{\alpha}]$, we can compute the dominant eigenproperty with 
    \begin{equation}
        \frac{1}{\alpha}\braket{\psi_0|O|\psi_0} = \tr\left(\be[\tfrac{O}{\alpha}] (\ketbra{\psi_0}{\psi_0}\otimes \ketbra{0}{0})\right).
    \end{equation}
    We can estimate this using either the observable estimation algorithm from \citet{rall2020quantum} or the Hadamard test~\citep{cleve1998quantum} enhanced by amplitude estimation. To simplify the readouts, we will continue the proof using the Hadamard test. For the Hadamard test, we prepare the following state and measure Pauli-$X$ on the first qubit:
    \begin{equation}
        {\rm C}\mhyphen \be[\tfrac{O}{\alpha}] \ket{+}_{\mathcal{A}}\ket{0}_{\mathcal{B}}\ket{\psi_0}_{\mathcal{C}},
    \end{equation}
    where register $\mathcal{A}$ acts as the control of ${\rm C}\mhyphen \be[\tfrac{O}{\alpha}]$, register $\mathcal{B}$ contains the ancilla qubits for the block-encoding, and the target operator acts on register $\mathcal{C}$.
    
    The Grover iterate $\mathcal{G}$ would then be
    \begin{align}
    \mathcal{G}&=\begin{multlined}[t]({\rm C}\mhyphen \be[\tfrac{O}{\alpha}]) (\mathbb{I} - 2\ketbra{+}{+}_{\mathcal{A}}\otimes\ketbra{0}{0}_{\mathcal{B}}\otimes\ketbra{\psi_0}{\psi_0}_{\mathcal{C}})\\
    ({\rm C}\mhyphen \be[\tfrac{O}{\alpha}]^\dagger)(X_{\mathcal{A}}\otimes \mathbb{I}_{\mathcal{BC}})
    \end{multlined}\nonumber\\
    &=\begin{multlined}[t]({\rm C}\mhyphen \be[\tfrac{O}{\alpha}]) ({\rm Had}_{\mathcal{A}}\otimes \mathbb{I}_{\mathcal{BC}}) \mathcal{R}_{\rm ext}({\rm Had}_{\mathcal{A}}\otimes \mathbb{I}_{\mathcal{BC}})\\
    ({\rm C}\mhyphen \be[\tfrac{O}{\alpha}]^\dagger)(X_{\mathcal{A}}\otimes \mathbb{I}_{\mathcal{BC}}),
    \end{multlined}
    \end{align}
    where
    \begin{equation}
        \mathcal{R}_{\rm ext} = \mathbb{I} - 2\ketbra{0}{0}_{\mathcal{A}}\otimes \ketbra{0}{0}_{\mathcal{B}} \otimes \ketbra{\psi_0}{\psi_0}_{\mathcal{C}}.
    \end{equation}
    From the qubitisation formulation of amplitude amplification~\citep{martyn2021grand}, the Grover iterate $\mathcal G$ has eigenvalues $e^{\pm i\arccos(\gamma)}$ in the non-trivial subspace~\citep[Proposition 1]{huang2026low}, where $\gamma$ is the output of the Hadamard test as follows:
    \begin{align}
        \gamma &=\begin{multlined}[t]\bra{+}_{\mathcal{A}}\bra{0}_{\mathcal{B}}\bra{\psi_0}_{\mathcal{C}} {\rm C}\mhyphen \be[\tfrac{O}{\alpha}]^\dagger (X_{\mathcal{A}}\otimes \mathbb{I}_{\mathcal{BC}})\\
        {\rm C}\mhyphen \be[\tfrac{O}{\alpha}]\ket{+}_{\mathcal{A}}\ket{0}_{\mathcal{B}}\ket{\psi_0}_{\mathcal{C}}
        \end{multlined}\nonumber\\
        &= \frac{1}{\alpha}\braket{\psi_0|O|\psi_0}.
    \end{align}

    To produce $\mathcal{R}_{\rm ext}$, we can modify \cref{thmFilter} by first taking the tensor product of the block-encodings of $\widetilde{\rho}_{\rm sqrt}$ and $\ketbra{0}{0}$~\citep[Lemma 1]{camps2020approximate}, the latter of which can be obtained by LCU of the zero state reflector $\mathbb{I}-2\ketbra{0}{0}$ and identity $\mathbb{I}$. Note that the tensor product is written as follows:
    \begin{equation}
        \sum_{k} \sqrt{p_k} \ketbra{0}{0}_{\mathcal{AB}} \otimes \ketbra{0}{\psi_k}_{\mathcal{C}}\otimes \ketbra{\phi_k}{0}_{\mathcal{D}'},
    \end{equation}
    where the original $\be[\widetilde{\rho}_{\rm sqrt}]$ is applied on the input register $\mathcal{C}$ and output register $\mathcal{D}'$. Thus, applying the same polynomial function $F$ used in \cref{thmFilter} produces a block-encoding of $\mathcal{R}_{\rm ext}$, where we denote the actual block-encoded matrix as $\widetilde{\mathcal{R}}_{\rm ext}$. 
    We join the additional qubits used for the QSVT phase angles and for block-encoding $\ketbra{0}{0}$ with register $\mathcal{D}'$ to form the ancilla register $\mathcal{D}$.
    
    Suppose we prepare a block-encoding of $\widetilde{\mathcal{R}}_{\rm ext}$ such that
    \begin{equation}
        \left\lVert\widetilde{\mathcal{R}}_{\rm ext} - \mathcal{R}_{\rm ext}\right\rVert \le \xi,
    \end{equation}
    then, per unitary invariance of spectral norms, we can also prepare $\be[\widetilde{\mathcal{G}}]$, a $(1, a, \xi)$-block-encoding of $\mathcal{G}$, where $a$ is the number of ancillae required to implement $\mathcal{R}_{\rm ext}$, that is, register $\mathcal{D}$.
    
    By \citet[Corollary 55, arXiv version]{gilyen2019quantum}, we can implement a $(1, a, 4m^2\xi)$-block-encoding of $\mathcal{G}^m$, where $\mathcal{G}^m$ can be used for amplitude estimation.

    To perform amplitude estimation, we require an input state 
    \begin{equation}
        ({\rm C}\mhyphen \be[\tfrac{O}{\alpha}]_{\mathcal{ABC}} \otimes \mathbb{I}_{\mathcal{D}})\ket{+}_{\mathcal{A}}\ket{0}_{\mathcal{B}}\ket{\psi_0}_{\mathcal{C}}\ket{0}_{\mathcal{D}}.
    \end{equation}
    In particular, we can prepare $\ket{\phi}$ such that $\lvert\braket{\phi|\psi_0}\rvert \ge 1 - \eta$ via \cref{thmEigenstate}. For amplitude estimation methods without ancillae, we obtain signals of $\cos((2m+1)\cos^{-1}(\gamma))$, which is the Chebyshev polynomial of the first kind, and use these signals to recover $\gamma$~\citep{rall2023amplitude,sun2026quantum}. We implement this by preparing the state 
    \begin{equation}
        \be[\widetilde{\mathcal G}^m] ({\rm C}\mhyphen \be[\tfrac{O}{\alpha}]_{\mathcal{ABC}} \otimes \mathbb{I}_{\mathcal{D}})\ket{+}_{\mathcal{A}}\ket{0}_{\mathcal{B}}\ket{\psi_0}_{\mathcal{C}}\ket{0}_{\mathcal{D}}
    \end{equation}
    and measuring Pauli-$X$ on the second register to obtain the following:
    \begin{multline}
        \bra{+}_{\mathcal{A}}\bra{0}_{\mathcal{B}}\bra{\psi_0}_{\mathcal{C}}\bra{0}_{\mathcal{D}}({\rm C}\mhyphen \be[\tfrac{O}{\alpha}]_{\mathcal{ABC}}^\dagger\otimes \mathbb{I}_{\mathcal{D}})\\\be[\widetilde{\mathcal G}^m]^\dagger( X_{\mathcal{A}}\otimes \mathbb{I}_{\mathcal{BCD}})\be[\widetilde{\mathcal G}^m]\\ ( {\rm C}\mhyphen \be[\tfrac{O}{\alpha}]_{\mathcal{ABC}}\otimes \mathbb{I}_{\mathcal{D}})\ket{+}_{\mathcal{A}}\ket{0}_{\mathcal{B}}\ket{\psi_0}_{\mathcal{C}}\ket{0}_{\mathcal{D}},
    \end{multline}
    as shown in \cref{figEigenprop}. 
    
    We now analyse the error between the expected measurement result, obtained by this procedure, and $\gamma$.
    Let us quantify the 3 sources of error in our implementation. The first is the error propagated from the imperfect state $\ket{\phi}$. Using $W_m = ({\rm C}\mhyphen \be[\tfrac{O}{\alpha}]_{\mathcal{ABC}}^\dagger\otimes \mathbb{I}_{\mathcal{D}})\be[\widetilde{\mathcal G}^m]^\dagger( X_{\mathcal{A}}\otimes \mathbb{I}_{\mathcal{BCD}})\be[\widetilde{\mathcal G}^m] ( {\rm C}\mhyphen \be[\tfrac{O}{\alpha}]_{\mathcal{ABC}}\otimes \mathbb{I}_{\mathcal{D}})$, we obtain
    \begin{align}
        &\left\lvert\braket{+,0,\psi_0,0|W|+,0,\psi_0,0} - \braket{+,0,\phi,0|W|+,0,\phi,0}\right\rvert\nonumber\\
        &\le \lVert\ketbra{\psi_0}{\psi_0} - \ketbra{\phi}{\phi}\rVert_1 \le 2\sqrt{2\eta}.
    \end{align}
    Next is the error from the implementation of $\widetilde{\mathcal{G}}$. Let $\ket{\chi} = ({\rm C}\mhyphen \be[\tfrac{O}{\alpha}])\ket{+}_{\mathcal{A}}\ket{0}_{\mathcal{B}}\ket{\psi_0}_{\mathcal{C}}$ and 
    \begin{equation}
        \be[\widetilde{\mathcal G}^m]\ket{\chi}_{\mathcal{ABC}}\ket{0}_{\mathcal{D}} = \ket{\varphi_m}_{\mathcal{ABC}}\ket{0}_{\mathcal{D}}+\ket{\perp_m},
    \end{equation}
    where $(\mathbb I_{\mathcal{ABC}} \otimes \bra{0}_{\mathcal{D}})\ket{\perp_m}=0$.
    
    Then, by the block-encoding error of $\mathcal G^m$, there exists $\varepsilon_m \le 4m^2\xi$ such that $\left\lVert\ket{\varphi_m}-\mathcal G^m\ket{\chi}\right\rVert\le \varepsilon_m.$
    Moreover, since both $\be[\widetilde{\mathcal G}^m]$ and $\mathcal G^m$ are unitary,
    we have 
    \begin{equation}
        \lVert\ket{\varphi_m}\rVert^2+\lVert\ket{\perp_m}\rVert^2=1
        \quad\text{and}\quad
        \lVert\mathcal G^m\ket{\chi}\rVert=1.
    \end{equation}
    Hence $\lVert\ket{\varphi_m}\rVert \ge 1-\varepsilon_m$,
    so
    \begin{equation}
        \lVert\ket{\perp_m}\rVert^2 = 1-\lVert\ket{\varphi_m}\rVert^2 \le 2\varepsilon_m-\varepsilon_m^2 \le 2\varepsilon_m.
    \end{equation}
    Therefore,
    \begin{align}
        &\left\lVert \be[\widetilde{\mathcal G}^m]\ket{\chi}_{\mathcal{ABC}}\ket{0}_{\mathcal{D}} -\mathcal G^m\ket{\chi}_{\mathcal{ABC}}\ket{0}_{\mathcal{D}}\right\rVert\nonumber\\
        &\le \left\lVert\ket{\varphi_m}-\mathcal G^m\ket{\chi}\right\rVert + \lVert\ket{\perp_m}\rVert \le \varepsilon_m+\sqrt{2\varepsilon_m}\nonumber\\
        &\in \mathcal{O}(\sqrt{\varepsilon_m}) \subseteq\mathcal{O}(m\sqrt{\xi}).
    \end{align}
    Adding both errors together results in 
    \begin{equation}
        \left\lvert\braket{+,0,\phi,0|W_m|+,0,\phi,0}-\gamma\right\rvert \in \mathcal{O}(\sqrt{\eta} + m \sqrt{\xi}).
    \end{equation}
    For simplicity, we assume iterative amplitude estimation methods~\citep{grinko2021iterative,rall2023amplitude} to recover the amplitude, though we expect the following results to hold for other protocols. 

    Given the error $\mathcal{O}(\sqrt{\eta}+m\sqrt{\xi})$ arising from imperfect eigenstate preparation and the eigenstate projector, we enlarge the confidence interval at an iterative amplitude estimation round using $m$ Grover iterations by an additional $\mathcal{O}(\sqrt{\eta}+m\sqrt{\xi})$ to account for the implementation error. Let $m_{\max}$ be the maximum number of Grover iterations in a single round in the iterative process, that is, the number of iterations in the final round. Then the total error at the final iteration satisfies
    \begin{equation}
        |\widetilde{\gamma} - \gamma| \in \mathcal{O}\left(\sqrt{\eta} + m_{\max} \sqrt{\xi} + \frac{1}{m_{\max}}\right) \le \frac{\varepsilon}{\alpha}
    \end{equation}
    where the $\mathcal{O}(m_{\max}^{-1})$ term is the statistical error associated with the Heisenberg-limited amplitude-estimation procedure. It therefore suffices to choose
    \begin{equation}
        m_{\max} \in \mathcal{O}\left(\frac{\alpha}{\varepsilon}\right),\quad \eta  \in \mathcal{O}\left(\frac{\varepsilon^2}{\alpha^2}\right), \quad \xi \in \mathcal{O}\left(\frac{\varepsilon^4}{\alpha^4}\right).
    \end{equation}
    Here, $m_{\max}$ denotes the maximum number of Grover iterations used in any amplitude-estimation round, $\eta$ is the error in the input eigenstate $\ket{\psi}$, and $\xi$ is the error of the block-encoding $\be[\widetilde{\mathcal G}]$.
     
    Substituting the error bounds back into \cref{thmFilter} and \cref{thmEigenstate} for the preparation cost of $\ket{\phi}$ and $\mathcal{R}_{\rm ext}$, we recover the runtime cost of 
    \begin{equation}
        \widetilde{\mathcal{O}}\left(\frac{\|H\|}{(\sqrt{p_0}-\sqrt{p_1})\Delta}\left(\frac{1}{\sqrt{p_0}}+\frac{\alpha}{\varepsilon}\right)\log\frac{1}{\delta}\right),
    \end{equation}
    where the additional $\mathcal{O}(\log(\delta^{-1}))$ stems from median boosting the success rate.
    
    The qubit count and query complexity to $U_{\psi}$ follow similarly from \cref{thmFilter} and \cref{thmEigenstate} by plugging in the requirements for $\eta$ and $\xi$.
\end{proof}

\begin{figure}
    \includegraphics[width=\linewidth]{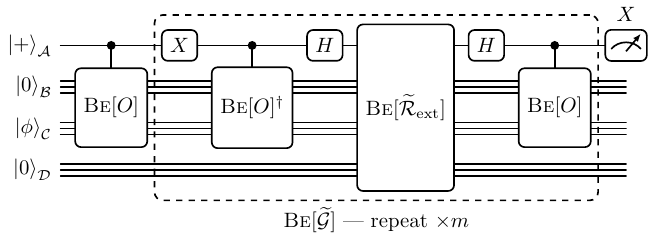}
    \caption{Circuit implementation for eigenproperty estimation. Register $\mathcal{A}$ is the Hadamard test readout qubit, register $\mathcal{B}$ is the ancilla register for the block-encoding unitary $\be[\tfrac{O}{\alpha}]$, register $\mathcal{C}$ is the main register of $\be[\tfrac{O}{\alpha}]$, and register $\mathcal{D}$ is the additional ancillae register for the block-encoding $\be[\widetilde{\mathcal R}_{\rm ext}]$.} 
    \label{figEigenprop}
\end{figure}

\subsection{Fidelity estimation with the dominant eigenstate}
Lastly, we prove the runtime complexity for fidelity estimation between some pure state $\ket{\varphi}$ and the dominant eigenstate, which can be obtained by amplitude estimation with reflectors $\mathbb{I}-2\ketbra{\varphi}{\varphi}$ and $\mathcal{R}$ from \cref{thmFilter}.

\begin{proposition}[Fidelity estimation]
    Using the dominant eigenstate reflector from \cref{thmFilter}, and given a state preparation unitary $U_{\varphi}$ that prepares $\ket{\varphi}$, we can estimate the fidelity $\lvert\braket{\varphi|\psi_0}\rvert^2$ within additive error $\varepsilon$ using
    \begin{equation*}
        \widetilde{\mathcal{O}}\left(\frac{\|H\|}{(\sqrt{p_0}-\sqrt{p_1})\Delta\varepsilon}\log\frac{1}{\delta}\right),
    \end{equation*}
   queries to $U_\tau$, as well as
    \begin{equation*}
        \widetilde{\mathcal{O}}\left(\frac{1}{(\sqrt{p_0}-\sqrt{p_1})\varepsilon}\log\frac{1}{\delta}\right)
    \end{equation*}
    queries to $U_\psi$, $\mathcal{O}(\varepsilon^{-1}\log(\delta^{-1}))$ queries to $U_\varphi$, and 
    \begin{equation*}
    n + \mathcal{O}\left(\log\frac{\|H\|}{\Delta} + \log\log\frac{1}{(\sqrt{p_0}-\sqrt{p_1})\varepsilon}\right)
    \end{equation*}
    qubits.
    \label{propFidelity}
\end{proposition}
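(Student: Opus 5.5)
We estimate the fidelity with amplitude estimation built from two reflections. The first is $\mathcal{R}_\varphi := \mathbb{I}-2\ketbra{\varphi}{\varphi} = U_\varphi(\mathbb{I}-2\ketbra{0}{0})U_\varphi^\dagger$. The second is the dominant-eigenstate reflector $\mathcal{R}$, implemented approximately as $\be[\widetilde{\mathcal{R}}]$ via \cref{thmFilter}.

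We take $\ket{\varphi}=U_\varphi\ket{0}$ as the initial state. The Grover iterate is $\mathcal{G} = -\mathcal{R}_\varphi\,\mathcal{R}$. On the two-dimensional span of $\ket{\varphi}$ and $\ket{\psi_0}$ it acts as a rotation by angle $2\theta$, where $\sin\theta = \lvert\braket{\varphi|\psi_0}\rvert$. Unlike eigenstate preparation, we never need $\ket{\psi_0}$ itself as an input, only its reflector. This is why no $(\sqrt{p_0})^{-1}$ amplification factor appears.

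Concretely, we run an ancilla-free iterative amplitude estimation scheme~\citep{grinko2021iterative,rall2023amplitude}, exactly as in the proof of \cref{propEigenprop}. In round $r$ we prepare $\ket{\varphi}$, apply $\mathcal{G}^{m_r}$, and measure the projector $\ketbra{\varphi}{\varphi}$ by applying $U_\varphi^\dagger$ and checking for $\ket{0}$. This samples a signal of the form $\cos^2((2m_r+1)\theta)$ up to convention. Estimating $\theta$, and hence $\sin^2\theta$, to additive accuracy $\varepsilon$ with probability $1-\delta$ needs a maximal depth $m_{\max}\in\mathcal{O}(\varepsilon^{-1})$ and $\mathcal{O}(\varepsilon^{-1}\log(\delta^{-1}))$ total Grover iterations, using median boosting.

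Each Grover iterate uses $\mathcal{O}(1)$ queries to $U_\varphi$. This gives the stated $\mathcal{O}(\varepsilon^{-1}\log(\delta^{-1}))$ queries to $U_\varphi$.

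The main step is robustness to the approximate reflector. Suppose $\be[\widetilde{\mathcal{R}}]$ is a $(1,a,\xi)$-block-encoding of $\mathcal{R}$. Then $\be[\widetilde{\mathcal{G}}]$ is a $\xi$-block-encoding of $\mathcal{G}$ by unitary invariance. By \citet[Corollary 55]{gilyen2019quantum}, its $m$-th power is a $4m^2\xi$-block-encoding of $\mathcal{G}^m$. Repeating the leakage argument from the proof of \cref{propEigenprop}, where we split the output into a component in the $\ket{0}$ ancilla block and an orthogonal leaked part, the measured probability deviates from the ideal one by $\mathcal{O}(m\sqrt{\xi})$.

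We then enlarge each confidence interval by this amount. The final error is $\mathcal{O}(m_{\max}\sqrt{\xi}+m_{\max}^{-1})\le\varepsilon$. It suffices to take $m_{\max}\in\mathcal{O}(\varepsilon^{-1})$ and $\xi\in\mathcal{O}(\varepsilon^4)$. I expect this step to be the main obstacle, in particular checking that the quadratic-in-$m$ error growth does not spoil the Heisenberg scaling. The resolution is that $\xi$ enters the costs only polylogarithmically.

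Plugging $\xi\in\mathcal{O}(\varepsilon^4)$ into \cref{thmFilter}, each reflector costs
$\widetilde{\mathcal{O}}\bigl(\tfrac{\|H\|}{(\sqrt{p_0}-\sqrt{p_1})\Delta}\log^2\tfrac{1}{\varepsilon}\bigr)$
queries to $U_\tau$ and $\mathcal{O}\bigl(\tfrac{1}{\sqrt{p_0}-\sqrt{p_1}}\log\tfrac1\varepsilon\bigr)$ queries to $U_\psi$. Multiplying by the $\mathcal{O}(\varepsilon^{-1}\log(\delta^{-1}))$ total Grover iterations gives the claimed query complexities, with the logarithmic factors absorbed into $\widetilde{\mathcal{O}}$.

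For the qubit count, the reflector's ancilla register is reset to $\ket{0}$ after each block-encoded application, up to the leakage already accounted for. The count is therefore that of \cref{thmFilter} at accuracy $\xi\in\mathcal{O}(\varepsilon^4)$, which gives $n+\mathcal{O}\bigl(\log\tfrac{\|H\|}{\Delta}+\log\log\tfrac{1}{(\sqrt{p_0}-\sqrt{p_1})\varepsilon}\bigr)$.
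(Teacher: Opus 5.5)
Your overall strategy (amplitude estimation with a Grover iterate built from $\mathcal{R}_\varphi$ and the DEFEAT reflector, using $\mathcal{O}(\varepsilon^{-1}\log\delta^{-1})$ reflector calls at polylogarithmic overhead) is sound. However, the measurement step fails as written.

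With $\mathcal{G}=-\mathcal{R}_\varphi\mathcal{R}$ a rotation by $2\theta$ and input $\ket{\varphi}$, the probability of finding $\ket{\varphi}$ after $m$ iterations is $\lvert\braket{\varphi|\mathcal{G}^m|\varphi}\rvert^2=\cos^2(2m\theta)=\tfrac12\bigl(1+T_{2m}(1-2F)\bigr)$, where $F=\lvert\braket{\varphi|\psi_0}\rvert^2$. It is not $\cos^2((2m+1)\theta)$. Since the Chebyshev polynomial $T_{2m}$ is even, this depends on $F$ only through $(1-2F)^2$. The statistics for $F$ and $1-F$ therefore coincide for every $m$. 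For example, $F=0$ gives $\mathcal{G}\ket{\varphi}=\ket{\varphi}$ and $F=1$ gives $\mathcal{G}\ket{\varphi}=-\ket{\varphi}$, which this measurement cannot tell apart. So you only learn $\lvert 1-2F\rvert$.

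You must measure the other reflection instead, just as the eigenproperty proof measures $X_{\mathcal{A}}$ rather than the reflection about its input state. Apply the block-encoding of $\mathcal{P}=(\mathbb{I}-\mathcal{R})/2$ (supplied by \cref{thmFilter}, or by a one-ancilla LCU of $\mathbb{I}$ and $\widetilde{\mathcal{R}}$) and record whether its ancilla is found in $\ket{0}$. This gives $\sin^2((2m+1)\theta)=\tfrac12\bigl(1-T_{2m+1}(1-2F)\bigr)$, which retains the sign of $1-2F$. Alternatively, run phase-estimation-based amplitude estimation on $\mathcal{G}$. Either fix adds one DEFEAT call per round and an $\mathcal{O}(\xi)$ extra bias, and leaves all stated complexities unchanged.

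With this repair your proof is correct, and it is more careful than the paper's. The paper writes $F=(1-\braket{\varphi|\mathcal{R}|\varphi})/2$ and bounds the effect of using $\widetilde{\mathcal{R}}$ as a one-shot bias of at most $\eta/2$. It then sets $\eta=\varepsilon$ and multiplies $\mathcal{O}(\varepsilon^{-1})$ amplitude-estimation queries by the DEFEAT cost, without tracking how the reflector error compounds over the Grover iterations. That static-bias reading is exact if amplitude estimation is run on a fixed circuit containing the reflector, e.g.\ $\be[\widetilde{\mathcal{P}}](U_\varphi\otimes\mathbb{I})$ with the ancilla-zero flag as the good subspace. In that case the estimated amplitude is exactly that of the implemented circuit, and only $\eta=\Theta(\varepsilon)$ is needed.

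Your route instead uses the reflection-based iterate that the paper's own prose describes. You propagate the error through $\mathcal{G}^m$ explicitly, as in \cref{propEigenprop}, and pay $\xi=\mathcal{O}(\varepsilon^4)$. Because DEFEAT's cost is polylogarithmic in its accuracy, both arguments give the same $\widetilde{\mathcal{O}}$ bounds.
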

\begin{proof}
    Given access to a pure state $\ket{\varphi}$ through its state preparation unitary $U_\varphi$, we can estimate the fidelity using amplitude estimation~\citep{brassard2002quantum}. There are two sources of error: the error $\eta$ from the imperfect block-encoding of $\mathcal{R}$ produced by \cref{thmFilter}, and the estimation error $\xi$ from amplitude estimation.

    We denote the imperfect block-encoding as $\widetilde{\mathcal{R}}$ and restructure the fidelity in terms of the reflector $\mathcal{R}$ as 
    \begin{equation}
    \lvert\braket{\varphi|\psi_0}\rvert^2 = \braket{\varphi|\mathcal{P}|\varphi} = \frac{1-\braket{\varphi|\mathcal{R}|\varphi}}{2}.
    \end{equation}
    The error can be written as
    \begin{equation}
        \frac{1-\braket{\varphi|\mathcal{R}|\varphi}}{2} - \frac{1-\braket{\varphi|\widetilde{\mathcal{R}}|\varphi}}{2} \le \frac{\eta}{2}.
    \end{equation}
    Setting $\eta = \varepsilon$ and $\xi = \varepsilon/2$, we can obtain the runtime cost of the $\mathcal{O}(\xi^{-1})\subseteq\mathcal{O}(\varepsilon^{-1})$ queries to $\widetilde{\mathcal{R}}$ for amplitude estimation as 
    \begin{equation}
        \mathcal{O}\left(\frac{\|H\|}{(\sqrt{p_0}-\sqrt{p_1})\Delta\varepsilon}\log\frac{1}{\delta}\right).
    \end{equation}
    The query complexity follows from the same result, and the qubit count follows from \cref{thmFilter} for preparing $\mathcal{R}$.
\end{proof}

\section{Proofs for lower bounds}
\label{appOptim}

\subsection{Proof for filtering lower bounds}
\label{appFilterOpt}
To provide a lower bound for dominant eigenstate filtering, we consider the easier setting of the computational basis, instead of a Hamiltonian eigenbasis. We obtain the following bounds by a reduction to the probability distinguishing problem from \citet{belovs2019quantum}.
\filterOpt*
\begin{proof}
    We provide this lower bound by reducing a specific instance of the probability distribution distinction problem to the dominant eigenstate preparation problem.

    In particular, consider the problem of distinguishing between two discrete probability distributions over a finite set $\mathcal{S}$:
    \begin{align}
        \vec{p}^{(A)} &= (p_0, p_1, p_2, \ldots, p_{|\mathcal{S}|-1}), \\
        \vec{p}^{(B)} &= (p_1, p_0, p_2, \ldots, p_{|\mathcal{S}|-1}),
    \end{align}
    where $p_0 > p_1 \ge p_2 \ge \cdots \ge p_{|\mathcal{S}|-1}$. The two distributions differ only by a swap of the first two probabilities.

    We assume quantum sample access via the state-preparation unitaries $U_{A}$ and $U_{B}$ such that
    \begin{subequations}
    \begin{align}
        U_{A}\ket{0} &= \sum_{k \in \mathcal{S}} \sqrt{p_k}\,\ket{k}, \\
        U_{B}\ket{0} &= \sqrt{p_1}\ket{0} + \sqrt{p_0}\ket{1} + \sum_{k \in \mathcal{S}\setminus\{0,1\}} \sqrt{p_k}\,\ket{k}.
    \end{align}
    \end{subequations}

    By \citet[Theorem 4]{belovs2019quantum}, given query access to such state-preparation unitaries, the quantum query complexity of distinguishing $\vec{p}^{(A)}$ from $\vec{p}^{(B)}$ is
    \begin{equation}
    \Theta\left(\frac{1}{\mathcal{D}_H\left(\vec{p}^{(A)}, \vec{p}^{(B)}\right)}\right),
    \end{equation}
    where $\mathcal{D}_H$ is the Hellinger distance, defined as
    \begin{equation}
    \mathcal{D}_H(\vec{p}, \vec{q}) = \frac{1}{\sqrt{2}} \left( \sum_{s \in \mathcal{S}} \left(\sqrt{p_s} - \sqrt{q_s}\right)^2 \right)^{1/2}.
    \end{equation}

    In the present setting, only the first two probabilities differ, so
    \begin{align}
    \mathcal{D}_H^2\left(\vec{p}^{(A)}, \vec{p}^{(B)}\right)
    &= \frac{1}{2} \left( \left(\sqrt{p_0} - \sqrt{p_1}\right)^2 + \left(\sqrt{p_1} - \sqrt{p_0}\right)^2 \right) \nonumber\\
    &= \left(\sqrt{p_0} - \sqrt{p_1}\right)^2,
    \end{align}
    and hence
    \begin{equation}
    \mathcal{D}_H\left(\vec{p}^{(A)}, \vec{p}^{(B)}\right)
    = \left|\sqrt{p_0} - \sqrt{p_1}\right|.
    \end{equation}

    Since $p_0 > p_1$, this simplifies to
    \begin{equation}
    \mathcal{D}_H\left(\vec{p}^{(A)}, \vec{p}^{(B)}\right)
    = \sqrt{p_0} - \sqrt{p_1}.
    \end{equation}

    Next, we show that by applying the dominant eigenstate preparation algorithm, we can reliably discern between the two probability distributions. We assume that we can construct a block-encoding of $\rho_{\rm sqrt}$ with $\mathcal{O}(1)$ queries to $U_{\psi}$. For general eigenstates of a Hamiltonian, we can use the QPE circuit as shown in the main text, while for computational basis states, we can use the vector encoding~\citep{guo2024nonlinear,rattew2023nonlinear} to encode the probability distribution into the singular values of $\rho_{\rm sqrt}$.
    
    Given the dominant eigenvector (in the computational basis) of $U_{A}\ket{0}$ is $\ket{0}$ and $U_{B}\ket{0}$ is $\ket{1}$, we can prepare these two states respectively up to square-root fidelity $\frac{\sqrt{3}}{{2}}$ with our algorithm given block-encoding access to $\rho_{\rm sqrt}$ and query access to $U_{\psi}$. With the prepared eigenstate, we simply measure the first qubit. If, with probability $3/4$, we have output 0, then the input distribution is $\vec{p}^{(A)}$, and vice versa. This gives us the ability to distinguish between $\vec{p}^{(A)}$ and $\vec{p}^{(B)}$, completing the reduction.

    If $p_0 \in \Omega(1)$, then we apply only constant rounds for amplitude amplification, and the cost of dominant eigenstate preparation saturates the lower bound of $\Omega((\sqrt{p_0}-\sqrt{p_1})^{-1})$ provided by probability distribution distinction.
\end{proof}

As noted, given that $\rho_{\rm sqrt}$ can be obtained by vector encoding methods for computational basis states, this proof can also establish a matching lower bound for the maximum probability finding procedure of \citet{rattew2023nonlinear}, subject to slight modifications to match the input oracle and output conventions. We again prepare the two distributions $\vec{p}^{(A)}$ and $\vec{p}^{(B)}$ and, via maximum probability finding, obtain block-encodings of $\ketbra{0}{0}$ and $\ketbra{1}{1}$, respectively. We then apply the block-encoding unitary to the all-zero state $\ket{0}\ket{0}_a$ and perform a computational-basis measurement. In particular,
\begin{align}
    \Pr(0,0_a) &= \left\lvert\bra{0}\braket{0|\be[A]|0}\ket{0}_a\right\rvert^2 \nonumber\\
    &= \left\lvert\braket{0|A|0}\right\rvert^2 = \begin{cases}
    1, & A=\ketbra{0}{0},\\
    0, & A=\ketbra{1}{1}.
    \end{cases}
\end{align}
Thus, if the outcome $\ket{0}\ket{0}_a$ is observed with high probability, we conclude that the distribution is $\vec{p}^{(A)}$; otherwise, it is $\vec{p}^{(B)}$. Hence the $\Omega((\sqrt{p_0}-\sqrt{p_1})^{-1})$ lower bound would also apply. Further, this version of the proof also directly provides a lower-bound proof for the filtering step of the eigenprobability filtering protocol as well, though due to the output of the filtering step being a block-encoding, this modified version of the proof is tailored to this access model.

\subsection{Proof for lifted purity amplification}
\label{appPurity}
We consider the task of purity amplification, where given samples of the mixed state
\begin{equation}
\rho = \sum_{k\in\mathcal{S}} p_k \ketbra{\psi_k}{\psi_k},
\end{equation}
where we retain our assumptions on the eigenprobability operator and assume
\begin{equation}
    p_0 > p_1 \ge \ldots \ge p_{|\mathcal S|-1} > 0 .
\end{equation}

Results in purity amplification~\citep{cirac1999optimal,li2024optimal,grier2025streaming} show that
\begin{equation}
    \Theta\left(\frac{1-p_0}{(p_0-p_1)^2\varepsilon}\right)
\end{equation}
samples of $\rho$ are required to obtain the principal eigenstate $\ket{\psi_0}$ up to fidelity $1-\varepsilon$.

We can use such optimality results to provide a query lower bound for the case where we have purified quantum query access instead. Recall from the main text that this is the case where we have query access to $V$ where
\begin{equation}
    V\ket{0}_A\ket{0}_B= \sum_{k\in \mathcal{S}} \sqrt{p_k} \ket{\psi_k}_A\ket{\phi_k}_B.
\end{equation}
To obtain the query complexity lower bounds, we can apply the techniques of sample-to-query lifting~\citep{wang2025quantum,chen2025list} to ``lift'' the sample complexity bounds to query complexity bounds. While the original framework is restricted to promise problems and block-encoded representations, this was generalised by \citet{tang2025conjugate} for the purified quantum query access model and for simulation tasks in general. We recount the result as follows:

\begin{lemma}[Simulating queries to a state preparation unitary given copies of the state -- Theorem 1.5~\citep{tang2025conjugate}]
    Consider a circuit that uses $Q$ calls to $U$ and $U^\dagger$, where $U$ provides purified quantum query access to the mixed state $\rho$. Let $\omega_U$ denote the output of this circuit. Then, using $n \in \mathcal{O}(Q^2 / \xi)$ copies of $\rho$, we can simulate the output of this circuit on a distribution over state preparation unitaries $V$ of $\rho$, that is, $\mathbb{E}_{V}[\omega_V]$, up to an $\xi$ error in trace distance.
    \label{propLift}
\end{lemma}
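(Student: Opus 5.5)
The plan is to realise the average $\mathbb{E}_V[\omega_V]$ over a suitably symmetric family of state-preparation unitaries, and then to show that this average can be reproduced by a procedure that consumes copies of $\rho$ instead of making queries. First I fix one reference purification unitary $U_0$ with $U_0\ket{0}_A\ket{0}_B=\ket{\rho}$. I then take the distribution
\begin{equation*}
V=(\mathbb{I}_A\otimes W_B)\,U_0\,R,
\end{equation*}
where $W_B$ is Haar random on the purifying register $B$, and $R$ is a Haar-random unitary that fixes $\ket{0}_A\ket{0}_B$ and randomises $V$ on the orthogonal complement. Every such $V$ is a valid purified query oracle for $\rho$. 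Averaging over $W_B$ (and $R$) makes the circuit's output depend on the oracle only through symmetric, permutation-invariant data. I expect this to be exactly the data carried by a collection of copies of $\ket{\rho}$ whose purifying registers carry independent Haar randomness.

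Next I simulate $W_B$ and $R$ lazily, in the spirit of path-recording or compressed-oracle simulations of Haar-random unitaries. I maintain a register holding a pool of $n$ purified copies of $\rho$; the $B$-halves of these copies act as the ``recorded'' outputs of $W_B$. The simulation handles each query type as follows.
\begin{itemize}
\item A forward query on an input in the support $\ket{0}_A\ket{0}_B$ is answered by swapping in a fresh, unused copy from the pool.
\item An inverse query that lands on a previously issued copy is answered by returning that copy to the pool, via a coherent, symmetrised check against the recorded entries.
\item Components orthogonal to the recorded subspace are answered by the lazily sampled part of $R$.
\end{itemize}
Since only the $A$-marginal $\rho$ of each copy is ever needed, and the $B$-register is symmetrised over permutations of the pool, the copies can be replaced by the mixed states $\rho^{\otimes n}$ together with a random permutation. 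This is the step where ``copies of $\rho$'' suffices in place of purified access.

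The error analysis compares this lazy simulation against the true average. Each query step deviates only through ``collision'' events, in which the circuit addresses a recorded slot inconsistently or the simulation needs a fresh copy that the symmetric pool cannot cleanly supply. I expect such an event to occur with amplitude of order $\sqrt{t/n}$ at step $t$. A hybrid argument over the $Q$ queries then bounds the total trace-distance error by $\sum_{t\le Q} O(t/n)=O(Q^2/n)$. Setting this equal to $\xi$ gives $n\in\mathcal{O}(Q^2/\xi)$.

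I expect the main obstacle to be the inverse queries $U^\dagger$ applied to arbitrary states rather than to freshly issued outputs. Making the ``return a copy'' operation coherent and consistent with the forward simulation requires the symmetric-subspace, path-recording bookkeeping to be exactly unitary on the relevant subspace. This is where I would lean most directly on the Tang--Wright--Zhandry analysis; the rest is the standard hybrid counting argument.
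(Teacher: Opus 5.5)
The paper gives no proof of this lemma. It is quoted as Theorem~1.5 of Tang--Wright--Zhandry and used purely as a black box in the proof of \cref{thmPurity}, so your outline is not competing with an argument in the paper. As a standalone proof, however, it has a genuine gap, located exactly where you flag it: the inverse queries. Deferring that step to the same reference leaves the lemma unproved.

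Concretely, the rule ``an inverse query landing on an issued copy returns that copy to the pool'' cannot be implemented as a lookup of the recorded slot. Suppose the algorithm applies a unitary $G$ to the $A$-half of an issued copy and then calls $U^\dagger$. The true oracle outputs $\ket{0}\ket{0}$ with amplitude $\langle\rho|(G\otimes\mathbb{I})|\rho\rangle=\tr(G\rho)$, independently of $W_B$. For example, this is zero for $\rho=\ketbra{\psi}{\psi}$ with $\langle\psi|G|\psi\rangle=0$, whereas swapping the copy back outputs $\ket{0}\ket{0}$ with certainty. The simulator must instead coherently project onto the unknown purification. With only mixed copies of $\rho$, it holds no purifying register to project with; producing that projection approximately from the pool is the actual content of the cited theorem.

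Two further steps fail as written.

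First, ``only the $A$-marginal of each copy is ever needed'' is false, since the algorithm receives the $B$-outputs and can test correlations between them across queries. For $\rho=\ketbra{\psi}{\psi}$, two forward queries on $\ket{0}\ket{0}$ to any fixed valid $V$ return $\ket{\psi}\ket{w}\otimes\ket{\psi}\ket{w}$. A swap test on the two $B$-registers therefore accepts with certainty. If the $B$-outputs are pool labels shuffled by a classical random permutation, it accepts only with probability $1/2$. What is needed is a coherent symmetrisation of the issued (copy, label) pairs, generated online from $\rho^{\otimes n}$, and you have not constructed it.

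Second, the counting does not follow from a hybrid argument. Hybrids add the norms of per-step deviations, so a collision amplitude $\sqrt{t/n}$ at step $t$ gives $\sum_{t\le Q}\sqrt{t/n}=\mathcal{O}(Q^{3/2}/\sqrt{n})$, i.e.\ $n\in\mathcal{O}(Q^3/\xi^2)$. To obtain $\sum_{t}t/n=\mathcal{O}(Q^2/n)$, the collision branches must be written into an orthogonal record so that their weights add as probabilities. That is the path-recording structure you allude to but do not set up.
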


Using this result, we can show the following theorem via a proof by contradiction:
\purityOpt*
\begin{proof}
    Suppose we have a quantum query algorithm that, for every state preparation unitary $V$ of $\rho$, outputs the state $\omega_V$, where 
    \begin{equation}
        \braket{\psi_0|\omega_V|\psi_0} \ge \frac{63}{64},
    \end{equation}
    with $Q$ calls to $V$. Equivalently, the square-root fidelity of $\omega_V$ with $\ket{\psi_0}$ is at least $3\sqrt7/8$.

    The trace distance of $\omega$ and $\ketbra{\psi_0}{\psi_0}$ can be bounded using Fuchs-van de Graaf inequalities~\citep{nielsen2010quantum} such that
    \begin{equation}
        \mathcal{D}_{\tr}(\omega_V, \ketbra{\psi_0}{\psi_0}) \le \sqrt{1-\braket{\psi_0|\omega_V|\psi_0}} \le \frac{1}{8}.
    \end{equation}
    
    Let $\bar{\omega}:=\mathbb E_{V}[\omega_V]$, where the expectation is over the distribution of state-preparation unitaries appearing in \cref{propLift}. By convexity of the trace distance,
    \begin{equation}
        \mathcal{D}_{\tr}\left(\bar{\omega},\ketbra{\psi_0}{\psi_0}\right)\leq\mathbb E_V [\mathcal D_{\rm tr}\left(\omega_V,\ketbra{\psi_0}{\psi_0}\right)]\leq\frac18.
    \end{equation}
    Then, by \cref{propLift}, we have a sampling algorithm that outputs $\omega'$, where $\omega'$ is within $1/8$ trace distance from the averaged output state $\bar\omega$, using $\mathcal{O}(Q^2)$ samples to $\rho$. By the triangle inequality
    \begin{equation}
        \mathcal{D}_{\rm tr}(\omega', \ketbra{\psi_0}{\psi_0}) \le \frac{1}{4}.
    \end{equation}
    Again, by the Fuchs-van de Graaf inequalities, we can find that
    \begin{equation}
        \braket{\psi_0|\omega'|\psi_0} \ge \left(1-\mathcal{D}_{\rm tr}(\omega', \ketbra{\psi_0}{\psi_0})\right)^2 \ge \frac{9}{16}.
    \end{equation} 

    Now suppose that the query complexity
    \begin{equation}
        Q \in o\left(\frac{\sqrt{1-p_0}}{p_0-p_1}\right).
    \end{equation}
    Then, by the above result, we have a sampling algorithm that has only
    \begin{equation}
        o\left(\frac{1-p_0}{(p_0-p_1)^2}\right)
    \end{equation}
    sample complexity, and produces $\omega'$. This contradicts the lower bound provided by \citet[Theorem 32]{grier2025streaming}, which states that any sampling algorithm for purity amplification that produces a state $\omega'$ achieving fidelity of at least $9/16$ of $\ket{\psi_0}$ requires 
    \begin{equation}
        \Omega\left(\frac{1-p_0}{(p_0-p_1)^2}\right)
    \end{equation}
    samples.
    
    Thus, by proof by contradiction, we must have
    \begin{equation}
        Q \in \Omega\left(\frac{\sqrt{1-p_0}}{p_0-p_1}\right).
    \end{equation}
\end{proof}

\end{document}